\renewcommand*{\hat}[1]{\widehat{#1}}
\DeclareMathOperator{\dist}{\mathrm{dist}}
\renewcommand*{\*}{\times}
\newcounter{config}
\newcommand{\config}[1]{\refstepcounter{config}\label{#1}C_{\theconfig}}
\newcounter{regle}
\newcommand{\regle}[1]{\refstepcounter{regle}\label{#1}R_{\theregle}}
\newtheorem{conjecture}[theorem]{Conjecture}
\title{Every planar graph with $\Delta\geqslant 8$ is totally $(\Delta+2)$-choosable}
\author[1]{Marthe Bonamy}
\author[2]{Théo Pierron}
\author[1]{Éric Sopena}
\affil[1]{Univ.  Bordeaux, Bordeaux INP, CNRS, LaBRI, UMR 5800,
  F-33400 Talence, France}
\affil[2]{ Univ. Lyon, Université Lyon 1, LIRIS UMR CNRS 5205,
  F-69621, Lyon, France}
\runningauthor{Marthe Bonamy, Théo Pierron, Éric Sopena}
\begin{document}

\begin{frontmatter}
\maketitle

\begin{abstract}
Total coloring is a variant of edge coloring where both \mbox{vertices} and
edges are to be colored. A graph is totally $k$-choosable if for any
list assignment of $k$ colors to each vertex and each edge, we can
extract a proper total coloring. In this setting, a graph of maximum
degree $\Delta$ needs at least $\Delta+1$ colors. In the planar
case, Borodin proved in 1989 that $\Delta+2$ colors suffice when
$\Delta$ is at least 9. We show that this bound also holds when
$\Delta$ is $8$.

\keywords{Total coloring, Planar graphs, Discharging method, Combinatorial Nullstellensatz, Recoloration}
\end{abstract}
\end{frontmatter}

\section{Introduction}
A graph is totally colored when each vertex and each edge is colored
so that two adjacent vertices or incident elements receive different
colors. The smallest number of colors needed to ensure $G$ has a total
coloring is denoted by $\chi''(G)$. This parameter is a variant of the
chromatic number $\chi$ and the chromatic index $\chi'$ (where only
the vertices or only the edges are to be colored, respectively).

We investigate here the list variant of this coloring: instead of
assigning colors from $\{1,\ldots,k\}$ to elements of the graphs, we
associate to each element a list of available colors and we color each
element with a color from its own list. The question is now to
determine the minimum size of the lists $k$ such that $G$ is totally
$k$-choosable, i.e. such that $G$ has a proper total coloring using
only available colors, regardless of the list
assignment. This new parameter is denoted by $\chi''_\ell(G)$.

Note that by assigning to each element the same list, we simulate the
standard version of coloring. Therefore, list coloring is a
strengthening of standard coloring. In particular,
$\chi''(G)\leqslant \chi''_\ell(G)$. Note that this is not specific to
the total coloring case.

We may first ask how large the gap between a parameter and its list
version can be. While the difference between the chromatic number of a
graph and its list version can be arbitrarily large
(see~\cite{vizing182}), the situation seems to be quite different for
edge and total coloring, as expressed in the following conjecture.

\begin{conjecture}[List coloring
  conjectures~\cite{borodin67,juvan120,vizing182}]
  \label{conj:LC}
  Every simple graph $G$ satisfies $\chi'(G)=\chi'_\ell(G)$ and
  $\chi''(G)=\chi''_\ell(G)$.
\end{conjecture}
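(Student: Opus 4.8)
The plan is to attack Conjecture~\ref{conj:LC} through its two constituent halves — the \emph{list edge colouring} statement $\chi'(G)=\chi'_\ell(G)$ and the \emph{list total colouring} statement $\chi''(G)=\chi''_\ell(G)$ — and, for each half, to combine the three tools that have driven all progress on these problems: the kernel method (to dispose of classes whose line or total graph admits a well-behaved orientation), the semi-random (nibble) method (for asymptotically tight bounds), and, for exact statements on structured classes, unavoidable sets of reducible configurations handled by the discharging method together with the Combinatorial Nullstellensatz and bounded recolouring arguments of the kind developed in the remainder of this paper. I should say at the outset that I do not expect to close the conjecture: both halves are long-standing open problems, and what follows is a research programme rather than a complete argument — the best one can realistically hope to carry out is to settle it on restricted classes (planar graphs, bounded-degree graphs, sparse graphs).

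For the edge half, the natural base case to build on is Galvin's theorem that $\chi'_\ell(G)=\chi'(G)$ whenever $G$ is bipartite: starting from a proper $\chi'(G)$-edge-colouring one orients the line graph $L(G)$ so that every induced subgraph has a kernel (an independent dominating set) — this follows from the stable-marriage theorem — and then a greedy pass over kernels colours $L(G)$ from arbitrary lists of size $\chi'(G)$. The idea of the attack is therefore to show that $L(G)$, or a suitable superstructure, always carries such a kernel-perfect orientation with out-degrees at most $\chi'(G)-1$; this is exactly where the obstruction sits, since non-bipartite line graphs need not have this property, and absent it the best unconditional result remains Kahn's asymptotic bound $\chi'_\ell(G)\leqslant(1+o(1))\chi'(G)$. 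For structured classes one replaces the kernel machinery by structure: for planar $G$ with large $\Delta$, Vizing-type results give $\chi'(G)=\Delta$, and one then proves $\chi'_\ell(G)=\Delta$ by exhibiting an unavoidable set of reducible configurations and discharging against Euler's formula, in the style of Borodin–Kostochka–Woodall.

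For the total half, the key reformulation is that a proper total colouring of $G$ is precisely a proper colouring of the \emph{total graph} $T(G)$, whose vertex set is $V(G)\cup E(G)$ with the obvious adjacencies, so the statement becomes $\chi(T(G))=\chi_\ell(T(G))$. Here $\chi(T(G))=\chi''(G)\geqslant\Delta+1$, and Kahn's method again yields $\chi''_\ell(G)\leqslant(1+o(1))\Delta$, which is asymptotically tight. The plan mirrors the edge case: first reduce, on the class of interest, to a bounded range of values of $\Delta$ (large $\Delta$ being covered by the asymptotic bound); then, for each residual value, build an unavoidable set of configurations that are reducible — typically by fixing the colours of all elements outside a constant-size set and invoking the Combinatorial Nullstellensatz on a carefully chosen monomial, or by a bounded recolouring that frees a colour at a critical element — and finally discharge to contradict the sparsity hypothesis defining the class. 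For planar graphs this is exactly the route taken below for $\chi''_\ell$ and $\Delta\geqslant 8$; splicing it together with the known $\Delta\leqslant 7$ planar cases and with Kahn's asymptotics for $\Delta$ very large would settle the planar instance of the total half of Conjecture~\ref{conj:LC} up to a bounded window of $\Delta$ — but, again, not the conjecture in general.

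The main obstacle, common to both halves, is that list colouring offers none of the global counting slack that makes Galvin's kernel argument work: that argument is genuinely special to orientations possessing kernels, and there is no known way to manufacture such an orientation with small out-degree from an arbitrary optimal colouring of a non-bipartite line or total graph. Consequently every complete proof currently available is confined to a restricted class, and bridging from those classes to all simple graphs seems to require a fundamentally new idea. The discharging-plus-Nullstellensatz-plus-recolouring framework used in this paper is the state of the art precisely because it sidesteps that difficulty — it proves the conjecture for planar graphs with $\Delta\geqslant 8$ (for $\chi''_\ell$) rather than in the generality in which it is stated — and I would expect any honest write-up to make the same concession.
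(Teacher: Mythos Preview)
This statement is an open conjecture, and the paper does not prove it --- indeed the paper explicitly says ``all these conjectures are still widely open'' and then proves only the much more modest Theorem~\ref{thm:main} about planar graphs with $\Delta\geqslant 8$. So there is no proof in the paper against which to compare your attempt.

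You have correctly recognised this: your write-up is explicitly a research programme and a survey of known partial approaches (Galvin's kernel method for bipartite graphs, Kahn's semi-random asymptotic bounds, discharging plus Nullstellensatz for structured classes), not a claimed proof. Your assessment of where the obstruction lies --- the absence of a kernel-perfect orientation with small out-degree for line and total graphs of non-bipartite graphs --- is accurate, and your concession that the framework of this paper settles only a restricted planar instance is exactly right. There is nothing to correct here; the only thing to note is that since the paper offers no proof of Conjecture~\ref{conj:LC}, the comparison the task asks for is vacuous.
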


Another classical problem is to find bounds for all these
parameters. While there exist graphs with large $\Delta$ and small
chromatic number, the situation is different for edge and total
coloring: all the edges around a vertex must receive different colors,
leading to the bounds $\chi'\geqslant \Delta$ and
$\chi''\geqslant \Delta+1$.

Greedy coloring leads to the following bounds:
$\chi_\ell\leqslant \Delta+1$, $\chi'_\ell\leqslant 2\Delta-1$ and
$\chi''_\ell\leqslant 2\Delta+1$. While the first one can be tight
(for example with a clique), this is far from being the case for edge
coloring, as shown by Vizing's theorem.
\begin{theorem}[Vizing~\cite{vizing184}]
  Every simple graph $G$ satisfies $\chi'(G)\leqslant \Delta(G)+1$.
\end{theorem}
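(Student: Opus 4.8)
The plan is to induct on the number of edges, the edgeless case being trivial. For the inductive step, delete an arbitrary edge $e=uv$ of $G$ and apply the induction hypothesis to $G-e$ (a simple graph with $\Delta(G-e)\leqslant\Delta$), obtaining a proper edge colouring $c$ of $G-e$ with colours from a palette of size $\Delta+1$. Since every vertex $w$ has degree at most $\Delta<\Delta+1$ in $G-e$, at least one colour is \emph{missing} at $w$ (used on none of its incident edges); write $M(w)$ for the nonempty set of colours missing at $w$. It then suffices to recolour $G-e$ so that some single colour becomes missing at both $u$ and $v$, for we may then put that colour on $e$.

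The key notion is a \emph{fan} at $u$: a sequence $F=(v=v_0,v_1,\dots,v_k)$ of pairwise distinct neighbours of $u$ (distinct edges, since $G$ is simple) with $c(uv_i)\in M(v_{i-1})$ for all $1\leqslant i\leqslant k$. Fix a maximal fan $F$. The first step is a \emph{rotation} lemma: if some colour $\gamma$ belongs to $M(u)\cap M(v_j)$ for some $j$, then shifting the fan — reassigning to $uv_i$ the old colour $c(uv_{i+1})$ for $0\leqslant i\leqslant j-1$ — yields a proper colouring of $G-uv_j$ that leaves $M(u)$ unchanged, and then $uv_j$ can be coloured $\gamma$; since $uv_0=e$, this finishes everything. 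So we may assume henceforth that $M(u)$ is disjoint from every $M(v_i)$. Pick $\alpha\in M(u)$ and $\beta\in M(v_k)$. Then $\alpha\neq\beta$, the colour $\alpha$ occurs at each $v_i$, and $\beta$ occurs at $u$; moreover, by maximality of $F$ the unique edge of colour $\beta$ at $u$ must be $uv_j$ for some $j\geqslant1$ (else $F$ could be extended by the other endpoint of that edge), whence $\beta\in M(v_{j-1})$ as well.

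The second step is a Kempe chain argument. In the subgraph $H$ consisting of the edges coloured $\alpha$ or $\beta$, every vertex has degree at most $2$, so the components of $H$ are paths and cycles; and $u$, $v_{j-1}$, $v_k$ each miss one of $\alpha,\beta$, so each is an endpoint of a path of $H$. The path of $H$ containing $u$ begins with the edge $uv_j$, and since a path has only one endpoint besides $u$, at least one of $v_{j-1}$ and $v_k$ lies outside that path. One now swaps $\alpha$ and $\beta$ along a carefully chosen path of $H$: the path through $u$ when $v_{j-1}$ avoids it — which makes $\beta$ missing at $u$ while keeping it missing at $v_{j-1}$, the truncated fan $(v_0,\dots,v_{j-1})$ surviving the swap — and otherwise the path through $v_k$ — which makes $\alpha$ missing at $v_k$ while keeping $\alpha\in M(u)$ and leaving all of $F$ intact. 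In either case the rotation lemma now applies and completes the colouring of $G$. The main obstacle is precisely this last step: a careless swap can destroy the condition $c(uv_i)\in M(v_{i-1})$ at a fan vertex that happens to be an endpoint of the swapped path, so the art is to route the alternating path away from the fan vertices one still needs, which is why one must keep track of where the Kempe chain terminates. Note that the whole argument is local to $u$, its fan, and two alternating paths, so it is exactly the degree bound $\Delta$ of the simple graph $G$ that guarantees the missing colours used throughout.
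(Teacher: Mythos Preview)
The paper does not prove this statement at all: Vizing's theorem is quoted as background and attributed to~\cite{vizing184}, with no argument given. There is therefore nothing to compare against on the paper's side.

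Your proof is the standard and correct Vizing fan argument: induct on $|E(G)|$, build a maximal fan $(v_0,\dots,v_k)$ at $u$, and either rotate directly or perform an $\alpha/\beta$ Kempe swap along the alternating path that avoids whichever of $v_{j-1}$, $v_k$ you still need, then rotate. The only places worth tightening if you want a self-contained write-up are (i) the observation that $j<k$ (since $c(uv_k)\neq\beta$ as $\beta\in M(v_k)$), so $v_{j-1}\neq v_k$ and the ``at least one of $v_{j-1}$, $v_k$ avoids the $u$-path'' step is non-vacuous; and (ii) the verification that the fan conditions $c(uv_i)\in M(v_{i-1})$ survive the swap, which follows because for $i\neq j$ the colour $c(uv_i)$ is neither $\alpha$ nor $\beta$ and hence is untouched by any $\alpha/\beta$ swap. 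Both points are implicit in what you wrote, so the sketch is sound.
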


This result is the starting point of some conjectures stated
below. The first one is a weakening of Conjecture~\ref{conj:LC}, and
deals only with choosability.
\begin{conjecture}[Vizing~\cite{vizing182}]
  Every simple graph $G$ satisfies $\chi'_\ell(G)\leqslant \Delta(G)+1$.
\end{conjecture}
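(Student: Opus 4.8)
The statement just displayed is Vizing's list edge-coloring conjecture, which remains open; what follows is therefore a plan of attack rather than a complete argument, organized around the two techniques behind the partial results known so far. The plan is to try to lift Vizing's theorem to the list setting by the \emph{kernel method} (the method of stable set partitions), which is exactly how Galvin settled the bipartite case. Fix a list assignment $L$ with $|L(e)|\ge\Delta+1$ for every edge $e$, and form the line graph $H$ of $G$. First I would produce an orientation $D$ of $H$ in which the vertex corresponding to each edge $e$ has out-degree at most $|L(e)|-1$; since a proper $(\Delta+1)$-edge-coloring of $G$ exists by Vizing's theorem, one natural attempt is to orient each edge of $H$ from the smaller to the larger color class, which spreads out-degrees evenly inside the two kinds of cliques of $H$ (the stars around a vertex and the triangles formed by three pairwise incident edges). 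The crux is then to show that $D$ is \emph{kernel-perfect}, i.e.\ that every induced subdigraph has an independent set absorbing all arcs. Granting this, a greedy sweep over the colors finishes the proof: for each color $c$, consider the subdigraph of $D$ induced by the edges whose list still contains $c$, take a kernel, assign $c$ to all its edges, delete them, and shrink the lists; kernel-perfectness guarantees the kernel exists at every stage, independence makes color $c$ proper, and the out-degree bound ensures that no list empties before it is used.

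The step I expect to be the real obstacle is precisely kernel-perfectness of an orientation of a line graph of maximum degree $\Delta+1$. In the bipartite case it works because a proper $\Delta$-edge-coloring (K\"onig) makes every clique of $H$ transitively oriented, and then Richardson's theorem --- a digraph with no odd directed cycle has a kernel --- applies to every induced subdigraph. For general $G$ the $(\Delta+1)$-edge-coloring supplied by Vizing's theorem lacks this structure: odd cycles of $G$ and ``alternating'' configurations reintroduce odd directed cycles into $D$, and no way around this is known. This is exactly the point where a genuinely new idea would be required.

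An alternative, more computational route is the polynomial method: by the Combinatorial Nullstellensatz it suffices to exhibit an orientation of $H$ with out-degrees at most $\Delta$ satisfying the Alon--Tarsi condition (unequal numbers of even and odd spanning Eulerian subdigraphs), which reduces the conjecture to a purely orientation-theoretic statement; but verifying this for all line graphs of maximum degree $\Delta+1$ appears no more tractable than the kernel approach, and indeed no one has managed it. So I should be candid that a complete proof along either line is beyond current methods: unconditionally one can only claim Kahn's asymptotic bound $\chi'_\ell(G)\le(1+o(1))\Delta$ and its refinements, together with exact results for restricted classes --- bipartite graphs, graphs of bounded maximum degree, and (relevant to the present paper) planar graphs with $\Delta$ large.
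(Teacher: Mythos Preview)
The paper does not prove this statement: it is stated there as an open conjecture (Vizing's list edge-coloring conjecture), with no accompanying argument. So there is no ``paper's own proof'' to compare your proposal against.

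Your write-up is appropriately honest about this --- you explicitly call it a plan of attack rather than a proof --- and your summary of the known partial approaches is accurate. The kernel method you outline is indeed Galvin's proof in the bipartite case, and you correctly identify where it breaks for general graphs: the orientation induced by a Vizing $(\Delta+1)$-edge-coloring need not be kernel-perfect once odd cycles are present. Your alternative via the Alon--Tarsi polynomial method is likewise a known reformulation with the same bottleneck. The concluding remarks about Kahn's asymptotic bound and the exact results for bipartite graphs, small $\Delta$, and planar graphs with large $\Delta$ are all correct context.

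In short: there is no gap to flag in your reasoning because you never claim to close the problem, and there is no divergence from the paper because the paper offers nothing to diverge from on this particular statement.
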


The two others generalize the previous ones to the setting of total
coloring.
\begin{conjecture}[\cite{behzad14,vizing182}]
  \label{conj:tot}
  Every simple graph $G$ satisfies $\chi''(G)\leqslant \Delta(G)+2$.
\end{conjecture}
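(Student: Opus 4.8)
The statement is the Total Colouring Conjecture, which is open in full generality, so what follows is the line of attack I would pursue together with the point at which I expect it to stall. The first move is to decouple the colouring of $E(G)$ from that of $V(G)$. Vizing's theorem supplies a proper edge colouring of $G$ with $\Delta+1$ colours, and we are allowed $\Delta+2$ colours in total; but a vertex $v$ of degree $d(v)$ is forbidden the up to $d(v)$ colours on its incident edges \emph{and} the up to $d(v)$ colours on its neighbours, so when $d(v)=\Delta$ the naive order ``colour all edges, then all vertices'' can leave $v$ with no legal colour: its $\Delta$ edges use $\Delta$ of the $\Delta+1$ edge colours, and its $\Delta$ neighbours may already occupy both the missing edge colour and the extra colour. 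The real task is therefore to build an edge colouring in which, for every $v$, the colours on the edges at $v$ together with the colours eventually placed on $N(v)$ do not exhaust all $\Delta+2$ colours.

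To force this I would randomise, following the Molloy--Reed scheme. Start from a proper edge $(\Delta+1)$-colouring; then for a sparse random set of vertices uncolour the incident edges and recolour them from a slightly enlarged palette, arranging that with controlled probability the missing edge colour at each affected vertex avoids its neighbourhood. For every vertex $v$ define the bad event ``$v$ has no free colour after the vertex phase'', bound its probability and the number of bad events it depends on, and apply the Lovász Local Lemma to obtain a colouring avoiding all of them, hence a proper total colouring. Carried out carefully this already yields $\chi''(G)\le\Delta+C$ for an absolute constant $C$.

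The hard part --- and the reason the conjecture is still open --- is that the Local Lemma slack needed to kill all bad events at once costs a \emph{constant} number of colours rather than $2$, and sharpening the probabilistic argument alone does not seem to bring $C$ down to $2$. The two regimes where one can currently do better are (i)~bounded $\Delta$, where the analysis reduces to a finite case check, and (ii)~sparse or planar graphs, where a minimum counterexample contains an unavoidable reducible configuration and a discharging argument yields a contradiction, each configuration being eliminated by colouring $G$ minus the configuration by minimality and re-extending via Combinatorial Nullstellensatz, kernel methods, or local recolouring --- precisely the toolkit deployed later in this paper for planar graphs with $\Delta=8$. A realistic version of this plan therefore proves the statement only in those regimes; closing the general case is the obstacle, and I do not expect the probabilistic and structural mixture above to do it without a genuinely new idea.
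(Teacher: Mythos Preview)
You are correct that this statement is the Total Colouring Conjecture and that it is open; in fact the paper does not prove it either. It is stated in the paper purely as background motivation (as Conjecture~\ref{conj:tot}), and the paper's actual contribution is the much more restricted Theorem~\ref{thm:main}, namely the list version for \emph{planar} graphs with $\Delta=8$. There is therefore no ``paper's own proof'' of this statement to compare your proposal against.

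Your survey of the situation is accurate: the Molloy--Reed probabilistic approach gives $\chi''(G)\le\Delta+C$ for an absolute constant $C$ but not $C=2$, and the discharging-plus-reducible-configurations method you sketch is exactly the one the paper uses for its planar special case. Your honest assessment that neither line closes the general conjecture, and that a new idea is needed, matches the current state of knowledge. So there is no gap in your reasoning about why you cannot prove it; the only thing to flag is that you were asked to prove a statement the paper itself leaves as a conjecture.
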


\begin{conjecture}
  \label{conj:tot_list}
  Every simple graph $G$ satisfies $\chi''_\ell(G)\leqslant \Delta(G)+2$.
\end{conjecture}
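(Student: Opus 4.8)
We do not resolve Conjecture~\ref{conj:tot_list} in general; the plan is to establish it for planar graphs with $\Delta\geqslant 8$. Since the bound $\chi''_\ell(G)\leqslant\Delta(G)+2$ is already known for planar graphs with $\Delta\geqslant 9$ (the result of Borodin quoted above), it suffices to prove that every planar graph $G$ with $\Delta(G)\leqslant 8$ is totally $10$-choosable. We argue by contradiction: among all planar graphs of maximum degree at most $8$ that are not totally $10$-choosable, let $G$ minimise $|V(G)|+|E(G)|$; fix a $10$-list assignment $L$ admitting no proper total $L$-colouring of $G$, and fix a plane embedding of $G$. Minimality then gives us our two levers: $G$ is connected, and every graph obtained from $G$ by deleting, identifying or contracting something---as long as it stays planar with maximum degree at most $8$ and strictly smaller than $G$---is totally $L'$-colourable for every $10$-list assignment $L'$.

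The first half of the proof is a catalogue of \emph{reducible configurations}: bounded local patterns---a vertex of small degree, or a short face, together with degree restrictions on the surrounding vertices---that $G$ cannot contain. For each, we delete (or identify/contract) a small piece $H$, invoke minimality to totally $L$-colour the rest, and extend the colouring over $H$. This extension is the heart of the matter. Where every uncoloured element has fewer already-coloured neighbours than colours in its list, a greedy order suffices, and this disposes of the ``sparse'' configurations. But with $\Delta=8$ and lists of size $10$ there is essentially no slack---even a single $2$-vertex with two degree-$8$ neighbours resists naive deletion---so the tight configurations call for the two tools announced in the keywords: the Combinatorial Nullstellensatz, which completes the colouring whenever the extension problem translates into a polynomial with a nonzero coefficient of the right monomial (morally, when the constraints are ``almost'' greedily satisfiable), and recolouring, where one first uncolours a few well-chosen coloured elements near $H$, colours the critical region, and then repairs the uncoloured elements along bichromatic Kempe components.

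Once $G$ is shown to avoid all configurations in the catalogue, the second half is a discharging argument. By Euler's formula, assigning charge $d(v)-4$ to every vertex $v$ and $|f|-4$ to every face $f$ produces total charge $-8$, and only $2$-vertices, $3$-vertices and triangular faces are initially negative. We then prescribe discharging rules---typically high-degree vertices and long faces donating fixed amounts of charge to incident triangles and to nearby small-degree vertices---and verify, case by case and using the absence of every reducible configuration, that after redistribution no vertex and no face is left with negative charge. This contradicts the total charge being $-8$, so no counterexample exists and the theorem follows.

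The step I expect to dominate the work is building the configuration catalogue: each tight configuration needs a delicate extension argument (greedy ordering, then the Combinatorial Nullstellensatz, and, where even that fails, recolouring of the already-coloured neighbourhood), and the catalogue must simultaneously be reducible and rich enough that the discharging rules can be tuned to make every final charge nonnegative. In practice this forces several rounds of going back and forth between the reducibility analysis and the discharging.
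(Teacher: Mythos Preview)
Your plan matches the paper's architecture: minimal counterexample, a catalogue of reducible configurations handled by greedy orderings, the Combinatorial Nullstellensatz, and recolouring, followed by discharging to reach a contradiction. Two points of divergence are worth flagging.

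First, the initial charges: the paper uses $\omega(v)=d(v)-6$ and $\omega(f)=2\ell(f)-6$ (total $-12$), not $d(v)-4$ and $\ell(f)-4$. With the paper's choice, $5$-vertices start negative and must \emph{receive} charge, which drives a large part of the rule design ($7$- and $8$-vertices donating to their weak $5$-neighbours via finely graded amounts). Your weighting pushes the deficit onto triangles instead; that is a legitimate alternative, but it would lead to a rather different rule set and a different catalogue of tight configurations.

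Second, the recolouring is not done via bichromatic Kempe chains. The paper introduces a \emph{color shifting graph} on a colored clique $S$ in the total graph: one vertex per element of $S$, one vertex $s_\alpha$ per colour $\alpha$ appearing in the relevant lists, and arcs recording when one element's colour could legally replace another's. A directed cycle in this digraph yields a cyclic shift of colours on $S$ that changes the colour of some targeted element while keeping the partial colouring proper. Lemmas bounding in-degrees and guaranteeing a large strong component make this effective. This machinery, rather than Kempe exchanges, is what reduces the last block of configurations ($C_{17}$ through $C_{21}$).

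A small slip: $G$ has no $2$-vertices (configuration $C_1$ forces $\delta(G)\geqslant 3$), and a $2$-vertex with two $8$-neighbours would in fact be trivially reducible, so that illustration of tightness is off.
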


While all these conjectures are still widely open, many attempts have
been made to tackle them, leading to many results on various graph
classes. For planar graphs, most of the known results are detailed
in~\cite{borodinsurvey}. We summarize the latest ones in the following
table.

\begin{figure}[!h]
\centering
\begin{tabular}{|cc|ll|c|}
  \hline
  & Edge coloring & $\leqslant \Delta+1$& & \cite{vizing184}\\
  & Edge coloring & $\leqslant \Delta$ &when $\Delta \geqslant 7$ (false for $\Delta \leqslant 5$)& \cite{sanders169,vizing185}\\
  \hline
  List & edge coloring & $\leqslant \Delta+1$& when $\Delta \geqslant 8$ or $\Delta\leqslant 4$ & \cite{bonamy,borodin27,juvan120,vizing182}\\
  List & edge coloring & $\leqslant \Delta$ &when $\Delta \geqslant 12$ (false for $\Delta \leqslant 5$) & \cite{borodin27,borodin67}\\
  \hline
  & Total coloring & $\leqslant \Delta+2$ &when $\Delta \neq 6$ & \cite{borodin21,jensen118,kostochka125,sanders169}\\
  & Total coloring & $\leqslant \Delta+1$ &when $\Delta \geqslant 9$ (false for $\Delta \leqslant 3$)& \cite{borodin21,borodin19,borodin67,kowalik130,wang190}\\
  \hline
  List & total coloring & $\leqslant \Delta+2$& when $\Delta \geqslant 9$ or $\Delta\leqslant 3$ & \cite{borodin21,borodin19,juvan120}\\
  List & total coloring & $\leqslant \Delta+1$ &when $\Delta \geqslant 12$ (false for $\Delta\leqslant 3$)& \cite{borodin21,borodin19,borodin67}\\
  \hline
\end{tabular}
\caption{Results for planar graphs with large maximum degree}
\end{figure}
Note that in each case, the bounds on the total-coloring parameters
are one more than their edge-coloring counterpart.

In this paper, we prove the following theorem.
\begin{theorem}
  \label{thm:main}
  Every planar graph of maximum degree at most $8$ is totally
  $10$-choosable.
\end{theorem}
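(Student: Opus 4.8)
The plan is to argue by contradiction. Suppose the statement fails, and let $G$ be a counterexample minimizing $|V(G)|+|E(G)|$; thus $G$ is planar, $\Delta(G)\leqslant 8$, there is a list assignment $L$ that gives $10$ colors to every vertex and edge but admits no proper total $L$-coloring of $G$, and every graph with fewer vertices and edges is totally $10$-choosable. Fix a plane embedding of $G$. Since $10\geqslant\Delta(G)+2$, every element of $G$ has at least two ``spare'' colors compared with the trivial obstruction, and this slack is what the whole argument exploits; one may also first reduce to the case where $G$ is $2$-connected.

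The heart of the proof is a list of \emph{reducible configurations}: local structures, none of which can occur in $G$. The elementary ones state, for instance, that $G$ has minimum degree at least $2$, that two $2$-vertices are nonadjacent, that a $2$-vertex has no neighbor of small degree, and that a vertex of degree $d$ cannot be incident to too many $2$- and $3$-vertices; the harder ones prescribe a low-degree vertex together with a detailed pattern of degrees and adjacencies among its neighbors and second neighbors. Each reducibility proof follows the same scheme: delete from $G$ a small set $S$ of vertices and edges determined by the configuration (or contract/split while keeping planarity), color the resulting smaller graph by minimality, and then color the elements of $S$ --- possibly after recoloring a few of their already-colored neighbors. When the bottleneck elements see too many colors for a direct greedy completion, one of two devices is used: a short \emph{recoloration} step that swaps colors on nearby colored elements to release a needed color, or the Combinatorial Nullstellensatz applied to the polynomial $\prod(z_x-z_y)$ ranging over the conflicting pairs among the uncolored elements, where one checks that the monomial $\prod_x z_x^{|L(x)|-1}$ has nonzero coefficient (an Alon--Tarsi-type count on the associated conflict graph), which yields a valid completion for \emph{every} $L$. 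Carrying out these coefficient computations and the alternating-path bookkeeping for the most intricate configurations is the routine but lengthy part of the work.

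It remains to reach a contradiction by discharging. By Euler's formula, assigning initial charge $\mu(v)=d(v)-6$ to each vertex $v$ and $\mu(f)=2d(f)-6$ to each face $f$, where $d(f)$ is the length of $f$, makes the total charge equal to $-12$. I would then fix discharging rules that move charge from high-degree vertices and long faces toward $2$- and $3$-vertices and short faces --- typically each vertex of degree at least $k$ donates a fixed amount through each incident face to each nearby low-degree vertex, and each face of length at least $4$ redistributes its surplus to the small vertices on its boundary. Using that none of the reducible configurations occurs, a case analysis on the degree of each vertex, and on the length and surrounding degrees of each face, shows that every element ends with nonnegative charge, contradicting that the total is $-12$. (For $\Delta(G)\leqslant 7$ the same argument works with room to spare, so it suffices to treat $\Delta(G)=8$.)

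The main obstacle is that at $\Delta=8$ the discharging balance is far tighter than in Borodin's $\Delta\geqslant 9$ case: one is forced to use a substantially larger family of reducible configurations and finely tuned rules, and the real difficulty is twofold --- pushing the recoloration and Combinatorial Nullstellensatz reducibility arguments through for the hardest configurations, and designing the rules so that the discharging closes exactly, which in practice means iterating between enlarging the configuration list and retuning the rules until no positive-charge deficit remains.
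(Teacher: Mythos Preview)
Your proposal outlines essentially the same approach as the paper: a minimum counterexample, a long list of reducible configurations handled by case analysis, the Combinatorial Nullstellensatz, and a recoloring technique (which the paper formalizes via a ``color shifting graph''), followed by discharging with exactly the charges $d(v)-6$ and $2\ell(f)-6$ you name. One small correction: since an edge $uv$ with $d(u)\leqslant 4$ and $d(u)+d(v)\leqslant 10$ is already reducible, the minimum degree of the counterexample is at least $3$, so $2$-vertices never enter the argument and the configurations and rules are organized entirely around $3$-, $4$-, and $5$-vertices.
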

Together with the result of~\cite{borodin19}, we obtain the
following corollary:
\begin{corollary}
  Every planar graph of maximum degree $\Delta$ at least $8$ is
  totally $(\Delta+2)$-choosable.
\end{corollary}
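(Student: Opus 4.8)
The plan is to argue by contradiction via the discharging method, following the usual template for such planar colouring results. Suppose $G$ is a counterexample minimising $|V(G)|+|E(G)|$. Standard minimality arguments show $G$ is connected with minimum degree at least $2$: a vertex of degree at most $1$, or more generally any small configuration that can be ``cut off'', contradicts minimality since a total $10$-colouring of the smaller graph extends greedily to the few uncoloured elements. Fix a plane embedding of $G$. By Euler's formula, assigning to each vertex $v$ the charge $\deg(v)-4$ and to each face $f$ the charge $\ell(f)-4$, where $\ell(f)$ is the length of $f$, yields total charge $-8<0$. The contradiction will come from redistributing these charges by local rules so that every vertex and every face ends up with nonnegative charge, forcing $-8\geqslant 0$.

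The heart of the proof is a catalogue of \emph{reducible configurations}: small subgraphs that cannot occur in $G$. For each such configuration $H$ one uncolours (or deletes) a carefully chosen set $S$ of vertices and edges -- typically a low-degree vertex together with its incident edges, or a short face together with some incident elements -- invokes minimality to totally $10$-colour what remains, and then shows the colouring extends to $S$. Since every element has a list of size $10$, while a vertex of degree $d\leqslant 8$ together with its $d$ incident edges involves only $d+1\leqslant 9$ elements, greedy extension succeeds whenever the colours already forbidden from outside $H$ do not overload a list; when they do, one appeals to the Combinatorial Nullstellensatz -- setting up the graph polynomial of the conflict structure on $S$ and exhibiting a top-degree monomial with nonzero coefficient -- or to a local recolouring step that frees a colour on some boundary element before re-extending. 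Typical members of the catalogue will be: $2$-vertices with a small-degree neighbour; pairs of adjacent low-degree vertices; triangles and $4$-faces incident to several vertices of small degree; and various ``fan''-type neighbourhoods of vertices of degree between $4$ and $8$.

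With the reducible configurations in hand I would design discharging rules sending charge from high-degree vertices and long faces to their poor neighbours and incident short faces -- for instance, each vertex of degree at least $5$ sends a fixed amount across every incident edge to a $2$- or $3$-neighbour and to each incident triangle or $4$-face, with perhaps a second round handling faces still in deficit. One then checks, by a case analysis on vertex degree and face length, that the absence of the reducible configurations forces each final charge to be nonnegative: a $2$-vertex survives because both its neighbours must be large, a short face survives because it cannot carry too many small vertices, a vertex of degree at least $7$ has enough surplus to pay for all its demands, and so on. Summing the final charges contradicts $-8<0$, so no counterexample exists and Theorem~\ref{thm:main} follows; combined with~\cite{borodin19} this gives the corollary.

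The main obstacle -- and the reason the case $\Delta=8$ remained open while $\Delta\geqslant 9$ was known -- is the razor-thin slack: with only $10-(\Delta+1)=1$ spare colour, many configurations that are comfortably reducible when $\Delta\geqslant 9$ are no longer so, which forces a substantially larger and more delicate set of reducible configurations. The extension arguments can no longer be purely greedy: the Combinatorial Nullstellensatz computations (verifying that the relevant coefficient is nonzero, frequently for several candidate monomials per configuration) and the recolouring arguments become intricate, and they must be tuned precisely to a discharging scheme sharp enough to close the gap. Striking the balance between ``enough reducible configurations for the discharging to go through'' and ``configurations one can actually prove reducible with a single spare colour'' is the crux of the whole argument.
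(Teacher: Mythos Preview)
Your reduction of the corollary to Theorem~\ref{thm:main} together with the result of~\cite{borodin19} is exactly what the paper does; the corollary has no independent proof beyond that one sentence.

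What you have written for Theorem~\ref{thm:main} itself, however, is a plan rather than a proof. You correctly identify the overall architecture---discharging, a catalogue of reducible configurations, extension via greedy arguments, the Combinatorial Nullstellensatz, and recolouring---and you are right that the single spare colour is what makes $\Delta=8$ hard. But the entire content of the theorem lies in the specific catalogue and the specific rules, neither of which you supply. The paper uses roughly twenty configurations $C_1,\ldots,C_{21}$ (many of them intricate local patterns around $7$- and $8$-vertices, defined through auxiliary notions such as $(p,q)$-neighbours, $S_3/S_5/S_6$-neighbours, and $E_3$-neighbours) and fourteen discharging rules $R_1,\ldots,R_{14}$; several reducibility proofs require computer-verified Nullstellensatz coefficients, and others rely on a structured recolouring device (the \emph{color shifting graph}) that goes well beyond ``free a colour on some boundary element''. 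None of this can be reconstructed from the template you describe.

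Two smaller points: the paper's initial weights are $d(v)-6$ for vertices and $2\ell(f)-6$ for faces, summing to $-12$ rather than $-8$; and the minimum degree of a minimal counterexample is $3$ (configuration $C_1$ already excludes degree $\leqslant 2$), so $2$-vertices play no role.
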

This transposes the results of~\cite{bonamy} and~\cite{jensen118} to
the setting of total list coloring. Moreover, it proves that
Conjecture~\ref{conj:tot_list} holds for planar graphs of maximum
degree at least $8$.

\subsection*{Organization of the paper}

The proof of Theorem~\ref{thm:main} uses the discharging method. This
approach was introduced more than a century ago in~\cite{wernicke} to
study the Four-Color Conjecture, now a theorem. It is especially
well-suited for studying sparse graphs, and leads to many results, as
shown in~\cite{borodinsurvey,cranstonwestsurvey}. 

We proceed by contradiction. Assuming that Theorem~\ref{thm:main} has
a counterexample, we consider the one with the smallest number of
edges. Our goal is to prove that $G$ satisfies structural properties
incompatible with planarity, hence the conclusion. We consider $G$
together with a planar embedding $\mathcal{M}$. Unless specified
otherwise, all the faces discussed in the proof are faces in
$\mathcal{M}$.

In Section~\ref{sec:overview}, we present a set of so-called
\emph{reducible configurations}, which by minimality $G$ cannot
contain. This is proven in Section~\ref{sec:reduction}. We then strive
to reach a contradiction with the planarity of $G$. To do so, we use
the discharging method. This means that we assign some initial weight
to vertices and faces of $G$, then we redistribute those weights, and
obtain a contradiction by double counting the total weight. We present
in Section~\ref{sec:positive} an appropriate collection of discharging
rules, and argue that every element of $G$ ends up with non-negative
weight while the total initial weight was negative.

\section{Proof overview}
\label{sec:overview}

In this section, we describe the reducible configurations. We first
introduce some notation in order to simplify the statements of
configurations.

\subsection{Notation}

We say that a vertex is \emph{triangulated} if all the faces
containing it are triangles. Given a vertex $u$ and two of its
neighbors $v_1,v_2$, the \emph{triangle-distance} between $v_1$ and
$v_2$ around $u$, denoted by $\dist_u(v_1,v_2)$, is the (possibly
infinite) distance between $v_1$ and $v_2$ in the subgraph of $G$
induced by the edges $vw$ such that $uvw$ is a triangular face (see
Figure~\ref{fig:trig_dist}). This distance is the minimum of the
lengths of (at most) two paths in the neighborhood of $u$, each one
turning in one direction. In all the following figures, a node
containing an integer $i$ represents a vertex with degree $i$. An
empty node is a vertex with no degree constraint. Moreover, all the
edges incident to the depicted vertices are not necessarily drawn, and
the drawing does not necessarily corresponds to the chosen embedding
of the graph. When we reduce a configuration, we give a figure with
the names of all the elements we will have to color. It may happen that we do not erase the color of some elements from a coloring obtained using miniamlity. When this happens, the corresponding
element will be depicted in boldface (and may not be given a name).

\begin{figure}[!ht]
  \centering
  \begin{tikzpicture}[every node/.style={draw=black,minimum size = 10pt,ellipse,inner sep=1pt},node distance=1.5cm]
    \node [label=right:{$u$}] (v) at (0,0) {$5$};
    \node [label=right:{$v_3$}](v1) at (108:1) {$2$};
    \node (u) at (180:1)  {$3$};
    \node (v3) at (252:1) {$3$};
    \node [label=right:{$v_1$}](v4) at (324:1) {$2$};
    \node [label=right:{$v_2$}](v2) at (36:1) {$1$};
    \draw (v3) -- (v) -- (v2);
    \draw (v) -- (u);
    \draw (v) -- (v1) -- (u) -- (v3) -- (v4) -- (v);
  \end{tikzpicture}
  \caption{Triangle-distance: $d_u(v_1,v_2)=\infty$ and $d_u(v_1,v_3)=3$.}
  \label{fig:trig_dist}
\end{figure}
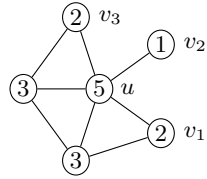

Given an edge $uv$, we say that $v$ is:
\begin{itemize}
\item[$\bullet$] a \emph{weak neighbor} of $u$ if either $v$ is a
  $4^-$-vertex and both faces containing the edge $uv$ are triangles,
  or $v$ is a triangulated $5$-vertex (see Figure~\ref{fig:weak}).

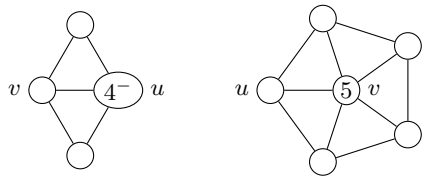
\begin{figure}[!ht]
  \centering
  \begin{tikzpicture}[every node/.style={draw=black,minimum size = 10pt,ellipse,inner sep=1pt},node distance=1.5cm]
    \node [label=right:{$v$}] (v) at (0,0) {$4^-$};
    \node [label=left:{$u$}] (u) at (180:1)  {};
    \node (v1) at (-0.5,0.866){};
    \node (v2) at (-0.5,-0.866){};
    \draw (u) -- (v) -- (v1) -- (u) -- (v2) -- (v);
    \tikzset{xshift=3cm}
    \node [label=right:{$v$}] (v) at (0,0) {5};
    \node [label=left:{$u$}] (u) at (180:1)  {};
    \node (v1) at (252:1){};
    \node (v3) at (324:1){};
    \node (v4) at (36:1){};
    \node (v5) at (108:1){};
    \draw (u) -- (v1) -- (v3) -- (v4) -- (v5) -- (u);
    \draw (v1) -- (v) -- (u);
    \draw (v3) -- (v) -- (v4);
    \draw (v5)-- (v); 
  \end{tikzpicture}
  \caption{A weak neighbor $v$ of $u$.}
  \label{fig:weak}
\end{figure}
\item[$\bullet$] a \emph{semi-weak neighbor} of $u$ if $v$ is a
  $4^-$-vertex and exactly one of the faces containing $uv$ is a
  triangle (see Figure~\ref{fig:semiweak}).

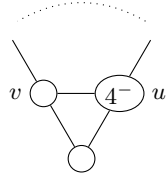
\begin{figure}[!ht]
  \centering
  \begin{tikzpicture}[every node/.style={draw=black,minimum size = 10pt,ellipse,inner sep=1pt},node distance=1.5cm]
    \node [label=right:{$v$}] (v) at (0,0) {$4^-$};
    \node [label=left:{$u$}] (u) at (180:1)  {};
    \node [draw=white] (v1) at (0.5,0.866){};
    \node [draw=white] (v2) at (-1.5,0.866){};
    \node (v3) at (-0.5,-0.866) {};
    \draw (v2) -- (u) -- (v) -- (v1);
    \draw (u) -- (v3) -- (v);
    \draw[dotted, bend right] (v1) to (v2);
  \end{tikzpicture}
  \caption{A semi-weak neighbor $v$ of $u$.}
  \label{fig:semiweak}
\end{figure}
\end{itemize}

Moreover, if $v$ is a weak neighbor of $u$, we often consider the
degree of the common neighbors of $u$ and $v$. We thus define the
following: for any integers $p\leqslant q$, we say that $v$ is a
\emph{$(p,q)$-neighbor} of $u$ if $v$ is a weak neighbor of $u$ and
the two vertices $w_1,w_2$ such that $uvw_1$ and $uvw_2$ are
triangular faces have degree $p$ and $q$. The same holds with $p^+$
(resp. $p^-$), meaning that the degree is at least (resp. at most)
$p$.

We also define special types of $5$-vertices. Consider a $7$-vertex
$u$ with a weak neighbor $v$ of degree 5. We say that $v$ is an:
\begin{enumerate}[label=(\roman*)]
\item \emph{$S_3$-neighbor} of $u$ if one of the following conditions
  holds:
\begin{itemize}
\item[$\bullet$] $v$ is a $(6,6^+)$-neighbor of $u$.
\item[$\bullet$] $v$ is a $(7^+,7^+)$-neighbor of $u$ and $v$ has two
  neighbors $w_1,w_2$ such that $d(w_1)=d(w_2)=6$ and $uvw_1,uvw_2$
  are not triangular faces.
\item[$\bullet$] $v$ has a neighbor $w$ of degree $5$ such that $uvw$
  is not a triangular face.
\end{itemize}
\begin{figure}[!ht]
  \centering
  \begin{tikzpicture}[every node/.style={draw=black,minimum size = 10pt,ellipse,inner sep=1pt},node distance=1.5cm]
    \node [label=right:{$v$}] (v) at (0,0) {$5$};
    \node [label=left:{$u$}] (u) at (180:1)  {$7$};
    \node [label=right:{$w_1$}](v1) at (108:1) {$6$};
    \node [label=right:{$w_2$}](v2) at (252:1) {$6^+$};
    \draw (u) -- (v) -- (v1) -- (u) -- (v2) -- (v);

    \node [xshift=3cm,label=right:{$v$}] (v) at (0,0) {$5$};
    \node [xshift=3cm,label=left:{$u$}] (u) at (180:1)  {$7$};
    \node[xshift=3cm] (v1) at (108:1) {$7^+$};
    \node[xshift=3cm] (v2) at (252:1) {$7^+$};
    \node[xshift=3cm,label=right:{$w_1$}] (v3) at (36:1) {$6$};
    \node[xshift=3cm,label=right:{$w_2$}] (v4) at (324:1) {$6$};
    \draw (v) -- (v1) -- (u) -- (v) -- (v2) -- (v4) -- (v) -- (v3) -- (v4);
    \draw (v1) -- (v3);
    \draw (u) -- (v2);

    \node [xshift=6.5cm,label=above right:{$v$}] (v) at (0,0) {$5$};
    \node [xshift=6.5cm,label=left:{$u$}] (u) at (180:1)  {$7$};
    \node[xshift=6.5cm] (v1) at (108:1) {};
    \node[xshift=6.5cm] (v2) at (252:1) {};
    \node[xshift=6.5cm,label=right:{$w$}] (v3) at (0:1) {$5$};
    \draw (u) -- (v1) -- (v) -- (v2) -- (u) -- (v) -- (v3);
  \end{tikzpicture}
\caption{$v$ is an $S_3$-neighbor of $u$}
\end{figure}
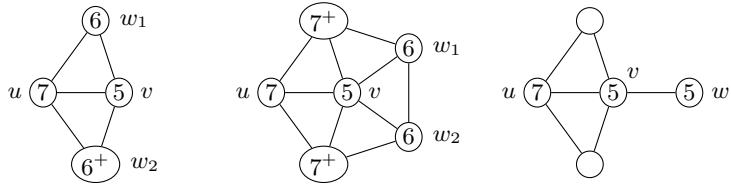
\item \emph{$S_5$-neighbor} of $u$ if every neighbor of $v$ has degree
  $7$.
\item \emph{$S_6$-neighbor} of $u$ if it is not a $(5,6)$-neighbor
  of $u$, nor an $S_3$-neighbor nor an
  $S_5$-neighbor.

\end{enumerate}

We give a similar definition when $u$ is an $8$-vertex with a weak
neighbor $v$ of degree 5. We say that $v$ is an \emph{$E_3$-neighbor}
of $u$ if one of the following conditions holds (see
Figure~\ref{fig:E3neighbor}):
\begin{itemize}
\item[$\bullet$] $v$ is a $(6,7^+)$- or $(7,7)$-neighbor of $u$.
\item[$\bullet$] $v$ is a $(7^+,8)$-neighbor of $u$ and $v$ has two
  neighbors $w_1,w_2$ such that $d(w_1)=d(w_2)=6$ and $uvw_1,uvw_2$
  are not triangular faces.
\item[$\bullet$] $v$ is a $(7^+,8)$-neighbor of $u$ and $v$ has a
  neighbor $w$ of degree $5$ such that $uvw$ is not a triangular face.
\end{itemize}
\begin{figure}[!ht]
  \centering
  \begin{tikzpicture}[every node/.style={draw=black,minimum size = 10pt,ellipse,inner sep=1pt},node distance=1.5cm]
    \node [xshift=-3cm,label=right:{$v$}] (v) at (0,0) {$5$};
    \node [xshift=-3cm,label=left:{$u$}] (u) at (180:1)  {$8$};
    \node [xshift=-3cm,label=right:{$w_1$}](v1) at (108:1) {$6$};
    \node [xshift=-3cm,label=right:{$w_2$}](v2) at (252:1) {$7^+$};
    \draw (u) -- (v) -- (v1) -- (u) -- (v2) -- (v);

    \node [label=right:{$v$}] (v) at (0,0) {$5$};
    \node [label=left:{$u$}] (u) at (180:1)  {$8$};
    \node [label=right:{$w_1$}](v1) at (108:1) {$7$};
    \node [label=right:{$w_2$}](v2) at (252:1) {$7$};
    \draw (u) -- (v) -- (v1) -- (u) -- (v2) -- (v);

    \node [xshift=3cm,label=right:{$v$}] (v) at (0,0) {$5$};
    \node [xshift=3cm,label=left:{$u$}] (u) at (180:1)  {$8$};
    \node[xshift=3cm] (v1) at (108:1) {$7^+$};
    \node[xshift=3cm] (v2) at (252:1) {$8$};
    \node[xshift=3cm,label=right:{$w_1$}] (v3) at (36:1) {$6$};
    \node[xshift=3cm,label=right:{$w_2$}] (v4) at (324:1) {$6$};
    \draw (v) -- (v1) -- (u) -- (v) -- (v2) -- (v4) -- (v) -- (v3) -- (v4);
    \draw (v1) -- (v3);
    \draw (u) -- (v2);

    \node [xshift=6.5cm,label=above right:{$v$}] (v) at (0,0) {$5$};
    \node [xshift=6.5cm,label=left:{$u$}] (u) at (180:1)  {$8$};
    \node[xshift=6.5cm] (v1) at (108:1) {$7^+$};
    \node[xshift=6.5cm] (v2) at (252:1) {8};
    \node[xshift=6.5cm,label=right:{$w$}] (v3) at (0:1) {$5$};
    \draw (u) -- (v1) -- (v) -- (v2) -- (u) -- (v) -- (v3);
  \end{tikzpicture}
\caption{$v$ is an $E_3$-neighbor of $u$}
  \label{fig:E3neighbor}
\end{figure}
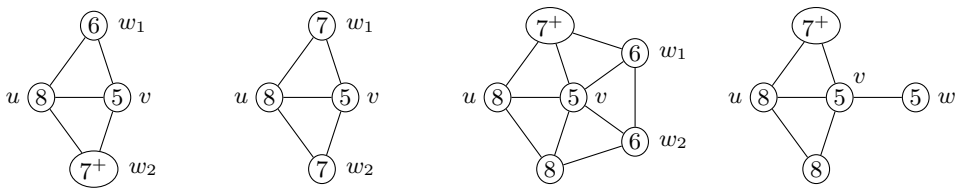

\subsection{Configurations}
\label{sec:config}
We define several configurations, derived from the upcoming
$C_{\ref{C1}},\ldots,C_{\ref{C19}}$ as follows. A configuration $C$ is a
\emph{sub-configuration} of $C'$ if we can obtain $C$ by decreasing
the degree of vertices in $C'$ while preserving the adjacency relation
and the triangle-distance: for every vertices $x,y,z$, $x$ and $y$ are
adjacent in $C$ if and only if they are in $C'$ and $\dist_z(x,y)$ is
the same in $C$ and $C'$. For example, a path $uvw$ where
$d(u)=d(v)=d(w)=4$ is a sub-configuration of $C_{\ref{C3a}}$ but a
path $u_1u_2u_3u_4$ is not a subconfiguration of $C_{\ref{C2}}$ even
if $d(u_1)=d(u_3)=3$ and $d(u_2)=d(u_4)=8$.

\begin{itemize}
\item[$\bullet$] $\config{C1}$ is an edge $(u,v)$ such that
  $d(u)+d(v) \leqslant 10$ and $d(u)\leqslant 4$.
\item[$\bullet$] $\config{C2}$ is an even cycle $v_1\cdots v_{2n}v_1$
  such that for $1\leqslant i\leqslant n$, $d(v_{2i-1})\leqslant 4$
  and $d(v_{2i-1})+d(v_{2i})\leqslant 11$.
\item[$\bullet$] $\config{C3b}$ is a triangle with two vertices of
  degree $5$ and one of degree $6$.
\item[$\bullet$] $\config{C3a}$ is a vertex of degree $5$ with two
  neighbors of degree $5$.
\item[$\bullet$] $\config{C4}$ is a $7$-vertex $u$ with a
  $(5,6)$-neighbor $v_1$ and a $5$-neighbor $v_2$ such that either
  $\dist_u(v_1,v_2)=2$, or $v_2$ is a $(5,6)$-neighbor of $u$ with
  $\dist_u(v_1,v_2)\leqslant 3$, see Figure~\ref{fig:C4init}.
  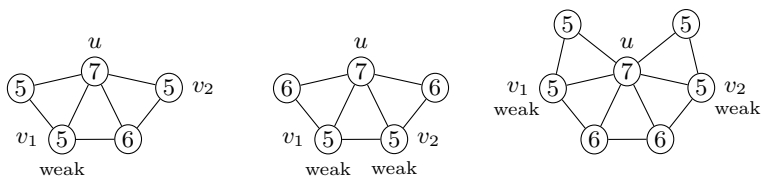
\begin{figure}[!h]
    \centering
    \begin{tikzpicture}[every node/.style={draw=black,minimum size = 10pt,ellipse,inner sep=1pt}]
    \node [label=above:{$u$}] (u) at (0,0) {$7$};
    \node (v1) at (193:1)  {$5$};
    \node[label=left:{$v_1$}, label=below:{\scriptsize weak}] (v2) at (244.5:1) {$5$};
    \node (v3) at (296:1) {$6$};
    \node[label=right:{$v_2$}] (v4) at (347.5:1) {$5$};
    \draw (v2) -- (u) -- (v1) -- (v2) -- (v3) -- (v4) -- (u) -- (v3);

    \node [xshift=3.5cm,label=above:{$u$}] (u) at (0,0) {$7$};
    \node[xshift=3.5cm] (v1) at (193:1)  {$6$};
    \node[xshift=3.5cm,label=left:{$v_1$}, label=below:{\scriptsize weak}] (v2) at (244.5:1) {$5$};
    \node[xshift=3.5cm,label=right:{$v_2$}, label=below:{\scriptsize weak}] (v3) at (296:1) {$5$};
    \node[xshift=3.5cm] (v4) at (347.5:1) {$6$};
    \draw (v2) -- (u) -- (v1) -- (v2) -- (v3) -- (v4) -- (u) -- (v3);

    \node[xshift=7cm,label=above:{$u$}] (u) at (0,0) {$7$};
    \node[xshift=7cm] (v0) at (141.5:1)  {$5$};
    \node[xshift=7cm,label=left:{$v_1$}, label=below left:{\scriptsize weak}] (v1) at (193:1)  {$5$};
    \node[xshift=7cm] (v2) at (244.5:1) {$6$};
    \node[xshift=7cm] (v3) at (296:1) {$6$};
    \node[xshift=7cm,label=right:{$v_2$}, label=below right:{\scriptsize weak}] (v4) at (347.5:1) {$5$};
    \node[xshift=7cm] (v5) at (39:1)  {$5$};
    \draw (v2) -- (u) -- (v1) -- (v2) -- (v3) -- (v4) -- (u) -- (v3);
    \draw (v1) -- (v0) -- (u) -- (v5) -- (v4);
  \end{tikzpicture}
\caption{Configuration $C_{\ref{C4}}$}
  \label{fig:C4init}
  \end{figure}
\item[$\bullet$] $\config{C5}$ is a $5$-vertex $u$ adjacent to three
  $6$-vertices $v_1,v_2,v_3$ and two vertices $v_4,v_5$ such that
  either there are two edges $v_1v_2$ and $v_2v_3$ or $u$ is
  triangulated and $d(v_4)=d(v_5)=7$, see Figure~\ref{fig:C5init}.
  \begin{figure}[!h]
    \centering
  \begin{tikzpicture}[every node/.style={draw=black,minimum size = 10pt,ellipse,inner sep=1pt}]
    \node [label=right:{$v_2$}] (v2) at (0:1) {$6$};
    \node [label=left:{$u$}] (u) at (0,0)  {$5$};
    \node[label=right:{$v_1$}] (v1) at (72:1) {$6$};
    \node[label=right:{$v_3$}] (v3) at (-72:1) {$6$};
    \draw (u) -- (v2) -- (v1) -- (u) -- (v3) -- (v2);

    \node [xshift=4cm,label=left:{$u$}] (u) at (0,0)  {$5$};
    \node [xshift=4cm,label=right:{$v_1$}] (v0) at (90:1) {$6$};
    \node[xshift=4cm,label=left:{$v_4$}] (v1) at (162:1) {$7$};
    \node[xshift=4cm,label=left:{$v_2$}] (v2) at (234:1) {$6$};
    \node[xshift=4cm,label=right:{$v_3$}] (v3) at (306:1) {$6$};
    \node[xshift=4cm,label=right:{$v_5$}] (v4) at (18:1) {$7$};
    \draw (u) -- (v1) -- (v0) -- (u) -- (v2) -- (v3) -- (u) -- (v4) -- (v3);
    \draw (v1) -- (v2);
    \draw (v0) -- (v4);
  \end{tikzpicture}
\caption{Configuration $C_{\ref{C5}}$}
  \label{fig:C5init}
  \end{figure}
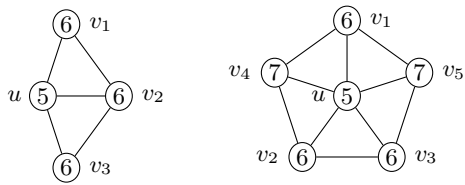
\item[$\bullet$] $\config{C6}$ is a $7$-vertex $u$ with a
  $(5,6)$-neighbor of degree $5$ and a neighbor of degree $4$, see
  Figure~\ref{fig:C6init}.
  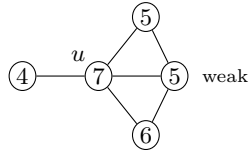
\begin{figure}[!h]
    \centering
  \begin{tikzpicture}[every node/.style={draw=black,minimum size = 10pt,ellipse,inner sep=1pt}]
    \node [label=right:{\scriptsize weak}](v2) at (0:1) {$5$};
    \node [label=above left:{$u$}] (u) at (0,0)  {$7$};
    \node (v1) at (51.5:1) {$5$};
    \node (v3) at (-51.5:1) {$6$};
    \node (v4) at (180:1) {$4$};
    \draw (v4) -- (u) -- (v2) -- (v1) -- (u) -- (v3) -- (v2);
  \end{tikzpicture}
\caption{Configuration $C_{\ref{C6}}$}
  \label{fig:C6init}
  \end{figure}
\item[$\bullet$] $\config{C7}$ is a $7$-vertex $u$ with an
  $S_3$-neighbor $v_1$, a $(7,7^+)$-neighbor $v_3$ of degree $4$, and
  a neighbor $v_4$ of degree $5$ such that $\dist_u(v_1,v_3)=2$ and
  the common neighbor $v_2$ of $u,v_1,v_3$ has degree 7, see Figure~\ref{fig:C7init}.
  \begin{figure}[!h]
    \centering
    \begin{tikzpicture}[every node/.style={draw=black,minimum size = 10pt,ellipse,inner sep=1pt}]
      \node [label=below:{$u$}] (u) at (0,0)  {$7$};
      \node [label=right:{$S_3$}, label=below:{$v_1$}] (v1) at (348.5:1) {$5$};
      \node [label=right:{$v_2$}] (v2) at (38.5:1) {$7$};
      \node[label=above left:{$v_3$}, label=above right:{\scriptsize weak}] (v3) at (90:1) {$4$};
      \node [label=left:{$v_4$}](v5) at (192:1) {$5$};
      \node (v4) at (141.5:1) {$8$};
      \draw (v5) -- (u) -- (v3) -- (v2) -- (v1) -- (u) -- (v2);
      \draw (u) -- (v4) -- (v3);
    \end{tikzpicture}
\caption{Configuration $C_{\ref{C7}}$}
    \label{fig:C7init}
  \end{figure}
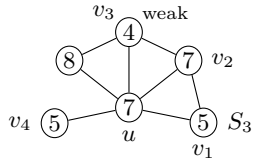
\item[$\bullet$] $\config{C8}$ is a $7$-vertex $u$ with a weak
  neighbor $v_1$, a $(7,7^+)$-neighbor $v_2$ of degree $4$ and a weak
  neighbor $v_3(\neq v_2)$ such that
  $\dist_u(v_1,v_2)=\dist_u(v_1,v_3)=2$ and $v_1$ is a $S_3$- or
  $S_5$-neighbor of $u$, see Figure~\ref{fig:C8init}.
  \begin{figure}[!h]
    \centering
    \begin{tikzpicture}[every node/.style={draw=black,minimum size = 10pt,ellipse,inner sep=1pt}]
      \node [label=left:{$u$}] (u) at (0,0)  {$7$};
      \node [label=below right:{$S_3$}, label=above right :{$v_1$}] (v1) at (0:1) {$5$};
      \node (v2) at (51.5:1) {$8$};
      \node[label=left:{$v_2$}, label=above:{\scriptsize weak}] (v3) at (103:1) {$4$};
      \node (v4) at (154.5:1) {$7$};
      \node (v5) at (308.5:1) {$8$};
      \node[label=left:{$v_3$}, label=below:{\scriptsize weak}] (v6) at (257:1) {$5$};
      \node (v7) at (206.5:1) {$8$};
      \draw (v5) -- (u) -- (v4) -- (v3) -- (v2) -- (v1) -- (u) -- (v2);
      \draw (v1) -- (v5) -- (v6) -- (u) -- (v3);
      \draw (u) -- (v7) -- (v6);

      \node [xshift=4cm,label=left:{$u$}] (u) at (0,0)  {$7$};
      \node [xshift=4cm, label=right :{$v_1$}] (v1) at (0:1) {$5$};
      \node[xshift=4cm] (v2) at (51.5:1) {$7$};
      \node[xshift=4cm, label=above:{\scriptsize weak},label=left:{$v_2$}] (v3) at (103:1) {$4$};
      \node[xshift=4cm] (v7) at (206.5:1) {$8$};
      \node[xshift=4cm] (v4) at (154.5:1) {$8$};
      \node[xshift=4cm] (v5) at (308.5:1) {7};
      \node[xshift=4cm, label=below:{\scriptsize weak},label=left:{$v_3$}] (v6) at (257:1) {$5$};
      \node[xshift=5cm] (w1) at (-36:1) {7};
      \node[xshift=5cm] (w2) at (36:1) {7};
      \draw (v2) -- (w2) -- (w1) -- (v5) -- (u) -- (v3) -- (v2) -- (v1) -- (u) -- (v2);
      \draw (v1) -- (v5) -- (v6) -- (u);
      \draw (w1) -- (v1) -- (w2);
      \draw (v3) -- (v4) -- (u) -- (v7) -- (v6);
    \end{tikzpicture}
\caption{Configuration $C_{\ref{C8}}$}
    \label{fig:C8init}
  \end{figure}
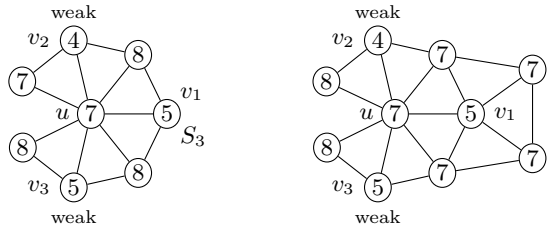
\item[$\bullet$] $\config{C9}$ is a $7$-vertex $u$ with three weak
  neighbors of degree $4$ and a neighbor of degree $7$.
\item[$\bullet$] $\config{C10}$ is a $7$-vertex $u$ with a
  $(7,7^+)$-neighbor $v_1$ of degree $4$, two weak neighbors $v_2$ and
  $v_3$ of degree $4$ and $5$ such that $\dist_u(v_1,v_2)=2$ and
  either $u,v_1$ and $v_3$ have a common neighbor of degree $7$, or
  $v_3$ is an $S_3$-neighbor of $u$ such that $\dist_u(v_1,v_3)=3$,
  see Figure~\ref{fig:C10init}.
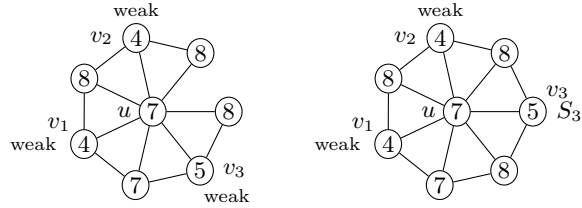
\begin{figure}[!h]
    \centering   
    \begin{tikzpicture}[every node/.style={draw=black,minimum size = 10pt,ellipse,inner sep=1pt}]
      \node [label=left:{$u$}] (u) at (0,0)  {$7$};
       \node [label=above left:{$v_1$}, label=left:{\scriptsize weak}] (v1) at (205.5:1) {$4$};
      \node (v0) at (257:1) {$7$};
      \node (v2) at (154:1) {8};
      \node[label=left:{$v_2$}, label=above:{\scriptsize weak}] (v3) at (102.5:1) {$4$};
      \node[label=right:{$v_3$}, label=273:{\scriptsize weak}] (v6) at (308.5:1) {$5$};
      \node (v4) at (51:1) {$8$};
      \node (v5) at (0:1) {$8$};
      \draw (v2) -- (v3) -- (u) -- (v2) -- (v1) -- (v0) -- (v6) -- (u) -- (v0);
      \draw (v1) -- (u);
      \draw (v3) -- (v4) -- (u) -- (v5) -- (v6);

      \node[xshift=4cm,label=left:{$u$}] (u) at (0,0)  {$7$};
      \node[xshift=4cm, label=left:{\scriptsize weak},label=above left:{$v_1$}] (v1) at (205.5:1) {$4$};
      \node[xshift=4cm] (v0) at (257:1) {$7$};
      \node[xshift=4cm] (v2) at (154:1) {8};
      \node[xshift=4cm,label=left:{$v_2$}, label=above:{\scriptsize weak}] (v3) at (102.5:1) {$4$};
      \node[xshift=4cm,label=right:{$S_3$},label=above right:{$v_3$}] (v5) at (0:1) {$5$};
      \node[xshift=4cm] (v6) at (308.5:1) {$8$};
      \node[xshift=4cm] (v4) at (51:1) {$8$};
      
      \draw (v2) -- (v3) -- (u) -- (v2) -- (v1) -- (v0) -- (v6) -- (v5) -- (u) -- (v6);
      \draw (v1) -- (u) -- (v0);
      \draw (v3)-- (v4) -- (v5);
      \draw (u) -- (v4);
    \end{tikzpicture}
\caption{Configuration $C_{\ref{C10}}$}
    \label{fig:C10init}
  \end{figure}
\item[$\bullet$] $\config{C11}$ is a $7$-vertex $u$ with two weak
  neighbors $v_1,v_2$ of degree $4$ satisfying $\dist_u(v_1,v_2)>2$,
  and a weak neighbor $v_3$ of degree $5$ such that either $v_1$ is a
  $(7,7)$-neighbor of $u$, or so is $v_3$, or $v_1$ is a
  $(7,7^+)$-neighbor and $v_3$ is an $S_3$-neighbor of $u$, see
  Figure~\ref{fig:C11init}.
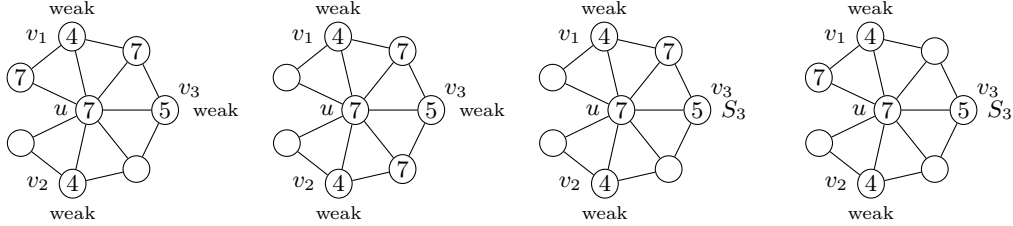
\begin{figure}[!h]
    \centering
    \begin{tikzpicture}[every node/.style={draw=black,minimum size = 10pt,ellipse,inner sep=1pt}]
      \node [label=left:{$u$}] (u) at (0,0)  {$7$};
      \node [label=above right:{$v_3$}, label=right:{\scriptsize weak}] (v1) at (0:1) {$5$};
      \node (v0) at (-51.5:1) {};
      \node (v2) at (51.5:1) {$7$};
      \node[label=left:{$v_1$}, label=above:{\scriptsize weak}] (v3) at (103:1) {$4$};
      \node (v4) at (154.5:1) {$7$};
      \node (v5) at (206:1) {};
      \node[label=left:{$v_2$}, label=below:{\scriptsize weak}] (v6) at (257.5:1) {$4$};
 
      \draw (v3) -- (u) -- (v4) -- (v3) -- (v2) -- (v1) -- (v0) -- (v6) -- (v5) -- (u) -- (v6);
      \draw (v2) -- (u) -- (v0);
      \draw (v1) -- (u);

      \node[xshift=3.5cm,label=left:{$u$}] (u) at (0,0)  {$7$};
      \node[xshift=3.5cm] (v0) at (-51.5:1) {$7$};
      \node[xshift=3.5cm,label=above right:{$v_3$}, label=right:{\scriptsize weak}] (v1) at (0:1) {$5$};
      \node[xshift=3.5cm] (v2) at (51.5:1) {$7$};
      \node[xshift=3.5cm,label=left:{$v_1$}, label=above:{\scriptsize weak}]  (v3) at (103:1) {$4$};
      \node[xshift=3.5cm] (v4) at (154.5:1) {};
      \node[xshift=3.5cm] (v5) at (206:1) {};
      \node[xshift=3.5cm,label=left:{$v_2$}, label=below:{\scriptsize weak}] (v6) at (257.5:1) {4};
 
      \draw (v3) -- (u) -- (v4) -- (v3) -- (v2) -- (v1) -- (v0) -- (v6) -- (v5) -- (u) -- (v6);
      \draw (v2) -- (u) -- (v0);
      \draw (v1) -- (u);

      \node[xshift=7cm,label=left:{$u$}] (u) at (0,0)  {$7$};
      \node[xshift=7cm] (v0) at (-51.5:1) {};
      \node[xshift=7cm,label=above right:{$v_3$},label=right :{$S_3$}] (v1) at (0:1) {$5$};
      \node[xshift=7cm] (v2) at (51.5:1) {$7$};
      \node[xshift=7cm,label=left:{$v_1$}, label=above:{\scriptsize weak}]  (v3) at (103:1) {$4$};
      \node[xshift=7cm] (v4) at (154.5:1) {};
      \node[xshift=7cm] (v5) at (206:1) {};
      \node[xshift=7cm,label=left:{$v_2$}, label=below:{\scriptsize weak}] (v6) at (257.5:1) {4}; 
      \draw (v3) -- (u) -- (v4) -- (v3) -- (v2) -- (v1) -- (v0) -- (v6) -- (v5) -- (u) -- (v6);
      \draw (v2) -- (u) -- (v0);
      \draw (v1) -- (u);

      \node[xshift=10.5cm,label=left:{$u$}] (u) at (0,0)  {$7$};
      \node[xshift=10.5cm] (v0) at (-51.5:1) {};
      \node[xshift=10.5cm,label=above right:{$v_3$},label=right :{$S_3$}] (v1) at (0:1) {$5$};
      \node[xshift=10.5cm] (v2) at (51.5:1) {};
      \node[xshift=10.5cm,label=left:{$v_1$}, label=above:{\scriptsize weak}] (v3) at (103:1) {$4$};
      \node[xshift=10.5cm] (v4) at (154.5:1) {7};
      \node[xshift=10.5cm] (v5) at (206:1) {};
      \node[xshift=10.5cm,label=left:{$v_2$}, label=below:{\scriptsize weak}] (v6) at (257.5:1) {4};
 
      \draw (v3) -- (u) -- (v4) -- (v3) -- (v2) -- (v1) -- (v0) -- (v6) -- (v5) -- (u) -- (v6);
      \draw (v2) -- (u) -- (v0);
      \draw (v1) -- (u);
    \end{tikzpicture}
\caption{Configuration $C_{\ref{C11}}$}
    \label{fig:C11init}
  \end{figure}
\item[$\bullet$] $\config{C12}$ is an $8$-vertex $u$ with either three
  neighbors of degree 6 and five weak neighbors of degree $5$, or four
  neighbors of degree 6 and four weak neighbors of degree 5, two of
  them being $(6,6)$-neighbors, see Figure~\ref{fig:C12init}.
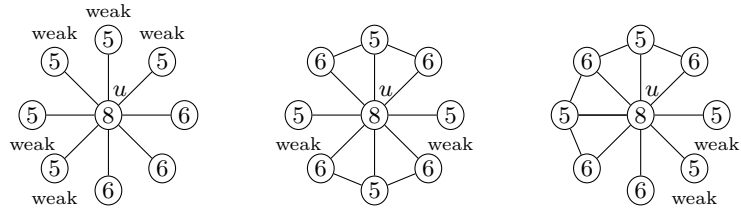
\begin{figure}[!h]
    \centering
    \begin{tikzpicture}[every node/.style={draw=black,minimum size = 10pt,ellipse,inner sep=1pt}]
      \node [label=85:{$u$}] (u) at (0,0)  {8};
      \node [label=above:{\scriptsize weak}] (v0) at (45:1) {5};
      \node [label=above:{\scriptsize weak}] (v1) at (90:1) {5};
      \node [label=above:{\scriptsize weak}] (v2) at (135:1) {5};
      \node [label=below:{\scriptsize weak}] (v3) at (180:1) {5};
      \node [label=below:{\scriptsize weak}] (v4) at (225:1) {5};
      \node (v5) at (270:1) {6};
      \node (v6) at (315:1) {6};
      \node (v7) at (0:1) {6};
      \draw (v0) -- (u) -- (v1);
      \draw (v2) -- (u) -- (v3);
      \draw (v4) -- (u) -- (v5);
      \draw (v6) -- (u) -- (v7);

      \node[xshift=3.5cm, label=85:{$u$}] (u) at (0,0)  {8};
      \node[xshift=3.5cm] (v0) at (45:1) {6};
      \node[xshift=3.5cm] (v1) at (90:1) {5};
      \node[xshift=3.5cm] (v2) at (135:1) {6};
      \node[xshift=3.5cm, label=below:{\scriptsize weak}] (v3) at (180:1) {5};
      \node[xshift=3.5cm] (v4) at (225:1) {6};
      \node[xshift=3.5cm] (v5) at (270:1) {5};
      \node[xshift=3.5cm] (v6) at (315:1) {6};
      \node[xshift=3.5cm, label=below:{\scriptsize weak}] (v7) at (0:1) {5};
      \draw (v1) -- (v0) -- (u) -- (v1) -- (v2) -- (u) -- (v3);
      \draw (v5) -- (v4) -- (u) -- (v5) -- (v6) -- (u) -- (v7);

      \node[xshift=7cm, label=85:{$u$}] (u) at (0,0)  {8};
      \node[xshift=7cm] (v0) at (45:1) {6};
      \node[xshift=7cm] (v1) at (90:1) {5};
      \node[xshift=7cm] (v2) at (135:1) {6};
      \node[xshift=7cm] (v3) at (180:1) {5};
      \node[xshift=7cm] (v4) at (225:1) {6};
      \node[xshift=7cm] (v5) at (270:1) {6};
      \node[xshift=7cm, label=below:{\scriptsize weak}] (v6) at (315:1) {5};
      \node[xshift=7cm, label=below:{\scriptsize weak}] (v7) at (0:1) {5};
      \draw (v1) -- (v0) -- (u) -- (v1) -- (v2) -- (u) -- (v3) -- (v2);
      \draw (u) -- (v3) -- (v4) -- (u) -- (v5);
      \draw (v6) -- (u) -- (v7);
    \end{tikzpicture}
\caption{Configuration $C_{\ref{C12}}$}
    \label{fig:C12init}
  \end{figure}
\item[$\bullet$] $\config{C13}$ is an $8$-vertex $u$ with four
  neighbors $v_1,v_2,v_3,v_4$ of degree $4$ or $5$ such that one of
  the following holds (see Figure~\ref{fig:C13init}):
  \begin{itemize}
  \item $v_1,v_2,v_3,v_4$ are weak neighbors of degree $4$ and $u$ has
    a neighbor of degree $7$,
  \item $v_1,v_2,v_3,v_4$ are $(7,8)$-neighbors, and at most one of
    them has degree $5$
  \item $v_1$ is a $(7,7)$-neighbor of degree $4$, and $v_2$ is a weak
    neighbor of $u$ of degree $4$ such that
    $\dist_u(v_1,v_2)=\dist_u(v_1,v_4)=2$.
  \end{itemize}
\begin{figure}[!h]
    \centering
    \begin{tikzpicture}[every node/.style={draw=black,minimum size = 10pt,ellipse,inner sep=1pt}]
      \node[label=85:{$u$}] (u) at (0,0)  {8};
      \node[label=right:{$w$}] (v0) at (45:1) {7};
      \node[label=above:{$v_1$}] (v1) at (90:1) {4};
      \node (v2) at (135:1) {$8$};
      \node[label=left:{$v_4$}] (v3) at (180:1) {4};
      \node (v4) at (225:1) {$8$};
      \node[label=below:{$v_3$}] (v5) at (270:1) {4};
      \node (v6) at (315:1) {$8$};
      \node[label=right:{$v_2$}] (v7) at (0:1) {4};
      \draw (v6) -- (u);
      \draw (v5) -- (u) ;
      \draw (v4) -- (u) ;
      \draw (v3) -- (u) ;
      \draw (v2) -- (u) ;
      \draw (v1) -- (u) ;
      \draw (v0) -- (u) ;
      \draw (v7) -- (u) ;
      \draw (v5) -- (v6) ;
      \draw (v5) -- (v4) ;
      \draw (v4) -- (v3) ;
      \draw (v3) -- (v2) ;
      \draw (v2) -- (v1) ;
      \draw (v1) -- (v0) ;
      \draw (v0) -- (v7) ;
      \draw (v6) -- (v7) ;

      \node[xshift=3.5cm, label=85:{$u$}] (u) at (0,0)  {8};
      \node[xshift=3.5cm] (v0) at (45:1) {7};
      \node[xshift=3.5cm] (v1) at (90:1) {4};
      \node[xshift=3.5cm] (v2) at (135:1) {8};
      \node[xshift=3.5cm, label = above left:{\scriptsize weak}] (v3) at (180:1) {5};
      \node[xshift=3.5cm] (v4) at (225:1) {7};
      \node[xshift=3.5cm] (v5) at (270:1) {4};
      \node[xshift=3.5cm] (v6) at (315:1) {8};
      \node[xshift=3.5cm] (v7) at (0:1) {4};
      \draw (v1) -- (v0) -- (u) -- (v1) -- (v2) -- (u) -- (v3);
      \draw (v5) -- (v4) -- (u) -- (v5) -- (v6) -- (u) -- (v7);
      \draw (v2) -- (v3) -- (v4);
      \draw (v6) -- (v7) -- (v0);

      \node[xshift=7cm, label=85:{$u$}] (u) at (0,0)  {8};
      \node[xshift=7cm] (v0) at (45:1) {7};
      \node[xshift=7cm, label=above:{$v_1$}, label=above right:{\scriptsize weak}] (v1) at (90:1) {4};
      \node[xshift=7cm] (v2) at (135:1) {7};
      \node[xshift=7cm, label=left:{$v_4$}] (v3) at (180:1) {5};
      \node[xshift=7cm, label=below:{$v_3$}] (v5) at (270:1) {5};
      \node[xshift=7cm] (v6) at (315:1) {$8$};
      \node[xshift=7cm, label=right:{$v_2$},label=273:{\scriptsize weak}] (v7) at (0:1) {4};
      \draw (v1) -- (v0) -- (u) -- (v1) -- (v2) -- (u) -- (v3) -- (v2);
      \draw (u) -- (v3) -- (u) -- (v5);
      \draw (v6) -- (u) -- (v7) -- (v6);
      \draw (v0) -- (v7);
    \end{tikzpicture}
\caption{Configuration $C_{\ref{C13}}$}
    \label{fig:C13init}
  \end{figure}
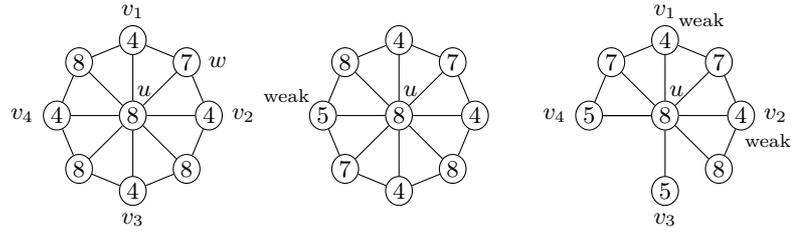
\item[$\bullet$] $\config{C15}$ is an $8$-vertex $u$ with a weak
  neighbor of degree $3$ and either two $(6,6)$-neighbors of degree
  $5$, or one $(6,6)$-neighbor of degree $5$ and two other weak neighbors of degree $5$, or two $(5,6)$-neighbors and a $(6,8)$-neighbor of degree $5$, see Figure~\ref{fig:C15init}.
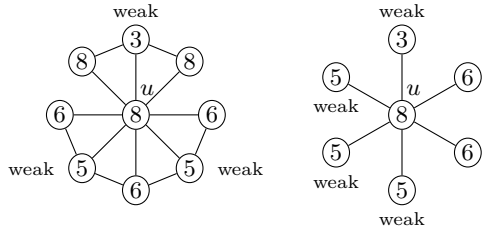
\begin{figure}[!h]
    \centering
    \begin{tikzpicture}[every node/.style={draw=black,minimum size = 10pt,ellipse,inner sep=1pt}]
      \node[ label=85:{$u$}] (u) at (0,0)  {8};
      \node[label=above:{\scriptsize weak}] (v1) at (90:1) {3};
      \node (v3) at (180:1) {6};
      \node[label=left:{\scriptsize weak}] (v4) at (225:1) {5};
      \node (v5) at (270:1) {6};
      \node[label=right:{\scriptsize weak}] (v6) at (315:1) {5};
      \node[label=right:{\scriptsize (weak)}] (v7) at (0:1) {5/6};
      \node (v2) at (45:1) {8};
      \node (v0) at (135:1) {8};
      \draw (v5) -- (v4) -- (u) -- (v5) -- (v6) -- (u) -- (v7) -- (v6);
      \draw (v1) -- (u) -- (v3) -- (v4);
      \draw (u) -- (v0) -- (v1) -- (v2) -- (u);

      \tikzset{xshift=4cm}
        \node[ label=85:{$u$}] (u) at (0,0)  {8};
      \node[label=above:{\scriptsize weak}] (v1) at (90:1) {3};
      \node[label=left:{\scriptsize weak}] (v3) at (180:1) {5};
      \node (v4) at (225:1) {6};
      \node[label=below:{\scriptsize weak}] (v5) at (270:1) {5};
      \node[label=right:{\scriptsize weak}] (v6) at (315:1) {5};
      \node (v7) at (0:1) {6};
      \node (v2) at (45:1) {8};
      \node (v0) at (135:1) {8};
      \draw (v5) -- (v4) -- (u) -- (v5) -- (v6) -- (u) -- (v7) -- (v6);
      \draw (v1) -- (u) -- (v3) -- (v4);
      \draw (u) -- (v0) -- (v1) -- (v2) -- (u);
      \draw (v3) -- (v0);
      
      \tikzset{xshift=4cm}
    \node[ label=85:{$u$}] (u) at (0,0)  {8};
      \node[label=above:{\scriptsize weak}] (v1) at (90:1) {3};
      \node[label=left:{\scriptsize weak}] (v3) at (180:1) {5};
      \node (v4) at (225:1) {6};
      \node[label=below:{\scriptsize weak}] (v5) at (270:1) {5};
      \node (v6) at (315:1) {6};
      \node[label=right:{\scriptsize weak}] (v7) at (0:1) {5};
      \node (v2) at (45:1) {8};
      \node (v0) at (135:1) {8};
      \draw (v5) -- (v4) -- (u) -- (v5) -- (v6) -- (u) -- (v7) -- (v6);
      \draw (v1) -- (u) -- (v3) -- (v4);
      \draw (u) -- (v0) -- (v1) -- (v2) -- (u);
      \draw (v7) -- (v2);
      \draw (v3) -- (v0);
      
    \end{tikzpicture}
\caption{Configuration $C_{\ref{C15}}$}
    \label{fig:C15init}
  \end{figure}
\item[$\bullet$] $\config{C15.5a}$ is a triangulated $8$-vertex $u$
  with a weak 3-neighbor $v_1$, a weak 4-neighbor $v_2$ such that
  $\dist_u(v_1,v_2)=4$, and two weak neighbors $v_3,v_4$ of degree 5
  such that either $v_2$ is a $(7,8)$-neighbor of $u$, or $v_3$ is an
  $E_3$-neighbor of $u$, see Figure~\ref{fig:C155ainit}.
  \begin{figure}[!h]
    \centering
    \begin{tikzpicture}[v/.style={draw=black,minimum size = 10pt,ellipse,inner sep=1pt}]
      \node[v,label=85:{$u$}] (u) at (0,0)  {8};
      \node[v] (v0) at (45:1) {8};
      \node[v] (v1) at (90:1) {3};
      \node[v] (v2) at (135:1) {8};
      \node[v] (v3) at (180:1) {5};
      \node[v] (v4) at (225:1) {7};
      \node[v] (v5) at (270:1) {4};
      \node[v] (v6) at (315:1) {8};
      \node[v] (v7) at (0:1) {5};
      \draw (v6) -- (u) ;
      \draw (v5) -- (u) ;
      \draw (v4) -- (u) ;
      \draw (v3) -- (u) ;
      \draw (v2) -- (u) ;
      \draw (v1) -- (u) ;
      \draw (v0) -- (u) ;
      \draw (v7) -- (u) ;
      \draw (v5) -- (v6);
      \draw (v5) -- (v4);
      \draw (v4) -- (v3);
      \draw (v3) -- (v2);
      \draw (v2) -- (v1);
      \draw (v1) -- (v0);
      \draw (v0) -- (v7);
      \draw (v6) -- (v7);
      \tikzset{xshift=3cm}
      \node[v,label=85:{$u$}] (u) at (0,0)  {8};
      \node[v] (v0) at (45:1) {8};
      \node[v] (v1) at (90:1) {3};
      \node[v] (v2) at (135:1) {8};
      \node[v] (v3) at (180:1) {5};
      \node[v] (v4) at (225:1) {8};
      \node[v] (v5) at (270:1) {4};
      \node[v] (v6) at (315:1) {8};
      \node[v,label=right:{$E_3$}] (v7) at (0:1) {5};
      \draw (v6) -- (u) ;
      \draw (v5) -- (u) ;
      \draw (v4) -- (u) ;
      \draw (v3) -- (u) ;
      \draw (v2) -- (u) ;
      \draw (v1) -- (u) ;
      \draw (v0) -- (u) ;
      \draw (v7) -- (u) ;
      \draw (v5) -- (v6);
      \draw (v5) -- (v4);
      \draw (v4) -- (v3);
      \draw (v3) -- (v2);
      \draw (v2) -- (v1);
      \draw (v1) -- (v0);
      \draw (v0) -- (v7);
      \draw (v6) -- (v7);
    \end{tikzpicture}
    \caption{Configuration $C_{\ref{C15.5a}}$}
    \label{fig:C155ainit}
  \end{figure}
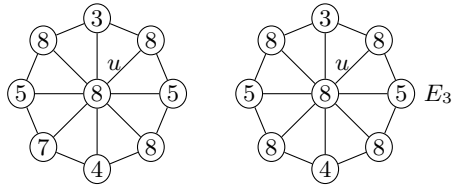

\item[$\bullet$] $\config{C15.5b}$ is a triangulated $8$-vertex $u$
  with a weak 3-neighbor $v_1$, a weak 4-neighbor $v_2$ such that
  $\dist_u(v_1,v_2)=2$, and a weak neighbor $v_3$ of degree 5 such
  that either $v_2$ is a $(7,8)$-neighbor of $u$, or $v_3$ is a
  $(6,6)$-neighbor, or $v_3$ is an $E_3$-neighbor of $u$ and $u$ has
  another weak neighbor of degree 5, see Figure~\ref{fig:C155binit}.
  \begin{figure}[!h]
    \centering
    \begin{tikzpicture}[v/.style={draw=black,minimum size = 10pt,ellipse,inner sep=1pt}]
      \node[v,label=85:{$u$}] (u) at (0,0)  {8};
      \node[v] (v0) at (45:1) {8};
      \node[v] (v1) at (90:1) {3};
      \node[v] (v2) at (135:1) {8};
      \node[v] (v3) at (180:1) {4};
      \node[v] (v4) at (225:1) {7};
      \node[v] (v6) at (315:1) {5};
      \draw (v6) -- (u) ;
      \draw (v4) -- (u) ;
      \draw (v3) -- (u) ;
      \draw (v2) -- (u) ;
      \draw (v1) -- (u) ;
      \draw (v0) -- (u) ;
      \draw (v4) -- (v3) ;
      \draw (v3) -- (v2) ;
      \draw (v2) -- (v1) ;
      \draw (v1) -- (v0) ;
      \tikzset{xshift=4cm}
      \node[v,label=85:{$u$}] (u) at (0,0)  {8};
      \node[v] (v0) at (45:1) {8};
      \node[v] (v1) at (90:1) {3};
      \node[v] (v2) at (135:1) {8};
      \node[v] (v3) at (180:1) {4};
      \node[v] (v4) at (225:1) {8};
      \node[v] (v5) at (270:1) {6};
      \node[v] (v6) at (315:1) {5};
      \node[v] (v7) at (0:1) {6};
      \draw (v6) -- (u) ;
      \draw (v5) -- (u) ;
      \draw (v4) -- (u) ;
      \draw (v3) -- (u) ;
      \draw (v2) -- (u) ;
      \draw (v1) -- (u) ;
      \draw (v0) -- (u) ;
      \draw (v7) -- (u) ;
      \draw (v5) -- (v6) ;
      \draw (v4) -- (v3) ;
      \draw (v3) -- (v2) ;
      \draw (v2) -- (v1) ;
      \draw (v1) -- (v0) ;
      \draw (v6) -- (v7) ;
      \tikzset{xshift=4cm}
      \node[v,label=85:{$u$}] (u) at (0,0)  {8};
      \node[v] (v0) at (45:1) {8};
      \node[v] (v1) at (90:1) {3};
      \node[v] (v2) at (135:1) {8};
      \node[v] (v3) at (180:1) {4};
      \node[v] (v4) at (225:1) {8};
      \node[v,label= below:{$E_3$}] (v5) at (270:1) {5};
      \node[v,label=right:{\scriptsize weak}] (v7) at (0:1) {5};
      \draw (v5) -- (u) ;
      \draw (v4) -- (u) ;
      \draw (v3) -- (u) ;
      \draw (v2) -- (u) ;
      \draw (v1) -- (u) ;
      \draw (v0) -- (u) ;
      \draw (v7) -- (u) ;
      \draw (v4) -- (v3) ;
      \draw (v3) -- (v2) ;
      \draw (v2) -- (v1) ;
      \draw (v1) -- (v0) ;
    \end{tikzpicture}
    \caption{Configuration $C_{\ref{C15.5b}}$}
    \label{fig:C155binit}
  \end{figure}
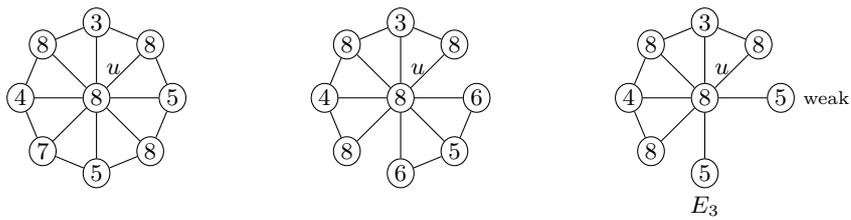

\item[$\bullet$] $\config{C16}$ is an $8$-vertex $u$ with two weak
  neighbors of degree $3$ and another neighbor of degree at most $5$,
  see Figure~\ref{fig:C16init}.
\begin{figure}[!h]
    \centering
    \begin{tikzpicture}[every node/.style={draw=black,minimum size = 10pt,ellipse,inner sep=1pt}]
      \node[label=85:{$u$}] (u) at (0,0)  {8};
      \node[label=above:{\scriptsize weak}] (v1) at (90:1) {3};
      \node[label=above:{\scriptsize weak}] (v3) at (210:1) {3};
      \node (v5) at (330:1) {5};
      \draw (v5) -- (u);
      \draw (v1) -- (u) -- (v3);
    \end{tikzpicture}
\caption{Configuration $C_{\ref{C16}}$}
    \label{fig:C16init}
  \end{figure}
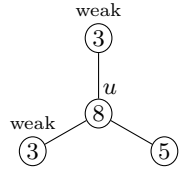
\item[$\bullet$] $\config{C17}$ is an $8$-vertex $u$ with a weak and a
  semi-weak neighbor $v_1,v_2$ of degree $3$ and adjacent to two
  vertices $w_1,w_2$ such that $(d(w_1),d(w_2))$ is $(4,7)$ or
  $(5,6)$, see Figure~\ref{fig:C17init}.
\begin{figure}[!h]
    \centering
    \begin{tikzpicture}[every node/.style={draw=black,minimum size = 10pt,ellipse,inner sep=1pt}]
      \node[label=above right:{$u$}] (u) at (0,0)  {8};
      \node[label=above:{\scriptsize weak}] (v1) at (90:1) {3};
      \node[label=below:{\scriptsize semi-weak}] (v3) at (180:1) {3};
      \node (v5) at (270:1) {4};
      \node (v7) at (0:1) {7};
      \draw (v5) -- (u) -- (v7);
      \draw (v1) -- (u) -- (v3);

      \node[xshift=4cm,label=above right:{$u$}] (u) at (0,0)  {8};
      \node[xshift=4cm,label=above:{\scriptsize weak}] (v1) at (90:1) {3};
      \node[xshift=4cm,label=below:{\scriptsize semi-weak}] (v3) at (180:1) {3};
      \node[xshift=4cm] (v5) at (270:1) {5};
      \node[xshift=4cm] (v7) at (0:1) {6};
      \draw (v5) -- (u) -- (v7);
      \draw (v1) -- (u) -- (v3);
    \end{tikzpicture}
\caption{Configuration $C_{\ref{C17}}$}
    \label{fig:C17init}
  \end{figure}
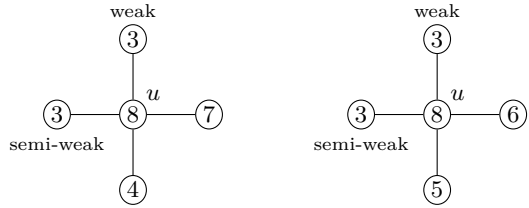

\item[$\bullet$] $\config{C21}$ is an $8$-vertex $u$ with a weak
  neighbor of degree $3$ and four neighbors of degree $4,4,5,7$, see Figure~\ref{fig:C21init}.
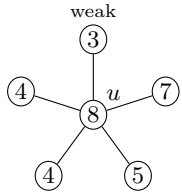
\begin{figure}[!h]
    \centering
    \begin{tikzpicture}[every node/.style={draw=black,minimum size = 10pt,ellipse,inner sep=1pt}]
      \node[label=above right:{$u$}] (u) at (0,0)  {8};
      \node[label=above:{\scriptsize weak}] (v1) at (90:1) {3};
      \node (v3) at (162:1) {4};
      \node (v5) at (234:1) {4};
      \node (v6) at (306:1) {5};
      \node (v7) at (18:1) {7};
      \draw (v5) -- (u) -- (v7);
      \draw (v1) -- (u) -- (v3);
      \draw (u) -- (v6);
    \end{tikzpicture}
    \caption{Configuration $C_{\ref{C21}}$}
    \label{fig:C21init}
  \end{figure}

\item[$\bullet$] $\config{C18}$ is an $8$-vertex $u$ with a weak
  neighbor of degree $3$, two weak neighbors of degree $4$ and a
  neighbor of degree 7, see Figure~\ref{fig:C18init}.
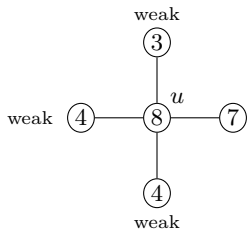
\begin{figure}[!h]
    \centering
    \begin{tikzpicture}[every node/.style={draw=black,minimum size = 10pt,ellipse,inner sep=1pt}]
      \node[label=above right:{$u$}] (u) at (0,0)  {8};
      \node[label=above:{\scriptsize weak}] (v1) at (90:1) {3};
      \node[label=left:{\scriptsize weak}] (v3) at (180:1) {4};
      \node[label=below:{\scriptsize weak}] (v5) at (270:1) {4};
      \node (v7) at (0:1) {7};
      \draw (v5) -- (u) -- (v7);
      \draw (v1) -- (u) -- (v3);
    \end{tikzpicture}
\caption{Configuration $C_{\ref{C18}}$}
    \label{fig:C18init}
  \end{figure}
\item[$\bullet$] $\config{C19}$ is an $8$-vertex $u$ with a neighbor
  of degree~7 and either three semi-weak neighbors of degree $3$ or
  two semi-weak neighbors of degree $3$ and two neighbors of degree
  $4$, see Figure~\ref{fig:C19init}.
  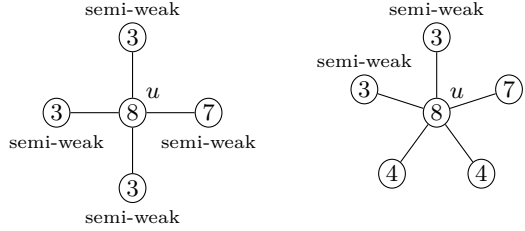
\begin{figure}[!h]
    \centering
   \begin{tikzpicture}[every node/.style={draw=black,minimum size = 10pt,ellipse,inner sep=1pt}]
      \node[label=above right:{$u$}] (u) at (0,0)  {8};
      \node[label=above:{\scriptsize semi-weak}] (v1) at (90:1) {3};
      \node[label=below:{\scriptsize semi-weak}] (v3) at (180:1) {3};
      \node[label=below:{\scriptsize semi-weak}] (v5) at (270:1) {3};
      \node[label=below:{\scriptsize semi-weak}] (v7) at (0:1) {7};
      \draw (v5) -- (u) -- (v7);
      \draw (v1) -- (u) -- (v3);

      \node[xshift=4cm,label=above right:{$u$}] (u) at (0,0)  {8};
      \node[xshift=4cm,label=above:{\scriptsize semi-weak}] (v1) at (90:1) {3};
      \node[xshift=4cm,label=above:{\scriptsize semi-weak}] (v3) at (162:1) {3};
      \node[xshift=4cm] (v5) at (234:1) {4};
      \node[xshift=4cm] (v7) at (306:1) {4};
      \node[xshift=4cm] (v8) at (18:1) {7};
      \draw (v5) -- (u) -- (v7);
      \draw (v1) -- (u) -- (v3);
      \draw (u) -- (v8);
    \end{tikzpicture}
\caption{Configuration $C_{\ref{C19}}$}
    \label{fig:C19init}
  \end{figure}
\end{itemize}

\section{Reduction techniques}
\label{sec:tech}

\subsection{Framework}

We now introduce the generic framework we use to prove that a given
configuration is reducible. Reducing a configuration $C_i$ means to
take a list assignment $L$ of $G$, to find a suitable subgraph $G'$ of
$G$ (often constructed by removing elements of $G$ creating $C_i$),
and to extend any $L$-coloring of $G'$ to $G$. Since $G$ is a minimum
counterexample, we get a contradiction if $G$ contains $C_i$.

There are two non-immediate steps in this proof scheme: first, we have
to find the right subgraph $G'$. Then, the most difficult part is to
extend the coloring. Note that in some cases, we may have to change
the given coloring before extending it. 

We first introduce some terminology. In the previous setting, a
\emph{constraint} for an element $x$ of $G$ (vertex or edge) is an
already colored element $y$ such that $x$ and $y$ are incident (or
adjacent). The \emph{total graph} of $G$ is the graph denoted by
$\mathcal{T}(G)$, whose vertices are $V(G)\cup E(G)$, and there is an
edge between any two elements $x$ and $y$ such that $x$ and $y$ are
adjacent vertices or incident elements of $G$. Observe that finding a
total $L$-coloring of $G$ is equivalent to finding an $L$-coloring of
$\mathcal{T}(G)$.

Given an element $x$ of $G$, we denote by $c_x$ the number of
constraints of $x$, and by $\hat{x}$ a list of $10-c_x$ colors chosen
arbitrarily among the available colors for $x$ after having colored
$G'$. We denote by $\mathcal{T}(G\setminus G')$ the subgraph of
$\mathcal{T}(G)$ induced by the elements that are not already colored,
i.e. the elements of $G\setminus G'$. Note that in order to extend the
coloring from $G'$ to $G$, it is sufficient to produce an
$L'$-coloring of $\mathcal{T}(G\setminus G')$ where $L'$ is defined by
$L'(x)=\hat{x}$ for every element $x$ of $G\setminus G'$. Note that sometimes, we will start from a coloring of a subgraph of $G$ and remove the color of some elements to obtain $G'$. 

In order to limit the number of variables used when reducing
configurations, we will think of the lists $\hat{x}$ as dynamic
objects, meaning that they may decrease each time we color some
element. The following observation allows us to assume when
appropriate that, when we color an uncolored element $x$ of $G$, the
lists of all its neighbors in $\mathcal{T}(G)$ always lose a color.

\begin{remark}
  Let $x,y$ be adjacent elements in $\mathcal{T}(G)$. Unless otherwise
  stated (i.e. if we assume explicitly that the color of $x$ does not
  appear in $\hat{y}$, for example if $\hat{x}$ and $\hat{y}$ are
  disjoint), coloring $x$ makes $|\hat{y}|$ decrease by $1$. Moreover, if $y$ is element whose color $\alpha$ was removed when creating $G'$, then unless $x$ gets color $\alpha$, we assume that $\alpha$ remains in $\hat{y}$ after coloring $x$.
\end{remark}

We sometimes \emph{forget} elements. Forgetting $x$ means that for
every coloring of its neighbors in $\mathcal{T}(G)$, we can always
find an available color for $x$. For example, this happens when $x$
has more available colors than uncolored neighbors in
$\mathcal{T}(G)$. Therefore, when we forget $x$, we postpone the
coloring of $x$ to the end of the coloring process: we implicitly
assign a color to $x$ when all the remaining elements are colored. We
extend this terminology to lists of elements: forgetting
$x_1,\ldots,x_p$ means that we forget $x_1$, then $x_2$, \dots, then
$x_p$ (observe that the order matters). Note that we can always forget
uncolored vertices of degree at most $4$ in $G$, since they have at
most eight neighbors in $\mathcal{T}(G)$.

\subsection{Combinatorial Nullstellensatz}

Most of the proofs of this paper rely on more or less involved case
analyses, depending on the lists $\hat{x}$. This may lead to rather
long proofs. To deal with this issue, we introduce another approach to
reduce the corresponding configurations. As we will see, this method
relies on an algebraic criterion that can be computer checked. This
leads to much shorter proofs, with the downside of not being
human-checkable. We now describe how to reduce a given configuration,
or more precisely how to extend a coloring from a subgraph of $G$ to
$G$ itself. The method uses the Combinatorial Nullstellensatz stated
below.

\begin{theorem}[\cite{alon1992colorings}]
  \label{thm:nss}
  Let $\mathbb{K}$ be a field, and $P\in\mathbb{K}[X_1,\ldots,X_n]$ a
  multivariate polynomial. Let $X_1^{a_1}\cdots X_n^{a_n}$ be a
  monomial with a non-zero coefficient in $P$, and of maximal
  degree. Then, for any family $S_1,\ldots,S_n$ of subsets of
  $\mathbb{K}$ satisfying $|S_i|>a_i$ for $i=1,\ldots, n$, there
  exists a non-zero value of $P$ in $S_1\times\cdots \times S_n$.
\end{theorem}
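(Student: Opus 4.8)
The plan is to prove Theorem~\ref{thm:nss} by contradiction, via the classical argument. It suffices to treat the case $|S_i|=a_i+1$ for every $i$, since the general statement then follows by restricting to subsets of the $S_i$ of this size. So assume $|S_i|=a_i+1$ for all $i$, and suppose for contradiction that $P$ vanishes at every point of $S_1\times\cdots\times S_n$. For each $i$ set $g_i(X_i)=\prod_{s\in S_i}(X_i-s)$, a monic polynomial of degree $a_i+1$ whose roots are exactly the elements of $S_i$; its useful feature is that on $S_i$ one has $X_i^{a_i+1}=X_i^{a_i+1}-g_i(X_i)$, whose right-hand side has $X_i$-degree at most $a_i$.

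Next I would reduce $P$ modulo these relations: for $i=1,\ldots,n$ in turn, divide the current polynomial, viewed as a polynomial in $X_i$ with coefficients in the ring generated by the other variables, by the monic polynomial $g_i$, and keep only the remainder. Since $g_i$ involves only $X_i$, the $i$-th division does not raise the degree in any other variable, so after all $n$ steps we reach a polynomial $\bar P$ with $\deg_{X_i}\bar P\leqslant a_i$ for every $i$. Moreover, since each $g_i$ vanishes on $S_i$, this process leaves the value of the polynomial unchanged at every point of $S_1\times\cdots\times S_n$; hence $\bar P$ agrees with $P$ there, so $\bar P$ also vanishes on the whole grid. I expect the heart of the proof to be the next observation, which is precisely where the maximality hypothesis is used: the coefficient of $X_1^{a_1}\cdots X_n^{a_n}$ is the same in $\bar P$ as in $P$. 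Indeed, each elementary step of each division introduces only monomials of total degree strictly smaller than that of the monomial from which they arise; since $X_1^{a_1}\cdots X_n^{a_n}$ has total degree $a_1+\cdots+a_n=\deg P$, it is never itself reduced and no newly produced monomial can ever equal it. Therefore its coefficient in $\bar P$ is non-zero, and in particular $\bar P\not\equiv 0$.

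Finally I would invoke the following standard lemma: if $Q\in\mathbb{K}[X_1,\ldots,X_n]$ satisfies $\deg_{X_i}Q\leqslant t_i$ for each $i$ and vanishes on $T_1\times\cdots\times T_n$ with $|T_i|\geqslant t_i+1$, then $Q\equiv 0$. This follows by induction on $n$: for $n=1$ it is the fact that a non-zero polynomial of degree $d$ over a field has at most $d$ roots; for the inductive step one writes $Q=\sum_{j=0}^{t_n}Q_j(X_1,\ldots,X_{n-1})\,X_n^j$, fixes the first $n-1$ coordinates arbitrarily in $T_1\times\cdots\times T_{n-1}$, applies the $n=1$ case in the variable $X_n$ to deduce that every $Q_j$ vanishes at that point, and then applies the induction hypothesis to each $Q_j$. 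Applying this lemma to $\bar P$ with $t_i=a_i$ and $T_i=S_i$ gives $\bar P\equiv 0$, contradicting the previous paragraph. Hence $P$ takes a non-zero value somewhere on $S_1\times\cdots\times S_n$, as claimed.
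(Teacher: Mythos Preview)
Your proof is correct and is essentially the standard argument for the Combinatorial Nullstellensatz due to Alon. Note, however, that the paper does not actually prove Theorem~\ref{thm:nss}: it is stated with a citation to~\cite{alon1992colorings} and used as a black box throughout, so there is no ``paper's own proof'' to compare against. Your write-up supplies precisely the classical proof one finds in Alon's work: reduce to $|S_i|=a_i+1$, pass to the remainder $\bar P$ modulo the grid polynomials $g_i$, observe that the maximal-degree hypothesis guarantees the coefficient of $X_1^{a_1}\cdots X_n^{a_n}$ survives unchanged (so $\bar P\not\equiv 0$), and finish with the elementary vanishing lemma for low-degree polynomials on a product grid.
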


At first glance, this result has nothing to do with graph coloring. We
present here a standard generic framework that translates graph
coloring into an algebraic problem that can be solved using this
result. We also explain how to use it to reduce configurations.

For each uncolored element $x$ in $G$ which is not forgotten, we
associate a variable $X$ (we use the same letter but capitalized). We
fix an arbitrary order $<$ on the variables. The polynomial $P_{G'}$
is then defined as the product of all $(X-Y)$ when $X<Y$ and $x$ and
$y$ are adjacent and not forgotten uncolored vertices of
$\mathcal{T}(G)$. Observe that each possible coloring of $G$ gives an
evaluation of $P_{G'}$ by replacing each variable $X$ with the color
of the corresponding element $x$. Moreover, observe that $P_{G'}$ does
not evaluate to 0 if and only if the corresponding coloring is proper,
i.e. if the coloring of $G'$ extends to $G$. Since we now look for a
non-zero value of $P_{G'}$, applying Theorem~\ref{thm:nss} to the
subsets $\hat{x}$ gives a sufficient condition in terms of the
monomials in $P_{G'}$: to prove that the coloring extends from $G'$ to
$G$, it is sufficient to find a monomial $m$ in $P_{G'}$ such that the
three following conditions hold:
\begin{enumerate}
\item $\deg(m)=\deg(P)$.
\item $\deg_X(m) < |\hat{x}|$ for every uncolored and not forgotten
  element $x$ of $G$.
\item The coefficient of $m$ in $P$ is non-zero.
\end{enumerate}

Therefore, proving that a configuration is reducible using the
Combinatorial Nullstellensatz amounts to finding a suitable monomial
in $P_{G'}$. For the sake of readability, we do not state the
polynomial $P_{G'}$ in each of the reduction proofs.

Note that we do not believe that finding a suitable monomial, as well
as checking Condition 3, can be done without a computer. For the
former problem, we use an exhaustive search algorithm that produces an
output in a reasonable time on most of the instances, but not for all,
hence we do not have a reduction proof using Combinatorial
Nullstellensatz for each configuration. To check Condition 3, a Maple
code is available at
\href{https://github.com/tpierron/Delta8}{\url{https://github.com/tpierron/Delta8}}.

Finally, observe that Theorem~\ref{thm:nss} is not an equivalence in
general: a polynomial may satisfy the conclusion of the theorem even
if it has no suitable monomial. However, we do not know whether there
exist reducible configurations such that the associated polynomial
contains no suitable monomial.

\subsection{Recoloring approach}

Together with the two techniques previously discussed (case analysis
and Nullstellensatz), we use a third one to reduce the
configurations. This technique is based on the idea that we can
sometimes permute the colors of the surrounding of a configuration in
order to make our life easier while extending a coloring to the full
graph. This idea is definitely not new: it has already been used many
times to obtain graph colouring results. We build here upon the method
used to reduce a few configurations in~\cite{bonamy}. In this
subsection, we develop a generic framework allowing us to use
recoloring arguments in our setting.

We start with a preliminary definition. Let $L$ be a list assignment
on $\mathcal{T}(G)$ and $\gamma$ a partial $L$-coloring of
$\mathcal{T}(G)$. Let $S$ be a properly colored clique in
$\mathcal{T}(G)$. The \emph{color shifting graph} of $S$ with respect
to $\gamma$ is the loopless digraph $H_{S,\gamma}$ defined as follows:
\begin{itemize}
\item[$\bullet$] Each element of $S$ is a vertex of $H_{S,\gamma}$.
\item[$\bullet$] We add a vertex $s_\alpha$ to $H_{S,\gamma}$ for each
  color $\alpha\in\cup_{x\in S}\hat{x}$, where $\hat{x}$ is the set of
  available colors for $x$ when we uncolor $S$.
\item[$\bullet$] If $x,y\in S$ with $x\neq y$, there is an arc
  $x\to y$ if the color of $x$ lies in $\hat{y}$ once $S$ is
  uncolored.
\item[$\bullet$] For any $x,\alpha$, there is an arc $s_\alpha\to x$
  if $\alpha\in \hat{x}$ and $\alpha\notin \gamma(S)$ which means that
  the color $\alpha$ could replace the color of $x$.
\item[$\bullet$] For any $x,\alpha$, there is an arc $x\to s_\alpha$.
\item[$\bullet$] For any $\alpha\neq\beta$, there is an arc
  $s_\alpha\to s_\beta$.
\end{itemize}

The terminology comes from the fact that any directed cycle in
$H_{S,\gamma}$ allows us to shift the colors of the elements of $S$ as
stated in the following lemma.
\begin{lemma}
  \label{lem:recolor}
  Let $L$ be a list assignment of $\mathcal{T}(G)$, let $\gamma$ be a
  partial $L$-coloring of $\mathcal{T}(G)$ and $S$ be a colored clique
  of $\mathcal{T}(G)$. Assume that there is a directed cycle
  $x_1\to\cdots\to x_n\to x_1$ in the color shifting graph
  $H_{S,\gamma}$.

  Then there exists a partial $L$-coloring $\gamma'$, defined on the
  same elements of $\mathcal{T}(G)$ as $\gamma$, and that differs from
  $\gamma$ exactly on $S\cap\{x_1,\ldots,x_n\}$.
\end{lemma}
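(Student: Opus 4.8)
The plan is to use the directed cycle $x_1 \to \cdots \to x_n \to x_1$ in $H_{S,\gamma}$ to construct $\gamma'$ explicitly, then verify it is a proper partial $L$-coloring on the same domain as $\gamma$. First I would split the vertices of the cycle into two types according to the definition of $H_{S,\gamma}$: those that are elements of $S$ and those that are ``color vertices'' $s_\alpha$. I would define $\gamma'$ to agree with $\gamma$ everywhere outside $S$, and also to agree with $\gamma$ on every element of $S$ not appearing in the cycle. For an element $x_i \in S$ that does appear in the cycle, I would look at its successor $x_{i+1}$ on the cycle (indices mod $n$): if $x_{i+1}$ is again an element $x_{i+1} \in S$, then by definition of the arc $x_i \to x_{i+1}$ we have $\gamma(x_i) \in \hat{x}_{i+1}$, so I set $\gamma'(x_{i+1}) := \gamma(x_i)$; if instead $x_{i+1} = s_\alpha$ is a color vertex, then (since $s_\alpha$ must have an outgoing arc on the cycle, and its only non-trivial constraint-relevant out-arc is to some element) the arc leaving $s_\alpha$ goes to some $x_{i+2} \in S$ with $\alpha \in \hat{x}_{i+2}$ and $\alpha \notin \gamma(S)$, and I set $\gamma'(x_{i+2}) := \alpha$. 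In other words, each element of $S$ on the cycle receives either the old color of the previous element of $S$ on the cycle (possibly skipping over one color vertex) or a brand-new color $\alpha \notin \gamma(S)$ named by a color vertex immediately preceding it.

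The key observation making this well-defined is that $H_{S,\gamma}$ is loopless and that consecutive color vertices $s_\alpha \to s_\beta$ can always be short-circuited: since there is an arc $x \to s_\alpha$ for \emph{every} $x \in S$ and $\alpha \in \cup_x \hat{x}$, and an arc $s_\beta \to x$ whenever $\beta \in \hat{x}, \beta \notin \gamma(S)$, a maximal run of color vertices on the cycle behaves, for the purposes of recoloring $S$, exactly like a single step from the element of $S$ preceding the run to the element of $S$ following it, and the only color that ``survives'' the run is the one named by the last color vertex before returning to $S$. So after this reduction I may assume the cycle alternates between elements of $S$ and (runs collapsed to) single color vertices or direct $S$-to-$S$ arcs; in all cases each $x_i \in S$ on the cycle gets reassigned to exactly one value in $\hat{x}_i$ (computed with $S$ uncolored), and distinct elements of $S$ on the cycle get distinct new values because the cycle visits each vertex once and the ``new'' colors $\alpha$ come from color vertices not in $\gamma(S)$.

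Next I would check that $\gamma'$ is a proper $L$-coloring. Membership in the lists is immediate from the arc definitions: every reassigned value lies in the list $\hat{x}_i$ of available colors when $S$ is uncolored, which is a subset of $L(x_i)$. For properness I need no two adjacent elements of $\mathcal{T}(G)$ to share a color under $\gamma'$. Since $\gamma'$ and $\gamma$ differ only on $S$, and $\gamma$ is already proper, it suffices to check conflicts involving at least one reassigned element of $S$. If $x_i \in S$ is reassigned and $y \notin S$ is adjacent to $x_i$, then $\gamma'(x_i) \in \hat{x}_i$, and $\hat{x}_i$ is \emph{by definition} the set of colors available for $x_i$ once $S$ is uncolored, i.e. colors not used by any colored neighbor of $x_i$ outside $S$ — so $\gamma'(x_i) \neq \gamma(y) = \gamma'(y)$. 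If both endpoints $x_i, x_j \in S$ are adjacent (hence both in the clique $S$, so every pair in $S$ is adjacent), I must rule out $\gamma'(x_i) = \gamma'(x_j)$: the reassigned colors on the cycle are pairwise distinct by the argument above (the cycle is simple and the ``fresh'' colors are named by distinct color vertices, none of which lies in $\gamma(S)$), a reassigned color equal to some old $\gamma(x_k)$ with $x_k$ \emph{not} on the cycle cannot collide because that old value is then still in use only at $x_k$ which keeps it and $x_k$'s value was already distinct from everything, and two elements of $S$ both keeping their old $\gamma$-colors are fine since $\gamma$ was proper. Finally, the domain of $\gamma'$ equals that of $\gamma$ by construction, and $\gamma'$ differs from $\gamma$ exactly on the elements of $S$ lying on the cycle, i.e. on $S \cap \{x_1,\ldots,x_n\}$, which is what is claimed.

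The main obstacle I expect is the careful bookkeeping around the color vertices $s_\alpha$: making precise what it means to ``shift'' when the directed cycle passes through one or several color vertices, and ensuring that in every such configuration the element of $S$ emerging from the run of color vertices gets a legitimate color in its list that does not already appear in $\gamma(S)$ (so that no collision is created within the clique $S$ with an element that keeps its old color). Once the reduction ``collapse runs of color vertices'' is made clean, the rest — list membership and properness — follows directly from unwinding the six bullet points defining $H_{S,\gamma}$, so the proof should be short but the definitions must be handled with some care.
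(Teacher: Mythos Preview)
Your proposal is correct and takes essentially the same approach as the paper: define $\gamma'$ by shifting colors along the directed cycle, injecting a fresh color $\alpha\notin\gamma(S)$ whenever the cycle passes through a vertex $s_\alpha$, and verify properness using the clique structure of $S$ together with the fact that $\hat{x}$ already excludes the colors of neighbors outside $S$. The paper organizes the same idea slightly differently, splitting into the two cases ``no $s_\alpha$ on the cycle'' versus ``some $s_\alpha$ on the cycle'' and, in the latter, decomposing the cycle into maximal paths $s_\alpha\to x_1\to\cdots\to x_p$; your explicit ``collapse runs of color vertices'' observation is exactly what makes that decomposition work when two color vertices appear consecutively, a point the paper leaves implicit.
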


\begin{proof}
  We define $\gamma'$ by taking $\gamma'(x)=\gamma(x)$ for all the
  vertices $x$ of $S$ outside the directed cycle. It remains to define
  $\gamma'$ on $S\cap\{x_1,\ldots,x_n\}$.

  If none of the $x_i$'s is some $s_\alpha$, we move the colors
  following the arrows: for $1\leqslant i\leqslant n$, we define
  $\gamma'(x_{i+1})=\gamma(x_i)$ (the indices are taken modulo
  $n$). This is allowed since we have $\gamma(x_i)\in \hat{x_{i+1}}$
  by the definition of the arc $x_i\to x_{i+1}$. Moreover, $\gamma'$
  is still a proper coloring since the color $\gamma(x_i)$ appears in
  $S$ only on $x_i$ according to $\gamma$ since $S$ is a clique in
  $G$, hence it appears only on $x_{i+1}$ according to $\gamma'$.

  Otherwise, we decompose the directed cycle into (maximal) directed
  paths of the form $s_\alpha\to x_1\to \cdots\to x_p$. We then apply
  a similar approach to each of these paths: for
  $2\leqslant i\leqslant p$, we define
  $\gamma'(x_i)=\gamma(x_{i-1})$. Similarly, this gives a proper
  coloring. It remains to color $x_1$. Note that $s_\alpha\to x_1$, so
  $\alpha\in \hat{x_1}$ and $\alpha\notin\gamma(S)$. Therefore, we can
  take $\gamma'(x_1)=\alpha$ and keep a proper coloring.

  Since we consider a cycle, for every $\alpha$, the vertex $s_\alpha$
  is the source of at most one such sub-path of the directed
  cycle. Therefore, color $\alpha$ appears in at most one vertex of
  $S$ according to $\gamma'$, and the coloring $\gamma'$ is proper.

  In both cases, we thus obtain a proper coloring $\gamma'$ satisfying
  $\gamma'(x)\neq \gamma(x)$ for each vertex $x$ of the considered
  directed cycle.
\end{proof}

We are now ready to describe the generic way used to reduce
configurations in this approach. The framework is the same as before:
our goal is to extend a coloring of a subgraph $G'$ of $G$ to the
entire graph $G$. To this end, we first identify some conditions on
the color lists impeding the coloring to extend directly to $G$. If
these conditions are not satisfied, then we can extend the coloring
and we are done.

On the contrary, if they are satisfied, we look for some elements of
$G'$ to recolor in order to change the available colors of the
uncolored elements of $G$, and hence break the previous conditions. We
finally use the previous lemma to reduce the initial problem to
finding a suitable directed cycle in the color shifting graph of a
well-chosen set of elements.

To find such directed cycles, we first state a simple but useful
property of color shifting graphs: if $H_{S,\gamma}$ is the color
shifting graph of a set $S$ with respect to $\gamma$, then the
in-degree of any vertex $x\in S$ of $H_{S,\gamma}$ is at least
$|\hat{x}|-1$. We often use this property together with the following
lemma to find the required directed cycles. Recall that a strong
component of $H_{S,\gamma}$ is a maximal set of vertices $C$ such that
any two of them are linked by a directed path in $C$.

\begin{lemma}
  \label{lem:SCC}
  Every simple directed graph $H$ has a strong component $C$
  satisfying
  \[|C|>\max_{x\in C} d^-(x)\]
\end{lemma}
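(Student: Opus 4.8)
The plan is to exploit the acyclic structure of the \emph{condensation} of $H$. Form the digraph $\mathcal{D}$ whose vertices are the strong components of $H$, with an arc from component $A$ to component $B$ (with $A\neq B$) whenever $H$ has at least one arc from a vertex of $A$ to a vertex of $B$. By the standard fact that the condensation is acyclic, $\mathcal{D}$ has a source, i.e. a strong component $C$ with no incoming arc in $\mathcal{D}$. (We may assume $H$ is nonempty, so that $C$ exists.)

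Next I would argue that this $C$ does the job. Fix any $x\in C$. Every arc $y\to x$ of $H$ has its tail $y$ in the same strong component as $x$: otherwise the strong component of $y$ would send an arc to $C$ in $\mathcal{D}$, contradicting that $C$ is a source. Hence all in-neighbours of $x$ lie in $C\setminus\{x\}$. Since $H$ is simple — no loops and no parallel arcs — distinct arcs into $x$ have distinct tails, so $d^-(x)\leqslant |C\setminus\{x\}| = |C|-1$. As $x\in C$ was arbitrary, $\max_{x\in C} d^-(x)\leqslant |C|-1<|C|$, which is exactly the claimed inequality.

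There is essentially no hard step here: the only points requiring a moment of care are (i) recalling that the condensation is acyclic and therefore has a source, and (ii) using the simplicity hypothesis correctly to pass from ``all in-neighbours of $x$ lie in $C\setminus\{x\}$'' to the bound $d^-(x)\leqslant|C|-1$. If one wanted to avoid invoking the condensation explicitly, an alternative is to take a strong component $C$ that is maximal in the partial order ``$A\preceq B$ iff there is a directed path from $A$ to $B$'' induced on components; such a maximal element again has no in-arcs from other components, and the same counting applies.
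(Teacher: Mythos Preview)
Your proof is correct and essentially identical to the paper's: both take a source component in the condensation and use simplicity to bound $d^-(x)$ by $|C|-1$. One small slip in your closing remark: a \emph{maximal} element of the order ``$A\preceq B$ iff there is a directed path from $A$ to $B$'' is a sink, not a source, so you would want a \emph{minimal} element there (or reverse the relation).
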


\begin{proof}
  Consider the graph $\pi(H)$ obtained by contracting each strong
  component of $H$ to a single vertex.

  Note that $\pi(H)$ is an acyclic digraph, therefore it contains a
  vertex $C$ of in-degree $0$. Take $x\in C$. Then note that due to
  the definition of $\pi(H)$, for each arc $y\to x$, we also have
  $y\in C$. Therefore $C$ contains every in-neighbor of $x$. Since $H$
  is a simple graph, there are $d^-(x)$ such in-neighbors, and $x$ is
  not one of them. Thus $|C|> d^-(x)$. This is valid for any
  $x\in C$, thus we obtain the result.
\end{proof}

Our goal is to prove that the elements we want to recolor are not
alone in their strong component in the color shifting graph we
consider (so that one of these elements is contained in a directed
cycle, and we can recolor it using Lemma~\ref{lem:recolor}). With the
previous result, we have a case split: if there is a strong component
containing a vertex with large in-degree, then it is a large
component, and it is likely to contain an element we want to
recolor. Otherwise, we remove all the vertices with large in-degree
and apply recursively the same argument until (hopefully) a suitable
directed cycle is found.

In order to apply this method, we need to compute the in-degree of
every vertex in a color shifting graph. This is the goal of the last
lemma of this section.

\begin{lemma}
  \label{lem:degmin}
  Let $L$ be a list assignment of $\mathcal{T}(G)$, let $\gamma$ be a
  partial $L$-coloring of $\mathcal{T}(G)$ and $S$ be a colored clique
  of $\mathcal{T}(G)$. Let $x$ be a vertex of $H_{S,\gamma}$. We have
  \[d^-(x)=\begin{cases}  |\hat{x}|-1&\text{ if } x\in S\\ |V(H_{S,\gamma})|-1&\text{ otherwise.}
    \end{cases}
  \]
\end{lemma}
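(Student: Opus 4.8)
The plan is to split into the two cases given by the statement and compute the in-degree directly from the list of arc types in the definition of $H_{S,\gamma}$. The key preliminary observation is that $H_{S,\gamma}$ has two kinds of vertices: the elements of $S$, and the vertices $s_\alpha$ for $\alpha\in\bigcup_{x\in S}\hat{x}$; there are no other vertices. So for any vertex $x$, the predecessors of $x$ are among these two families, and I simply need to count how many arcs of each listed type point into $x$.

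First I would handle the case $x=s_\alpha$ (the "otherwise" case). Looking through the six bullet points defining the arcs, the arcs entering $s_\alpha$ come from exactly two rules: the rule "for any $x,\alpha$, there is an arc $x\to s_\alpha$", which gives an arc from every element $x\in S$ into $s_\alpha$; and the rule "for any $\alpha\neq\beta$, there is an arc $s_\alpha\to s_\beta$", which gives an arc from every other color-vertex $s_\beta$ into $s_\alpha$. Since $H_{S,\gamma}$ is loopless and these account for every vertex other than $s_\alpha$ itself, we get $d^-(s_\alpha)=|V(H_{S,\gamma})|-1$. (One should note that the arc $x\to s_\alpha$ is unconditional, which is what makes every element of $S$ a predecessor; this is the point where the claim genuinely uses that the "$x\to s_\alpha$" rule has no side condition.)

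Next, the case $x\in S$. The arcs entering $x$ come from two rules: "$y\to x$ for $y\in S$, $y\neq x$, if the color of $y$ lies in $\hat{x}$ once $S$ is uncolored", and "$s_\alpha\to x$ if $\alpha\in\hat{x}$ and $\alpha\notin\gamma(S)$". The idea is to see that these two families of predecessors are in bijection with $\hat{x}\setminus\{\gamma(x)\}$: a color $\beta\in\hat{x}$ with $\beta\neq\gamma(x)$ is either used on $S$, in which case it is used on a unique element $y\in S\setminus\{x\}$ (since $S$ is a properly colored clique), yielding the predecessor $y$ and an arc $y\to x$; or it is not used on $S$, i.e. $\beta\notin\gamma(S)$, yielding the predecessor $s_\beta$ and an arc $s_\beta\to x$. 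Conversely every predecessor of $x$ of either type corresponds to such a color $\beta\in\hat{x}\setminus\{\gamma(x)\}$, and distinct predecessors give distinct colors. Hence $d^-(x)=|\hat{x}\setminus\{\gamma(x)\}|=|\hat{x}|-1$, using that $\gamma(x)\in\hat{x}$ (the color currently on $x$ is available for $x$ once $S$ is uncolored).

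The main thing to be careful about — rather than a genuine obstacle — is the bookkeeping around $\gamma(x)$: one must check that $\gamma(x)$ is counted in $|\hat{x}|$ but is never the label of an incoming arc (no arc $y\to x$ with $y$ colored $\gamma(x)$, since $S$ is properly colored so no other element of $S$ carries $\gamma(x)$; and no arc $s_{\gamma(x)}\to x$, since that rule requires $\alpha\notin\gamma(S)$ whereas $\gamma(x)\in\gamma(S)$). Once this is pinned down, the count is exact and both cases follow.
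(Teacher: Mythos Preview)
Your proposal is correct and follows essentially the same approach as the paper's proof: both cases are handled by direct inspection of the arc rules, and for $x\in S$ the key step is the bijection between predecessors of $x$ and colors in $\hat{x}\setminus\{\gamma(x)\}$, using that $S$ is a properly colored clique so each color in $\gamma(S)$ sits on a unique element. Your treatment of the bookkeeping around $\gamma(x)$ is slightly more explicit than the paper's, but the argument is the same.
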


\begin{proof}
  Let $x\in V(H_{S,\gamma})$. If $x$ is some $s_\alpha$, then by
  definition, there is an arc $y\to x$ for every other vertex $y$ of
  $H_{S,\gamma}$. Therefore, $d^-(x)=|V(H_{S,\gamma})|-1$.

  Otherwise, assume that $x\in S$. By definition, every in-neighbor of
  $x$ is either an element of $S$ colored with some color in
  $\hat{x}$, or a vertex $s_\alpha$ with
  $\alpha\in\hat{x}\setminus\gamma(S)$. Observe that since $\gamma$ is
  proper and $S$ is a clique, then for every $\alpha\in \gamma(S)$,
  there is exactly one vertex of $S$ colored with $\alpha$. In
  particular, there is no vertex $y\neq x$ with
  $\gamma(y)=\gamma(x)$. Therefore, there is one in-neighbor of $x$
  for every color of $\hat{x}\setminus\{\gamma(x)\}$.

  Conversely, let $\alpha\in\hat{x}\setminus\{\gamma(x)\}$. If
  $\alpha$ does not appear on $S$, i.e. $\alpha\notin\gamma(S)$, then
  we have an arc $s_\alpha\to x$ in $H_{S,\gamma}$. Otherwise,
  $\alpha=\gamma(y)$ for some $y\in S\setminus\{x\}$, and we have an
  arc $y\to x$ in $H_{S,\gamma}$.

  Therefore, the number $d^-(x)$ of in-neighborss of $x$ is
  $|\hat{x}\setminus\{\gamma(x)\}|=|\hat{x}|-1$.
\end{proof}

\subsection{Generic reducible patterns}

We conclude this section by giving some other generic (and standard)
tools that are used many times in the reductions.

Recall that proving that a configuration is reducible amounts to
extending a coloring of a subgraph $G'$ of $G$ to the entire graph
$G$. This can be rephrased in terms of \emph{$f$-choosability}. This
variant of the choosability problem is defined as follows. Let $H$ be
a graph and $f:V(H)\to\mathbb{N}$. We say that $H$ is $f$-choosable if
we can produce a vertex $L$-coloring of $H$ from any list assignment
$L$ satisfying $|L(v)|\geqslant f(v)$ for every vertex $v$ of $G$.

To extend a coloring from $G'$ to $G$, we often prove that
$\mathcal{T}(G\setminus G')$ is $f$-choosable, where $f(x)$ is the
number of available colors of the element $x$ (in our case, $f(x)$ is
ten minus the number of elements of $G'$ incident to $x$). This point
of view gives another tool to extend colorings, as shown by the
following theorem.
\begin{theorem}[\cite{borodin67,erdos1979choosability}]
  \label{thm:=deg}
  Let $G$ be a connected graph such that at least one of its blocks is
  neither a complete graph nor an odd cycle. For any function
  $f:V(G)\to \mathbb{N}$ such that $f(v)\geqslant d(v)$ for each
  vertex $v$, the graph $G$ is $f$-choosable.
\end{theorem}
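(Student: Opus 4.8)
The plan is to prove the statement by induction on $|V(G)|$, following the classical degree-choosability argument. Fix a list assignment $L$ with $|L(v)| \geqslant f(v) \geqslant d(v)$ for all $v$. Shrinking each list to size exactly $d(v)$ only makes the colouring task harder, so we may assume $f(v) = d(v)$ everywhere, i.e.\ it suffices to prove that $G$ is $d$-choosable. One elementary device will be used throughout. If some vertex $v_0$ has $|L(v_0)| > d(v_0)$, order $V(G)$ by decreasing distance to $v_0$ in a breadth-first search and colour greedily: every vertex $v \neq v_0$ has, when coloured, an uncoloured neighbour one step closer to $v_0$, hence at most $d(v)-1 < |L(v)|$ coloured neighbours, and $v_0$ is coloured last with a spare colour. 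So a connected graph in which some vertex has a list strictly larger than its degree is always colourable; call this the \emph{slack lemma}. We may thus assume every list is tight.

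First I would reduce to the $2$-connected case, by induction on the block structure. If $G$ has a cut vertex, pick a leaf block $B$ with cut vertex $c$; as no block of $G$ is complete, $B$ is $2$-connected on at least four vertices, so $d_B(c) \geqslant 2$, and $c$ has a further neighbour in $G' := G - (V(B)\setminus\{c\})$. Viewed with the lists of $G$, the vertex $c$ has a spare colour inside $B$, so $B$ is colourable by the slack lemma; we colour $B$ first and then extend to $G'$, which is connected, smaller, and still has no complete or odd-cycle block. The delicate point is that removing the already-coloured vertex $c$ from $G'$ may create complete or odd-cycle blocks; this is absorbed by exploiting the remaining freedom in the colour of $c$ so as to leave some neighbour of $c$ with its full list, which then becomes a slack vertex in the leftover instance and lets the slack lemma finish.

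It remains to treat $2$-connected $G$ that is neither complete nor an odd cycle, with tight lists. If $G$ is an even cycle $v_1\cdots v_{2k}v_1$, then either all lists coincide — and alternating two colours is proper because the length is even — or some colour $\alpha$ lies in $L(v_i)\setminus L(v_{i-1})$, and setting $v_i=\alpha$ and colouring $v_{i+1},\dots,v_{i-1}$ in order leaves every new vertex a free colour, the last one $v_{i-1}$ being unconstrained by $v_i$. Otherwise $G$ is $2$-connected and not a cycle, so it contains a theta subgraph, hence (one of the three resulting cycle lengths being even) a proper even-cycle subgraph; here one runs the Erd\H{o}s--Rubin--Taylor analysis, organising the colouring around a suitable even cycle — or, in degenerate cases such as a wheel, around an odd cycle all of whose vertices carry a spare colour — by first colouring the rest of $G$, so that the chosen cycle inherits reduced lists of size at least two and can be finished by the even-cycle argument above, while the slack lemma handles what remains. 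The main obstacle is precisely this last step: choosing the right substructure and the right colouring order so that every remaining constraint can be met. This is the technical heart of the argument, and it is where the hypothesis that no block is a complete graph or an odd cycle is genuinely used.
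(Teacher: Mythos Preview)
The paper does not prove this theorem; it is quoted from \cite{borodin67,erdos1979choosability} and used as a black box, so there is no in-paper argument to compare against.

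Your outline follows the classical route, but two steps are not actually carried through. In the block reduction you color the leaf block $B$ first and assert that the cut vertex $c$ can be given a color missing from $L(w)$ for some neighbor $w$ of $c$ in $G'$. But after coloring $B\setminus\{c\}$ via the slack lemma, the colors still available at $c$ form a set $S$ with $|S|\geqslant d_{G'}(c)$, and nothing prevents $S\subseteq L(w)$ for \emph{every} such $w$ (each $L(w)$ has size $d_G(w)$, possibly much larger than $|S|$); so the freedom you invoke need not exist. Coloring $G'$ first and extending into $B$ hits the mirror problem: $B-c$ with the reduced lists is merely tight and can itself be a Gallai tree (already when $B=C_4$). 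For the $2$-connected non-cycle case you only describe a strategy --- find a theta, hence an even cycle, organize the coloring around it --- and then explicitly flag ``the main obstacle'' as unresolved. That is precisely where the content of the theorem lives; your even-cycle plan, for instance, breaks down when $G[V(C)]$ happens to induce a clique, and the standard device (find non-adjacent $x,y\in N(v)$ with $G-\{x,y\}$ connected, color them the same to give $v$ slack) still needs the case $L(x)\cap L(y)=\varnothing$ handled. As written this is a plan rather than a proof.
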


Despite the fact that Theorem~\ref{thm:=deg} is about vertex
choosability while we focus on total choosability,
Theorem~\ref{thm:=deg} will turn out to be helpful when looking at the
constraint graphs.

As a consequence, we get this classical result about choosability of
even cycles.
\begin{corollary}
  \label{cor:evencycle}
  Any even cycle is $2$-choosable.
\end{corollary}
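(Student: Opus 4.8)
The plan is to derive this directly from Theorem~\ref{thm:=deg}. Let $C$ be an even cycle. First I would observe that $C$ is $2$-regular, so $d(v)=2$ for every vertex $v$ of $C$, and that $C$ is connected. Since $C$ is $2$-connected, it consists of a single block, namely $C$ itself.

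Next I would check that this block satisfies the hypotheses of Theorem~\ref{thm:=deg}, i.e. that it is neither a complete graph nor an odd cycle. It is not an odd cycle since $C$ has even length by assumption. It is not a complete graph since the only cycle that is a complete graph is the triangle $K_3$, which has odd length, so an even cycle cannot be complete. (If one wishes to be careful about degenerate cases, the shortest even cycle in a simple graph is $C_4$, which is manifestly neither complete nor odd.)

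Having verified the hypotheses, I would apply Theorem~\ref{thm:=deg} with the function $f$ defined by $f(v)=2=d(v)$ for every vertex $v$ of $C$. The theorem then yields that $C$ is $f$-choosable, which by definition means that $C$ is $2$-choosable, as claimed.

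There is essentially no obstacle here: the only thing to take care of is confirming that an even cycle meets the two structural exclusions in the statement of Theorem~\ref{thm:=deg}, and both are immediate. The result is thus a one-line consequence of the theorem, recorded here because it will be invoked repeatedly when analysing constraint graphs arising from the configurations.
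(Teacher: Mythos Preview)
Your proof is correct and is precisely the intended derivation: the paper states this as an immediate corollary of Theorem~\ref{thm:=deg} without further argument, and you have filled in exactly the right verification that an even cycle satisfies the hypotheses.
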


We introduce some other useful results. The first one is based on
Corollary~\ref{cor:evencycle}.

\begin{lemma}
  \label{lem:fryingpan}
  Let $G$ be the graph composed of a cycle $v_1\cdots v_nv_1$ such
  that $v_1,v_n$ share a common neighbor $u$, see
  Figure~\ref{fig:fryingpan}.  Let $L$ be a list assignment satisfying
  that for every vertex $v$, we have $|L(v)|\geqslant 2$. Then $G$ is
  $L$-choosable if either $|L(v_1)|\geqslant 3$ or $n$ is even and
  $L(v_1)\neq L(u)$.
\end{lemma}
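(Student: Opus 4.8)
The plan is to colour the hub vertex $u$ last. Set $C:=G\setminus u$, so that $C$ is exactly the cycle $v_1v_2\cdots v_nv_1$. Since $u$ has only the two neighbours $v_1$ and $v_n$, a proper $L$-colouring $\phi$ of $C$ extends to $G$ unless it forbids every colour of $u$, that is, unless $|L(u)|=2$ and $\{\phi(v_1),\phi(v_n)\}=L(u)$. So it suffices to produce a proper $L$-colouring $\phi$ of $C$ with $\{\phi(v_1),\phi(v_n)\}\neq L(u)$, which is an empty requirement when $|L(u)|\geq 3$.

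First I would observe that $C$ does admit a proper $L$-colouring: if $n$ is even this is Corollary~\ref{cor:evencycle}, and if $|L(v_1)|\geq 3$ one deletes $v_1$, colours the path $v_2v_3\cdots v_n$ greedily from $v_2$ (all lists have size at least $2$, so this works), after which $v_1$ still has at least $|L(v_1)|-2\geq 1$ available colours. One of these two cases always holds by hypothesis, so the case $|L(u)|\geq 3$ is done; assume henceforth $L(u)=\{p,q\}$. The hypothesis gives $|L(v_1)|\geq 3$ or $L(v_1)\neq L(u)$, and since $|L(u)|=2$ this forces $L(v_1)\not\subseteq L(u)$ in both cases, so I fix $c\in L(v_1)\setminus L(u)$ and put $\phi(v_1):=c$. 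Now $c$ lies in $\{\phi(v_1),\phi(v_n)\}$ but not in $L(u)$, hence $\{\phi(v_1),\phi(v_n)\}\neq L(u)$ whatever colour $v_n$ receives, and it only remains to extend $\phi$ along the path $v_2v_3\cdots v_n$, whose lists now have size at least $2$ on $v_3,\dots,v_{n-1}$ and at least $1$ on $v_2$ and $v_n$. If removing $c$ leaves $v_2$ (or $v_n$) with a list of size $\geq 2$, then colouring this path greedily from the other end succeeds.

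The remaining — and, I expect, genuinely delicate — situation is when every admissible $c\in L(v_1)\setminus L(u)$ satisfies $c\in L(v_2)\cap L(v_n)$ and $|L(v_2)|=|L(v_n)|=2$, so that each such choice of $\phi(v_1)$ shrinks both ends of the path $v_2\cdots v_n$ to singletons; and even then trouble arises only if these two singletons clash along the path. Such a configuration pins down $L(v_1),L(v_2),L(v_n)$ very tightly relative to $p,q$, and one escapes it by re-choosing $\phi(v_1)$: either a different element of $L(v_1)\setminus L(u)$, which exists precisely when $|L(v_1)|\geq 3$ leaves slack, or an element of $L(u)$ itself — and then one colours $C$ so that $\phi(v_n)$ misses the other element of $L(u)$, which $u$ then keeps. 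This is exactly the point where the hypothesis is consumed: the slack in $|L(v_1)|$ provides the extra choices for $\phi(v_1)$, while the parity of $n$ together with Corollary~\ref{cor:evencycle} disposes of the tight cases with $|L(v_1)|=2$. The value $n=3$, where $G$ is $K_4$ minus the edge $uv_2$ and the path $v_2\cdots v_n$ degenerates to the edge $v_2v_n$, is covered word for word by the same discussion.
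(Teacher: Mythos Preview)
Your reduction to colouring the cycle $C$ so that $\{\phi(v_1),\phi(v_n)\}\neq L(u)$ is correct, and the easy cases (when $|L(u)|\geq 3$, or when removing $c$ from $L(v_2)$ or $L(v_n)$ still leaves a list of size $\geq 2$) are handled cleanly. But the proof stops exactly where the work begins: the ``delicate'' paragraph is a description of what one might try, not an argument that it succeeds. Concretely, when $|L(v_1)|\geq 3$ your first fallback (``a different element of $L(v_1)\setminus L(u)$'') can hit the same wall---take $n=3$ with $L(v_2)=L(v_3)=\{c,c'\}$ and $L(v_1)\setminus L(u)\subseteq\{c,c'\}$, so that every admissible choice for $\phi(v_1)$ forces $v_2$ and $v_3$ to the same singleton. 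Your second fallback (colour $v_1$ from $L(u)$ and then arrange $\phi(v_n)\notin L(u)\setminus\{\phi(v_1)\}$) is asserted, not proved; the appeal to Corollary~\ref{cor:evencycle} does not close it, since that corollary produces \emph{some} $L$-colouring of an even cycle, not one meeting a prescribed constraint at $v_n$.

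The paper sidesteps this by not committing to $v_1$ as the first coloured vertex. For odd $n$ it branches on whether $L(v_n)=L(u)$: if so, colour $v_1\notin L(u)=L(v_n)$ and sweep $v_2,\dots,v_n,u$ (the point being that $\phi(v_1)\notin L(v_n)$, so $v_n$ sees only one real constraint); if not, colour $v_n\notin L(u)$ and sweep $v_{n-1},\dots,v_1,u$, spending $|L(v_1)|\geq 3$ only at the end. For even $n$ it branches on whether $L(v_2)=\cdots=L(v_n)$, two-colouring $v_2,\dots,v_n$ in the equal case and otherwise exploiting an index $i$ with $L(v_i)\neq L(v_{i+1})$ to start the sweep there. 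Your anchor-$v_1$-first strategy could likely be completed, but doing so requires a case analysis comparable to the one you are trying to shortcut; as written, the argument is incomplete.
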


\begin{figure}[!ht]
  \centering
  \begin{tikzpicture}[every node/.style={draw=black,minimum size = 10pt,ellipse,inner sep=1pt},node distance=1.5cm]
    \node (v1) at (234:1) {$v_1$};
    \node (v2) at (306:1) {$v_2$};
    \node (vn1) at (90:1) {$v_{n-1}$};
    \node (vn) at (162:1) {$v_n$};
    \node (u) at (198:2)  {$u$};
    \draw (v2) -- (v1) -- (u) -- (vn) -- (v1);
    \draw (vn) -- (vn1);
    \draw[dotted,bend left=40] (vn1) to (v2);
  \end{tikzpicture}
  \caption{Configuration of Lemma~\ref{lem:fryingpan}}
  \label{fig:fryingpan}
\end{figure}

\begin{proof}
  Without loss of generality, we may assume that for any $v\neq v_1$,
  $|L(v)|=2$, and that $|L(v_1)|$ is 2 or 3. First assume that the
  cycle has odd length, thus $|L(v_1)|=3$. If $L(v_n)=L(u)$, we color
  $v_1$ with a color not in $L(u)$, then
  $v_2,\ldots,v_n,u$. Otherwise, we color $v_n$ with a color not in
  $L(u)$, then color $v_{n-1},\ldots,v_1,u$.

  If the cycle has even length, we distinguish two cases:
  \begin{itemize}
  \item $L(v_2)=\cdots=L(v_n)$, then we color $v_2,v_4,\ldots,v_n$
    with a color, $v_3,v_5\ldots,v_{n-1}$ with another color. Denote
    by $\hat{L}$ the list assignment obtained from $L$ by removing the
    colors of the neighbors of each vertex. Observe that we have
    $|\hat{L}(v_1)|=1$ or $2$. If $|\hat{L}(v_1)|=2$, we color $u$
    then $v_1$. Otherwise, since $\hat{L}(v_1)\neq \hat{L}(u)$, we can
    color $v_1$ then $u$.
  \item Otherwise, there exists $i$ such that $L(v_i)\neq
    L(v_{i+1})$. Color $v_{i+1}$ with a color not in $L(v_i)$, then
    color $v_{i+1},\ldots,v_n$. With $\hat{L}$ defined as previously,
    we now have $|\hat{L}(u)|=1$ and $|\hat{L}(v_1)|=1$ or $2$. If
    $|\hat{L}(v_1)|=2$, we color $u,v_1,v_2,\ldots,v_i$. Otherwise,
    since we have $L(u)\neq L(v_1)$, we can color $u$ with a color not
    in $L(v_1)$, then $v_1,v_2,\ldots,v_i$.
\end{itemize}
\end{proof}

The next result is a consequence of Hall's necessary and sufficient
condition for a perfect matching to exist in a bipartite
graph. Finding an $L$-coloring of a graph $G$ can be reduced to
finding a perfect matching in the following graph. It has one vertex
per color $c$ and per vertex $x$ of $G$, and an edge $(c,x)$ when
$c\in L(x)$. Since this graph is bipartite, Hall's criterion gives a
condition for an $L$-coloring to exist.

\begin{theorem}[Hall's marriage theorem]
  \label{thm:clique}
  Let $G$ be a clique. Then for every list assignment $L$, the graph
  $G$ is $L$-choosable if and only if for all $S\subset V(G)$,
  $|S|\leqslant |\cup_{x\in S} L(x)|$.
\end{theorem}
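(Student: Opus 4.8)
The plan is to recognize this as a verbatim instance of Hall's theorem, applied to the bipartite incidence graph $B$ described just above the statement: $B$ has vertex set $V(G)$ together with $\bigcup_{x\in V(G)} L(x)$, and an edge joining $x\in V(G)$ to a color $c$ whenever $c\in L(x)$. The crucial observation is that, because $G$ is a clique, its vertices are pairwise adjacent, so any proper $L$-coloring of $G$ assigns \emph{pairwise distinct} colors to the vertices. Hence a proper $L$-coloring is nothing but an injective choice function $x\mapsto c(x)\in L(x)$, which is exactly a matching of $B$ saturating the side $V(G)$; conversely, any such matching is such an injective choice function and therefore a proper $L$-coloring. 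So $G$ is $L$-choosable if and only if $B$ admits a matching saturating $V(G)$.

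Now I would invoke Hall's marriage theorem: $B$ has a matching saturating $V(G)$ if and only if $|N_B(S)|\geqslant |S|$ for every $S\subseteq V(G)$. Since by construction $N_B(S)=\bigcup_{x\in S} L(x)$, this is precisely the stated condition $|S|\leqslant \left|\bigcup_{x\in S} L(x)\right|$, and both implications follow. For completeness one can note that the forward direction is also immediate directly: if $\gamma$ is a proper $L$-coloring and $S\subseteq V(G)$, then $\gamma$ restricted to $S$ uses $|S|$ distinct colors, all of which lie in $\bigcup_{x\in S} L(x)$, so $|S|\leqslant\left|\bigcup_{x\in S}L(x)\right|$.

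I expect no real obstacle here: the only point worth spelling out is the use of cliqueness to turn ``proper coloring'' into ``system of distinct representatives for the lists'', after which the statement is a direct application of Hall's theorem to $B$.
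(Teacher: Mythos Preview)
Your proposal is correct and matches the paper's own treatment: the paper does not give a formal proof but explains in the paragraph preceding the theorem that an $L$-coloring of a clique corresponds to a matching saturating $V(G)$ in the bipartite vertex-color incidence graph, so the result follows directly from Hall's criterion. Your write-up spells this out in slightly more detail (in particular making explicit why cliqueness turns ``proper coloring'' into ``system of distinct representatives''), but the argument is the same.
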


We end this subsection with a last configuration, depicted in
Figure~\ref{fig:diam}.
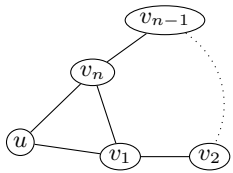
\begin{figure}[!ht]
  \centering
  \begin{tikzpicture}[every node/.style={draw=black,minimum size = 10pt,ellipse,inner sep=1pt},scale=0.66]
    \node (v1) at (0,0) {$v_1$};
    \node (v2) at (1,1.72) {$v_2$};
    \node (v3) at (2,0) {$v_3$};
    \node (v4) at (3,1.72) {$v_4$};
    
    \node (vn3) at (5,0) {$v_{n-3}$};
    \node (vn2) at (6,1.72) {$v_{n-2}$};
    \node (vn1) at (7,0) {$v_{n-1}$};
    \node (vn) at (8,1.72) {$v_n$};
    \draw (v1) -- (v2) -- (v3) -- (v4);
    \draw (vn3) -- (vn2) -- (vn1) -- (vn);
    \draw (v1) -- (v3);
    \draw (v2) -- (v4);
    \draw (vn3) -- (vn1);
    \draw (vn2) -- (vn);
    \draw[dotted] (v4) -- (vn2);
    \draw[dotted] (v3) -- (vn3);
  \end{tikzpicture}
  \caption{Configuration of Lemma~\ref{lem:diam}}
  \label{fig:diam}
\end{figure}

\begin{lemma}
  \label{lem:diam}
  Let $n\geqslant 4$ be an integer such that $n\not\equiv 0\mod
  3$. Let $G$ be the graph formed by a path $v_1\ldots v_n$ with
  additional edges $v_iv_{i+2}$ for $1\leqslant i\leqslant n-2$ (see
  Figure~\ref{fig:diam}). Let $L$ be a list assignment $L$ such that
  $|L(v)|\geqslant 2$ for $v\in\{v_1,v_{n-1},v_n\}$, and
  $|L(v)|\geqslant 3$ for any other $v$. Then $G$ is $L$-choosable.
\end{lemma}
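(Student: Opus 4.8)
The plan is to prove the statement by induction on $n$. Since $G$ is the square of the path $v_1\cdots v_n$, the three vertices $v_1,v_2,v_3$ induce a triangle, and among $v_4,\dots,v_n$ their only neighbours are $v_4$ (adjacent to $v_2$ and $v_3$) and $v_5$ (adjacent to $v_3$); moreover $G-\{v_1,v_2,v_3\}$ is exactly the square of the path $v_4\cdots v_n$. So I would first colour $v_4,\dots,v_n$ and then extend to the triangle $v_1v_2v_3$. Once $v_4,\dots,v_n$ are coloured, writing $c_4,c_5$ for the colours of $v_4,v_5$, the residual lists of the triangle are $L(v_1)$ (of size $\geqslant 2$), $L(v_2)\setminus\{c_4\}$ (of size $\geqslant 2$) and $L(v_3)\setminus\{c_4,c_5\}$ (of size $\geqslant 1$, as $|L(v_3)|\geqslant 3$ and $c_4\neq c_5$). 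A triangle with list sizes $\geqslant 2,\geqslant 2,\geqslant 1$ satisfies every instance of Hall's condition (Theorem~\ref{thm:clique}) except possibly the one for the whole vertex set, so it is $L$-colourable as soon as the union of the three residual lists has at least $3$ colours.

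Now the point is that the union of the residual lists has at most $2$ colours only in a very constrained situation: all three must lie in a common $2$-set $\{a,b\}$, which forces $L(v_1)=\{a,b\}$, $|L(v_2)|=3$ with $L(v_2)=\{a,b,c_4\}$, and $L(v_3)\subseteq\{a,b,c_4,c_5\}$. Call the configuration \emph{dangerous} if $|L(v_1)|=2$, $|L(v_2)|=3$ and $L(v_1)\subsetneq L(v_2)$, and in that case let $d$ be the unique colour of $L(v_2)\setminus L(v_1)$; what we just observed is that the union drops to $2$ colours only if, in addition, we are in the dangerous configuration and $c_4=d$. So I would colour $v_4,\dots,v_n$ by the induction hypothesis, except that in the dangerous case I replace $L(v_4)$ by $L(v_4)\setminus\{d\}$ before doing so. This is legal: when $n\geqslant 6$ the vertex $v_4$ is interior in $G$, so $|L(v_4)|\geqslant 3$ and $|L(v_4)\setminus\{d\}|\geqslant 2$, and $v_4\cdots v_n$ is the square of a path on $n-3\geqslant 4$ vertices with $n-3\not\equiv 0\pmod 3$ whose first vertex still has $\geqslant 2$ available colours and whose other vertices keep lists of the required sizes, so Lemma~\ref{lem:diam} applies to it. When $n\in\{4,5\}$, $v_4\cdots v_n$ is a single vertex or a single edge (and $v_3$ is then not adjacent to $v_5$, a harmless variant), so these are the base cases and require no induction; the (possibly shrunk) lists there still have size $\geqslant 1$ on each vertex, hence such a colouring exists.

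It remains to extend to the triangle. If we were not in the dangerous case, the observation above (contrapositively) shows that the union of the three residual lists has $\geqslant 3$ colours, so Hall's theorem finishes. If we were in the dangerous case, then $c_4\neq d$, hence $d\in L(v_2)\setminus\{c_4\}$ while $d\notin L(v_1)$, so the union contains $L(v_1)\cup\{d\}$, which already has $3$ colours; again Hall's theorem finishes. The main obstacle, I expect, is recognising that an arbitrary colouring of $v_4,\dots,v_n$ need not extend and then pinning down the \emph{single} list configuration responsible for this, so that it can be neutralised simply by passing a shrunk list for $v_4$ into the inductive call — which the hypothesis permits precisely because of the slack at the interior vertices of the path. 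Verifying the degenerate small cases $n=4,5$ and keeping track of the adjacency of $v_3$ there is routine.
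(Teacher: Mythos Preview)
Your proof is correct and takes a genuinely different route from the paper's. Both arguments are inductive, but the paper peels off one, two, or three vertices from the \emph{right} end (near $v_{n-1},v_n$), branching on whether $L(v_{n-1})=L(v_n)$ and on $n\bmod 3$: this lets it always recurse on a length that is still $\not\equiv 0\pmod 3$ without ever needing to analyse when an extension might fail. You instead always strip the triangle $v_1v_2v_3$ from the \emph{left}, recurse on the $n-3$ remaining vertices, and then extend via Hall's theorem (Theorem~\ref{thm:clique}); the price is that you must isolate the unique obstruction (your ``dangerous'' configuration) and neutralise it by shrinking $L(v_4)$ before the recursive call. Your approach is more uniform and makes nice use of Hall, while the paper's is more elementary and avoids the obstruction analysis altogether.

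One small sloppiness worth tightening: for $n=5$ you write that ``the (possibly shrunk) lists there still have size $\geqslant 1$ on each vertex, hence such a colouring exists''. Two adjacent vertices each with a list of size~$1$ need not be properly colourable; the point is rather that $|L(v_5)|\geqslant 2$, so after colouring $v_4$ from its (possibly shrunk) list of size $\geqslant 1$, the vertex $v_5$ still has a colour left. With that said, the argument goes through.
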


\begin{proof}
  We proceed by induction on $n$.
  \begin{itemize}
  \item Assume $n=4$. If $L(v_3)=L(v_4)$, we color $v_2$ with a color
    not in $L(v_3)$, then $v_1,v_3,v_4$. Otherwise, we color $v_3$
    with a color not in $L(v_4)$, then $v_1,v_2,v_4$.
  \item Assume $n=5$. If $L(v_4)=L(v_5)$, we color $v_3$ with a color
    not in $L(v_4)$, then $v_1,v_2,v_4,v_5$. Otherwise, we color $v_5$
    with a color not in $L(v_4)$ and use the case $n=4$ to color
    $v_1,\ldots,v_4$.
  \item Assume $n>6$. If $L(v_{n-1})=L(v_n)$, we color $v_{n-2}$ with
    a color not in $L(v_n)$, then apply the case $n-3$ to color
    $v_1,\ldots,v_{n-3}$, then color $v_{n-1}$ and $v_n$. Otherwise if
    $n\equiv 2\mod 3$, we color $v_n$ with a color not in
    $L(v_{n-1})$, then apply the use case $n-1$ to color
    $v_1,\ldots,v_{n-1}$. If $n\equiv 1\mod 3$, color $v_{n-1}$ with a
    color not in $L(v_n)$, then use case $n-2$ to color
    $v_1,\ldots,v_{n-2}$, then color $v_n$.
  \end{itemize}
\end{proof}

\section{The configurations are reducible}
\label{sec:reduction}
In this section, we prove that configurations $C_{\ref{C1}}$ to
$C_{\ref{C19}}$ are reducible. We devote a subsection to each
configuration. We use the recoloring approach to reduce
$C_{\ref{C16}}$ to $C_{\ref{C19}}$. We use $C_{\ref{C1}}$ and
$C_{\ref{C2}}$ as examples of the template for the case analysis and
the Nullstellensatz approachs. Then, for $C_{\ref{C3b}}$ to
$C_{\ref{C15}}$, we omit the case analysis argument whenever the
Nullstellensatz approach manages to conclude in a reasonable time. In
that case, we refer to
\href{https://github.com/tpierron/Delta8}{\url{https://github.com/tpierron/Delta8}}
for the suitable monomial(s), and to the arxiv version of this
paper~\cite{arxiv} for the reader who would be interested by the case
analysis.

\subsection{Configuration $C_{\ref{C1}}$}

\begin{lemma}
The graph $G$ does not contain $C_{\ref{C1}}$.
\end{lemma}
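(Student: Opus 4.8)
The plan is to argue by minimality. Recall that $C_{\ref{C1}}$ is an edge $uv$ with $d(u)\leqslant 4$ and $d(u)+d(v)\leqslant 10$. Suppose for contradiction that $G$ contains such an edge. I would set $G'=G\setminus uv$, i.e. delete the edge $uv$ (keeping both endpoints), which has fewer edges than $G$; by minimality $G'$ admits a total $L$-coloring. It remains to extend this coloring to $G$, which means assigning a color to the single uncolored element $uv$. So the only element of $\mathcal{T}(G\setminus G')$ is the edge $uv$ itself, and $\mathcal{T}(G\setminus G')$ has no edges.

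The key step is to count the constraints on $uv$. The element $uv$ is incident in $\mathcal{T}(G)$ to: the two vertices $u$ and $v$, the $d(u)-1$ other edges at $u$, and the $d(v)-1$ other edges at $v$. Hence by Remark~\ref{obs:lowerbound} we have
\[
|\hat{uv}| = 10 - \bigl(2 + (d(u)-1) + (d(v)-1)\bigr) = 10 - d(u) - d(v) \geqslant 0.
\]
This is not yet enough, since we need $|\hat{uv}|\geqslant 1$. Here I would use the hypothesis $d(u)\leqslant 4$ more carefully: by the last remark of Section on reducing configurations, a vertex of degree at most $4$ can always be \emph{forgotten}, since it has at most $8$ neighbors in $\mathcal{T}(G)$. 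So rather than coloring $G'$ and then $uv$, I would instead delete $uv$ \emph{and} uncolor $u$ (or equivalently include $u$ in the elements to color): take $G'$ to be $G$ minus the edge $uv$ and minus the vertex $u$ as a colored element. Then the uncolored elements are $uv$ and $u$, we first forget $u$ (legitimate as $d(u)\leqslant 4$), and then we must color $uv$, whose only remaining constraints are $v$ and the $d(v)-1$ other edges at $v$ together with the $d(u)-1$ edges at $u$, giving $|\hat{uv}| \geqslant 10 - 1 - (d(v)-1) - (d(u)-1) = 11 - d(u) - d(v) \geqslant 1$. After coloring $uv$ we color $u$ from its postponed list, which is possible precisely because it was forgotten.

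The main obstacle — really the only subtlety — is making sure the arithmetic gives a strictly positive list for $uv$, which is exactly why the hypothesis $d(u)\leqslant 4$ (and not merely $d(u)+d(v)\leqslant 10$) is needed: it lets us forget $u$ and thereby gain the extra $+1$. Once $uv$ is colored and $u$ is recolored greedily at the end, we obtain a total $L$-coloring of $G$, contradicting the choice of $G$ as a minimal counterexample. Hence $G$ does not contain $C_{\ref{C1}}$.
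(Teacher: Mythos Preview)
Your proposal is correct and follows essentially the same approach as the paper: remove the edge $uv$, color the smaller graph by minimality, uncolor and forget $u$ (using $d(u)\leqslant 4$), and then observe that $uv$ has at most $d(u)+d(v)-1\leqslant 9$ colored constraints, so it can be colored. The paper presents this more directly without the initial false start, but the argument is the same.
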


\begin{proof}
  Assume that $G$ has an edge $e=uv$ such that
  $d_G(u)+d_G(v) \leqslant 10$ and $d_G(u) \leqslant 4$.

  Color $G\setminus \{e\}$ by minimality and uncolor $u$. We may
  assume that $|\hat{e}|=1$ and $|\hat{u}|=3$. 

  We can extend the coloring to $G$ using the following argument. We
  first forget $u$. Then $uv$ has at most $d_G(u)+d_G(v)-1<10$
  constraints, so we can color it.

  We can also conclude using the Nullstellensatz: note that $P_G$ is
  $E-U$. Then the monomial $m=U$ satisfies:
  \begin{enumerate}
  \item $\deg(m)=1=\deg(P)$.
  \item $\deg_E(m)=0< 1 = |\hat{e}|$ and $\deg_U(m)=1<3=|\hat{u}|$.
  \item $m$ has coefficient $-1$ in $P$. 
  \end{enumerate}
  Hence we can color $G$ using Theorem~\ref{thm:nss}.
\end{proof}

\subsection{Configuration $C_{\ref{C2}}$}
\begin{lemma}
The graph $G$ does not contain $C_{\ref{C2}}$.
\end{lemma}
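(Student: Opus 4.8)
The plan is to argue by minimality. Suppose $G$ contains the even cycle $C=v_1v_2\cdots v_{2n}v_1$ described in the statement, and write $e_i=v_iv_{i+1}$ (indices mod $2n$). The odd-indexed vertices $v_1,v_3,\dots,v_{2n-1}$ are $4^-$-vertices, so the natural move is to delete them: set $G'=G-\{v_1,v_3,\dots,v_{2n-1}\}$, color $G'$ by minimality, and try to extend the coloring to $G$, i.e.\ to $L$-color $\mathcal{T}(G\setminus G')$ — the odd vertices together with all their incident edges.

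First I would forget the odd vertices (each has at most eight neighbors in $\mathcal{T}(G)$, hence keeps at least two available colors). The remaining constraint graph $H$ then has a transparent shape: for each $i$ the edges incident to $v_{2i-1}$ (the two cycle edges $e_{2i-2},e_{2i-1}$ and the pendant edges) form a clique $Q_i$, and $Q_i$ meets $Q_{i+1}$ only along the single edge $e_{2i-1}e_{2i}$ of $\mathcal{T}(G)$; so $H$ is a cyclic chain of cliques. One checks that $H$ is $2$-connected and, as $2n\ge4$, that its unique block is neither complete nor an odd cycle, so Theorem~\ref{thm:=deg} applies provided $|\hat x|\ge d_H(x)$ for every element $x$. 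This is where the hypotheses of $C_{\ref{C2}}$ come in: the ``forward'' cycle edge $e_{2i-1}=v_{2i-1}v_{2i}$ has $|\hat e_{2i-1}|\ge 11-d(v_{2i})$ and $d_H(e_{2i-1})=d(v_{2i-1})$, so the inequality $|\hat e_{2i-1}|\ge d_H(e_{2i-1})$ is exactly $d(v_{2i-1})+d(v_{2i})\le 11$; and since $C_{\ref{C1}}$ is reducible, every neighbor of a $4^-$-vertex has degree at least $7$, which keeps the lists of the pendant edges and of the ``backward'' cycle edges from dropping too low.

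The main obstacle I expect is precisely to secure $|\hat x|\ge d_H(x)$ \emph{everywhere}: the count is tight, and falls one short, at a clique $Q_i$ with $d(v_{2i-1})=4$ that is incident to a degree-$8$ vertex (through its backward cycle edge or through a pendant edge). For such cliques I would not call Theorem~\ref{thm:=deg}, but instead color $Q_i$ first and finish it with Hall's theorem (Theorem~\ref{thm:clique}): there $e_{2i-1}$ still has a full list of size at least $4=|Q_i|$ — because $d(v_{2i-1})=4$ forces $d(v_{2i})=7$ via $C_{\ref{C2}}$ — which dominates Hall's condition, and it remains to check the smaller lists of the other elements of $Q_i$; this is the part most likely to need an extra idea, possibly an auxiliary reducible configuration bounding the number of degree-$8$ neighbors of a $4$-vertex. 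Finally, the sporadic cases where the odd cycle vertices have chords or common neighbors only shrink the cliques and should be handled separately with the same tools.
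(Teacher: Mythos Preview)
Your approach has a genuine gap, and more importantly it misses a simple move that avoids the difficulty entirely. By deleting the odd \emph{vertices}, you also delete all their pendant edges, and then you must extend the coloring to a cyclic chain of cliques; as you yourself note, the degree-choosability count can fall one short, and your proposed patch (``color $Q_i$ first with Hall'') is not compatible with the cyclic structure: the cliques are linked, so coloring one first uses up list entries in its neighbors and you cannot simply peel them off one at a time without a global argument. You end up conjecturing an auxiliary reducible configuration just to make the bookkeeping close.

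The paper sidesteps all of this by removing only the \emph{edges of the cycle}. Color $G'=G-\{e_1,\dots,e_{2n}\}$ by minimality, then uncolor and forget the odd vertices. Now the pendant edges at each $v_{2i-1}$ are already colored, so they are constraints, not things to color; the only uncolored elements left are the $2n$ cycle edges. Each cycle edge $v_{2i-1}v_{2i}$ has at most $(d(v_{2i})-1)+(d(v_{2i-1})-2)=d(v_{2i-1})+d(v_{2i})-3\le 8$ colored constraints, hence at least two available colors, and in $\mathcal{T}(G)$ the cycle edges induce precisely an even cycle. Corollary~\ref{cor:evencycle} finishes the proof in one line. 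The moral: when the low-degree vertex is forgettable anyway, there is no gain in deleting its pendant edges --- leaving them colored only helps.
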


\begin{proof}
  Assume that $G$ has an even cycle $v_1\cdots v_{2n}v_1$ such that
  for $1\leqslant i\leqslant n$, $d(v_{2i-1})\leqslant 4$ and
  $d(v_{2i-1})+d(v_{2i})\leqslant 11$.

  Denote by $G'$ the graph obtained from $G$ by removing the edges of
  the cycle. Using the minimality of $G$, we can color $G'$. Remove
  the color of vertices with odd subscript, and forget them since they
  have degree at most $4$. Observe that each edge of the cycle has now
  $d(v_{2i})-1+d(v_{2i+1})-2=11-3=8$ constraints.

  By Corollary~\ref{cor:evencycle}, we can color the edges of the cycle
  and obtain a valid coloring of $G$.

  We can also conclude using the Nullstellensatz: we have
  $P_G=(E_1-E_2)\cdots (E_{2n-1}-E_{2n})(E_1-E_{2n})$ and
  $m=E_1\cdots E_{2n}$, where $e_1,\ldots,e_{2n}$ are the uncolored
  edges of $G$. We have:
  \begin{enumerate}
  \item $\deg(m)=2n=\deg(P_G)$.
  \item For $1\leqslant i\leqslant 2n$,
    $\deg_{E_i}(m)=1 < 2 = |\hat{e_i}|$.
  \item The coefficient of $m=E_1\cdots E_{2n}$ in $P_G$ is then $-2$.
  \end{enumerate}
  Using Theorem~\ref{thm:nss}, we can extend the coloring to $G$.
\end{proof}

\subsection{Configuration $C_{\ref{C3b}}$}

\begin{lemma}
  \label{lem:3b}
The graph $G$ does not contain $C_{\ref{C3b}}$.
\end{lemma}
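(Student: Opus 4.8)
Suppose for contradiction that $G$ contains $C_{\ref{C3b}}$, i.e.\ a triangle $xyz$ with $d(x)=d(y)=5$ and $d(z)=6$. I would first choose the subgraph $G'$: delete from $G$ the two edges $xy$ and $xz$, and keep everything else. By minimality of $G$, the graph $G\setminus\{xy,xz\}$ admits a total $L$-coloring; I then uncolor the vertex $x$ as well, so that the uncolored elements of $\mathcal{T}(G)$ are exactly $xy$, $xz$ and $x$. Using Remark~\ref{obs:lowerbound}, I may assume each of these has exactly the worst-case number of available colors. The vertex $x$ has $5$ neighbours and $5$ incident edges in $G$, of which only the two edges $xy,xz$ are uncolored; so $x$ has $10-8=2$ colors available. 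The edge $xy$ is incident to $d(x)+d(y)=10$ colored-or-to-be-colored elements, but two of them ($x$ and $xz$) are not yet colored, so $|\hat{xy}| = 10 - 8 = 2$; similarly $|\hat{xz}| = 10 - (d(x)+d(z)) + 2 = 10-11+2 = 1$. So the constraint graph $\mathcal{T}(G\setminus G')$ is a path $xy - x - xz$ (a triangle in $\mathcal{T}(G)$ with one element removed, hence a path on three vertices), with list sizes $2,2,1$ in that order.

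From here the extension is immediate. I would color $xz$ first with its unique available color; then $x$, which still has at least one color left out of its two; then $xy$, which still has at least one color left out of its two. This uses only that a path is $2$-choosable once one endpoint has a list of size $1$ — trivially true by a greedy argument, or one can invoke Theorem~\ref{thm:=deg} since a path is neither a complete graph nor an odd cycle and $f(v)\geqslant d(v)$ holds here ($2\geqslant 2$, $2\geqslant 1$, $1\geqslant 1$). Either way the coloring extends to all of $G$, contradicting the choice of $G$ as a minimum counterexample.

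As with the previous configurations, I would also record the Combinatorial Nullstellensatz version: order the uncolored elements and let $P_{G'}$ be the product of the differences corresponding to the three adjacencies among $x,xy,xz$ in $\mathcal{T}(G)$, so $P_{G'}=(XY - X)(XY - XZ)(X - XZ)$ up to sign (writing $X,XY,XZ$ for the variables of $x,xy,xz$). The monomial $m = X\cdot XY^2$ — or whichever monomial the search outputs — has total degree $3=\deg(P_{G'})$, has $X$-degree $1<2=|\hat{x}|$, $XY$-degree $2 \not< 2$… so in fact one must pick the monomial with exponents respecting $\deg_{XY}<2$, $\deg_X<2$, $\deg_{XZ}<1$, forcing exponent $0$ on $XZ$, hence $m = X\cdot XY$; but $\deg(m)=2\neq 3$, so there is \emph{no} suitable monomial and the Nullstellensatz approach alone does not apply here. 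Thus I expect the proof to be stated purely as the short greedy/ordering argument above; the only mild subtlety (and the place one must be careful) is computing $|\hat{xz}|=1$ correctly and checking that coloring in the order $xz,x,xy$ never gets stuck, which it does not.
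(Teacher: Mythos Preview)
Your argument has a genuine gap: the constraint graph $\mathcal{T}(G\setminus G')$ on $\{x,xy,xz\}$ is \emph{not} a path but a triangle $K_3$. The two edges $xy$ and $xz$ share the endpoint $x$, so they are incident in $G$ and hence adjacent in $\mathcal{T}(G)$. With list sizes $|\hat{xz}|=1$, $|\hat x|=2$, $|\hat{xy}|=2$ on a $K_3$ the extension can fail: take $\hat{xz}=\{1\}$, $\hat x=\hat{xy}=\{1,2\}$; after coloring $xz$ with $1$ and $x$ with $2$, the edge $xy$ must avoid both $1$ and $2$ and is stuck. Your appeal to Theorem~\ref{thm:=deg} does not rescue this, since $K_3$ is both a complete graph and an odd cycle, exactly the excluded cases. (The fact that you found no admissible Nullstellensatz monomial was already a red flag: for this choice of $G'$ the polynomial has degree $3$ but any monomial with $\deg_{XZ}<1$ has total degree at most $2$.)

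The paper fixes this by uncoloring more: it removes all three triangle edges and uncolors all three vertices $u,v,w$, so six elements are free. This yields $|\hat u|=2$, $|\hat a|=|\hat c|=3$, $|\hat v|=|\hat w|=|\hat b|=4$, which is enough slack for a short argument (color $c$ outside $\hat u$, then $b$ so that $|\hat a|$ stays $2$, then apply Lemma~\ref{lem:diam} to the path $w\,v\,u\,a$), or alternatively a Nullstellensatz monomial exists. The moral is that removing only two edges and one vertex is too stingy here; you need the symmetric removal of the whole triangle to get workable list sizes.
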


\begin{proof}
  We name the elements according to Figure~\ref{fig:C3b}. By
  minimality, we color $G'=G\setminus\{a,c\}$ and we remove the
  color of $b,u,v,w$.
  \begin{figure}[!ht]
    \centering
    \begin{tikzpicture}
      \tikzset{v/.style={draw=black,minimum size = 10pt,ellipse,inner sep=1pt}}
      \node [v,label=left:{$v$}] (b) at (0,0) {$5$};
      \node [v,label=right:{$u$}] (f) at (0:1)  {$6$};
      \node [v,label=above:{$w$}] (d) at (60:1) {$5$};
      \draw (b) -- (f) node[midway, below]{$a$};
      \draw (b) -- (d) node[midway, above left]{$b$};
      \draw (d) -- (f) node[midway, above right]{$c$};
    \end{tikzpicture}
\caption{Notation for Lemma~\ref{lem:3b}}
    \label{fig:C3b}
  \end{figure}

  We have $|\hat{a}|=|\hat{c}|=3$, $|\hat{v}|=|\hat{w}|=|\hat{b}|=4$
  and $|\hat{u}|=2$. We conclude using Theorem~\ref{thm:nss}.
\end{proof}

\subsection{Configuration $C_{\ref{C3a}}$}

\begin{lemma}
The graph $G$ does not contain $C_{\ref{C3a}}$.
\end{lemma}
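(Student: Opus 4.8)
The plan is to reduce $C_{\ref{C3a}}$ along the same lines as $C_{\ref{C3b}}$: choose a small subgraph $G'$ so that $\mathcal{T}(G\setminus G')$ satisfies the hypotheses of Theorem~\ref{thm:=deg}. Write $u$ for the $5$-vertex and $v_1,v_2$ for its two neighbours of degree $5$, and set $e_1=uv_1$, $e_2=uv_2$. First I would colour $G'=G\setminus\{e_1,e_2\}$ by minimality and then uncolour the vertices $u,v_1,v_2$. The elements left to colour are $u,v_1,v_2,e_1,e_2$, so $\mathcal{T}(G\setminus G')=:H$ is the graph on these five vertices with edges $ue_1,ue_2,uv_1,uv_2,e_1e_2,v_1e_1,v_2e_2$, together with the edge $v_1v_2$ if $v_1\sim v_2$ in $G$.

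Next I would count constraints. Because $u,v_1,v_2$ all have degree exactly $5$, one gets $|\hat u|\geqslant 4$, $|\hat{e_1}|,|\hat{e_2}|\geqslant 3$ and $|\hat{v_1}|,|\hat{v_2}|\geqslant 2$ (in fact $\geqslant 3$ each when $v_1\sim v_2$) — and these are exactly the degrees of the corresponding vertices in $H$, so $|\hat x|\geqslant d_H(x)$ holds throughout; by Remark~\ref{obs:lowerbound} we may take equality. One then checks that $H$ has no cut vertex (so it is a single block), is not a clique — e.g.\ $v_1\not\sim e_2$ in $H$ — and is not an odd cycle, since $u$ has degree $4$ in $H$. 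Theorem~\ref{thm:=deg} then yields a colouring of $H$ from the lists $\hat x$, i.e.\ the colouring of $G'$ extends to $G$, contradicting minimality. (The case $v_1\sim v_2$ could also be dismissed at once: then $uv_1v_2$ is a triangle with three $5$-vertices, a sub-configuration of $C_{\ref{C3b}}$.) As in the earlier reductions, a Combinatorial Nullstellensatz certificate can be given instead: when $v_1\not\sim v_2$, $P_{G'}$ has degree $7$, and it suffices to exhibit a monomial $m$ of degree $7$ with $\deg_U(m)\leqslant 3$, $\deg_{E_1}(m),\deg_{E_2}(m)\leqslant 2$, $\deg_{V_1}(m),\deg_{V_2}(m)\leqslant 1$ and non-zero coefficient, which a short computer search produces.

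The step that needs care is not conceptual but combinatorial: pinning down which elements to uncolour. Uncolouring only $v_1,v_2$ (and keeping $u$ coloured) can leave $v_1$ with a single available colour, while removing $u$ together with all five of its incident edges yields a copy of $K_6$ on which the available lists may fail Hall's condition. Uncolouring $u,v_1,v_2$ and deleting the two edges $uv_1,uv_2$ is the balance where every list is just large enough and $H$ escapes being a clique or an odd cycle — which is precisely what makes Theorem~\ref{thm:=deg} applicable, so the bookkeeping of available-colour counts is the only thing one really has to get right.
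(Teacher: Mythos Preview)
Your proof is correct, and the setup (delete the two edges $uv_1,uv_2$ and uncolour $u,v_1,v_2$) is exactly the paper's. The only difference is the final step: the paper invokes Lemma~\ref{lem:diam} on the path $v_1\,e_1\,u\,e_2\,v_2$ (its ``strip of triangles'' lemma for $n=5$), whereas you invoke Theorem~\ref{thm:=deg}. Both apply to the same five-vertex graph $H$; Lemma~\ref{lem:diam} is nominally stronger (it would allow $|\hat u|\geqslant 3$ and $|\hat{e_2}|\geqslant 2$), but here the lists match the degrees in $H$ exactly, so the degree-choosability theorem already suffices. Your route is arguably cleaner in that it skips the case analysis hidden in the proof of Lemma~\ref{lem:diam}, at the cost of citing the slightly heavier Erd\H{o}s--Rubin--Taylor/Borodin result. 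Your handling of the $v_1v_2\in E(G)$ case via $C_{\ref{C3b}}$ matches the paper as well.
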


\begin{proof}
  Assume that $G$ contains a path $uvw$ such that
  $d(u)=d(v)=d(w)=5$. Note that we may assume that $uw\notin E(G)$ due
  to $C_{\ref{C3b}}$. We denote by $a,b$ the edges $uv$ and $vw$.

  We color by minimality the graph $G'$ obtained by removing $a$ and
  $b$ from $G$. Then we uncolor $u,v$ and $w$. We have $|\hat{v}|=4$,
  $|\hat{a}|=|\hat{b}|=3$ and $|\hat{u}|=|\hat{w}|=2$. We conclude
  using Theorem~\ref{thm:nss}.
\end{proof}

\subsection{Configuration $C_{\ref{C4}}$}
To prove that $G$ does not contain $C_{\ref{C4}}$, it is sufficient to
prove the three following lemmas, one for each possible minimal
triangle-distance between neighbors of $u$ satisfying the hypothesis
of $v_1$ and $v_2$.

\begin{lemma}
  \label{lem:4a}
The graph $G$ does not contain a $7$-vertex $u$ with two $(5,6)$-neighbors
  $v_1,v_2$ such that $uv_1v_2$ is a triangle.
\end{lemma}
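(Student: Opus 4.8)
The plan is to reduce this configuration by the same scheme used for $C_{\ref{C3b}}$ and $C_{\ref{C3a}}$: delete a few well-chosen elements, color the rest of $G$ by minimality, and extend. First I would name the elements: let $w_1,w_1'$ be the two common neighbors of $u,v_1$ making triangles (degrees $5$ and $6$ in some order), and $w_2,w_2'$ the analogous pair for $v_2$; since $v_1,v_2$ are adjacent and both weak, $v_2$ may play the role of $w_1$ (i.e. the triangle $uv_1v_2$ is one of the two triangles at $uv_1$), so some of these vertices coincide. I would take $G' = G \setminus \{v_1, v_2, uv_1, uv_2, v_1v_2\}$ (and possibly also the edges from $v_1,v_2$ to their low-degree triangle-neighbors), color $G'$ by minimality, and then uncolor $v_1,v_2$ together with the remaining relevant edges. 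Using Remark~\ref{obs:lowerbound} I would compute the worst-case list sizes: the edge $uv_1$ has $d(u)+d(v_1)-1 = 11-1$ hmm, let me recount — it is incident to $u$ ($7$ constraints once its own endpoints-edges colored), to $v_1$, and to the two triangle apices; in the worst case $|\hat{uv_1}|$, $|\hat{uv_2}|$, $|\hat{v_1v_2}|$ are small (around $2$–$3$) while $|\hat{v_1}|,|\hat{v_2}|\geqslant 10 - 5 = 5$ minus already-colored incidences, so the two degree-$5$ vertices have comfortably many colors.

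The core of the argument is then to show the uncolored part of $\mathcal{T}(G)$ — which is a small graph on $v_1,v_2$ and the few uncolored edges among $u,v_1,v_2,w_1,\dots$ — is $\hat{\cdot}$-choosable. I expect this small graph to be covered by one of the generic tools already proved: either Theorem~\ref{thm:=deg} (if no block is a clique or odd cycle, and each list meets the degree bound), or Lemma~\ref{lem:diam} applied to a path $v_1v_2$-together-with-edges forming a triangulated path (the edges $uv_1, v_1v_2, uv_2$ plus $v_1, v_2$ form exactly a "diamond on a path" pattern), or Lemma~\ref{lem:fryingpan}. Concretely I would identify a path in $\mathcal{T}(G)$ through these elements with squared chords and invoke Lemma~\ref{lem:diam}, after first coloring any element that is "over-supplied" (forgetting vertices of degree $\leqslant 4$ among the $w_i$, and coloring greedily any edge with more colors than uncolored neighbors). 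As a parallel route I would set up the Combinatorial Nullstellensatz polynomial $P_{G'}$ on the variables for $v_1, v_2$ and the uncolored edges, and exhibit a suitable monomial whose coefficient the Maple check confirms is nonzero.

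The main obstacle is bookkeeping the coincidences among $w_1, w_1', w_2, w_2'$ and $v_2$: because $v_1, v_2$ are adjacent weak neighbors of a degree-$7$ vertex $u$, the two triangles at $uv_1$ and the two at $uv_2$ overlap, and whether $v_1$ is a $(5,6)$-neighbor via the triangle $uv_1v_2$ or via a triangle not containing $v_2$ changes which edges survive in $\mathcal{T}(G\setminus G')$ and hence the exact list sizes. I would handle this by a short case split on the embedding of the triangle $uv_1v_2$ relative to the other triangles, checking in each case that the resulting constraint graph still satisfies the hypotheses of Lemma~\ref{lem:diam} (or of Theorem~\ref{thm:=deg}); the degree-$5$ vertices always retain enough colors, so the only real danger is one of the short edges dropping to a single available color while lying on an odd cycle, which the $(5,6)$ hypothesis on $v_1$ (forcing a degree-$5$ apex, hence one fewer constraint on the incident edge than a degree-$6$ apex would give) is exactly strong enough to rule out.
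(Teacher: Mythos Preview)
Your proposal has a genuine gap: you have misread the configuration, and as a result your plan never gets to a concrete reduction.

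First, the structure is fully determined, with no ``coincidences'' to bookkeep. Since $uv_1v_2$ is a triangular face and $v_1,v_2$ are both weak neighbors of $u$ of degree $5$, the face $uv_1v_2$ is one of the two triangles at $uv_1$ and also one of the two at $uv_2$. For $v_1$ to be a $(5,6)$-neighbor, the two apices must have degrees $5$ and $6$; since $d(v_2)=5$, the \emph{other} apex $w_1$ (with $uv_1w_1$ a face) has degree $6$. Symmetrically the other apex $w_2$ at $uv_2$ has degree $6$. So there are exactly four relevant neighbors $w_1,v_1,v_2,w_2$ along $u$, with degrees $6,5,5,6$; the only residual ambiguity is whether $w_1w_2\in E(G)$ (the edges $w_1v_2,w_2v_1$ are excluded by $C_{\ref{C3b}}$). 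Your discussion of ``$w_1,w_1',w_2,w_2'$'' and a case split on the embedding is unnecessary and suggests you have not pinned down the picture.

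Second, your deletion and list-size computation are both off. Deleting only $uv_1,uv_2,v_1v_2$ and uncoloring $v_1,v_2$ leaves $|\widehat{uv_1}|=|\widehat{uv_2}|=1$ against three uncolored neighbors each, which cannot be extended in general. The paper instead deletes all seven edges $d=uw_1,\ e=uv_1,\ f=uv_2,\ g=uw_2,\ a=w_1v_1,\ b=v_1v_2,\ c=v_2w_2$ and uncolors $u,v_1,v_2,w_1,w_2$; the resulting lists are $|\hat d|=|\hat g|=3$, $|\hat a|=|\hat c|=|\hat u|=4$, $|\hat e|=|\hat f|=5$, $|\hat v_1|=|\hat v_2|=|\hat b|=6$, and $|\hat w_i|\in\{2,3\}$.

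Third, and this is the missing idea, the extension is \emph{not} a one-shot application of Lemma~\ref{lem:diam}, Lemma~\ref{lem:fryingpan}, or Theorem~\ref{thm:=deg}. The paper reserves a color $\alpha\in\hat e\setminus\hat a$, removes it from $\hat u$ and $\hat g$, and uses Lemma~\ref{lem:diam} on the path $w_1\,d\,u\,g\,w_2$ (or $w_1\,d\,u\,g$ after coloring $w_2$, when $w_1w_2\in E(G)$). The point of reserving $\alpha$ is that afterwards either $d$ got color $\alpha$ (so $|\hat a|=3$) or $\alpha$ is still available for $e$ and not in $\hat a$; in both cases one can color $e$ so that $|\hat a|$ stays at $2$, then color $f,c$, and finish with a second application of Lemma~\ref{lem:diam} on $a\,v_1\,b\,v_2$. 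This two-phase use of Lemma~\ref{lem:diam}, glued by the color-reservation trick, is exactly what your sketch is missing.
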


\begin{proof}
  We use the notation depicted in Figure~\ref{fig:C4a}. By minimality,
  we color $G'=G\setminus\{a,b,c,d,e,f,g\}$ and uncolor
  $u,v_1,v_2,w_1,w_2$. By $C_{\ref{C3b}}$, there is no edge $w_1v_2$
  nor $w_2v_1$. Therefore, the only possible edge of $G$ not on the
  drawing is $w_1w_2$. We have $|\hat{d}|=|\hat{g}|=3$,
  $|\hat{a}|=|\hat{c}|=|\hat{u}|=4$, $|\hat{e}|=|\hat{f}|=5$ and
  $|\hat{v_1}|=|\hat{v_2}|=|\hat{b}|=6$. Moreover, if
  $w_1w_2\notin E(G)$, we have $|\hat{w_1}|=|\hat{w_2}|=2$ and
  $|\hat{w_1}|=|\hat{w_2}|=3$ otherwise.
  \begin{figure}[!h]
    \centering
    \begin{tikzpicture}[v/.style={draw=black,minimum size = 10pt,ellipse,inner sep=1pt}]
    \node [v,label=above:{$u$}] (u) at (0,0) {$7$};
    \node [v,label=left:{$w_1$}] (w1) at (193:1.5)  {$6$};
    \node [v,label=below left:{$v_1$}] (v1) at (244.5:1.5) {$5$};
    \node [v,label=below right:{$v_2$}] (v2) at (296:1.5) {$5$};
    \node [v,label=right:{$w_2$}] (w2) at (347.5:1.5) {$6$};
    \draw (u) -- (w1) node[midway,above] {$d$};
    \draw (u) -- (w2) node[midway,above] {$g$};
    \draw (w1) -- (v1) node[midway,below left] {$a$};
    \draw (u) -- (v1) node[midway,left] {$e$};
    \draw (u) -- (v2) node[midway,right] {$f$};
    \draw (w2) -- (v2) node[midway,below right] {$c$};
    \draw (v1) -- (v2) node[midway,above] {$b$};
  \end{tikzpicture}
\caption{Notation for Lemma~\ref{lem:4a}}
    \label{fig:C4a}
  \end{figure}
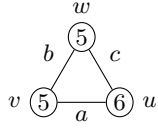
  We conclude using Theorem~\ref{thm:nss}.
\end{proof}

\begin{lemma}
  \label{lem:4b}
  The graph $G$ does not contain a $7$-vertex $u$ with four neighbors
  $v_1,v_2,v_3,v_4$ such that $d(v_1)=d(v_2)=d(v_4)=5$, $d(v_3)=6$ and
  $v_iv_{i+1}\in E(G)$ for $i=1,2,3$.
\end{lemma}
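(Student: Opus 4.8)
The plan is to imitate the proof of Lemma~\ref{lem:4a}, using notation analogous to Figure~\ref{fig:C4a}: arrange the configuration so that the triangular faces $uv_1v_2$, $uv_2v_3$, $uv_3v_4$ occur in this cyclic order around $u$, and write $e_i=uv_i$ for $1\leqslant i\leqslant 4$, $a=v_1v_2$, $b=v_2v_3$, $c=v_3v_4$. First I would color $G'=G\setminus\{e_1,e_2,e_3,e_4,a,b,c\}$ by minimality and uncolor $u,v_1,v_2,v_3,v_4$. By $C_{\ref{C3a}}$ and $C_{\ref{C3b}}$ there is no further edge among these five vertices: $v_1v_3$ would give $C_{\ref{C3b}}$, while $v_1v_4$ or $v_2v_4$ would give $C_{\ref{C3a}}$. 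Then, by Remark~\ref{obs:lowerbound}, I may assume
\[|\hat u|=|\hat{v_1}|=|\hat{v_3}|=|\hat{v_4}|=|\hat{e_1}|=|\hat{e_3}|=|\hat{e_4}|=|\hat c|=4,\quad|\hat{v_2}|=6,\quad|\hat{e_2}|=|\hat a|=|\hat b|=5.\]

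The main observation is that the subgraph of $\mathcal{T}(G)$ induced by $\{v_1,a,v_2,b,v_3,c,v_4\}$ is precisely the square of the path $v_1\,a\,v_2\,b\,v_3\,c\,v_4$; since $7\not\equiv 0\bmod 3$, Lemma~\ref{lem:diam} colors it once the end-elements $v_1,c,v_4$ retain at least $2$ colors and $a,v_2,b,v_3$ retain at least $3$. So it suffices to first color $u$ and the four edges $e_1,\ldots,e_4$ (which, together with $u$, form a $K_5$ of $\mathcal{T}(G)$) in a way that leaves $v_3$ with at least $3$ available colors. I would do this as follows: color $u$; then, if possible, pick $\gamma(e_3)\in\hat{e_3}\setminus(\hat{v_3}\cup\{\gamma(u)\})$, so that $v_3$ loses only $\gamma(u)$; then color $e_1,e_2,e_4$, which at that point form a triangle with lists of sizes at least $2,3,2$ and so are colorable by Theorem~\ref{thm:clique}. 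One checks that every element of the path-square then still has at least as many colors as Lemma~\ref{lem:diam} requires, which closes this case. The same works — without even constraining $\gamma(e_3)$ — whenever $\gamma(u)$ can be picked outside $\hat{v_3}$, or whenever $\gamma(u)\notin\hat{e_3}$; together, these escapes leave only the situation $\hat u=\hat{e_3}=\hat{v_3}$.

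The hard part will be exactly that last case, where $u,e_3,v_3$ form a triangle of $\mathcal{T}(G)$ carrying a common $4$-set $L$, so that the recipe above is forced to shrink $\hat{v_3}$ to size $2$. The plan there is to color $u,e_3,v_3$ with three colors of $L$ and then treat what remains directly: after coloring the triangle $e_1e_2e_4$, the leftover graph splits into one more copy of the square of a short path (finished by Lemma~\ref{lem:diam}) together with a short pendant path, and the one thing to watch is the order in which the potentially $1$-list elements $c$ and $v_4$ get colored. I expect this bookkeeping to be the main obstacle, because the tightness keeps accumulating around the degree-$6$ vertex $v_3$ (whose list is only $4$) and around the $K_5$ on $u,e_1,\ldots,e_4$. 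As in the previous lemmas I would also give a uniform alternative via Theorem~\ref{thm:nss}: here $\deg P_{G'}=35$ while $\sum_x(|\hat x|-1)=41$, so there is slack for a suitable monomial, which we find by computer search.
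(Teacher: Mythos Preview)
Your setup, list-size computation, and handling of the easy cases are correct and clean: reducing to Lemma~\ref{lem:diam} on the seven-element path-square $v_1\,a\,v_2\,b\,v_3\,c\,v_4$ after coloring $u,e_1,\ldots,e_4$ is a nice idea, and you correctly identify that the only obstruction is keeping $|\hat{v_3}|\geqslant 3$. The Nullstellensatz alternative is also sound --- the paper exhibits the monomial $A^4B^4C^3D^2E^4F^3G^3UV_2^5V_3^3V_4^3$ with coefficient~$1$, so that route closes the lemma outright.

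However, your case-analysis argument has a genuine gap in the hard case $\hat u=\hat{e_3}=\hat{v_3}=:L$. After coloring $u,e_3,v_3$ with three colors of $L$ and then the triangle $e_1e_2e_4$, the elements $e_4,c,v_4$ form a triangle in $\mathcal{T}(G)$, and nothing prevents their updated lists from all being the same $2$-set. Concretely, if $\hat{e_4}=\{5,6,\alpha,\beta\}$, $\hat c=\{5,6,\beta,\delta\}$, $\hat{v_4}=\{5,6,\alpha,\delta\}$ where $\alpha,\beta,\delta$ are the three colors you placed on $u,e_3,v_3$, then all three shrink to $\{5,6\}$; after $e_4$ takes one of these, $c$ and $v_4$ are the same singleton and cannot both be colored. ``Bookkeeping'' about the order of $c$ and $v_4$ cannot rescue this, and you have not argued that a different assignment of the three colors of $L$ to $u,e_3,v_3$ always avoids it.

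The paper's case split is organized quite differently and sidesteps this trap. It branches first on whether $\hat{v_1}\cap\hat{v_3}\neq\varnothing$ (if so, color $v_1,v_3$ with the \emph{same} color, forget $v_2,a,b,e_2$, color $e_3$, and apply Lemma~\ref{lem:diam} to $c\,v_4\,e_4\,u\,e_1$), and otherwise on whether $\hat u=\hat{v_1}$. In the hardest branch one has $\hat u=\hat{v_1}$ disjoint from $\hat{v_3}$, so coloring $u$ leaves $\hat{v_3}$ untouched; a further split on $\hat{v_4}\cap\hat{e_3}$ and on $\hat{e_4}$ versus $\hat{e_3}$ then finishes with Lemma~\ref{lem:diam} on the five-element path $v_1\,a\,v_2\,b\,v_3$. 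The device your approach is missing is the same-color trick on non-adjacent elements such as $(v_1,v_3)$ or $(v_4,e_3)$, which collapses enough of the structure that the remaining forgetting cascade goes through.
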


\begin{proof}
  We use the notation depicted in Figure~\ref{fig:C4b}. By minimality,
  we color $G'=G\setminus\{a,b,c,d,e,f,g\}$ and uncolor
  $u,v_1,v_2,v_3,v_4$. Due to $C_{\ref{C3a}}$, $v_1v_4$ and $v_2v_4$
  are not edges of $G$. Moreover, by $C_{\ref{C3b}}$,
  $v_1v_3\notin E(G)$. Therefore, all the edges between
  $u,v_1,\ldots,v_4$ in $G$ are drawn in the figure. We have
  $|\hat{v_1}|=|\hat{v_3}|=|\hat{v_4}|=|\hat{u}|=|\hat{c}|=|\hat{d}|=|\hat{f}|=|\hat{g}|=4$,
  $|\hat{a}|=|\hat{b}|=|\hat{e}|=5$ and $|\hat{v_2}|=6$.
  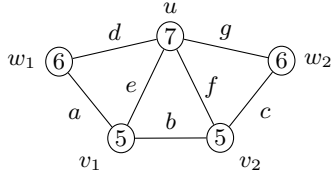
\begin{figure}[!h]
    \centering
   \begin{tikzpicture}[v/.style={draw=black,minimum size = 10pt,ellipse,inner sep=1pt}]
    \node[v,label=above:{$u$}] (u) at (0,0) {$7$};
    \node[v,label=left:{$v_1$}] (v1) at (193:1.5)  {$5$};
    \node[v,label=below left:{$v_2$}] (v2) at (244.5:1.5) {$5$};
    \node[v,label=below right:{$v_3$}] (v3) at (296:1.5) {$6$};
    \node[v,label=right:{$v_4$}] (v4) at (347.5:1.5) {$5$};
    \draw (v2) -- (v3) node[midway,above]{$b$};
    \draw (v4) -- (v3) node[midway,below right]{$c$};
    \draw (v2) -- (v1) node[midway,below left]{$a$};
    \draw (v2) -- (u) node[midway,above left]{$e$};
    \draw (u) -- (v3) node[midway,above right]{$f$};
    \draw (u) -- (v4) node[midway,above]{$g$};
    \draw (u) -- (v1) node[midway,above]{$d$};
  \end{tikzpicture}
\caption{Notation for Lemma~\ref{lem:4b}}
    \label{fig:C4b}

  \end{figure}
  We conclude using Theorem~\ref{thm:nss}.
\end{proof}

\begin{lemma}
  \label{lem:4c}
  The graph $G$ does not contain a $7$-vertex $u$ with six neighbors
  $v_1,\ldots,v_6$ such that $d(v_1)=d(v_2)=d(v_5)=d(v_6)=5$,
  $d(v_3)=d(v_4)=6$ and $v_iv_{i+1}\in E(G)$ for
  $1\leqslant i\leqslant 6$.
\end{lemma}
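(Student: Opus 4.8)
\textbf{Proof plan for Lemma~\ref{lem:4c}.}

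The plan is to follow the same template as in Lemmas~\ref{lem:4a} and~\ref{lem:4b}: delete a carefully chosen set $G'$ of edges (and the colors of a handful of low-degree vertices) so that $G$ is colored by minimality on the rest, then extend. Concretely, I would remove from $G$ the $7$ edges around $u$ on the ``hub'' path, together with the path edges $v_iv_{i+1}$ of the configuration, so that $G'$ is $G$ minus all the depicted edges; by minimality $G'$ has a total $L$-coloring, and we uncolor $u,v_1,\dots,v_6$. Before counting I would first invoke the already-reduced configurations to rule out hidden chords: $C_{\ref{C3a}}$ forbids any edge $v_iv_j$ with $d(v_i)=d(v_j)=5$ that is not forced, and $C_{\ref{C3b}}$ forbids a $5$--$5$--$6$ triangle, so the only edges among $u,v_1,\dots,v_6$ are those in Figure (the path plus the spokes to $u$), and possibly the ``outer'' edge $v_1v_6$ if $uv_1v_6$ happens to be a face; I would handle that extra edge exactly as the $w_1w_2$ edge was handled in Lemma~\ref{lem:4a} (one extra color burnt, one extra factor in the monomial).

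Next I would set up the worst-case list sizes via Remark~\ref{obs:lowerbound}. The vertex $u$ has $7$ colored neighbors in $\mathcal{T}(G)$ among $G'$ (it keeps its spokes to whatever lies outside), so $|\hat u|\geqslant 3$; each uncolored $5$-vertex $v_i$ has degree $5$ in $G$, hence $10-(\text{constraints})$ colors left, typically giving $|\hat{v_i}|\geqslant 5$ or $6$ depending on how many of its neighbors were uncolored; and each of the $13$ uncolored edges (the six path edges $a_i=v_iv_{i+1}$ for $1\le i\le 5$, the outer edge, and the seven spokes $u v_i$) has between $7$ and $9$ constraints, so between $1$ and $3$ available colors. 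The key structural feature to exploit is that the ``hub'' part $\mathcal{T}$ restricted to $\{u,uv_1,\dots,uv_7\}$ together with the $v_i$'s and path-edges decomposes naturally: once $u$ is dealt with, the cycle/near-cycle formed by $v_1 a_1 v_2 a_2 \cdots v_6 (a_6) $ with the chords to $u$ is exactly the kind of ``even cycle plus pendant/near-pendant'' structure that Corollary~\ref{cor:evencycle} and Lemma~\ref{lem:diam} are built to kill. So the strategy is: colour a few spokes first (burning a color from $\hat u$ and from the adjacent path-edges in a coordinated way, as in Lemma~\ref{lem:4a} where one picks $\alpha\in\hat e\setminus\hat a$), then colour $u$, and finish the remaining long path $v_1 a_1 v_2 \cdots a_5 v_6$ of $\mathcal{T}(G)$ — which on $7$ vertices and $6$ edges has the list-size profile $2,3,2,3,2,3,\dots$ — by a direct application of Lemma~\ref{lem:diam} (noting the length is not $\equiv 0\bmod 3$, or splitting into two sub-paths if it is), possibly after a case split on whether two non-adjacent $\hat{v_i}$ intersect (as in case (1) of Lemma~\ref{lem:4b}, which lets one ``merge'' a $5$-vertex with another and forget the edge between them).

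As a second, cleaner route I would give the Combinatorial Nullstellensatz certificate: write down $P_{G'}$ as the product of the differences over adjacent uncolored elements of $\mathcal{T}(G)$, and exhibit one monomial $m$ of full degree whose per-variable exponents are strictly below the corresponding list sizes and whose coefficient (checked by the Maple script) is non-zero — with a twin monomial $m\cdot(V_1V_6\text{-type factor})$ for the case where the outer edge $v_1v_6$ is present, just as $m$ and $mW_2$ appear in Lemma~\ref{lem:4a}. The main obstacle I anticipate is bookkeeping rather than conceptual: because $u$ has degree $7$ here (versus the sparser neighborhoods in~\ref{lem:4a}--\ref{lem:4b}), there are more spokes to colour and the order in which one burns colors from $\hat u$ matters, so the case analysis threatens to branch; the trick — already visible in Lemma~\ref{lem:4b}'s three-way split on $\hat{v_1}\cap\hat{v_3}$ and $\hat u$ versus $\hat{v_1}$ — is to spend the first moves making $\hat u$ ``small but flexible'' (size $2$ and distinct from the list of some neighbor) so that $u$ can always be slotted in at the right moment, after which the leftover instance is a pure Lemma~\ref{lem:diam}/Corollary~\ref{cor:evencycle} application. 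I expect the honest human-checkable proof to need four or five nested cases, while the Nullstellensatz line collapses all of it to producing the right monomial.
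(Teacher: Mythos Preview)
Your template is right, but there are two real gaps. First, the Nullstellensatz fallback is not available: the paper explicitly reports that their search for a suitable monomial failed on this configuration and presents only a case analysis. Second, your case-analysis sketch does not close. If you uncolor all six $v_i$ and then color $u$ together with all the spokes, the interior block $v_3,\,c,\,v_4$ of the path $v_1\,a\,v_2\,b\,v_3\,c\,v_4\,d\,v_5\,e\,v_6$ is left with only two available colors each (the $6$-vertex $v_3$ starts at $|\hat{v_3}|=4$ and loses one to $u$ and one to the spoke $h$; likewise $v_4$ and the edge $c$), whereas Lemma~\ref{lem:diam} requires three at every interior position. Your bookkeeping is off elsewhere too: with six neighbors and six spokes uncolored, $u$ has only two constraints, so $|\hat u|\geqslant 8$, not $3$; and the ``outer edge'' $v_1v_6$ you hedge on is already excluded by $C_{\ref{C3a}}$, since $v_1$ would then have the two $5$-neighbors $v_2,v_6$.

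The paper sidesteps the difficulty by \emph{not} uncoloring the $6$-vertices $v_3,v_4$. The bottleneck then becomes the single edge $c=v_3v_4$ with $|\hat c|=2$, and the whole proof is a straight line with no branching: color $e$ avoiding $\hat{v_6}$ and forget $v_6,v_5$; color $d,i,h$ each avoiding the two colors of $\hat c$ (their lists stay above size $2$ throughout) and forget $c$; color $j,k,u,f,g$ greedily; apply Lemma~\ref{lem:diam} to the length-$4$ path $v_1\,a\,v_2\,b$. The idea you are missing is that the two $6$-vertices are the tight spot, and the right move is to leave them colored and manage the one edge between them, rather than carry them as interior nodes of a long path.
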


\begin{proof}
  We use the notation depicted in Figure~\ref{fig:C4c}. By minimality,
  we color $G'=G\setminus\{a,\ldots,k\}$ and uncolor
  $u,v_1,v_2,v_5,v_6$. We have $|\hat{c}|=2$, $|\hat{b}|=|\hat{d}|=4$,
  $|\hat{a}|=|\hat{e}|=|\hat{h}|=|\hat{i}|=5$,
  $|\hat{u}|=|\hat{f}|=|\hat{k}|=6$ and $|\hat{g}|=|\hat{j}|=7$.

  Note that due to $C_{\ref{C3a}}$, there is no edge $v_iv_j$ for
  $i=1,2$ and $j=5,6$. Thus, we have $|\hat{v_1}|=|\hat{v_6}|=4$ and
  $|\hat{v_2}|=|\hat{v_5}|=5$.
  \begin{figure}[!h]
    \centering
    \begin{tikzpicture}[v/.style={draw=black,minimum size = 10pt,ellipse,inner sep=1pt}]
      \node[v,label=above:{$u$}] (u) at (0,0) {$7$};
      \node[v,label=left:{$v_1$}] (v1) at (141.5:1.5)  {$5$};
      \node[v,label=left:{$v_2$}] (v2) at (193:1.5)  {$5$};
      \node[v, very thick,label=below left:{$v_3$}] (v3) at (244.5:1.5) {$6$};
      \node[v, very thick,label=below right:{$v_4$}] (v4) at (296:1.5) {$6$};
      \node[v,label=right:{$v_5$}] (v5) at (347.5:1.5) {$5$};
      \node[v,label=right:{$v_6$}] (v6) at (39:1.5)  {$5$};
      \draw (v1) -- (v2) node[midway,left]{$a$};
      \draw (v3) -- (v2) node[midway,left]{$b$};
      \draw (v3) -- (v4) node[midway,above]{$c$};
      \draw (v5) -- (v4) node[midway,right]{$d$};
      \draw (v5) -- (v6) node[midway,right]{$e$};
      \draw (v1) -- (u) node[midway,above]{$f$};
      \draw (u) -- (v2) node[midway,above]{$g$};
      \draw (u) -- (v3) node[midway,left]{$h$};
      \draw (u) -- (v4) node[midway,right]{$i$};
      \draw (u) -- (v5) node[midway,above]{$j$};
      \draw (u) -- (v6) node[midway,above]{$k$};
    \end{tikzpicture} 
\caption{Notation for Lemma~\ref{lem:4c}}
    \label{fig:C4c}
  \end{figure}
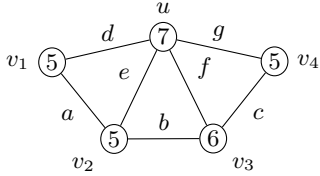
  We did not succeed in finding a suitable monomial for the
  Nullstellensatz approach, hence we only present a case analysis
  proof. We color $e$ with a color not in $\hat{v_6}$. We forget $v_6$
  and $v_5$, then color $d,i$ and $h$ with colors not in $\hat{c}$ and
  forget $c$. We color $j,k,u,f,g$, then apply Lemma~\ref{lem:diam} on
  $\mathcal{T}(G)$ with the path $v_1av_2b$.
\end{proof}

\subsection{Configuration $C_{\ref{C5}}$}
To prove that $G$ does not contain $C_{\ref{C5}}$, we prove the two
following lemmas.

\begin{lemma}
  \label{lem:C5a}
  The graph $G$ does not contain a $5$-vertex $u$ adjacent to three
  consecutive $6$-vertices $v_1,v_2,v_3$.
\end{lemma}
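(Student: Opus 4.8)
The plan is to delete $u$ together with all the edges incident to it. Write $v_1,\ldots,v_5$ for the neighbours of $u$ in cyclic order, with $d(v_1)=d(v_2)=d(v_3)=6$ and the faces $uv_1v_2$ and $uv_2v_3$ triangular, and set $e_i=uv_i$. Before starting I would record what is already forbidden around $v_4,v_5$: since $G$ contains neither $C_{\ref{C1}}$ nor $C_{\ref{C3a}}$, we have $d(v_4),d(v_5)\geqslant 5$ and at most one of $v_4,v_5$ has degree $5$, so after relabelling $d(v_5)\geqslant 6$. Colouring $G'=G\setminus(\{u\}\cup\{e_1,\ldots,e_5\})$ by minimality and uncolouring it, the uncoloured elements $u,e_1,\ldots,e_5$ are pairwise adjacent in $\mathcal{T}(G)$, so $\mathcal{T}(G\setminus G')$ is a clique on six vertices; by Remark~\ref{obs:lowerbound} we may take $|\hat u|=5$ and $|\hat{e_i}|=10-d(v_i)$, hence $|\hat{e_i}|=4$ for $i\leqslant 3$ and $|\hat{e_i}|\geqslant 2$ for $i\in\{4,5\}$, with $|\hat{e_i}|\geqslant 3$ whenever $d(v_i)\leqslant 7$.

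I would then try to colour this clique with Hall's criterion (Theorem~\ref{thm:clique}): a colouring exists unless some subset $S\subseteq\{u,e_1,\ldots,e_5\}$ violates $|S|\leqslant|\bigcup_{x\in S}\hat x|$. Given the list sizes above, such an $S$ can only occur for a few very degenerate configurations, all of which force $d(v_4)=d(v_5)\in\{7,8\}$ together with several coincidences among the excluded colours; for instance when $d(v_4)=d(v_5)=8$ the sizes are $(5,4,4,4,2,2)$ and a six-element clique with those list sizes is genuinely not $L$-choosable (take lists $\{1,2,3,4,5\}$, three copies of $\{1,2,3,4\}$, and two copies of $\{1,2\}$), so the size bound alone does not settle the lemma. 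The key to the remaining cases is the two already-coloured triangle edges at $u$: writing $\beta_{12}=c(v_1v_2)$ and $\beta_{23}=c(v_2v_3)$, these colours are distinct (the edges share $v_2$) and satisfy $\beta_{12}\notin\hat{e_1}\cup\hat{e_2}$ and $\beta_{23}\notin\hat{e_2}\cup\hat{e_3}$, so $\hat{e_1},\hat{e_2},\hat{e_3}$ cannot all be confined to the same four colours. The idea is to colour $u$ first, using $\beta_{12}$ or $\beta_{23}$ when it lies in $\hat u$ (so that $\hat{e_1},\hat{e_2}$ or $\hat{e_2},\hat{e_3}$ stay untouched), then colour $e_4$ and $e_5$, and finish on $e_1,e_2,e_3$, checking Hall on the small residual cliques in the handful of cases distinguished by where $\beta_{12},\beta_{23}$ and the colours of $v_4,v_5$ sit. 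Should this case analysis become unwieldy, the fallback is to enlarge $G'$ by also uncolouring $v_2$ and hence its incident edges — which makes $uv_2$ forgettable and frees extra colours — and then to conclude either with a refined case analysis or by producing a suitable monomial for the Combinatorial Nullstellensatz (Theorem~\ref{thm:nss}) in the corresponding polynomial $P_{G'}$; note that for the bare six-element clique the polynomial has no suitable monomial at all, since all six lists have size at most five and a permutation of $(0,1,2,3,4,5)$ cannot have every coordinate at most four, so $G'$ really must be enlarged.

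The step I expect to be the genuine obstacle is exactly the tight sub-case $d(v_4)=d(v_5)\in\{7,8\}$ with many list coincidences: there the clean ``delete $u$ and its edges'' reduction provably does not suffice by itself, and the argument has to exploit the two coloured triangle edges $v_1v_2$ and $v_2v_3$, keeping careful track of which of their colours lie in which of $\hat u,\hat{e_1},\hat{e_2},\hat{e_3}$ and splitting into sub-cases according to the coincidences among the colours of $v_1,\ldots,v_5$ and of their incident edges. That bookkeeping is where the real work of the proof lies.
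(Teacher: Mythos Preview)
Your plan has a genuine gap. The reduction you start with — delete $u$ and all five edges $e_i=uv_i$ — leaves a clique $K_6$ in $\mathcal{T}(G)$ with list sizes $(5,4,4,4,2,2)$ in the worst case, and as you correctly observe this is not $L$-choosable in general. Your attempt to rescue it via the colours $\beta_{12},\beta_{23}$ of the two triangle edges does not work: the assertion that ``$\hat e_1,\hat e_2,\hat e_3$ cannot all be confined to the same four colours'' is false. Knowing $\beta_{12}\notin\hat e_1\cup\hat e_2$ and $\beta_{23}\notin\hat e_2\cup\hat e_3$ is no extra information at all (these colours were already excluded when the lists were computed), and it is easy to build list assignments with $\hat e_1=\hat e_2=\hat e_3$. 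Colouring $u$ with $\beta_{12}$ when possible does save a colour on $e_1,e_2$, but it still drops $\hat e_4,\hat e_5$ to size $1$, and nothing prevents those two singletons from coinciding. So the triangle-edge trick gives you no traction on the actual bottleneck, which is $e_4,e_5$.

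The paper makes a different choice of $G'$ that avoids this bottleneck entirely: it removes only the three edges $uv_1,uv_2,uv_3$ together with the two triangle edges $v_1v_2,v_2v_3$, and then uncolours $u,v_1,v_2,v_3$. The edges $uv_4,uv_5$ stay coloured, so their tiny lists never appear; in exchange one now has nine uncoloured elements with list sizes $|\hat u|=6$, $|\widehat{uv_2}|=5$, $|\widehat{uv_1}|=|\widehat{uv_3}|=|\hat v_2|=4$, $|\widehat{v_1v_2}|=|\widehat{v_2v_3}|=3$, and $|\hat v_1|,|\hat v_3|\in\{2,3\}$. Crucially these nine elements do \emph{not} form a clique in $\mathcal{T}(G)$ (for instance $v_1$ is not incident to $uv_3$ or $v_2v_3$), so one is no longer stuck with Hall on a clique; the paper then runs a five-step case analysis exploiting that non-clique structure, with a Nullstellensatz certificate as an alternative. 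Your fallback of ``also uncolouring $v_2$'' is a step in this direction, but the real point is to \emph{keep} $uv_4,uv_5$ coloured and instead free up the triangle edges and the three $6$-vertices.
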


\begin{proof}
  We use the notation depicted in Figure~\ref{fig:C5a}. We color
  $G\setminus\{a,\ldots,e\}$ by minimality, and then uncolor
  $u,v_1,v_2,v_3$. 
  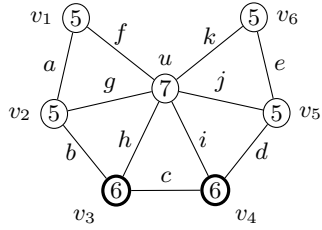
\begin{figure}[!h]
    \centering
    \begin{tikzpicture}[v/.style={draw=black,minimum size = 10pt,ellipse,inner sep=1pt}]
      \node[v,label=below:{$v_2$}] (v2) at (-90:1.5) {$6$};
      \node[v,label=above:{$u$}] (u) at (0,0)  {$5$};
      \node[v,label=left:{$v_1$}] (v1) at (-150:1.5) {$6$};
      \node[v,label=right:{$v_3$}] (v3) at (-30:1.5) {$6$};
      \draw (u) -- (v1) node [midway, above]{$a$};
      \draw (u) -- (v2) node [midway, right]{$b$};
      \draw (u) -- (v3) node [midway, above]{$c$};
      \draw (v2) -- (v1) node [midway, below]{$d$};
      \draw (v2) -- (v3) node [midway, below]{$e$};
    \end{tikzpicture}
\caption{Notation for Lemma~\ref{lem:C5a}}
    \label{fig:C5a}
  \end{figure}

  We have $|\hat{v_1}|=|\hat{v_3}|=3$ or $2$ depending on whether
  $v_1v_3\in E(G)$, $|\hat{d}|=|\hat{e}|=3$,
  $|\hat{v_2}|=|\hat{a}|=|\hat{c}|=4$, $|\hat{b}|=5$ and
  $|\hat{u}|=6$.   We conclude using Theorem~\ref{thm:nss}.
\end{proof}

\begin{lemma}
  \label{lem:C5b}
The graph $G$ does not contain a triangulated $5$-vertex $u$ with neighbors
  $v_1,\ldots,v_5$ satisfying $d(v_1)=d(v_3)=d(v_5)=6$ and
  $d(v_2)=d(v_4)=7$.
\end{lemma}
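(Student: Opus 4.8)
The plan is to peel off the wheel centred at $u$. Since $u$ is triangulated, its five neighbours $v_1,\dots,v_5$ induce a $5$-cycle, so besides the five spokes $a_i=uv_i$ we also have the five rim edges $b_i=v_iv_{i+1}$ (indices mod $5$), with $d(v_1)=d(v_3)=d(v_5)=6$ and $d(v_2)=d(v_4)=7$. I would set $G'=G\setminus\{a_1,\dots,a_5,b_1,\dots,b_5\}$, colour $G'$ by minimality of $G$, uncolour $u$ and all of $v_1,\dots,v_5$, and then extend the colouring to these sixteen elements of $\mathcal{T}(G)$. By Remark~\ref{obs:lowerbound} the worst-case list sizes are $|\hat u|=10$; $|\hat{v_i}|=4$ for $i\in\{1,3,5\}$ and $|\hat{v_i}|=2$ for $i\in\{2,4\}$; $|\hat{a_i}|=12-d(v_i)$, i.e.\ $7$ for $i\in\{1,3,5\}$ and $6$ for $i\in\{2,4\}$; $|\hat{b_5}|=4$ and $|\hat{b_i}|=3$ for $i\in\{1,2,3,4\}$.

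I would then colour these elements in a carefully chosen order. The two degree-$7$ rim vertices $v_2,v_4$ have the shortest lists (size $2$), so I would colour them first — ideally with a common colour, which is possible precisely when $\hat{v_2}\cap\hat{v_4}\neq\varnothing$, and then $|\hat u|$ jumps to $9$, strictly more than the number of uncoloured neighbours of $u$, so $u$ can be forgotten. The spokes $a_i$ have generous lists and, once a few of their neighbours are coloured, become forgettable as well. That leaves $v_1,v_3,v_5$ together with the rim edges $b_1,\dots,b_5$, which induce a $5$-cycle in $\mathcal{T}(G)$; once the spokes and $v_1,v_3,v_5$ are handled, this cycle and the paths running through $v_1,v_3,v_5$ can be coloured with Lemma~\ref{lem:diam}, Corollary~\ref{cor:evencycle} and Lemma~\ref{lem:fryingpan}. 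As usual I would also try to read off a monomial for the Combinatorial Nullstellensatz; note that the naive alternative of removing only the five spokes and uncolouring $u$ leaves the clique $\{u,a_1,\dots,a_5\}$ of $\mathcal{T}(G)$ with worst-case list sizes $(5,4,3,4,3,4)$, a $K_6$ resolved by neither Hall's theorem (Theorem~\ref{thm:clique}) nor Theorem~\ref{thm:nss}, so the rim edges really must be uncoloured too.

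The main obstacle is the case $\hat{v_2}\cap\hat{v_4}=\varnothing$, where $u$ cannot be forgotten outright and one is forced to track, colour by colour, the interaction between the short lists of $v_2,v_4,b_1,\dots,b_4$ and the chosen colours of the spokes and of $u$. Here I expect a genuine case analysis: distinguish according to whether certain of the lists $\hat{v_1},\hat{v_3},\hat{v_5}$ coincide with the lists of adjacent rim edges, colour a spoke or a rim edge to break the tie, and then invoke the generic lemmas on the residual path or cycle. Showing that no choice gets stuck — in particular that one can always colour (or forget) $u$ while avoiding the colours that would over-constrain the size-$2$ lists of $v_2,v_4$ — is the delicate point, and it is what makes this configuration more involved than the earlier ones in the $C_5$ family.
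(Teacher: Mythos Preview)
Your setup is exactly the paper's: remove the five spokes and five rim edges, uncolour $u,v_1,\dots,v_5$, and extend. (One arithmetic slip: the spoke $a_i=uv_i$ has $d(u)+d(v_i)$ neighbours in $\mathcal T(G)$, of which eight are uncoloured, so $|\hat{a_i}|\geqslant 13-d(v_i)$, not $12-d(v_i)$; this only helps you.)

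The gap is that your primary strategy is a hand case analysis, and this is precisely the configuration where the paper reports that no case analysis was found: the authors state explicitly that they ``do not have a case analysis proof in this case, only the Combinatorial Nullstellensatz approach''. Your outline stops at ``I expect a genuine case analysis'' for the hard branch $\hat{v_2}\cap\hat{v_4}=\varnothing$, but that branch is exactly where the obstruction lies; the rim--spoke interaction after colouring $v_2,v_4$ does not decompose cleanly into paths or cycles amenable to Lemma~\ref{lem:diam}, Corollary~\ref{cor:evencycle}, or Lemma~\ref{lem:fryingpan}, because the five spokes form a $K_5$ in $\mathcal T(G)$ that remains entangled with the rim. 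So the plan as written is not yet a proof, and the evidence suggests that pushing the case analysis through is genuinely hard rather than merely tedious.

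If you fall back on the Nullstellensatz, there is a further subtlety you have not addressed: the five neighbours $v_1,\dots,v_5$ need not induce only a $5$-cycle. They form an outerplanar graph, so up to two non-crossing chords may be present; each chord raises $|\hat{v_i}|$ for its endpoints and changes the polynomial $P_{G'}$. The paper handles this by exhibiting a suitable monomial in each of seven cases (no chord; one chord of each symmetry type; two chords of each symmetry type), all built from a base monomial $m_0$. A single monomial will not cover all cases, so you need this enumeration as well.
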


\begin{proof}
  We use the notation depicted in Figure~\ref{fig:C5b}. We color
  $G'=G\setminus\{a,\ldots,j\}$ by minimality, and then uncolor
  $u,v_1,\ldots,v_5$.   
  \begin{figure}[!h]
    \centering
    \begin{tikzpicture}[v/.style={draw=black,minimum size = 10pt,ellipse,inner sep=1pt}]
      \node[v,label=left:{$u$}] (u) at (0,0)  {$5$};
      \node[v,label=above:{$v_3$}] (v3) at (90:1.5) {$6$};
      \node[v,label=left:{$v_2$}] (v2) at (162:1.5) {$7$};
      \node[v,label=left:{$v_1$}] (v1) at (234:1.5) {$6$};
      \node[v,label=right:{$v_5$}] (v5) at (306:1.5) {$6$};
      \node[v,label=right:{$v_4$}] (v4) at (18:1.5) {$7$};
      \draw (u) -- (v3) node[midway,right] {$a$};
      \draw (u) -- (v4) node[midway,above] {$b$};
      \draw (u) -- (v5) node[midway,above right] {$c$};
      \draw (u) -- (v1) node[midway,above left] {$d$};
      \draw (u) -- (v2) node[midway,above] {$e$};
      \draw (v4) -- (v3) node[midway,above] {$f$};
      \draw (v4) -- (v5) node[midway,right] {$g$};
      \draw (v1) -- (v5) node[midway,above] {$h$};
      \draw (v1) -- (v2) node[midway,left] {$i$};
      \draw (v2) -- (v3) node[midway,above] {$j$};
    \end{tikzpicture}
\caption{Notation for Lemma~\ref{lem:C5b}}
    \label{fig:C5b}
  \end{figure}
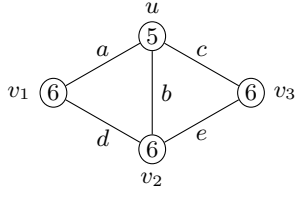
  We have $|\hat{v_2}|=|\hat{v_4}|=2$,
  $|\hat{f}|=|\hat{g}|=|\hat{i}|=|\hat{j}|=3$,
  $|\hat{v_1}|=|\hat{v_3}|=|\hat{v_5}|=|\hat{h}|=4$,
  $|\hat{b}|=|\hat{e}|=6$, $|\hat{a}|=|\hat{c}|=|\hat{d}|=7$ and
  $|\hat{u}|=10$. Moreover, for $1\leqslant i\leqslant 5$,
  $|\hat{v_i}|$ may differ depending on the presence of edges between
  the $v_i$'s that are not on the figure, but we may assume that
  $|\hat{v_2}|,|\hat{v_4}|$ are at least $2$ and
  $|\hat{v_1}|,|\hat{v_3}|,|\hat{v_5}|$ are at least $4$.

  In each case, we conclude using Theorem~\ref{thm:nss}.
\end{proof}

\subsection{Configuration $C_{\ref{C6}}$}
 \begin{lemma}
  \label{lem:C6}
The graph $G$ does not contain a $7$-vertex $u$ with four neighbors
  $v_1,\ldots,v_4$ satisfying $d(v_1)=d(v_2)=5$, $d(v_3)=6$,
  $d(v_4)=4$, and $v_1v_2,v_2v_3\in E(G)$.
\end{lemma}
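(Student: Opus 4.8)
The plan is to reuse the scheme of the preceding lemmas. Name the elements as follows: set $a=v_1v_2$, $b=v_2v_3$, $d=uv_1$, $e=uv_2$, $f=uv_3$, $g=uv_4$, put $G'=G\setminus\{a,b,d,e,f,g\}$, colour $G'$ by minimality, and uncolour $u,v_1,v_2,v_3,v_4$.

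First I would pin down the local structure so that Remark~\ref{obs:lowerbound} applies in the worst case. By $C_{\ref{C1}}$ every neighbour of the $4$-vertex $v_4$ has degree at least $7$, hence $v_4$ is adjacent to none of $v_1,v_2,v_3$; and by $C_{\ref{C3b}}$ there is no edge $v_1v_3$, since it would form a triangle with two $5$-vertices and one $6$-vertex. Thus the only edges of $G$ inside $\{u,v_1,v_2,v_3,v_4\}$ are $a,b,d,e,f,g$, and counting constraints in $\mathcal{T}(G)$ gives, in the worst case, $|\hat{u}|=|\hat{v_1}|=|\hat{b}|=|\hat{d}|=|\hat{g}|=4$, $|\hat{v_2}|=6$, $|\hat{v_3}|=|\hat{v_4}|=2$, $|\hat{a}|=|\hat{e}|=5$ and $|\hat{f}|=3$. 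Since $v_4$ has at most eight neighbours in $\mathcal{T}(G)$, I would immediately forget it; it then remains to colour the ten elements $u,v_1,v_2,v_3,a,b,d,e,f,g$.

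The colouring itself I would carry out by a case analysis on the overlaps of these lists, in the spirit of the proofs excluding $C_{\ref{C4}}$ and $C_{\ref{C5}}$. The two pressure points are the vertex $v_3$, which has only two available colours but is adjacent in $\mathcal{T}(G)$ to the four uncoloured elements $u,v_2,b,f$, and the clique $\{u,d,e,f,g\}$, which induces a $K_5$ in $\mathcal{T}(G)$ (the edges at $u$ are pairwise incident and all incident to $u$). A natural opening is to colour $f$ with a colour outside $\hat{v_3}$, which is possible since $|\hat{f}|=3>2=|\hat{v_3}|$ and leaves $\hat{v_3}$ untouched; then, according to whether certain non-adjacent pairs of lists intersect (such as $\hat{v_1}\cap\hat{v_3}$ or $\hat{a}\cap\hat{f}$), I would colour a few more edges or vertices greedily, each time peeling off enough elements so that what remains is a path or an even cycle in $\mathcal{T}(G)$, to be finished by Corollary~\ref{cor:evencycle}, Lemma~\ref{lem:fryingpan} or Lemma~\ref{lem:diam}; the even cycle $v_1\,d\,u\,f\,v_3\,b\,v_2\,a$ is the natural backbone to head for. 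As usual I would also try to exhibit a monomial of $P_{G'}$ satisfying the three conditions of Theorem~\ref{thm:nss}, giving a Combinatorial Nullstellensatz proof; but since the uncoloured part is fairly dense, it is quite possible that, as for the last lemma used to exclude $C_{\ref{C4}}$, only the case-analysis argument can be obtained.

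The main obstacle is exactly this case analysis. Several elements ($f$, $d$, $e$, $u$) have strictly fewer available colours than their degree in $\mathcal{T}(G\setminus G')$, so no uniform tool such as Theorem~\ref{thm:=deg} can be applied directly, and one has to order the colouring carefully and split into several subcases, each landing on a configuration already handled in Section~\ref{sec:reduction}. Getting the order right --- in particular keeping an available colour on $v_3$ throughout, and not exhausting the $K_5$ at $u$ before the remaining elements are dealt with --- is where the real work lies.
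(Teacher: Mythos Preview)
Your setup is essentially the paper's: same subgraph $G'$, same uncoloured elements, same observations from $C_{\ref{C1}}$ and $C_{\ref{C3b}}$ ruling out extra edges, and the same opening move of forgetting $v_4$ and colouring $uv_3$ with a colour outside $\hat{v_3}$. (One small slip: $v_4$ has only six coloured constraints, so $|\hat{v_4}|=4$, not $2$; this is harmless since you forget $v_4$ anyway.)

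Where you diverge is in expecting a multi-branch case analysis. The paper's proof needs none. The point you are missing is that after colouring $uv_3$ outside $\hat{v_3}$, the vertex $u$ itself still has three available colours, so you can \emph{also} colour $u$ outside $\hat{v_3}$. Once both $uv_3$ and $u$ are fixed without touching $\hat{v_3}$, the three remaining edges at $u$ (your $d,g,e$) can be coloured greedily in that order, and what survives is exactly the five elements $v_1,\,v_1v_2,\,v_2,\,v_2v_3,\,v_3$ with list sizes $2,3,4,2,2$ --- a single instance of Lemma~\ref{lem:diam} with $n=5$. No splitting on list intersections, no even-cycle fallback, no ``pressure point'' juggling at the $K_5$. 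Your plan to head for the even cycle $v_1\,d\,u\,f\,v_3\,b\,v_2\,a$ would also keep $u$ uncoloured and force exactly the case analysis you are dreading; colouring $u$ early is what collapses everything.

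Finally, your guess that the Nullstellensatz approach might fail here is wrong: the paper exhibits a monomial (in their labelling, $A^3 B^4 C^2 D^3 E^4 F^3 V_1^2 V_2^5 V_3$) with coefficient $-1$.
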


\begin{proof}
  We consider the notation depicted in Figure~\ref{fig:C6}. By
  minimality, we color $G'=G\setminus\{a,\ldots,f\}$ and uncolor
  $u,v_1,v_2,v_3,v_4$.
  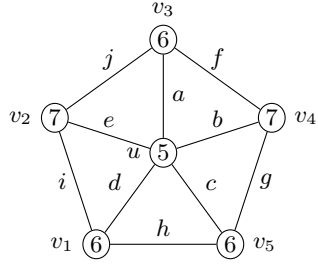
\begin{figure}[!h]
    \centering
    \begin{tikzpicture}[v/.style={draw=black,minimum size = 10pt,ellipse,inner sep=1pt}]
      \node[v,label=right:{$v_2$}] (v2) at (0:1.5) {$5$};
      \node[v,label=above:{$u$}] (u) at (0,0)  {$7$};
      \node[v,label=left:{$v_1$}] (v1) at (60:1.5) {$5$};
      \node[v,label=left:{$v_3$}] (v3) at (-60:1.5) {$6$};
      \node[v,label=left:{$v_4$}] (v4) at (180:1.5) {$4$};
      \draw (u) -- (v1) node[midway,above left]{$a$};
      \draw (u) -- (v2) node[midway,above]{$b$};
      \draw (u) -- (v3) node[midway,left]{$c$};
      \draw (u) -- (v4) node[midway,above]{$d$};
      \draw (v2) -- (v1) node[midway,above right]{$e$};
      \draw (v2) -- (v3) node[midway,right]{$f$};
    \end{tikzpicture}
\caption{Notation for Lemma~\ref{lem:C6}}
    \label{fig:C6}
  \end{figure}

  We have $|\hat{c}|=3$,
  $|\hat{u}|=|\hat{v_4}|=|\hat{a}|=|\hat{d}|=|\hat{f}|=4$,
  $|\hat{b}|=|\hat{e}|=5$ and $|\hat{v_2}|=6$. Moreover, none of
  $v_1,v_2,v_3$ are adjacent to $v_4$ because of $C_{\ref{C1}}$, and
  $v_1v_3\notin E(G)$ because of $C_{\ref{C3b}}$. Hence, we have
  $|\hat{v_1}|=4$, $|\hat{v_3}|=2$. We forget $v_4$, and conclude using Theorem~\ref{thm:nss}.
\end{proof}

\subsection{Configuration $C_{\ref{C7}}$}
According to the definition of a $S_3$-neighbor, if $G$ contains
$C_{\ref{C7}}$, $v_1$ is triangulated and we are in one of the
following cases:
\begin{itemize}
\item[$\bullet$]$C_{\ref{C7}a}$: $u$ and $v_1$ have a common neighbor $w$ of
  degree six.
\item[$\bullet$]$C_{\ref{C7}b}$: $v_1$ has two neighbors $w_1,w_2$ of degree
  six such that $uv_1w_1$ and $uv_1w_2$ are not triangular
  faces. Moreover, due to $C_{\ref{C7}a}$, we know that
  $w_1w_2, v_2w_2\in E(G)$ and $uw_1,uw_2\notin E(G)$.
\item[$\bullet$]$C_{\ref{C7}c}$: $v_1$ has a neighbor $w$ of degree five such
  that $uv_1w$ is not a triangular face.
\end{itemize}
We dedicate a lemma to each of these configurations.

\begin{lemma}
  \label{lem:C7a}
  The graph $G$ does not contain $C_{\ref{C7}a}$.
\end{lemma}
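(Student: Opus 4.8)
The plan is to follow the same scheme as the reduction of $C_{\ref{C6}}$, of which $C_{\ref{C7}a}$ is a heavier cousin. First I would fix notation in a figure: let $w$ be the degree-$6$ common neighbour of $u$ and $v_1$, so that $uv_1w$ is a triangular face --- this is exactly what $C_{\ref{C7}a}$ means, since $v_1$ is a weak neighbour of $u$ whose other triangle at the edge $uv_1$ is $uv_1v_2$ with $d(v_2)=7$; let $x$ be the second neighbour of $v_3$ for which $uv_3x$ is a triangular face, so $d(x)\geqslant 7$; and name the edges $a=uv_1$, $b=uv_2$, $c=uv_3$, $d=uv_4$, $e=v_1v_2$, $f=v_2v_3$, $g=v_1w$. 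By minimality we colour $G'=G\setminus\{a,b,c,d,e,f,g\}$ and then uncolour $u,v_1,v_2,v_3,v_4$. Invoking Remark~\ref{obs:lowerbound}, I record the worst-case list sizes: $u$, $v_3$, $v_4$ and all seven uncoloured edges turn out to be comfortably supplied, while $v_2$ ends up with only two available colours, which is the tight spot.

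Then I would shrink the uncoloured part of $\mathcal T(G)$. Since $d(v_3)=4$ we forget $v_3$; afterwards $c$, then $f$, then $g$ each have strictly more available colours than uncoloured neighbours, so they can be forgotten in turn. The vertex $v_4$ is pendant in the configuration --- its only uncoloured neighbours in $\mathcal T(G)$ are $u$ and $d$ --- so it is dealt with at the very end: colour $u$, then colour $d$ while avoiding the at most one colour that the choice at $u$ could force on $v_4$, then colour $v_4$. What remains is the octahedral subgraph of $\mathcal T(G)$ induced by $u,v_1,v_2,a,b,e$, and here a short case analysis --- branching on whether $\hat{v_2}$ meets $\hat{v_1}$, $\hat b$ or $\hat e$, using the standard tricks (a common colour for a non-adjacent pair whose lists meet, forgetting elements that have become slack) and finishing with Theorem~\ref{thm:=deg}, Corollary~\ref{cor:evencycle} or Lemma~\ref{lem:fryingpan} --- closes the argument. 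As for the neighbouring configurations, I would also give a Combinatorial Nullstellensatz proof: exhibit a monomial $m$ of $P_{G'}$ of maximal degree with $\deg_X m<|\hat x|$ for every uncoloured element $x$ and with non-zero coefficient (this last condition checked by the Maple script), splitting into the finitely many sub-cases according to which further edges among $\{u,v_1,v_2,v_3,v_4,w,x\}$ occur.

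The main obstacle is the combination of heavier bookkeeping with the enumeration of those further edges. Because $u$ is a $7$-vertex lying on the three interlocking triangles $uv_1w$, $uv_1v_2$ and $uv_2v_3$, the uncoloured subgraph of $\mathcal T(G)$ is noticeably larger than for $C_{\ref{C1}}$ to $C_{\ref{C3a}}$, so one must list every edge among the named vertices that $G$ could still contain: $wv_3$ and $v_3v_4$ are excluded by $C_{\ref{C1}}$, the relevant triangles through $5$-vertices by $C_{\ref{C3a}}$ and $C_{\ref{C3b}}$, and each survivor (such as $v_1v_4$, $wv_2$ or $v_2v_4$) must be handled as its own sub-case with its own monomial, exactly as in the proof of Lemma~\ref{lem:C5b}. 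The second delicate point is the order in which elements are forgotten and coloured: $v_3$ must come first, then $c$, $f$, $g$, and $u$ together with $v_4$ only at the end, so that what is left is precisely of the shape the library lemmas need; the Nullstellensatz route sidesteps this ordering issue but not the edge enumeration.
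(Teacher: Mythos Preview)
Your reduction is too small, and the forgetting sequence you announce does not go through. With only the seven edges $a,\ldots,g$ removed and $w$ left coloured, the edge $g=v_1w$ has eight coloured constraints (the vertex $w$, the two coloured edges at $v_1$, and all five remaining edges at $w$), so $|\hat g|=2$; but its uncoloured neighbours in $\mathcal T(G)$ are $v_1$, $a$ and $e$, and forgetting $v_3$ does nothing for $g$. Hence $g$ can never be forgotten. The same problem hits $c$ and $f$: after forgetting $v_3$ one has $|\hat c|=5$ against the five uncoloured neighbours $u,a,b,d,f$, and $|\hat f|=4$ against $v_2,b,e,c$; neither is strictly slack. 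Finally, your treatment of $u$ is inconsistent --- you colour $u$ together with $d,v_4$ ``at the very end'' and then immediately say that $u$ still ``remains'' in the octahedral subgraph to be coloured.

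The paper's proof is a genuinely heavier reduction: it deletes ten edges (all six visible edges at $u$ together with the triangle edges $v_3v_2$, $v_2v_1$, $v_1w$ and the edge from $v_3$ to its other $8$-neighbour) and also uncolours $w$. This pushes $|\hat c|$ up to $8$, $|\hat e|$ to $7$, etc., and only then does a substantial case analysis become feasible; even so the Nullstellensatz alternative requires twelve separate monomials to cover the possible extra edges among $v_1,v_2,v_4,w$. Removing one more edge at $u$ and at $w$ is not optional bookkeeping --- it is what creates the slack that your forgetting argument needs and does not have.
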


\begin{proof}
  We use the notation depicted in Figure~\ref{fig:C7a}. By minimality, we
  color $G\setminus\{a,\ldots,j\}$ and uncolor
  $u,v_1,\ldots,v_4,w$. 
  \begin{figure}[!h]
    \centering
    \begin{tikzpicture}[v/.style={draw=black,minimum size = 10pt,ellipse,inner sep=1pt}]
      \node[v,label=below:{$u$}] (u) at (0,0)  {$7$};
      \node[v,label=right:{$w$}] (w) at (-64.5:1.5) {6};
      \node[v,label=left:{$v_4$}] (v4) at (193:1.5) {5};
      \node[very thick, v] (v5) at (141.5:1.5) {$8$};
      \node[v,label=above:{$v_3$}] (v1) at (90:1.5) {4};
      \node[v,label=right:{$v_2$}] (v2) at (38.5:1.5) {7};
      \node[v,label=right:{$v_1$}] (v3) at (-13:1.5) {5};
      \draw (u) -- (v1) node[midway,left] {$c$};
      \draw (u) -- (v2) node[midway,above] {$d$};
      \draw (u) -- (v3) node[midway,above] {$e$};
      \draw (u) -- (w) node[midway,right] {$f$};
      \draw (u) -- (v4) node[midway,below] {$a$};
      \draw (u) -- (v5) node[midway,below] {$b$};
      \draw (v2) -- (v1) node[midway,above] {$h$};
      \draw (v2) -- (v3) node[midway,right] {$i$};
      \draw (w) -- (v3) node[midway,right] {$j$};
      \draw (v5) -- (v1) node[midway,above] {$g$};
    \end{tikzpicture}
\caption{Notation for Lemma~\ref{lem:C7a}}
    \label{fig:C7a}
  \end{figure}
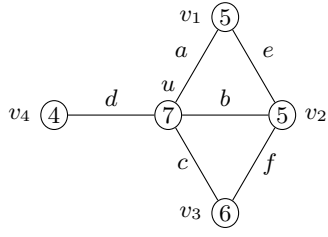

  We have $|\hat{b}|=|\hat{g}|=2$, $|\hat{i}|=|\hat{j}|=4$,
  $|\hat{d}|=|\hat{f}|=|\hat{a}|=|\hat{h}|=5$,
  $|\hat{u}|=|\hat{v_3}|=|\hat{e}|=7$ and $|\hat{c}|=8$. Moreover, the
  sizes $|\hat{v_1}|,|\hat{v_2}|,|\hat{w}|,|\hat{v_4}|$ depend on the
  presence of edges between the vertices $v_1,v_2,v_4,w$. Note that
  $v_3$ is not adjacent to any of $v_1,v_4,w$ by
  $C_{\ref{C1}}$. Moreover, $|N(v_4)\cap\{v_1,w\}|\leqslant 1$ by
  $C_{\ref{C3b}}$. We may thus assume that
  $|\hat{v_1}|=6+|N(v_1)\cap\{v_4\}|$,
  $|\hat{v_2}|=2+|N(v_2)\cap\{v_4,w\}|$,
  $|\hat{v_4}|=2+|N(v_4)\cap\{v_1,v_2,w\}|$ and
  $|\hat{w}|=2+|N(w)\cap\{v_2,v_4\}|$. We forget $v_3$ and conclude
  using Theorem~\ref{thm:nss}.
\end{proof}

\begin{lemma}
  \label{lem:C7b}
  The graph $G$ does not contain $C_{\ref{C7}b}$.
\end{lemma}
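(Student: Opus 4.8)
The plan is to mimic the proof of Lemma~\ref{lem:C7a}. In $C_{\ref{C7}b}$ the vertex $v_1$ is a triangulated $5$-vertex, so its neighbours, read cyclically around $v_1$, are $u,v_2,w_2,w_1,p$ for some vertex $p$ adjacent to $u$ (with $p=v_4$ the only possibility for a coincidence), and $uv_1v_2$, $v_1v_2w_2$, $v_1w_2w_1$, $v_1w_1p$, $v_1pu$ are all triangular faces; moreover $uv_2v_3$ is a face, the second face at $uv_3$ is $uv_3z$ with $d(z)\geqslant 7$, and $uw_1,uw_2\notin E(G)$ because $C_{\ref{C7}a}$ is excluded. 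By minimality I would colour the subgraph $G'$ obtained by deleting from $G$ all edges incident to $u$ or to $v_1$ together with the edges $v_2w_2$, $w_1w_2$, $w_1p$, $v_2v_3$ and $v_3z$, then uncolour $u,v_1,v_2,v_3,v_4,w_1,w_2$ (the vertices $p$ and $z$, if distinct from the previous ones, keep their colours). Remark~\ref{obs:lowerbound} then fixes the worst-case sizes of the lists $\hat x$: $\hat u$ and $\hat{v_2}$ are comfortably large, the degree-$4$ vertex $v_3$ has slack, and the tightest lists are those of the degree-$6$ vertices $w_1,w_2$ and of the edges among $v_1,v_2,w_1,w_2$.

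The routine part is then as in Lemma~\ref{lem:C7a}: forget the $4^-$-vertex $v_3$ straight away; make one carefully chosen colouring move (colour an edge at $u$, or an edge at $w_1$ or $w_2$, so as to destroy the intersection conditions that would obstruct a greedy finish); forget or colour greedily the degree-$5$ vertices $v_4,w_1,w_2$ and the remaining pendant edges; and close the argument on the small constraint subgraph of $\mathcal{T}(G)$ that survives (supported on $u,v_1,v_2$ and a few edges) by invoking Lemma~\ref{lem:diam}, Corollary~\ref{cor:evencycle} or Lemma~\ref{lem:fryingpan}, exactly as in the path/cycle closings used elsewhere in Section~\ref{sec:reduction}.

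The real difficulty, just as in Lemma~\ref{lem:C7a}, is the case explosion produced by the edges that may or may not lie among $\{v_2,v_4,w_1,w_2,p\}$: planarity excludes several combinations (in the spirit of the ``$v_1v_4$ versus $v_2w$'' dichotomy of Lemma~\ref{lem:C7a}, only one of certain competing pairs of diagonals can be present), but each admissible extra edge shifts one list size by $1$ and thereby changes which first move and which final lemma one must use. I would run the case analysis over these optional edges; in every branch the recipe is the same (pick one colour to break an intersection, forget the light vertices, finish on an even cycle or a diamond). Because this bookkeeping is lengthy, I would also give the Combinatorial Nullstellensatz version: fix a base monomial $m_0$ of full degree in $P_{G'}$, and for each admissible set of optional edges exhibit a modification $m$ of $m_0$ (raising a few exponents of the $V_i$'s and $W_i$'s) with $\deg m=\deg P_{G'}$, $\deg_X m<|\hat x|$ for every uncoloured element $x$, and non-zero coefficient in $P_{G'}$ --- the coefficients being those certified by the accompanying Maple check. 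Pinning down, case by case, a monomial that meets all three requirements at once is the step I expect to be impossible to do cleanly by hand.
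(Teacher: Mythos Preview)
Your plan is far more elaborate than what is needed, and it misses the one-line reduction that the paper actually uses. In the paper's argument for $C_{\ref{C7}b}$ only eight edges are deleted, namely $a=uv_1$, $b=v_1v_2$, $c=v_1w_2$, $d=v_1w_1$, $e=uv_3$, $f=v_2v_3$, $g=v_2w_2$, $h=w_1w_2$, and only $v_1,v_2,v_3,w_1,w_2$ are uncoloured; the vertex $u$, the fifth neighbour of $v_1$, and $v_4$ all keep their colours and never enter the analysis. After forgetting $v_3$, one colours $a$ with a colour not in $\hat e$ and forgets $e,f$. What remains is the $5$-vertex $v_1$ with three consecutive neighbours $v_2,w_2,w_1$ joined by $g$ and $h$, together with the spoke edges $b,c,d$, and the list sizes line up \emph{exactly} with those of Lemma~\ref{lem:C5a}. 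One then simply invokes that lemma. There is no case explosion and no Nullstellensatz step at all.

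By contrast, you propose to delete every edge incident with $u$ and with $v_1$, plus several more, and to uncolour $u$ and $v_4$ as well. This inflates the constraint graph dramatically and creates precisely the ``case explosion over optional edges'' you anticipate; the paper's reduction avoids it entirely. Two specific issues in your setup: first, the fifth neighbour $p$ of $v_1$ cannot coincide with $v_4$, since $v_1$ is a $(7^+,7^+)$-neighbour of $u$ and hence $d(p)\geqslant 7$ while $d(v_4)=5$; second, there is no reason to involve $v_4$ at all, and doing so only adds noise. The lesson here is to look for a smaller deletion set that makes the residual list-colouring instance match one already handled, rather than defaulting to the heavy template of Lemma~\ref{lem:C7a}.
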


\begin{proof}
  We consider the notation of Figure~\ref{fig:C7b}. By minimality, we
  color $G\setminus\{a,\ldots,h\}$ and uncolor $v_1,v_2,v_3,w_1,w_2$.
  \begin{figure}[!h]
    \centering
    \begin{tikzpicture}[v/.style={draw=black,minimum size = 10pt,ellipse,inner sep=1pt}]
      \node[v,very thick,label=below:{$u$}] (u) at (0,0)  {$7$};
      \node[v,very thick] (v5) at (141.5:1.5) {$8$};
      \node[v,label=above:{$v_3$}] (v3) at (90:1.5) {4};
      \node[v,label=above:{$v_2$}] (v2) at (38.5:1.5) {7};
      \node[v,label=below:{$v_1$}] (v1) at (-13:1.5) {5};
      \node[v,xshift=1.5cm,label=right:{$w_1$}] (w2) at (-13:1.5) {6};
      \node[v,xshift=1.5cm,label=right:{$w_2$}] (w1) at (38.5:1.5) {6};
      \draw (u) -- (v1) node[midway,above] {$a$};
      \draw (v2) -- (v1) node[midway,right] {$b$};
      \draw (v1) -- (w1) node[midway,right] {$c$};
      \draw (v1) -- (w2) node[midway,above] {$d$};
      \draw (u) -- (v3) node[midway,left] {$e$};
      \draw (v2) -- (v3) node[midway,above] {$f$};
      \draw (v2) -- (w1) node[midway,above] {$g$};
      \draw (w1) -- (w2) node[midway,right] {$h$};
      \draw[very thick] (u) -- (v2);
      \draw[very thick] (u) -- (v5);
      \draw[very thick] (v5) -- (v3);
    \end{tikzpicture}
\caption{Notation for Lemma~\ref{lem:C7b}}
    \label{fig:C7b}
  \end{figure}
  We have $|\hat{e}|=2$, $|\hat{a}|=|\hat{g}|=|\hat{h}|=3$,
  $|\hat{w_2}|=|\hat{f}|=4$, $|\hat{v_3}|=|\hat{b}|=|\hat{d}|=5$,
  $|\hat{c}|=6$, $|\hat{v_1}|=7$. Moreover, $|\hat{v_2}|,|\hat{w_1}|$
  are $2$ or $3$ depending on the presence of the edge $v_2w_1$ in
  $G$.  We forget $v_3$.

  We color $a$ with a color not in $\hat{e}$, then forget $e,f$. The
  resulting configuration is now the same as in Lemma~\ref{lem:C5a}.
\end{proof}

\begin{lemma}
  \label{lem:C7c} 
  $G$ does not contain $C_{\ref{C7}c}$.
\end{lemma}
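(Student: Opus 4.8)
The plan is to argue exactly as in Lemmas~\ref{lem:C7a} and~\ref{lem:C7b}. Since $C_{\ref{C7}a}$ and $C_{\ref{C7}b}$ are already known not to occur in $G$, the only way $v_1$ can be an $S_3$-neighbour of $u$ in $C_{\ref{C7}c}$ is through a degree-$5$ neighbour $w$ of $v_1$ lying on a face not containing $u$; moreover $v_1$, being a weak neighbour of $u$, is a triangulated $5$-vertex whose two common neighbours with $u$ on the triangles at $uv_1$ must have degree at least $7$ (degree $\leqslant 4$ is forbidden by $C_{\ref{C1}}$, degree $5$ by $C_{\ref{C3a}}$ since $v_1$ already has the $5$-neighbour $w$, and degree $6$ by $C_{\ref{C3b}}$ together with the exclusion of $C_{\ref{C7}a}$). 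I would begin by fixing notation in a figure containing $u$ (a $7$-vertex), $v_1$ (a $5$-vertex), the common degree-$7$ neighbour $v_2$ of $u,v_1,v_3$, the second common neighbour of $u$ and $v_1$ (degree $\geqslant 7$), the degree-$4$ weak neighbour $v_3$ of $u$ with its second common neighbour with $u$ (degree $\geqslant 7$), the degree-$5$ neighbour $v_4$ of $u$, and the degree-$5$ vertex $w$; then colour $G$ minus all edges joining these vertices to each other, by minimality, and uncolour $u,v_1,v_2,v_3,v_4$ and $w$, keeping the heavy vertices coloured and drawn in boldface.

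Next, by Remark~\ref{obs:lowerbound} I would pass to the worst case, in which every uncoloured element has exactly ten minus its number of coloured constraints available colours. The degree-$4$ vertex $v_3$ can be forgotten straightaway; and after colouring one or two carefully chosen edges (typically an edge incident to $v_3$, then one incident to $v_4$, each time choosing a colour that avoids a short list and then forgetting that edge), the degree-$5$ vertices $v_4$ and $w$ will also have more available colours than uncoloured neighbours in $\mathcal{T}(G)$ and can be forgotten. What remains is a small subgraph of $\mathcal{T}(G)$ which I expect, by analogy with how Lemma~\ref{lem:C7b} collapses onto Lemma~\ref{lem:C5a}, to reduce to a configuration already treated in this section — plausibly a diamond handled by Lemma~\ref{lem:diam}, an even cycle (Corollary~\ref{cor:evencycle}), or a frying-pan (Lemma~\ref{lem:fryingpan}). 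In parallel, as in the sibling lemmas, I would run the Combinatorial Nullstellensatz search on $P_{G'}$: planarity allows only a short list of possibilities for which of the ``hidden'' chords among $v_1,v_2,v_3,v_4,w$ are present, and for each I would exhibit a monomial $m$ with $\deg(m)=\deg(P_{G'})$, non-zero coefficient in $P_{G'}$, and $\deg_X(m)<|\hat{x}|$ for every uncoloured element $x$.

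The step I expect to be the main obstacle is the case bookkeeping of those hidden chords: several of the potential edges among the configuration vertices (for instance $wv_2$, $v_4v_2$, or an edge between the two heavy common neighbours of $u$ and $v_1$) are compatible with planarity, each changing some available-colour count, so the chosen colouring order — or the chosen Nullstellensatz monomial — must remain valid in every admissible configuration. As the authors note for the neighbouring lemmas, it may well happen that only one of the two arguments is practical: the hand case analysis may become unreasonably long, or the naive monomial search may fail to terminate, in which case I would present whichever route succeeds, falling back on the reduction to an already-treated configuration if needed.
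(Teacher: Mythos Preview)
Your plan is in the right spirit and shares the paper's overall shape (delete a small set of edges around $u,v_1,v_2,v_3,v_4,w$, uncolour those vertices, forget $v_3$, then finish by a short case analysis and/or Nullstellensatz), but there are two concrete gaps.

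First, you never consider the possibility $w=v_4$. Nothing in the definition of $C_{\ref{C7}c}$ prevents the degree-$5$ neighbour $v_4$ of $u$ from coinciding with the degree-$5$ neighbour $w$ of $v_1$ (the only requirement on $w$ is that $uv_1w$ is not a triangular \emph{face}, not that $uw\notin E(G)$). The paper treats $w=v_4$ as a separate case with different list sizes ($|\hat a|=4$, $|\hat g|=5$ instead of $3$ and $4$) and a different colouring order; your plan, which treats $w$ and $v_4$ as distinct vertices throughout, would simply miss this case. Relatedly, your structural observation that both common neighbours of $u$ and $v_1$ have degree $\geqslant 7$ is correct but unused: the paper never touches the second common neighbour at all and removes only the seven edges $a,\ldots,g$ drawn in its figure, not ``all edges joining these vertices to each other''.

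Second, the case analysis for $w\neq v_4$ is more delicate than ``colour one or two carefully chosen edges and forget $v_4,w$''. The paper's key move is to pick $\alpha\in\hat d\setminus\hat f$, remove it from $\hat u$ and $\hat a$, and then apply Lemma~\ref{lem:diam} to the path $v_2cuav_4$ (or a shorter variant when $v_2v_4\in E(G)$); this guarantees $\hat d\neq\hat f$ afterwards, which is what drives the subsequent three-way split on $\hat{v_1},\hat w,\hat f,\hat d$ (ending with Lemma~\ref{lem:fryingpan} or Corollary~\ref{cor:evencycle}). Your sketch does not identify this trick, and without it there is no obvious way to keep enough slack to colour the final elements. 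On the Nullstellensatz side you are right that only a handful of chord patterns arise; the paper checks six (two for $w=v_4$ depending on $v_2w$, and four for $w\neq v_4$ depending on which of $v_2v_4,\,v_2w$ are present), giving an explicit monomial for each.
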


\begin{proof}
  We use the notation depicted in Figure~\ref{fig:C7c}. 
  \begin{figure}[!h]
    \centering
     \begin{tikzpicture}[v/.style={draw=black,minimum size = 10pt,ellipse,inner sep=1pt}]
      \node[v,label=below:{$u$}] (u) at (0,0)  {$7$};
      \node[very thick,v] (v5) at (141.5:1.5) {$8$};
      \node[v,label=above:{$v_3$}] (v3) at (90:1.5) {4};
      \node[v,label=above:{$v_2$}] (v2) at (38.5:1.5) {7};
      \node[v,label=below:{$v_1$}] (v1) at (-13:1.5) {5};
      \node[v,label=left:{$v_4$}] (v4) at (193:1.5) {5};
      \node[v,xshift=1.5cm,label=above:{$w$}] (w) at (-13:1.5) {5};
      \draw (u) -- (v3) node[midway,left] {$b$};
      \draw (u) -- (v2) node[midway,above] {$c$};
      \draw (u) -- (v1) node[midway,above] {$d$};
      \draw (u) -- (v4) node[midway,above] {$a$};
      \draw (v3) -- (v2) node[midway,above] {$e$};
      \draw (v2) -- (v1) node[midway,right] {$f$};
      \draw (w) -- (v1) node[midway,above] {$g$};
      \draw[very thick] (u) -- (v5);
      \draw[very thick] (v3) -- (v5);
    \end{tikzpicture}
\caption{Notation for Lemma~\ref{lem:C7c}}
    \label{fig:C7c}
  \end{figure}
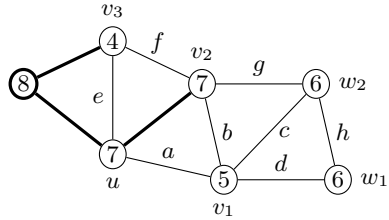
  Note that we may assume that, if $w\neq v_4$, $uw\notin E(G)$ due to
  Lemma~\ref{lem:C7a}. By minimality, we color
  $G'=G\setminus\{a,\ldots,g\}$ and uncolor $u,v_1,v_2,v_3,v_4,w$. We
  have $|\hat{c}|=3$, $|\hat{u}|=|\hat{e}|=|\hat{f}|=4$,
  $|\hat{b}|=|\hat{d}|=5$ and $|\hat{v_1}|=|\hat{v_3}|=6$.

  Moreover, if $w= v_4$, we have $|\hat{a}|=4$ and $|\hat{g}|=5$,
  otherwise, $|\hat{a}|=3$ and $|\hat{g}|=4$. We may also assume that
  $|\hat{v_2}|,|\hat{v_4}|,|\hat{w}|$ are $2,3,4$ or $5$ depending on
  whether $v_2v_4,v_2w\in E(G)$. We forget $v_3$ and conclude using
  Theorem~\ref{thm:nss}.
\end{proof}

\subsection{Configuration $C_{\ref{C8}}$}
Note that $G$ does not contain $C_{\ref{C7}}$, therefore, if $G$
contains $C_{\ref{C8}}$, we are in one of the following cases:
\begin{itemize}
\item[$\bullet$] $C_{\ref{C8}a}$: the common neighbor of $v_1,u$ and $v_2$ has
  degree $7$ and $v_1$ is an $S_5$-neighbor of $u$.
\item[$\bullet$] $C_{\ref{C8}b}$: the common neighbor of $v_1,u$ and $v_2$ has
  degree $8$ and $v_1$ is a $(6,8)$-neighbor of $u$.
\item[$\bullet$] $C_{\ref{C8}c}$: the common neighbor of $v_1,u$ and $v_2$ has
  degree $8$ and $v_1$ has two neighbors $w_1,w_2$ of degree $6$ such
  that $uv_1w_1$ and $uv_1w_2$ are not triangular faces
\item[$\bullet$] $C_{\ref{C8}d}$: the common neighbor of $v_1,u$ and $v_2$ has
  degree $8$ and $v_1$ has a neighbor $w$ of degree $5$ such that
  $uv_1w$ is not a triangular face.
\end{itemize}
We dedicate a lemma to each of these configurations.

\begin{lemma}
  \label{lem:C8a}
The graph $G$ does not contain $C_{\ref{C8}a}$.
\end{lemma}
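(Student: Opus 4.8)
The plan is to follow the reduction framework already used for $C_{\ref{C7}}$. First I would fix notation with a figure: call $w$ the common (degree-$7$) neighbour of $u,v_1$ and $v_2$, $z$ the common neighbour of $u,v_1$ and $v_3$, and name the two remaining neighbours of $v_1$ — all of degree $7$, since $v_1$ is an $S_5$-neighbour of $u$ — together with the triangle-neighbours of $v_2$ and $v_3$ other than $w,z$, all edges incident to $u$, and all edges of the triangular faces around $u$ at $v_1,v_2,v_3$. Then I would take $G'$ to be $G$ minus all these edges, colour $G'$ by minimality, and uncolour $u,v_1,v_2,v_3$ together with $w,z$ and the other degree-$7$ neighbours of $v_1$. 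Using Remark~\ref{obs:lowerbound} I would compute the size of each list $\hat x$ in the worst case: the far vertices and edges keep large lists, $u$ itself keeps a comfortable list because many of its incident edges are uncoloured, and the bottleneck is at the edges of the triangular fans at $v_2$ and $v_3$ and at $v_2,v_3$ themselves. Since $v_2$ has degree $4$ it has at most eight neighbours in $\mathcal{T}(G)$, so it can be \emph{forgotten} immediately.

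Next I would record which edges among the named vertices may be present but are not drawn. Planarity together with the already-reduced configurations $C_{\ref{C1}}$, $C_{\ref{C3a}}$ and $C_{\ref{C3b}}$ eliminates most of them (e.g.\ $v_1v_3,v_2v_3\notin E(G)$, and a crossing pair of diagonals cannot coexist), leaving a short list of cases, each of which only enlarges some lists. The core is then to extend the colouring: I would first colour, in a carefully chosen order, the spoke edges and the ``outer'' degree-$7$ vertices around $v_1$, which still have relatively large lists, reserving suitable colours; after this step the uncoloured part of $\mathcal{T}(G)$ shrinks to a small graph — essentially the triangular fan at $v_3$ together with $v_1$, $v_3$, $u$ and a couple of spokes. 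On that subgraph I would finish with Lemma~\ref{lem:diam} (applied to a path running through the fan, with the list profile checked), or with Corollary~\ref{cor:evencycle} or Lemma~\ref{lem:fryingpan} applied to an even cycle or frying-pan that appears there, after a couple of routine ``disjointness'' steps of the form: if two of the relevant lists meet, identify a colour on those two elements and forget a vertex; otherwise the lists are pairwise disjoint and nested, contradicting the size count. Alternatively, and more cheaply, I would apply the Combinatorial Nullstellensatz, exhibiting for each additional-edge case a monomial $m$ of $P_{G'}$ with $\deg m=\deg P_{G'}$, $\deg_X m<|\hat x|$ for every uncoloured element $x$, and non-zero coefficient (this last condition checked with the Maple code), so that Theorem~\ref{thm:nss} concludes.

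The main obstacle I expect is the bookkeeping. Tracking how the many list sizes decrease as the spoke edges and degree-$7$ vertices get coloured, and guaranteeing that the graph that remains is genuinely one of the solvable shapes — a path with exactly the list profile that Lemma~\ref{lem:diam} needs, an even cycle, or a frying-pan — is where the real care lies. The extra degree-$7$ neighbours of $v_1$ that the $S_5$ hypothesis forces do not directly help the extension, so the order in which they are coloured and which colours are kept in reserve for the final path or cycle is the delicate point, a little more intricate than in the $S_3$ analogue treated for $C_{\ref{C7}}$.
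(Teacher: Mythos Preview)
Your general framework matches the paper's: remove the edges of the triangular fans around $u$ at $v_1,v_2,v_3$ plus the two extra edges from $v_1$ to its remaining degree-$7$ neighbours, colour the rest by minimality, uncolour the relevant vertices, forget $v_2$, and then try to finish. However, two points deserve comment.

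First, the paper explicitly states that it does \emph{not} have a case-analysis proof for $C_{\ref{C8}a}$ and relies solely on the Combinatorial Nullstellensatz. So your primary plan --- reducing to a path for Lemma~\ref{lem:diam} or to a frying-pan after ``routine disjointness steps'' --- is precisely what the authors could not carry through here; the $S_5$ hypothesis, while it fixes many degrees at $7$, does not seem to give enough algebraic slack to force the final shape you want. Your alternative (Nullstellensatz) is the route that actually works.

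Second, even in the Nullstellensatz route the paper does \emph{not} apply it directly nor case-split on the possible extra edges as you suggest. Instead it performs a short but essential preliminary simplification: after forgetting $v_2$, it colours $j$ with a colour outside $\hat{i}$, then colours $m,n$ and the vertex $w_2$ (your $w$), and only then forgets $i,b,h$. This kills the dependence on the uncertain edges $w_1w_2$ and $w_2v_3$, so a \emph{single} monomial (coefficient $-2$) handles all cases at once. Also note that the paper does \emph{not} uncolour the two extra degree-$7$ neighbours of $v_1$; only the edges $m,n$ to them are removed. Your plan to uncolour those vertices would enlarge the polynomial unnecessarily and force you into the per-case monomial search you describe.
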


\begin{proof}
  We use the notation depicted in Figure~\ref{fig:C8a}. Since $G$ is planar we cannot have both edges $w_1v_2$ and $w_2v_3$. Up to flipping the configuration, we assume that $w_1v_2$ is not an edge. Now by minimality, we
  color $G\setminus\{a,\ldots,m\}$ and uncolor
  $u,v_1,v_2,v_3,w_1,w_2$.
  \begin{figure}[!h]
    \centering
    \begin{tikzpicture}[v/.style={draw=black,minimum size = 10pt,ellipse,inner sep=1pt}]
      \node[v,label=left:{$u$}] (u) at (0,0)  {$7$};
      \node[v,label=right:{$v_1$}] (v1) at (0:1.5) {$5$};
      \node[v,label=above:{$v_2$}] (v2) at (103:1.5) {$4$};
      \node[v,label=left:{$v_3$}] (v3) at (257:1.5) {$5$};
      \node[v,very thick] (v4) at (154.5:1.5) {8};
      \node[v,label=above:{$w_2$}] (w1) at (51.5:1.5) {7};
      \node[v, very thick] (v6) at (206:1.5) {8};
      \node[v,label=below:{$w_1$}] (w2) at (308.5:1.5) {7};
      \node[v,xshift=1.5cm, very thick] (x1) at (36:1.5) {7};
      \node[v,xshift=1.5cm, very thick] (x2) at (-36:1.5) {7};
      \draw (u) -- (v6) node[midway,below] {$g$};
      \draw (u) -- (v4) node[midway,above] {$a$};
      \draw (u) -- (v2) node[midway,right] {$b$};
      \draw (u) -- (w1) node[midway,right] {$c$};
      \draw (u) -- (v1) node[midway,above] {$d$};
      \draw (u) -- (w2) node[midway,right] {$e$};
      \draw (u) -- (v3) node[midway,right] {$f$};
      \draw (v2) -- (v4) node[midway,above] {$h$};
      \draw (v2) -- (w1) node[midway,above] {$i$};
      \draw (v1) -- (w1) node[midway,right] {$j$};
      \draw (v1) -- (w2) node[midway,right] {$k$};
      \draw (v3) -- (w2) node[midway,above] {$\ell$};
      \draw (v1) -- (x1) node[midway,left] {$m$};
      \draw (v1) -- (x2) node[midway,left] {$n$};
      \draw[very thick] (x2) -- (x1);
      \draw[very thick] (w1) -- (x1);
      \draw[very thick] (x2) -- (w2);
      \draw[very thick] (v6) -- (v3);
    \end{tikzpicture}
\caption{Notation for Lemma~\ref{lem:C8a}}
    \label{fig:C8a}
  \end{figure}
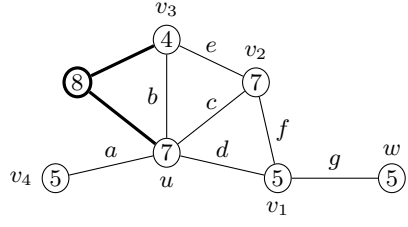
  We have $|\hat{g}|=|\hat{h}|=2$,
  $|\hat{a}|=|\hat{\ell}|=|\hat{m}|=|\hat{n}|=3$, $|\hat{i}|=5$,
  $|\hat{c}|=|\hat{e}|=|\hat{j}|=|\hat{k}|=6$,
  $|\hat{f}|=|\hat{v_2}|=7$, $|\hat{u}|=|\hat{v_1}|=8$, $|\hat{b}|=9$
  and $|\hat{d}|=10$.

  Moreover, note that the only edges of $G$ between uncolored vertices
  that may not be present on the figure are $w_1w_2$ and
  $w_2v_3$. Depending on the presence of these edges, $|\hat{w_2}|$ is
  $2,3$ or $4$, $|\hat{w_1}|$ is $2$ or $3$ and $|\hat{v_3}|$ is $4$
  or $5$.

  We first forget $v_2$, and color $j$ with a color not in
  $\hat{i}$. Then we color $m,n,w_2$, and forget $i,b,h$.

  We then conclude using Theorem~\ref{thm:nss} to color the remaining
  graph.
\end{proof}

\begin{lemma}
  \label{lem:C8b}
The graph $G$ does not contain $C_{\ref{C8}b}$.
\end{lemma}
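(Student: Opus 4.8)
The plan is to follow exactly the scheme of Lemma~\ref{lem:C8a}. First I would name the local elements according to a figure: besides $u$ (degree $7$), $v_1$ (degree $5$), $v_2$ (degree $4$) and $v_3$ (degree $5$), I would write $w_2$ for the common neighbor of $u,v_1,v_2$ (degree $8$) and $w_1$ for the other vertex with $uv_1w_1$ a triangular face, which has degree $6$ since $v_1$ is a $(6,8)$-neighbor; the $(7,7^+)$-condition on $v_2$ gives a further triangular neighbor $z$ of $u$ and $v_2$ of degree $7$, and the weak $5$-vertex $v_3$ with $\dist_u(v_1,v_3)=2$ shares a triangular neighbor with $v_1$ (so $v_3$ is attached near $w_1$). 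I would color $G$ minus the edges of the configuration by minimality, uncolor $u,v_1,v_2,v_3$ together with the $w_i$'s incident to enough removed edges, and invoke Remark~\ref{obs:lowerbound} to reduce to the worst case for the list sizes: $\hat{u}$ and $\hat{v_1}$ are the longest, $v_2,v_3$ have short lists, and the degree-$8$ vertex $w_2$ has a rather short list.

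Next I would record which edges among the uncolored vertices may be absent from the figure. As in Lemma~\ref{lem:C8a}, planarity together with the already-reduced configurations $C_{\ref{C1}}$, $C_{\ref{C3a}}$, $C_{\ref{C3b}}$ and $C_{\ref{C7}}$ severely restrict these (no $4^-$-vertex of the configuration can be adjacent to another, $v_2$ and $v_3$ are non-adjacent, and the induced subgraph on $\{v_1,v_2,v_3,w_1,w_2,z,\dots\}$ must remain planar, so only a small list of non-crossing optional edges is possible); each such edge raises one list size by one. I would then forget $v_2$ (a degree-$4$ vertex has at most eight neighbors in $\mathcal{T}(G)$, hence can always be forgotten), and make a couple of initial colouring moves on the edges incident to $w_1$ and $w_2$ — e.g. colouring $v_1w_2$ with a colour avoiding the short list of $v_2w_2$, colouring the edges around $v_3$, and forgetting $w_1$ — so as to reduce to a small subgraph of $\mathcal{T}(G)$ each of whose blocks is an even cycle or one of the graphs of Lemmas~\ref{lem:diam} and~\ref{lem:fryingpan}, to which Corollary~\ref{cor:evencycle} or Theorem~\ref{thm:=deg} applies.

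For the last step I would close with the Combinatorial Nullstellensatz, just as in Lemma~\ref{lem:C8a}: fix a base monomial $m_0$ on the surviving variables with $\deg m_0=\deg P_{G'}$ and all exponents below the corresponding list sizes, then split into cases according to which optional edges are present, and in each case multiply $m_0$ by an appropriate product of the variables attached to the endpoints of those edges and check (via the Maple script) that the coefficient in $P_{G'}$ is non-zero. I expect the main obstacle to be precisely this bookkeeping: there are several potential extra edges, the local subgraph must stay planar, and for some edge patterns a purely human case analysis would blow up — which is presumably why, as for $C_{\ref{C8}a}$, only the Nullstellensatz argument is given. A secondary subtlety is arranging the initial colouring moves so that the leftover graph always has the "no block is a clique or an odd cycle" shape needed by Theorem~\ref{thm:=deg}, uniformly over all admissible choices of the optional edges.
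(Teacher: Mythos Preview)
Your high-level framework matches the paper's: remove the configuration edges, colour by minimality, uncolor $u,v_1,v_2,v_3$ and the degree-$6$ vertex, forget $v_2$, then finish. But you have misjudged where the difficulty lies. In $C_{\ref{C8}b}$ the constraints $C_{\ref{C1}}$ (forbidding $v_2$ adjacent to $v_1,v_3,w$) and $C_{\ref{C3b}}$ (forbidding $v_1v_3$) already pin down every edge among the uncolored vertices; there are \emph{no} optional edges, and hence no case split on edge patterns is needed. The ``bookkeeping obstacle'' you anticipate simply does not arise, and the Nullstellensatz argument requires a single monomial, not a family indexed by cases.

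More importantly, contrary to your prediction, the paper \emph{does} give a short human case analysis here (unlike for $C_{\ref{C8}a}$): after forgetting $v_2$ and making a handful of targeted colourings ($i,c$ avoiding $\hat{j}$; $g$; $e$ avoiding $\hat{w}$; $a,h$; $v_3$ avoiding $\hat{\ell}$), one splits only on whether $\hat{u}=\hat{f}$ and finishes each branch with Lemma~\ref{lem:diam} on a path of length four or five. Your plan to ``forget $w_1$'' (the $6$-vertex) is also not viable as stated: with $|\hat{w}|=4$ against six uncoloured neighbours in $\mathcal{T}(G)$ it cannot be forgotten at that stage, and in the paper it is handled via the explicit move ``colour $e$ with a colour not in $\hat{w}$'', not by forgetting. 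So while your outline would eventually yield a correct Nullstellensatz proof once you discover there are no extra edges, you have read this configuration as harder than $C_{\ref{C8}a}$ when it is in fact the cleaner of the two.
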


\begin{proof}
  We use the notation depicted in Figure~\ref{fig:C8b}. By minimality,
  we color $G'=G\setminus\{a,\ldots,\ell\}$ and uncolor
  $u,v_1,v_2,v_3,w$.
  \begin{figure}[!h]
    \centering
    \begin{tikzpicture}[v/.style={draw=black,minimum size = 10pt,ellipse,inner sep=1pt}]
      \node[v,label=left:{$u$}] (u) at (0,0)  {$7$};
      \node[v,label=right:{$v_1$}] (v1) at (0:1.5) {$5$};
      \node[v,label=above:{$v_2$}] (v2) at (103:1.5) {$4$};
      \node[v,label=left:{$v_3$}] (v3) at (257:1.5) {$5$};
      \node[v, very thick] (v4) at (154.5:1.5) {7};
      \node[v,very thick] (v5) at (51.5:1.5) {8};
      \node[v, very thick] (v6) at (206:1.5) {$8$};
      \node[v,label=right:{$w$}] (w) at (308.5:1.5) {6};
      \draw (u) -- (v4) node[midway,above]{$a$};
      \draw (u) -- (v2) node[midway,right]{$b$};
      \draw (u) -- (v5) node[midway,right]{$c$};
      \draw (u) -- (v1) node[midway,above]{$d$};
      \draw (u) -- (w) node[midway,right]{$e$};
      \draw (u) -- (v3) node[midway,right]{$f$};
      \draw (u) -- (v6) node[midway,below]{$g$};
      \draw (v2) -- (v4) node[midway,above]{$h$};
      \draw (v2) -- (v5) node[midway,above]{$i$};
      \draw (v1) -- (v5) node[midway,right]{$j$};
      \draw (v1) -- (w) node[midway,right]{$k$};
      \draw (w) -- (v3) node[midway,above]{$\ell$};
      \draw (v6)--(v3);
    \end{tikzpicture}
\caption{Notation for Lemma~\ref{lem:C8b}}
    \label{fig:C8b}
  \end{figure}
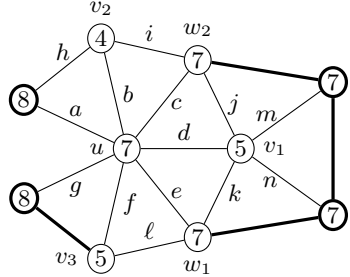
  We have $|\hat{g}|=|\hat{j}|=2$, $|\hat{h}|=|\hat{i}|=3$,
  $|\hat{w}|=|\hat{a}|=|\hat{c}|=|\hat{\ell}|=4$, $|\hat{k}|=5$,
  $|\hat{v_2}|=6$, $|\hat{u}|=|\hat{e}|=|\hat{f}|=7$, $|\hat{d}|=8$
  and $|\hat{b}|=9$.

  Note that, due to $C_{\ref{C3b}}$, $v_1v_3\notin E(G)$. Moreover,
  due to $C_{\ref{C1}}$, $v_2$ is not adjacent to $v_1,v_3,w$. Since
  the graph $G$ is simple, all the edges of $G$ between uncolored
  vertices are drawn in the figure. We may thus assume that
  $|\hat{v_1}|=5$ and $|\hat{v_3}|=4$.
  
  We forget $v_2$ and conclude using Theorem~\ref{thm:nss}.
\end{proof}

\begin{lemma}
  \label{lem:C8c}
The graph $G$ does not contain $C_{\ref{C8}c}$.
\end{lemma}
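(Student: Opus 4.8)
The plan is to mirror the proofs of Lemmas~\ref{lem:C8a} and~\ref{lem:C8b}. I would first fix notation in a figure of the same flavour as Figures~\ref{fig:C8a} and~\ref{fig:C8b}: let $u$ be the $7$-vertex, $v_1$ its triangulated $5$-neighbor, $v_2$ the $4$-vertex, $v_3$ the $5$-vertex, and $x$ the degree-$8$ common neighbor of $u$, $v_1$ and $v_2$. Let $x'$ be the second common neighbor of $u$ and $v_2$ along the two triangular faces at $uv_2$; since $G$ contains no $C_{\ref{C7}}$, we get $d(x')=7$. Let $y$ realize $\dist_u(v_1,v_3)=2$ via a path $v_1yv_3$, so that $uv_1y$ and $uyv_3$ are triangular faces; since the two faces at the edge $ux$ are $uv_1x$ and $uxv_2$ we have $x\neq y$, and as $v_1$ is triangulated its link is the $5$-cycle $u\,x\,w_1\,w_2\,y$, where $w_1,w_2$ are precisely the two $6$-vertices given by the hypothesis of $C_{\ref{C8}c}$. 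Because $G$ avoids $C_{\ref{C8}b}$ we may moreover assume $d(y)\geqslant 7$. I would then, by minimality, color $G$ minus an appropriate set of edges incident to $\{u,v_1,v_2,v_3,w_1,w_2\}$, uncolor $u,v_1,v_2,v_3,w_1,w_2$, and use Remark~\ref{obs:lowerbound} to assume every uncolored element has the smallest possible list.

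Next I would record which edges among the uncolored vertices need not be present: $v_1v_3\notin E(G)$ by $C_{\ref{C3b}}$, the $4$-vertex $v_2$ is adjacent to none of $v_1,v_3,w_1,w_2$ by $C_{\ref{C1}}$, and by planarity only finitely many further chords among $v_1,v_3,w_1,w_2,y$ can occur; these give a short explicit list of \emph{extra-edge cases}, affecting only $|\hat{v_1}|$, $|\hat{v_3}|$, $|\hat{w_1}|$ and $|\hat{w_2}|$. Then, exactly as in the other $C_{\ref{C8}}$ cases, I would forget the degree-$4$ vertex $v_2$ (it has at most eight neighbors in $\mathcal{T}(G)$), color the edges around $v_2$ so as to free room on the lists of $u$ and of the triangle $uv_1x$, and peel $v_3$ off together with one incident edge by using a color outside the long-list side, just as $v_3$ is handled in the proof of Lemma~\ref{lem:C8b}. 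The target is that the remaining uncolored part of $\mathcal{T}(G)$ becomes one of the generic choosable patterns: an even cycle (Corollary~\ref{cor:evencycle}), a frying pan (Lemma~\ref{lem:fryingpan}), or a path of the kind treated in Lemma~\ref{lem:diam}, the extra slack coming from $v_1$ still having a comparatively large list and from having uncolored $w_1$ and $w_2$. In parallel I would supply a Combinatorial Nullstellensatz argument: for each extra-edge case, exhibit a monomial of $P_{G'}$ of maximal degree whose per-variable degrees stay below the corresponding list sizes and whose coefficient is nonzero (checked with the accompanying Maple code), and conclude with Theorem~\ref{thm:nss}.

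The main obstacle, as in Lemmas~\ref{lem:C8a} and~\ref{lem:C8b}, is the bookkeeping: one must choose the right subset of edges to uncolor and the right order in which to color and forget elements so that the leftover subgraph of $\mathcal{T}(G)$ is exactly one of those three patterns with precisely the list sizes guaranteed by Remark~\ref{obs:lowerbound}, and this has to work simultaneously for every admissible configuration of the undrawn edges. Since here one uncolors two additional $6$-vertices, I expect the hand case analysis to be the heavier half; if it becomes unwieldy I would present only the Nullstellensatz proof, as was done at the end of the proof of Lemma~\ref{lem:C8a}.
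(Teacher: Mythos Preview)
Your high-level plan (delete edges, uncolor $u,v_1,v_2,v_3,w_1,w_2$, forget $v_2$, then reduce) matches the paper's, but two concrete points make the sketch incomplete.

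First, the Nullstellensatz fallback is not available here. The paper explicitly reports that for $C_{\ref{C8}c}$ their search \emph{did not} find a suitable monomial, and they give only a case-analysis proof. So your closing clause ``if it becomes unwieldy I would present only the Nullstellensatz proof'' is precisely the branch that fails; you must carry the hand analysis through.

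Second, your list of ``extra-edge cases'' is the wrong case split. The paper uncolors all seven edges at $u$, including the edge $g$ to the seventh neighbor of $u$ (the one not among $v_2$, $v_3$, your $x$, $x'$, $y$). The crucial dichotomy is whether this seventh neighbor coincides with one of the $6$-vertices $w_1,w_2$; when it does, $g$ becomes incident to the whole $v_1$-link and the list sizes and adjacencies in $\mathcal{T}(G\setminus G')$ change substantially. The paper's three main cases are exactly ``$g$ not incident to $w_1,w_2$ and $\hat f=\hat{v_3}$'', ``$g$ not incident to $w_1,w_2$ and $\hat f\neq\hat{v_3}$'', and ``$g$ incident to $w_1$ or $w_2$''. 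Within these, the leftover is \emph{not} always one of the three generic patterns you name: in one subcase the paper must color a $4$-clique $\{v_1,d,n,o\}$ via Hall's criterion (Theorem~\ref{thm:clique}), and in another it must first arrange $\hat n\neq\hat o$ by choosing among the two colorings of $\{w_1,w_2,q\}$. Your plan to ``peel $v_3$ off'' early, as in Lemma~\ref{lem:C8b}, does not obviously survive the case where $g=uw_1$, because then $v_3$ becomes adjacent (in $G$) to $w_1$ and the bookkeeping for $\hat f,\hat{v_3},\hat{w_1}$ is coupled.

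Two minor setup remarks: $d(x')=7$ follows directly from the definition of a $(7,7^+)$-neighbor (one face vertex has degree exactly $7$), not from the absence of $C_{\ref{C7}}$; and $d(y)\geqslant 7$ is already part of the $S_3$ hypothesis in this subcase (the ``two $6$-neighbors not on a face with $u$'' clause requires $v_1$ to be a $(7^+,7^+)$-neighbor of $u$), so you need not invoke $C_{\ref{C8}b}$.
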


\begin{proof}
  We use the notation depicted in Figure~\ref{fig:C8c}. By minimality, we
  color $G\setminus\{a,\ldots,q\}$ and uncolor
  $u,v_1,v_2,v_3,w_1,w_2$.
  \begin{figure}[!h]
    \centering
    \begin{tikzpicture}[v/.style={draw=black,minimum size = 10pt,ellipse,inner sep=1pt},scale=.8]
      \node[v,label=left:{$u$}] (u) at (0,0)  {$7$};
      \node[v,label=right:{$v_1$}] (v1) at (0:1.5) {$5$};
      \node[v,label=above:{$v_2$}] (v2) at (103:1.5) {$4$};
      \node[v,label=left:{$v_3$}] (v3) at (257:1.5) {$5$};
      \node[v, very thick] (v4) at (154.5:1.5) {7};
      \node[v, very thick] (w1) at (51.5:1.5) {8};
      \node[v, very thick] (v6) at (206:1.5) {$8$};
      \node[v, very thick] (w2) at (308.5:1.5) {8};
      \node[v,xshift=1.5cm,label=above:{$w_2$}] (x1) at (36:1.5) {6};
      \node[v,xshift=1.5cm,label=below:{$w_1$}] (x2) at (-36:1.5) {6};
      \draw (u) -- (v4) node[midway,above] {$a$};
      \draw (u) -- (v2) node[midway,right] {$b$};
      \draw (u) -- (w1) node[midway,right] {$c$};
      \draw (u) -- (v1) node[midway,above] {$d$};
      \draw (u) -- (w2) node[midway,right] {$e$};
      \draw (u) -- (v3) node[midway,right] {$f$};
      \draw (u) -- (v6) node[midway,below] {$g$};
      \draw (v1) -- (w1) node[midway,right] {$j$};
      \draw (v1) -- (x1) node[midway,above] {$n$};
      \draw (v1) -- (x2) node[midway,below] {$o$};
      \draw (v1) -- (w2) node[midway,right] {$k$};
      \draw (v2) -- (v4) node[midway,above] {$h$};
      \draw (v2) -- (w1) node[midway,above] {$i$};
      \draw (w1) -- (x1) node[midway,above] {$m$};
      \draw (x2) -- (x1) node[midway,right] {$q$};
      \draw (x2) -- (w2) node[midway,below] {$p$};
      \draw (v3) -- (w2) node[midway,below] {$\ell $};
      \draw[very thick] (v3) -- (v6);
    \end{tikzpicture}
        \begin{tikzpicture}[v/.style={draw=black,minimum size = 10pt,ellipse,inner sep=1pt}, scale=.8]
      \node[v,label=left:{$u$}] (u) at (0,0)  {$7$};
      \node[v,label=right:{$v_1$}] (v1) at (0:1.5) {$5$};
      \node[v,label=above:{$v_2$}] (v2) at (103:1.5) {$4$};
      \node[v,label=left:{$v_3$}] (v3) at (257:1.5) {$5$};
      \node[v, very thick] (v4) at (154.5:1.5) {7};
      \node[v, very thick] (w1) at (51.5:1.5) {8};
      \node[v, very thick] (w2) at (308.5:1.5) {8};
      \node[v,xshift=1.5cm,label=above:{$w_2$}] (x1) at (36:1.5) {6};
      \node[v,xshift=1.5cm,label=below:{$w_1$}] (x2) at (-36:1.5) {6};
      \draw (u) -- (v4) node[midway,above] {$a$};
      \draw (u) -- (v2) node[midway,right] {$b$};
      \draw (u) -- (w1) node[midway,right] {$c$};
      \draw (u) -- (v1) node[midway,above] {$d$};
      \draw (u) -- (w2) node[midway,right] {$e$};
      \draw (u) -- (v3) node[midway,right] {$f$};
      \draw plot [smooth, tension=0.6] coordinates {(u.south west) (230:2) (308.5:2.5) (x2.south west)};
      \node () at (206:1) {$g$};
      \draw (v1) -- (w1) node[midway,right] {$j$};
      \draw (v1) -- (x1) node[midway,above] {$n$};
      \draw (v1) -- (x2) node[midway,below] {$o$};
      \draw (v1) -- (w2) node[midway,right] {$k$};
      \draw (v2) -- (v4) node[midway,above] {$h$};
      \draw (v2) -- (w1) node[midway,above] {$i$};
      \draw (w1) -- (x1) node[midway,above] {$m$};
      \draw (x2) -- (x1) node[midway,right] {$q$};
      \draw (x2) -- (w2) node[midway,below] {$p$};
      \draw (v3) -- (w2) node[midway,above] {$\ell $};
      \draw[very thick,bend right] (v3) to (x2);
    \end{tikzpicture}
    \begin{tikzpicture}[v/.style={draw=black,minimum size = 10pt,ellipse,inner sep=1pt}, scale=.8]
      \node[v,label=left:{$u$}] (u) at (0,0)  {$7$};
      \node[v,label=right:{$v_1$}] (v1) at (0:1.5) {$5$};
      \node[v,label=above:{$v_2$}] (v2) at (103:1.5) {$4$};
      \node[v,label=left:{$v_3$}] (v3) at (257:1.5) {$5$};
      \node[v, very thick] (v4) at (154.5:1.5) {7};
      \node[v, very thick] (w1) at (51.5:1.5) {8};
      \node[v, very thick] (w2) at (308.5:1.5) {8};
      \node[v,xshift=1.5cm,label=above:{$w_2$}] (x1) at (36:1.5) {6};
      \node[v,xshift=1.5cm,label=below:{$w_1$}] (x2) at (-36:1.5) {6};
      \draw (u) -- (v4) node[midway,above] {$a$};
      \draw (u) -- (v2) node[midway,right] {$b$};
      \draw (u) -- (w1) node[midway,right] {$c$};
      \draw (u) -- (v1) node[midway,above] {$d$};
      \draw (u) -- (w2) node[midway,right] {$e$};
      \draw (u) -- (v3) node[midway,right] {$f$};
      \draw plot [smooth, tension=0.6] coordinates {(u.south west) (170:2) (130:2) (51.5:2.1) (x1.north west)};
      \node () at (190:1) {$g$};
      \draw (v1) -- (w1) node[midway,right] {$j$};
      \draw (v1) -- (x1) node[midway,above] {$n$};
      \draw (v1) -- (x2) node[midway,below] {$o$};
      \draw (v1) -- (w2) node[midway,right] {$k$};
      \draw (v2) -- (v4) node[midway,above] {$h$};
      \draw (v2) -- (w1) node[midway,above] {$i$};
      \draw (w1) -- (x1) node[midway,above] {$m$};
      \draw (x2) -- (x1) node[midway,right] {$q$};
      \draw (x2) -- (w2) node[midway,below] {$p$};
      \draw (v3) -- (w2) node[midway,above] {$\ell $};
       \draw[very thick] plot [smooth, tension=0.6] coordinates {(v3.south east) (3.5,-1.5) (x1.south east)};
      
    \end{tikzpicture}
\caption{Notation for Lemma~\ref{lem:C8c}}
    \label{fig:C8c}
  \end{figure}
  We have $|\hat{m}|=|\hat{p}|=|\hat{\ell }|=2$, $|\hat{h}|=3$,
  $|\hat{a}|=|\hat{i}|=|\hat{q}|=4$,
  $|\hat{c}|=|\hat{e}|=|\hat{j}|=|\hat{k}|=5$, $|\hat{v_2}|=6$,
  $|\hat{f}|=|\hat{n}|=|\hat{o}|=7$, $|\hat{v_1}|=8$, $|\hat{b}|=9$
  and $|\hat{d}|=10$.
  
  Moreover, $|\hat{g}|$, $|\hat{u}|$, $|\hat{v_3}|$, $|\hat{w_1}|$, 
  $|\hat{w_2}|$ may differ depending on the presence of edges between
  these vertices that are not on the figure, and whether $g$ is
  incident to $w_1$ or $w_2$. However, we still have at least $2$
  colors in $\hat{g}$, $3$ in $\hat{w_1},\hat{v_3},\hat{w_2}$ and $6$
  in $\hat{u}$. Similarly, if $g$ is incident to $w_1$ or $w_2$, the edges incident to $w_1$ or $w_2$ get additional colors. 

  We forget $v_2$, then color $a$ with a color not in $\hat{h}$ and
  $g$ arbitrarily. Then we forget $h,b,i$. Note that afterwards, $m,n,o,p,q$ have the number of available colors given at the beginning of the proof, regardless of whether $g$ is incident to $w_1,w_2$ or not. We color $\ell$ such that
  $\hat{u}$ and $\hat{f}$ are not the same set of size 4 afterwards,
  then $p,e$. We color $m$ such that $\hat{w_1},\hat{w_2}$ are not the
  same set of size 2, then $c,k,j$. We then separate three cases:
  \begin{enumerate}
  \item Assume that $g$ is not incident to $w_1,w_2$ and that
    $\hat{f}=\hat{v_3}$. We color $u$ with a color not in $\hat{f}$
    and forget $v_3,f$. We have three cases:
    \begin{enumerate}
    \item If $\hat{w_2}=\hat{q}$ (or $\hat{w_1}=\hat{q}$ by symmetry),
      we color $w_1$ with a color not in $\hat{q}$, then $o$ with a
      color not in $\hat{q}$, and we apply Lemma~\ref{lem:diam} on
      $\mathcal{T}(G)$ with the path $dv_1nw_2q$.
    \item If $\hat{w_2}\neq\hat{q}$ and moreover,
      $\hat{q}\not\subset\hat{w_1}\cup\hat{w_2}$, we color $q$ with a
      color not in this union. We color $w_2$ with a color not in
      $\hat{w_1}$, color $n$, and apply Lemma~\ref{lem:diam} on $\mathcal{T}(G)$
      with the path $dv_1ow_1$.
    \item Otherwise, we have $\hat{w_2}=\{\alpha,\beta\}$,
      $\hat{w_1}=\{\gamma,\delta\}$ and $\hat{q}=\{\alpha,\gamma\}$
      (with possibly $\beta=\delta$). Therefore, there are two
      possible colorings for $\{w_1,w_2,q\}$ hence at least one of
      them ensures that $\hat{v_1}\neq\hat{d}$. We then apply
      Theorem~\ref{thm:clique} on $\{v_1,d,n,o\}$.
    \end{enumerate}
  \item Assume that $g$ is not incident to $w_1,w_2$ and that
    $\hat{f}\neq\hat{v_3}$.

    Since $|\hat{w_1}|\neq|\hat{w_2}|$, $\{w_1,w_2,q\}$ is
    colorable. Moreover, there are at least two different colorings
    for this set. Therefore, we may always color $w_1,w_2,q$ such that
    afterwards we have $\hat{n}\neq\hat{o}$ if they are lists of size
    two.

    If $|\hat{n}\cup\hat{o}|=3$, we can color $v_1$ with a color not
    in $\hat{n}\cup\hat{o}$, then color $u$. Since
    $\hat{f}\neq\hat{v_3}$, we can color $f,v_3$, then $d$. Finally,
    we can color $n$ and $o$ since $\hat{n}\neq\hat{o}$.
    
    Otherwise, we have $|\hat{n}\cup\hat{o}|>3$. We may thus color
    $v_3,f,u$ (since $\hat{v_3}\neq\hat{f}$) and apply
    Theorem~\ref{thm:clique} on $\{v_1,d,n,o\}$.
  \item Assume that $g$ is incident to $w_1$ or $w_2$. Free to
    exchange $w_1$ and $w_2$, we may assume that $g=uw_1$. The
    situation is depicted on Figure~\ref{fig:C8caux}. We may thus
    assume that $|\hat{f}|=|\hat{q}|=|\hat{w_2}|=2$,
    $|\hat{u}|=|\hat{v_3}|=3$,
    $|\hat{d}|=|\hat{n}|=|\hat{o}|=|\hat{w_1}|=4$ and $|\hat{v_1}|=6$.

  \begin{figure}[!h]
    \centering
    \begin{tikzpicture}[v/.style={draw=black,minimum size = 10pt,ellipse,inner sep=1pt}]
      \node[v,label=left:{$u$}] (u) at (0,0)  {$7$};
      \node[v,label=right:{$v_1$}] (v1) at (0:1.5) {$5$};
      \node[v,label=left:{$v_3$}] (v3) at (257:1.5) {$5$};
      \node[v,xshift=1.5cm,label=above:{$w_2$}] (x1) at (36:1.5) {6};
      \node[v,xshift=1.5cm,label=below:{$w_1$}] (x2) at (-36:1.5) {6};
      \draw (u) -- (v1) node[midway,above] {$d$};
      \draw (u) -- (v3) node[midway,right] {$f$};
      \draw (v1) -- (x1) node[midway,above] {$n$};
      \draw (v1) -- (x2) node[midway,right=1pt] {$o$};
      \draw (x2) -- (x1) node[midway,right] {$q$};
      \draw[very thick] (v3) -- (x2) -- (u);
    \end{tikzpicture}
    \caption{Remaining elements for Lemma~\ref{lem:C8c}}
    \label{fig:C8caux}
  \end{figure}

  If $\hat{n}=\hat{o}$, we color $w_2$ arbitrarily, otherwise, there
  exists $\alpha\in\hat{n}\setminus\hat{o}$ and we color $w_2$ not with
  $\alpha$.  We then color $q$ and $w_1$ such that
  $\hat{f}\neq\hat{v_3}$ afterwards.

  Due to the choice for the color of $w_2$, we now have
  $\hat{n}\neq\hat{o}$ if they have size two. We have
  $\hat{n}\neq\hat{o}$ and $\hat{f}\neq\hat{v_3}$, hence may now apply
  the same procedure as in the previous item.
  \end{enumerate}

\end{proof}

\begin{lemma}
  \label{lem:C8d}
The graph $G$ does not contain $C_{\ref{C8}d}$.
\end{lemma}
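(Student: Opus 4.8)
The plan is to follow the template already used for Lemmas~\ref{lem:C8a}, \ref{lem:C8b} and~\ref{lem:C8c}. First I would introduce a figure naming the $7$-vertex $u$, its weak neighbour $v_1$ of degree $5$, the $(7,7^+)$-neighbour $v_2$ of degree $4$, the weak neighbour $v_3$ of degree $5$, the common neighbour $x$ of $u,v_1,v_2$ (which in $C_{\ref{C8}d}$ has degree $8$), the degree-$7$ vertex $y$ completing the other triangle on $uv_2$, the degree-$5$ neighbour $w$ of $v_1$ lying off a triangular face, the common neighbour of $u,v_1,v_3$, and the remaining neighbours of $u$; then I would name every edge joining two of these elements. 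By minimality I colour $G$ minus all those edges, uncolour $u,v_1,v_2,v_3$ (and, when convenient, $w$), and invoke Remark~\ref{obs:lowerbound} to pin down $|\hat x|$ for each uncoloured element: the degree-$4$ vertex $v_2$ together with the edges incident to $x$ and $y$ carry the tightest constraints, while the edges incident to the degree-$5$ vertices and to $u$ itself remain comparatively free.

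Next I would record which edges among the uncoloured vertices are not yet drawn, using $C_{\ref{C1}}$ to forbid a $4$-vertex being adjacent to a $6^+$-vertex of complementary degree, and $C_{\ref{C3a}}$, $C_{\ref{C3b}}$ to forbid various edges between the $5$-vertices, so that only a short list of possible extra edges survives. Then, as before, I would forget $v_2$ (it has degree $4$, hence at most eight neighbours in $\mathcal T(G)$, so it has room once its incident edges are removed) and colour a handful of edges around $x$ and $y$ greedily, choosing colours that avoid the smallest lists, in order to collapse the uncoloured part of $\mathcal T(G)$ into a structure covered by a generic lemma. In particular $w$ is joined to $v_1$ but to no other uncoloured vertex, so the pair $(v_1,w)$ plays the rôle of the handle in Lemma~\ref{lem:fryingpan}; after a couple of forced colourings the elements that remain should form a path to which Lemma~\ref{lem:diam} applies or an even cycle handled by Corollary~\ref{cor:evencycle}. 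As a second, machine-checked route I would give a Combinatorial Nullstellensatz argument: for each admissible configuration of extra edges, exhibit a monomial $m$ of maximal degree in $P_{G'}$ with $\deg_X(m)<|\hat x|$ for every uncoloured element $x$ and with nonzero coefficient (the coefficient being verified by the companion Maple code), and conclude by Theorem~\ref{thm:nss}.

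I expect the main obstacle to be the usual one in this family of lemmas: the degenerate cases where several lists happen to coincide — for instance $\hat u$ equal to the list of some incident edge, or two lists of size two near $w$ or $x$ being equal — which break the naive greedy ordering. These are handled exactly as in the proof of Lemma~\ref{lem:C8c}: split on the suspicious equalities, recolour one already-coloured element (picked via a directed cycle in the relevant color shifting graph when needed) to destroy the coincidence, and then re-run the greedy argument. Combined with the branching over which planarity-permitted extra edges are present, this is what makes the case analysis long rather than conceptually hard.
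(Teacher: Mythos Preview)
Your overall template matches the paper's approach (uncolour a neighbourhood of $u$, forget $v_2$, peel off edges greedily, finish with Lemma~\ref{lem:diam} or a Nullstellensatz monomial), but you have missed the structural observation that drives the whole case analysis. Because $v_1$ is a \emph{weak} $5$-neighbour of $u$, it is triangulated; its five neighbours therefore form a $5$-cycle, and the degree-$5$ neighbour $w$ (which by hypothesis does not sit on a triangular face $uv_1w$) is necessarily adjacent to one of the two vertices $w_1,w_2$ that \emph{do} share a triangular face with $u$ and $v_1$. This forces an extra edge $m\in\{ww_1,ww_2\}$ into the picture, and one must also allow the degenerate identification $w=v_3$. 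The paper's proof accordingly splits into three structurally different cases ($w=v_3$; $m=ww_1$; $m=ww_2$), each with its own list sizes and its own monomial.

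Your assertion that ``$w$ is joined to $v_1$ but to no other uncoloured vertex'' is exactly where this bites: it is false when $w=v_3$ (then $w$ coincides with an uncoloured vertex and inherits all of $v_3$'s incidences), and it is false in the case $m=ww_1$ once you uncolour the common neighbour of $u,v_1,v_3$ --- which you say you intend to do. The frying-pan handle you envision at $(v_1,w)$ simply does not exist in general, and the reductions to Lemma~\ref{lem:fryingpan} or Corollary~\ref{cor:evencycle} that you sketch will not go through uniformly. Once you build in the three-way split on the position of $w$, the remainder of your plan (forget $v_2$, colour around the degree-$8$ common neighbour, then finish with Lemma~\ref{lem:diam}) is essentially what the paper does.
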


\begin{proof}
  We use the notation depicted in Figure~\ref{fig:C8d}. By definition,
  there is an edge $m$ between $w$ and either $w_1$ or $w_2$. 
  \begin{figure}[!h]
    \centering
    \begin{tikzpicture}[v/.style={draw=black,minimum size = 10pt,ellipse,inner sep=1pt}]
      \node[v,label=left:{$u$}] (u) at (0,0)  {$7$};
      \node[v,label=above right:{$v_1$}] (v1) at (0:1.5) {$5$};
      \node[v,label=above:{$v_2$}] (v2) at (103:1.5) {$4$};
      \node[v,label=left:{$v_3$}] (v3) at (257:1.5) {$5$};
      \node[v, very thick] (v4) at (154.5:1.5) {7};
      \node[v,label=right:{$w_2$},very thick] (w1) at (51.5:1.5) {8};
      \node[v, very thick] (v6) at (206:1.5) {$8$};
      \node[v,label=right:{$w_1$},very thick] (w2) at (308.5:1.25) {8};
      \draw (u) -- (v4) node[midway,above] {$a$};
      \draw (u) -- (v2) node[midway,right] {$b$};
      \draw (u) -- (w1) node[midway,right] {$c$};
      \draw (u) -- (v1) node[midway,above] {$d$};
      \draw (u) -- (w2) node[midway,right] {$e$};
      \draw (u) -- (v3) node[midway,right] {$f$};
      \draw (u) -- (v6) node[midway,below] {$g$};
      \draw (v2) -- (v4) node[midway,above] {$h$};
      \draw (v2) -- (w1) node[midway,above] {$i$};
      \draw  (v1) -- (w1) node[midway,right] {$j$};
      \draw (v1) -- (w2) node[midway,right] {$k$};
      \draw (v1) [out=-30,in=-90,bend left=90] to (v3);
      \node (l) at (308.5:2) {$\ell$};
      \draw[ very thick] (v3) -- (w2) node[midway,below] {};
      \draw[ very thick] (v3) -- (v6);
      \tikzset{xshift=4cm}
      \node[v,label=left:{$u$}] (u) at (0,0)  {$7$};
      \node[v,label=above right:{$v_1$}] (v1) at (0:1.5) {$5$};
      \node[v,label=above:{$v_2$}] (v2) at (103:1.5) {$4$};
      \node[v,label=left:{$v_3$}] (v3) at (257:1.5) {$5$};
      \node[v, very thick] (v4) at (154.5:1.5) {7};
      \node[v,label=right:{$w_2$}, very thick] (w1) at (51.5:1.5) {8};
      \node[v, very thick] (v6) at (206:1.5) {$8$};
      \node[v,label=below:{$w_1$}] (w2) at (308.5:1.5) {8};
      \node[v,xshift=1.5cm,label=above:{$w$}] (w) at (-45:1.5) {5};
      \draw (u) -- (v4) node[midway,above] {$a$};
      \draw (u) -- (v2) node[midway,right] {$b$};
      \draw (u) -- (w1) node[midway,right] {$c$};
      \draw (u) -- (v1) node[midway,above] {$d$};
      \draw (u) -- (w2) node[midway,right] {$e$};
      \draw (u) -- (v3) node[midway,right] {$f$};
      \draw (u) -- (v6) node[midway,below] {$g$};
      \draw (v2) -- (v4) node[midway,above] {$h$};
      \draw (v2) -- (w1) node[midway,above] {$i$};
      \draw (v1) -- (w1) node[midway,right] {$j$};
      \draw (v1) -- (w2) node[midway,right] {$k$};
      \draw (v1) -- (w) node[midway,above] {$\ell$};
      \draw (w2) -- (w) node[midway,below] {$m$};
      \draw (v3) -- (w2) node[midway,below] {$n$};
      \draw [ very thick] (v3) -- (v6);    
      \tikzset{xshift=5cm}
      \node[v,label=left:{$u$}] (u) at (0,0)  {$7$};
      \node[v,label=below right:{$v_1$}] (v1) at (0:1.5) {$5$};
      \node[v,label=above:{$v_2$}] (v2) at (103:1.5) {$4$};
      \node[v,label=left:{$v_3$}] (v3) at (257:1.5) {$5$};
      \node[v, very thick] (v4) at (154.5:1.5) {7};
      \node[v,label=above:{$w_2$}] (w1) at (51.5:1.5) {8};
      \node[v, very thick] (v6) at (206:1.5) {$8$};
      \node[v,label=right:{$w_1$},very thick] (w2) at (308.5:1.5) {8};
      \node[v,xshift=1.5cm,label=above:{$w$}] (w) at (45:1.5) {5};
      \draw (u) -- (v4) node[midway,above] {$a$};
      \draw (u) -- (v2) node[midway,right] {$b$};
      \draw (u) -- (w1) node[midway,right] {$c$};
      \draw (u) -- (v1) node[midway,above] {$d$};
      \draw (u) -- (w2) node[midway,right] {$e$};
      \draw (u) -- (v3) node[midway,right] {$f$};
      \draw (u) -- (v6) node[midway,below] {$g$};
      \draw (v2) -- (v4) node[midway,above] {$h$};
      \draw (v2) -- (w1) node[midway,above] {$i$};
      \draw (v1) -- (w1) node[midway,right] {$j$};
      \draw (v1) -- (w2) node[midway,right] {$k$};
      \draw (v1) -- (w) node[midway,above] {$\ell$};
      \draw (w1) -- (w) node[midway,above] {$m$};
      \draw[ very thick] (v3) -- (w2) node[midway,below] {};
      \draw[ very thick] (v3) -- (v6);
    \end{tikzpicture}
\caption{Notation for Lemma~\ref{lem:C8d}}
    \label{fig:C8d}
  \end{figure}
  We separate three cases depending on whether $w=v_3$, and whether
  $m=ww_1$ or $m=ww_2$. In each case, we color by minimality the graph
  $G'$ obtained from $G$ by removing $a,\ldots,\ell$ and the labeled
  edges incident to $m$ if $w\neq v_3$. We then uncolor
  $u,v_1,v_2,v_3$ and the endpoints of $m$ if $w\neq v_3$.
  
  Observe that if $w\neq v_3$, there is no edge $v_3w$ nor $v_1v_3$ in
  $E(G)$ due to $C_{\ref{C3a}}$. Moreover, since $v_3$ is a weak
  neighbor of $u$, we cannot have $g=uw$ either (otherwise,
  $v_1wv_3$ creates $C_{\ref{C3a}}$).

  We have
  \begin{enumerate}
  \item If $w=v_3$, $|\hat{g}|=|\hat{k}|=2$,
    $|\hat{e}|=|\hat{h}|=|\hat{i}|=|\hat{j}|=3$,
    $|\hat{v_3}|=|\hat{a}|=|\hat{c}|=4$,
    $|\hat{u}|=|\hat{v_1}|=|\hat{v_2}|=|\hat{\ell}|=6$, $|\hat{f}|=7$
    and $|\hat{b}|=|\hat{d}|=9$.
  \item If $m=ww_1$, $|\hat{w_1}|=|\hat{g}|=2$,
    $|\hat{h}|=|\hat{i}|=|\hat{j}|=|\hat{m}|=|\hat{n}|=3$,
    $|\hat{v_3}|=|\hat{w}|=|\hat{a}|=|\hat{c}|=4$, $|\hat{k}|=5$,
    $|\hat{v_2}|=|\hat{e}|=|\hat{\ell}|=6$,
    $|\hat{u}|=|\hat{v_1}|=|\hat{f}|=7$, and $|\hat{b}|=|\hat{d}|=9$.
  \item If $m=ww_2$, $|\hat{g}|=|\hat{k}|=2$,
    $|\hat{e}|=|\hat{h}|=|\hat{m}|=3$, $|\hat{w}|=|\hat{a}|=4$,
    $|\hat{c}|=|\hat{i}|=|\hat{j}|=5$, $|\hat{f}|=|\hat{\ell}|=6$,
    $|\hat{v_1}|=|\hat{v_2}|=|\hat{u}|=7$ and $|\hat{b}|=|\hat{d}|=9$. Moreover, $|\hat{v_3}|=|\hat{w_2}|$ is $2$ or $3$ depending on whether there is an edge $v_3w_2$.
  \end{enumerate}
  
  In each case, we forget $v_2$ and  conclude using Theorem~\ref{thm:nss}.
\end{proof}

 \subsection{Configuration $C_{\ref{C9}}$}
\begin{lemma}
  \label{lem:C9}
The graph $G$ does not contain a $7$-vertex $u$ with three weak neighbors
  $v_1,v_2,v_3$ of degree $4$ and a neighbor $v_4$ of degree $7$.
\end{lemma}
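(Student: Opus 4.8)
The plan is to first exploit the rigidity forced by the reducibility of $C_{\ref{C1}}$. Since $v_1,v_2,v_3$ have degree~$4$ and $G$ does not contain $C_{\ref{C1}}$, every neighbour of each $v_i$ distinct from $u$ has degree at least~$7$; in particular $v_1,v_2,v_3$ are pairwise non-adjacent, and since each $v_i$ is a weak neighbour of $u$ they are also pairwise non-consecutive in the rotation around~$u$. Consequently, up to relabelling, the seven neighbours of $u$ occur in the rotation as $v_1,a_1,a_2,v_2,a_3,v_3,a_4$; weakness of $v_1,v_2,v_3$ forces the six edges $v_1a_1$, $v_1a_4$, $v_2a_2$, $v_2a_3$, $v_3a_3$, $v_3a_4$ to be present; each $v_i$ has exactly one further neighbour $z_i$, and $d(a_1),\dots,d(a_4)\geqslant 7$ while $d(z_1),d(z_2),d(z_3)\geqslant 7$. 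The degree-$7$ neighbour $v_4$ of $u$ is one of $a_1,\dots,a_4$, each of which is adjacent to at least one $v_i$; I would fix notation so that $v_4=a_1$ and $v_1a_1=v_1v_4\in E(G)$ (the case where $v_4$ plays the role of a solo neighbour $a_3$ or $a_4$ is handled analogously, with more structure available).

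I would then take $G'$ to be $G$ minus the six edges $uv_1,uv_2,uv_3,v_1v_4,v_1a_4,v_1z_1$, colour $G'$ by minimality, uncolour $u,v_1,v_2,v_3$, and forget $v_1,v_2,v_3$ (each has many more available colours than uncoloured neighbours in $\mathcal{T}(G)$). By Remark~\ref{obs:lowerbound} the worst case is $|\hat{u}|=2$, $|\hat{uv_1}|=6$, $|\hat{uv_2}|=|\hat{uv_3}|=3$, $|\hat{v_1v_4}|=3$, and $|\hat{v_1a_4}|=|\hat{v_1z_1}|=2$. In this situation $\mathcal{T}(G\setminus G')$ is the union of two cliques sharing the single vertex $uv_1$: a $K_4$ on $\{u,uv_1,uv_2,uv_3\}$ and a $K_4$ on $\{uv_1,v_1v_4,v_1a_4,v_1z_1\}$.

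To extend the colouring I would first colour the triangle $\{v_1v_4,v_1a_4,v_1z_1\}$, which has list sizes $3,2,2$ and hence is $L$-colourable by Theorem~\ref{thm:clique}; afterwards $uv_1$ still has at least three available colours, so I may choose its colour $\gamma_1$ among them. It then remains to colour the triangle $\{u,uv_2,uv_3\}$ with all lists reduced by $\gamma_1$, which again succeeds by Theorem~\ref{thm:clique} unless one is in the degenerate case $\hat{uv_2}=\hat{uv_3}=S$ for a $3$-element set $S$ with $\hat{u}\subseteq S$, all admissible values of $\gamma_1$ lying in $S$. To escape it, one re-colours the triangle $\{v_1v_4,v_1a_4,v_1z_1\}$ so that at least one of the (at least three) colours of $\hat{uv_1}$ outside $S$ survives — for instance by colouring one of these three edges with a colour of $S$, or with a colour outside $\hat{uv_1}$, whenever this is possible.

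The main obstacle is the innermost sub-case, where this is \emph{not} possible, i.e.\ where $\hat{v_1v_4}$ equals $\hat{uv_1}\setminus S$ and $\hat{v_1a_4},\hat{v_1z_1}$ are contained in it. Ruling this out requires either extracting further structural information — which $v_i$-neighbour $z_i$ is, whether extra edges such as $a_1a_2$ or $a_1a_4$ occur, or the fact that $v_1$ being a $(7,7)$- or $(7,8)$-neighbour of $u$ enlarges one of the lists $\hat{v_1a_4},\hat{v_1z_1}$ — or, more robustly, replacing the hand argument by the Combinatorial Nullstellensatz: one exhibits a suitable monomial of $P_{G'}$, after a short case distinction on which of the edges among $v_1,v_2,v_3,a_1,\dots,a_4,z_1,z_2,z_3$ not shown in the figure are actually present in $G$, and invokes Theorem~\ref{thm:nss}. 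Identifying that monomial (or finishing the degenerate case by hand) is the crux of the argument.
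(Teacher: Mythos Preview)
Your reduction does not go through: the subgraph $G'$ you chose is too small, and the ``innermost sub-case'' you flag is not a technicality but a genuine obstruction that neither a case analysis nor the Combinatorial Nullstellensatz can remove.

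Concretely, after forgetting $v_1,v_2,v_3$ the seven remaining elements induce in $\mathcal{T}(G)$ two copies of $K_4$ glued along $uv_1$. The associated polynomial therefore factors as a product of two Vandermonde determinants on four variables each, sharing the variable $E:=UV_1$. Every monomial of a $K_4$-Vandermonde has exponent vector a permutation of $(0,1,2,3)$; your bounds $|\hat u|,|\hat{v_1a_4}|,|\hat{v_1z_1}|\leqslant 2$ and $|\hat{uv_2}|,|\hat{uv_3}|,|\hat{v_1v_4}|\leqslant 3$ force the exponent $3$ onto $E$ in \emph{both} factors, so any monomial compatible with the other bounds has $E$-degree $6$, violating $\deg_E<|\hat{uv_1}|=6$. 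Hence Theorem~\ref{thm:nss} yields nothing for this $G'$. Correspondingly, the explicit assignment $\hat u=\{1,2\}$, $\hat{uv_2}=\hat{uv_3}=\{1,2,3\}$, $\hat{uv_1}=\{1,\dots,6\}$, $\hat{v_1v_4}=\{4,5,6\}$, $\hat{v_1a_4}=\{4,5\}$, $\hat{v_1z_1}=\{5,6\}$ meets all your size constraints yet admits no proper colouring of these seven elements. Enlarging $\hat{v_1a_4}$ or $\hat{v_1z_1}$ to size~$3$ (your suggested structural escapes) does not help either: take all three equal to $\{4,5,6\}$.

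The paper avoids this by deleting a much larger edge set: all seven edges at $u$ together with the six ``outer'' edges joining $v_1,v_2,v_3$ to their rotation neighbours, thirteen edges in all. This makes $|\hat{uv_i}|=9$ for each $i$ and $|\hat u|=6$, which supplies the slack needed both for a short hand argument (forget $v_1,v_2,v_3$, shift a colour between the large lists $\hat b$ and $\hat f$ to break the symmetry, then reduce to Lemma~\ref{lem:fryingpan}) and for an explicit Nullstellensatz monomial. The paper also invokes $C_{\ref{C2}}$, not only $C_{\ref{C1}}$, to locate $v_4$ in the rotation --- a point you treat only informally. To salvage your approach you must enlarge $G'$ substantially; removing only six edges cannot work.
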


\begin{proof}
  As $G$ does not contain $C_{\ref{C2}}$, we may assume that $v_4$ is
  adjacent to only one vertex among $\{v_1,v_2,v_3\}$. Moreover, due
  to $C_{\ref{C1}}$, we may assume (up to renaming the vertices) that
  the situation is depicted in Figure~\ref{fig:C9}. By minimality, we
  color $G'=G\setminus\{a,\ldots,m\}$ and uncolor $u,v_1,v_2,v_3$.
  \begin{figure}[!h]
    \centering
    \begin{tikzpicture}[v/.style={draw=black,minimum size = 10pt,ellipse,inner sep=1pt}]
      \node[v,label=left:{$u$}] (u) at (0,0)  {$7$};
      \node[v,label=below right:{$v_2$}] (v1) at (0:1.5) {4};
      \node[v,label=above:{$v_1$}] (v2) at (103:1.5) {$4$};
      \node[v,label=left:{$v_3$}] (v3) at (257:1.5) {4};
      \node[v, very thick] (v4) at (154.5:1.5) {$8$};
      \node[v, very thick] (w1) at (51.5:1.5) {8};
      \node[v,label=left:{$v_4$}] (v6) at (206:1.5) {7};
      \node[v, very thick] (w2) at (308.5:1.5) {8};
      \draw (u) -- (v4) node[midway,above] {$a$};
      \draw (u) -- (v2) node[midway,right] {$b$};
      \draw (u) -- (w1) node[midway,right] {$c$};
      \draw (u) -- (v1) node[midway,above] {$d$};
      \draw (u) -- (w2) node[midway,right] {$e$};
      \draw (u) -- (v3) node[midway,right] {$f$};
      \draw (u) -- (v6) node[midway,below] {$g$};
      \draw (v2) -- (v4) node[midway,above] {$h$};
      \draw (v2) -- (w1) node[midway,above] {$i$};
      \draw (v1) -- (w1) node[midway,right] {$j$};
      \draw (v1) -- (w2) node[midway,right] {$k$};
      \draw (v3) -- (w2)  node[midway,below] {$\ell$};
      \draw (v3) -- (v6)  node[midway,above] {$m$};
    \end{tikzpicture}
\caption{Notation for Lemma~\ref{lem:C9}}
    \label{fig:C9}
  \end{figure}
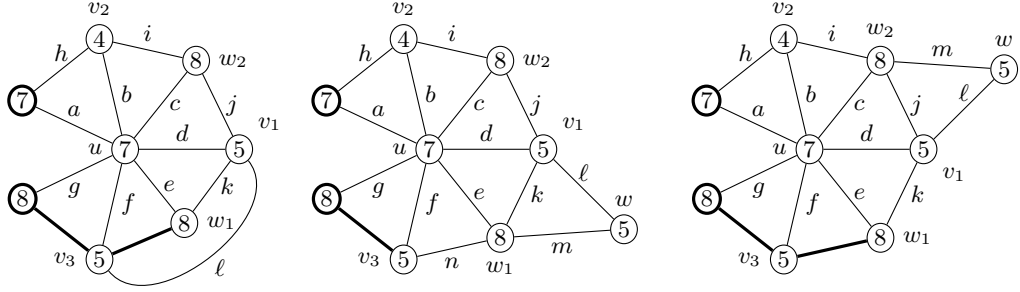

  We have $|\hat{h}|=2$,
  $|\hat{a}|=|\hat{i}|=|\hat{j}|=|\hat{k}|=|\hat{\ell}|=|\hat{m}|=3$,
  $|\hat{c}|=|\hat{e}|=|\hat{g}|=4$,
  $|\hat{u}|=|\hat{v_1}|=|\hat{v_2}|=|\hat{v_3}|=6$ and
  $|\hat{b}|=|\hat{d}|=|\hat{f}|=9$.

  We forget $v_1,v_2,v_3$ and  conclude using Theorem~\ref{thm:nss}.
\end{proof}

\subsection{Configuration $C_{\ref{C10}}$}
Due to the definitions of $C_{\ref{C10}}$ and $S_3$-neighbor, if $G$
contains $C_{\ref{C10}}$, then we are in one of the following cases:
\begin{itemize}
\item[$\bullet$] $C_{\ref{C10}a}$: $\dist_u(v_1,v_3)=2$ and the common neighbor
  $w$ of $v_1,u$ and $v_3$ has degree seven.
\item[$\bullet$] $C_{\ref{C10}b}$: $\dist_u(v_1,v_3)=3$ and $u,v_3$ share a
  common neighbor $w_1$ of degree six.
\item[$\bullet$] $C_{\ref{C10}c}$: $\dist_u(v_1,v_3)=3$ and $v_3$ has two
  neighbors $w_2,w_3$ of degree six.
\item[$\bullet$] $C_{\ref{C10}d}$: $\dist_u(v_1,v_3)=3$ and $v_3$ has a neighbor
  $w$ of degree five.
\end{itemize}
We dedicate a lemma to each of these configurations.

\begin{lemma}
\label{lem:C10a}
The graph $G$ does not contain $C_{\ref{C10}a}$.
\end{lemma}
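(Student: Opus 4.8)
The plan is to reuse the scheme of Lemma~\ref{lem:C9}. I would fix a figure naming $u$, the neighbours $v_1,v_2$ of degree $4$ and $v_3$ of degree $5$, their common degree-$7$ neighbour $w$ (so that $uv_1w$ and $uwv_3$ are triangular faces), the vertex $x$ realising $\dist_u(v_1,v_2)=2$ (so $uxv_1$ and $uv_2x$ are triangular), and the neighbour $z$ of $u$ with $uv_3z$ triangular, together with all edges of the triangular faces at $u$. Then I colour by minimality the graph $G'$ obtained by removing the seven edges at $u$ and the (at most six) rim edges of those triangular faces, and uncolour $u,v_1,v_2,v_3,w$. Before fixing worst-case list sizes via Remark~\ref{obs:lowerbound}, I record what $C_{\ref{C1}}$, $C_{\ref{C3a}}$ and $C_{\ref{C3b}}$ forbid: $v_1v_2,v_1v_3,v_2v_3\notin E(G)$; every off-configuration neighbour of $v_1$ or $v_2$ has degree at least $7$; and every neighbour of $v_3$ has degree at least $5$, at most one of them other than $u,w$ being equal to $5$. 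The surviving freedom — which of $wv_2,wx,wy,wz$ is an edge, and whether $v_3$'s two ``far'' neighbours coincide with neighbours of $w$ — yields the case split, exactly as in Lemma~\ref{lem:C7c}.

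Being of degree $4$, the vertices $v_1$ and $v_2$ can be forgotten. The uncoloured part of $\mathcal{T}(G)$ then reduces to $u$, $v_3$, $w$ and a bounded set of edges: essentially the octahedron-like total graph of the triangle $uwv_3$, plus the other edges at $u$, plus a handle through $v_3$. I colour the tightly constrained elements first — the edges from $u$ to degree-$8$ vertices and one of the edges at $v_3$ have the smallest lists, and $|\hat w|$ is smallest among the vertices. The crucial feature is that $v_3$ is triangulated, so the elements around it form an even cycle or a frying pan, to which Corollary~\ref{cor:evencycle} or Lemma~\ref{lem:fryingpan} applies; and I would use the standard device of deleting a colour $\alpha\in\hat e\setminus\hat{e'}$ from a third list beforehand, so that whichever of $e,e'$ keeps a size-$2$ list differs from the other, letting me close with Lemma~\ref{lem:diam} on a suitable path of $\mathcal{T}(G)$. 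In each case of the split the same residual configuration is reached and disposed of by the same lemma.

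As an alternative, for each case I would exhibit a monomial $m$ of $P_{G'}$ of maximal degree with $\deg_X(m)<|\hat x|$ for every uncoloured element $x$ and with non-zero coefficient (checked by the accompanying Maple code), and invoke Theorem~\ref{thm:nss}. The main obstacle is twofold: the case analysis on the optional edges at $w$ and on which neighbour of $v_3$ has degree $5$ multiplies the number of sub-cases, and the list sizes shift between them; and several elements — an edge from $u$ to a degree-$8$ vertex, an edge at $v_3$, and $w$ — can have as few as one or two available colours, so the order in which they and their uncoloured neighbours are coloured must be arranged so that no element ever runs out. Finding a single colouring scheme (together with the $\alpha$-deletion trick) that handles all sub-cases uniformly, rather than treating each by brute force, is the delicate point; where it fails, the per-case Nullstellensatz monomial serves as a fallback.
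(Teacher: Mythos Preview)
Your sketch overcomplicates the situation and rests on a misconception. You anticipate a case split on ``which of $wv_2,wx,wy,wz$ is an edge'' and on where $v_3$'s far neighbours sit, but none of this is needed (and $y$ is never defined). In $C_{\ref{C10}a}$ the vertex $w$ has exactly three uncoloured incident elements --- the edges $uw$, $v_1w$, $v_3w$ --- and its four remaining edges stay coloured regardless of where they go; likewise only two edges at $v_3$ (namely $uv_3$ and $v_3w$) are removed, the other three remaining coloured. So your sentence ``the crucial feature is that $v_3$ is triangulated, so the elements around it form an even cycle or a frying pan'' is based on the false premise that all five edges at $v_3$ are uncoloured. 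The list sizes are therefore completely determined by Remark~\ref{obs:lowerbound}, with no sub-cases.

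The paper's proof is shorter and more direct than what you outline. After removing the seven edges at $u$ and the five rim edges, uncolouring $u,v_1,v_2,v_3,w$ and forgetting $v_1,v_2$, one colours $a=uw$ with a colour not in $\hat w\cup\hat\ell$, then greedily colours $f,e,k,c,j,i$, then colours $u$ avoiding $\hat w$. At this point one removes a colour $\alpha\in\hat b\setminus\hat d$ from $\hat h$ and applies Lemma~\ref{lem:diam} to the path $g\,v_3\,\ell\,w\,h$ in $\mathcal{T}(G)$; the $\alpha$-trick then finishes $b$ and $d$. A single Nullstellensatz monomial also works, with no per-case fallback required. Your general instincts (forget the degree-$4$ vertices, use the $\alpha$-deletion device, finish with Lemma~\ref{lem:diam}) are right, but the promised case analysis and the ``cycle around $v_3$'' structure are phantoms.
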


\begin{proof}
  We use the notation depicted in Figure~\ref{fig:C10a}. By
  minimality, we color $G'=G\setminus\{a,\ldots,\ell\}$ and uncolor
  $u,v_1,v_2,v_3,w$.
  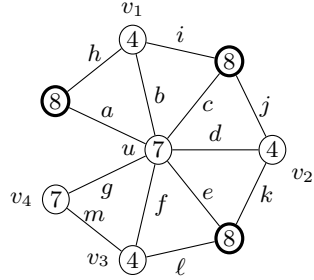
\begin{figure}[!h]
    \centering
    \begin{tikzpicture}[v/.style={draw=black,minimum size = 10pt,ellipse,inner sep=1pt}]
      \node[v,label=left:{$u$}] (u) at (0,0)  {$7$};
      \node[v,label=left:{$v_1$}] (v1) at (205.5:1.5) {4};
      \node[v,label=above:{$v_2$}] (v2) at (103.5:1.5) {$4$};
      \node[v,label=right:{$v_3$}] (v3) at (-51.5:1.5) {5};
      \node[v, very thick] (v4) at (51.5:1.5) {$8$};
      \node[v, very thick] (w1) at (154:1.5) {8};
      \node[v, very thick] (v6) at (0:1.5) {$8$};
      \node[v,label=below:{$w$}] (w2) at (-103:1.5) {7};
      \draw (u) -- (w2) node[midway,left] {$a$};
      \draw (u) -- (v1) node[midway,above] {$b$};
      \draw (u) -- (w1) node[midway,above] {$c$};
      \draw (u) -- (v2) node[midway,right] {$d$};
      \draw (u) -- (v4) node[midway,right] {$e$};
      \draw (u) -- (v6) node[midway,below] {$f$};
      \draw (u) -- (v3) node[midway,left] {$g$};
      \draw (v1) -- (w2) node[midway,below] {$h$};
      \draw (v1) -- (w1) node[midway,left] {$i$};
      \draw (v2) -- (w1) node[midway,above] {$j$};
      \draw (v2) -- (v4) node[midway,above] {$k$};
      \draw (v3) -- (w2)  node[midway,below] {$\ell$};
      \draw[ very thick] (v3) -- (v6);
    \end{tikzpicture}
\caption{Notation for Lemma~\ref{lem:C10a}}
    \label{fig:C10a}
  \end{figure}

We have $|\hat{f}|=|\hat{k}|=|\hat{w}|=2$,
  $|\hat{e}|=|\hat{i}|=|\hat{j}|=|\hat{\ell}|=3$,
  $|\hat{v_3}|=|\hat{c}|=4$, $|\hat{h}|=5$, $|\hat{a}|=|\hat{v_2}|=6$,
  $|\hat{u}|=|\hat{v_1}|=|\hat{g}|=7$ and $|\hat{b}|=|\hat{d}|=9$.

  We forget $v_1,v_2$ and  conclude using Theorem~\ref{thm:nss}.
\end{proof}

\begin{lemma}
  \label{lem:C10b}
  The graph $G$ does not contain $C_{\ref{C10}b}$.
\end{lemma}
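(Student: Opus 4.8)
The plan is to follow the now-familiar scheme of Lemmas~\ref{lem:C9} and~\ref{lem:C10a}. I would first fix notation according to a figure: the central $7$-vertex $u$; its degree-$4$ neighbor $v_1$, which is a $(7,7^+)$-neighbor, so the two vertices completing the triangular faces at $uv_1$ have degrees $7$ and at least $7$; its degree-$4$ weak neighbor $v_2$ with $\dist_u(v_1,v_2)=2$, hence sharing with $u$ and $v_1$ a common neighbor of degree at least $7$; its degree-$5$ weak neighbor $v_3$ with $\dist_u(v_1,v_3)=3$, together with the common neighbor $w_1$ of $u$ and $v_3$ of degree $6$ and the other common neighbor of $u$ and $v_3$; and the remaining vertices forced by the triangulations around $v_1$, $v_2$ and $v_3$. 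All seven edges at $u$, together with the triangulation edges at $v_1,v_2,v_3$, get names $a,b,c,\dots$. Since $G$ contains none of $C_{\ref{C1}}$, $C_{\ref{C2}}$, $C_{\ref{C3a}}$, $C_{\ref{C3b}}$, only a few of the \emph{a priori} possible edges among $\{v_1,v_2,v_3,w_1,\dots\}$ can actually occur; I would either enumerate these or, as in Lemma~\ref{lem:C10a}, absorb them into the worst-case list sizes.

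Next, by minimality I would color the subgraph $G'$ obtained from $G$ by deleting all the named edges, uncolor $u,v_1,v_2,v_3$ (and possibly $w_1$), and use Remark~\ref{obs:lowerbound} to assume $|\hat{x}|=10-c_x$ for every remaining element. The two degree-$4$ vertices $v_1$ and $v_2$ have at most eight neighbors in $\mathcal{T}(G)$, so I would \emph{forget} them immediately. The crucial numerology is the same as in Lemma~\ref{lem:C10a}: removing the seven edges at $u$ leaves $|\hat{u}|=7$; the triangulated degree-$5$ vertex $v_3$ keeps $|\hat{v_3}|=4$; and, because $d(w_1)=6$ rather than $7$, the edge $v_3w_1$ keeps $4$ available colors instead of $3$, and more generally the edges of $\mathcal{T}(G)$ near $v_3$ keep $3$ rather than $2$ colors. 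This extra unit of slack is what makes the configuration reducible.

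I would then extend the coloring greedily, choosing the order so that the uncolored part always collapses to a configuration covered by one of the generic tools: an even cycle (Corollary~\ref{cor:evencycle}), the frying-pan graph of Lemma~\ref{lem:fryingpan}, a $\{v_iv_{i+2}\}$-path (Lemma~\ref{lem:diam}), or a clique (Theorem~\ref{thm:clique}). Concretely, I expect to color first the edges at $u$ pointing towards the high-degree neighbors, reserving a color $\alpha\in\hat{e}\setminus\hat{f}$ for a suitable pair of edges at $u$ (the device used in the $\alpha\in\hat{b}\setminus\hat{d}$ step of Lemma~\ref{lem:C10a}) so that they end up with distinct singleton lists, then to color the edges around $w_1$ and $v_3$, and finally to finish with one of the lemmas above applied to the path through $v_3$, $w_1$ and the edge $uv_3$. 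The case split would be over the presence of the few possible extra edges, essentially an edge from $v_2$ to $w_1$ or from $v_3$ to the degree-$\geqslant 7$ triangle-neighbor of $u$ along $uv_1v_2$.

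The main obstacle I anticipate is purely combinatorial bookkeeping: $u$ has degree $7$, so after the deletions there are many uncolored elements ($u$ itself, its seven incident edges, and the triangulation edges around $v_1,v_2,v_3$), and one must choose the coloring order together with the reserved color so that \emph{every} resulting subproblem meets the hypotheses of the invoked lemma. As a safeguard, and in keeping with the rest of the paper, I would also give the Combinatorial Nullstellensatz alternative: for each edge pattern, exhibit a monomial $m$ of $P_{G'}$ of maximal degree with $\deg_X m<|\hat{x}|$ for every element $x$ and a (computer-checked) non-zero coefficient, so that Theorem~\ref{thm:nss} yields the extension at once.
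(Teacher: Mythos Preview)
Your plan matches the paper's approach: delete the edges around $u,v_1,v_2,v_3$, color by minimality, uncolor the relevant vertices, forget the two $4$-vertices, then use the reserved-color device to peel off pairs of edges at $u$ before finishing greedily; the Nullstellensatz alternative is also given. Two details to sharpen against the paper: (i) you must also uncolor the degree-$7$ vertex $w_2$ sitting between $v_1$ and $w_1$ on the triangle path around $u$ (so the uncolored set is $u,v_1,v_2,v_3,w_1,w_2$, and the only relevant extra edge is $v_3w_2$, which gives $|\hat{v_3}|\in\{5,6\}$ and $|\hat{w_2}|\in\{2,3\}$, not the $|\hat{v_3}|=4$ you quote); (ii) the paper applies the reserved-color trick \emph{twice}, first with $\alpha\in\hat{b}\setminus\hat{d}$ removed from $\hat{h},\hat{i}$ to forget $b,d$, and then with $\beta\in\hat{u}\setminus\hat{f}$ removed from $\hat{w_1},\hat{w_2}$ so that after the greedy sequence $w_2,n,w_1,m,v_3,h,i,j,c,e,a,g$ one can finish with $u,f$ --- a single application does not leave enough slack.
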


\begin{proof}
  We use the notation depicted in Figure~\ref{fig:C10b}. By
  minimality, we color $G'=G\setminus\{a,\ldots,n\}$ and uncolor
  $u,v_1,v_2,v_3,w_1,w_2$.
  \begin{figure}[!h]
    \centering
    \begin{tikzpicture}[v/.style={draw=black,minimum size = 10pt,ellipse,inner sep=1pt}]
      \node[v,label=left:{$u$}] (u) at (0,0)  {$7$};
      \node[v,label=above:{$v_2$}] (v1) at (-257:1.5) {4};
      \node[v,label=left:{$v_1$}] (v2) at (-154:1.5) {$4$};
      \node[v,label=right:{$v_3$}] (v3) at (0:1.5) {5};
      \node[v,label=below:{$w_2$}] (v4) at (-102.5:1.5) {7};
      \node[v, very thick] (w1) at (-205.5:1.5) {8};
      \node[v,label=below:{$w_1$}] (v6) at (-51:1.5) {6};
      \node[v, very thick] (w2) at (51.5:1.5) {8};
      \draw (u) -- (v4) node[midway,left] {$a$};
      \draw (u) -- (v2) node[midway,above] {$b$};
      \draw (u) -- (w1) node[midway,above] {$c$};
      \draw (u) -- (v1) node[midway,right] {$d$};
      \draw (u) -- (w2) node[midway,right] {$e$};
      \draw (u) -- (v3) node[midway,below] {$f$};
      \draw (u) -- (v6) node[midway,left] {$g$};
      \draw (v2) -- (v4) node[midway,below] {$h$};
      \draw (v2) -- (w1) node[midway,left] {$i$};
      \draw (v1) -- (w1) node[midway,above] {$j$};
      \draw (v1) -- (w2) node[midway,above] {$k$};
      \draw (v3) -- (w2)  node[midway,right] {$\ell$};
      \draw (v3) -- (v6)  node[midway,right] {$m$};
      \draw (v4) -- (v6)  node[midway,below] {$n$};
    \end{tikzpicture}
\caption{Notation for Lemma~\ref{lem:C10b}}
    \label{fig:C10b}
  \end{figure}
We have $|\hat{\ell}|=2$,
  $|\hat{i}|=|\hat{j}|=|\hat{k}|=|\hat{n}|=3$,
  $|\hat{w_1}|=|\hat{c}|=|\hat{e}|=4$, $|\hat{h}|=|\hat{m}|=5$,
  $|\hat{a}|=|\hat{v_2}|=6$, $|\hat{g}|=|\hat{v_1}|=7$,
  $|\hat{u}|=|\hat{f}|=8$ and $|\hat{b}|=|\hat{d}|=9$.
  
  If $v_3w_2\in E(G)$, we have $|\hat{v_3}|=6$ and
  $|\hat{w_2}|=3$. Otherwise, we may assume that $|\hat{v_3}|=5$ and
  $|\hat{w_2}|=2$.

  We forget $v_1,v_2$ and conclude using Theorem~\ref{thm:nss}.
\end{proof}

\begin{lemma}
  \label{lem:C10c}
  The graph $G$ does not contain $C_{\ref{C10}c}$.
\end{lemma}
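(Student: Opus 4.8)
The plan is to follow the template of the proofs of Lemmas~\ref{lem:C10a} and~\ref{lem:C10b}. First I would fix notation as in a figure analogous to Figure~\ref{fig:C10b}: let $u$ be the central $7$-vertex, $v_1,v_2$ its two weak $4$-neighbours with $\dist_u(v_1,v_2)=2$, and $v_3$ the weak $5$-neighbour with $\dist_u(v_1,v_3)=3$. Since $v_3$ is an $S_3$-neighbour through the $(7^+,7^+)$ bullet and, being a weak $5$-neighbour, is a triangulated $5$-vertex, its neighbourhood is, in cyclic order, $u$, an apex $a_1$ of degree at least $7$, the two degree-$6$ vertices $w_2,w_3$, and an apex $a_2$ of degree at least $7$, with consecutive pairs adjacent (so $a_1w_2,w_2w_3,w_3a_2\in E(G)$). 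Name every edge incident to $u,v_1,v_2,v_3,w_2,w_3$ and to the apices around $u$, colour $G'=G$ minus all these edges by minimality, and uncolour $u,v_1,v_2,v_3,w_2,w_3$. By Remark~\ref{obs:lowerbound} I may assume $|\hat x|=10-c_x$ everywhere; in particular the edges $uv_1,uv_2,uv_3$ carry lists of size roughly $9$, $9$, $8$, the edges $v_3w_2,v_3w_3$ and $uw_i$ (if present) lists of size between $3$ and $5$, and $\hat u$ a list of size $7$ or $8$.

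Next I would record which remaining edges can occur. By $C_{\ref{C1}}$ no edge joins $v_1$ or $v_2$ to any of $v_3,w_2,w_3$ (the degrees sum to at most $10$ and the $4$-vertex hypothesis holds), and $C_{\ref{C3a}}$, $C_{\ref{C3b}}$ exclude a few more; planarity then leaves only a small amount of freedom, essentially whether $u$ is adjacent to $w_2$ or $w_3$ and which apices around $u$ are identified with $a_1,a_2$ or with each other, so the lists of $u,v_3,w_2,w_3$ range over a short list of possibilities and we obtain a handful of sub-cases.

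Then I would carry out the colouring. Forget the degree-$4$ vertices $v_1,v_2$, colour the short edges inside their triangles, and absorb the two large lists at $u$ by the standard device: pick $\alpha\in\widehat{uv_1}\setminus\widehat{uv_2}$, delete $\alpha$ from the lists of the two short elements of the triangle on $uv_1$, colour all other remaining elements; at the end either $\alpha$ has been used and $\widehat{uv_1}$ is still strictly larger than $\widehat{uv_2}$, or $\alpha$ is still available on $uv_1$, and in both cases $uv_1$ and $uv_2$ can be coloured last, hence forgotten. The analogous move on $\hat u$ against $\widehat{uv_3}$ peels off $u$ too. What is left is a small graph around $v_3,w_2,w_3$ with one $2$-element list and several $3$-element lists, which I would finish with Lemma~\ref{lem:diam}, Corollary~\ref{cor:evencycle}, Lemma~\ref{lem:fryingpan} or Theorem~\ref{thm:clique}, according to the sub-cases above. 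As in Lemmas~\ref{lem:4c}, \ref{lem:C5b} and~\ref{lem:C8a}, the cleaner route in practice may instead be to exhibit, for each edge/identification pattern, a suitable monomial of $P_{G'}$ and apply Theorem~\ref{thm:nss}.

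The step I expect to be the obstacle is the bookkeeping rather than any single colouring move: because $v_3$ has degree $5$ it cannot be forgotten and must remain uncoloured until the very end, so each possible coincidence among $w_2,w_3$ and the apices around $u$ perturbs the relevant list sizes and forces a fairly long case split; this also enlarges the Nullstellensatz monomial search enough that the naive search used elsewhere may fail to terminate, so --- as already happens in Lemma~\ref{lem:4c} --- one may be forced back onto the longer hand argument.
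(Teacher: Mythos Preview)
Your high-level framing is right, and you are correct that the Nullstellensatz route fails here (the paper says so explicitly and gives only a case-analysis proof). But the specific peeling sequence you propose has a genuine gap.

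The problematic step is ``the analogous move on $\hat u$ against $\widehat{uv_3}$''. The edge $f=uv_3$ has \emph{no} colored constraints in your setup, so $|\hat f|=10$ and necessarily $\hat u\subset\hat f$; there is no colour $\beta\in\hat u\setminus\hat f$ to pick. The trick therefore cannot be aimed at $u$ versus $f$. The paper instead aims it at $v_3$ versus $f$: it picks $\alpha\in\hat{v_3}\setminus\hat f$ and deletes $\alpha$ from $\hat{w_1},\hat{w_2},\hat{w_3}$, the three neighbours of $v_3$ that are \emph{not} neighbours of $f$. For this to work one needs $|\hat{v_3}|>|\hat f|$ at the moment the trick is applied, and this forces you to uncolour the $8$-vertex apex $w_1$ (your $a_1$) as well, giving $|\hat{v_3}|=9$ rather than $8$; you omit $w_1$ from the list of uncoloured vertices, which blocks the whole cascade.

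Two related points. First, the paper does not use your $\alpha$-trick on $\widehat{uv_1}$ versus $\widehat{uv_2}$ at all: since $|\hat a|=4>|\hat h|=3$ here, it simply colours $a\notin\hat h$ and chain-forgets $h,b,i,d,j,k$ directly, which is cheaper. Second, after the $v_3$-versus-$f$ trick the paper can chain-forget $o,p,\ell,m,u,e,g$ and is left with the five elements $w_1,w_2,w_3,n,q$ (not a graph ``around $v_3$''), handled by a short two-case split on whether $w_1w_3\in E(G)$, using Lemma~\ref{lem:diam} or Corollary~\ref{cor:evencycle}. So the residual is smaller and cleaner than you anticipate once the trick is pointed at the right pair.
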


\begin{proof}
  We use the notation depicted in Figure~\ref{fig:C10c}. By minimality,
  we color $G\setminus\{a,\ldots,q\}$ and uncolor
  $u,v_1,v_2,v_3,w_1,w_2,w_3$.
  \begin{figure}[!h]
    \centering
    \begin{tikzpicture}[v/.style={draw=black,minimum size = 10pt,ellipse,inner sep=1pt}]
      \node[v,label=left:{$u$}] (u) at (0,0)  {$7$};
      \node[v,label=right:{$v_3$}] (v1) at (0:1.5) {$5$};
      \node[v,label=above:{$v_2$}] (v2) at (103:1.5) {$4$};
      \node[v, very thick] (v3) at (257:1.5) {7};
      \node[v, very thick] (v4) at (154.5:1.5) {8};
      \node[v,label=above:{$w_1$}] (w1) at (51.5:1.5) {8};
      \node[v,label=left:{$v_1$}] (v6) at (206:1.5) {4};
      \node[v, very thick] (w2) at (308.5:1.5) {$8$};
      \node[v,xshift=1.5cm,label=above:{$w_2$}] (x1) at (36:1.5) {6};
      \node[v,xshift=1.5cm,label=below:{$w_3$}] (x2) at (-36:1.5) {6};
      \draw (u) -- (v3) node[midway,right] {$a$};
      \draw (u) -- (v6) node[midway,below] {$b$};
      \draw (u) -- (v4) node[midway,above] {$c$};
      \draw (u) -- (v2) node[midway,right] {$d$};
      \draw (u) -- (w1) node[midway,right] {$e$};
      \draw (u) -- (v1) node[midway,above] {$f$};
      \draw (u) -- (w2) node[midway,right] {$g$};
      \draw (v6) -- (v3) node[midway,below] {$h$};      
      \draw (v4) -- (v6) node[midway,left] {$i$};
      \draw (v2) -- (v4) node[midway,above] {$j$};
      \draw (v2) -- (w1) node[midway,above] {$k$};
      \draw (v1) -- (w1) node[midway,right] {$\ell$};
      \draw (v1) -- (w2) node[midway,right] {$m$};
      \draw (w1) -- (x1) node[midway,above] {$n$};
      \draw (v1) -- (x1) node[midway,above] {$o$};
      \draw (v1) -- (x2) node[midway,below] {$p$};
      \draw (x2) -- (x1) node[midway,right] {$q$};
      \draw [ very thick] (x2) -- (w2) ;
      \draw[ very thick] (v3) -- (w2) ;
    \end{tikzpicture}
\caption{Notation for Lemma~\ref{lem:C10c}}
    \label{fig:C10c}
  \end{figure}  
We have
  $|\hat{g}|=|\hat{h}|=|\hat{i}|=|\hat{j}|=|\hat{m}|=|\hat{n}|=|\hat{q}|=3$,
  $|\hat{w_2}|=|\hat{a}|=|\hat{c}|=4$, $|\hat{k}|=5$,
  $|\hat{v_1}|=|\hat{e}|=|\hat{\ell}|=|\hat{p}|=6$,
  $|\hat{u}|=|\hat{v_2}|=|\hat{o}|=7$,
  $|\hat{v_3}|=|\hat{b}|=|\hat{d}|=9$ and $|\hat{f}|=10$. Moreover,
  depending on the presence of the edge $w_1w_3$ in $E(G)$ we may
  assume that $|\hat{w_1}|,|\hat{w_3}|$ are $2$ or $3$.

  We forget $v_1,v_2$ and we color $a$ with a color not in
  $\hat{h}$. Then we forget $h,b,i,d,j,k$ and color $c$ with a color
  not in $\hat{g}$.

  Note that if $w_1w_3\notin E(G)$, then we may assume that
  $\hat{w_1}\cap\hat{w_3}=\varnothing$, since otherwise we color them
  with the same color, then color $g,m,n,q,w_2,e,\ell,o,p,u,f,v_3$.

  We remove a color $\alpha\in\hat{v_3}\setminus\hat{f}$ from
  $\hat{w_1},\hat{w_2},\hat{w_3}$. Assume that we can color every
  element excepted $v_3$ and $f$. Either $\alpha$ appears on
  $\ell,m,o,p$ or $u$ so $|\hat{v_3}|=1$ and $|\hat{f}|=2$, or $\alpha$
  is still in $\hat{v_3}$ at the end, so $\hat{v_3}\neq\hat{f}$. Thus
  we can extend the coloring to $v_3$ and $f$.  We may thus forget
  $v_3$ and $f$, and then $o,p,\ell,m,u,e,g$ also. Considering the
  edge $w_1w_3$, we have two cases:
  \begin{enumerate}
  \item If $w_1w_3\notin E(G)$, $\hat{w_1}$ and $\hat{w_3}$ are
    disjoint, so at most one of them (say $w$) loses a color when we
    removed $\alpha$. We color $w$, then apply Lemma~\ref{lem:diam} to
    $\mathcal{T}(G)$ with the path $nw_2qw_3$ if $w=w_1$ or $qw_2nw_1$ if
    $w=w_3$.
  \item Assume that $w_1w_3\in E(G)$. If $\hat{w_1}=\hat{w_3}$, we
    color $w_2$ with a color not in $\hat{w_1}$, then apply
    Corollary~\ref{cor:evencycle} on the cycle $w_1w_3qn$ in
    $\mathcal{T}(G)$. Otherwise, we color $w_1$ with a color not in $\hat{w_3}$,
    then apply Lemma~\ref{lem:diam} on $\mathcal{T}(G)$ with the path
    $w_3qw_2n$.
  \end{enumerate}
\end{proof}

\begin{lemma}
  \label{lem:C10d}
  The graph $G$ does not contain $C_{\ref{C10}d}$.
\end{lemma}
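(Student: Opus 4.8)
The plan is to follow the template of Lemmas~\ref{lem:C10a}--\ref{lem:C10c}. I would first fix a figure for $C_{\ref{C10}d}$ displaying the $7$-vertex $u$, its two weak degree-$4$ neighbors $v_1,v_2$, the weak degree-$5$ neighbor $v_3$, the new degree-$5$ vertex $w$ adjacent to $v_3$ by an edge not on a triangular face through $u$, and the five high-degree common neighbors closing the triangular fans forced by $v_1$ being a $(7,7^+)$-neighbor, $\dist_u(v_1,v_2)=2$ and $\dist_u(v_1,v_3)=3$. Every edge incident to $u,v_1,v_2,v_3$, together with the edge $v_3w$, gets a name. I would then color by minimality the graph $G'$ obtained from $G$ by deleting all these named edges, and uncolor $u,v_1,v_2,v_3$ (and $w$, if a labelled edge at $w$ is deleted). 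By Remark~\ref{obs:lowerbound} I would record the worst-case list sizes; the tight elements are the four edges $uv_1$, $uv_2$ and the two edges from $u$ to the shared $8$-vertices, while $u$, the edge $uv_3$, and the vertices $v_1,v_2,v_3,w$ have room to spare.

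For the coloring itself, $v_1$ and $v_2$ can be forgotten at once since they have degree $4$. I would then dispose of the "ring" of edges around $u$ using the token-exchange trick of Lemmas~\ref{lem:C10b}--\ref{lem:C10c}: choose $\alpha\in\hat b\setminus\hat d$ with $b=uv_1$, $d=uv_2$, delete $\alpha$ from the lists of the two or three edges on the triangles incident to $v_1$, color everything except $b,d$, and observe that either $\alpha$ was used, so $|\hat b|$ has dropped and $b$ then $d$ can be colored, or $\alpha$ survives in $\hat b$, so $\hat b\neq\hat d$ and both get colored. After also forgetting $u$ (possibly via a second token exchange between $u$ and $uv_3$, as in Lemma~\ref{lem:C10b}) and the edge $uv_3$, the remaining uncolored elements form a path or a short cycle on $v_3$, $w$, $uv_3$, $v_3w$ and the edges joining $v_3$ and $w$ to their degree-$6$/$7$ neighbors, which I would color via Lemma~\ref{lem:diam}, Corollary~\ref{cor:evencycle} or Lemma~\ref{lem:fryingpan}, exactly as in the earlier cases. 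A short case split on which of the diagonals allowed by planarity among $v_3$, $w$ and the $8$-vertex in the triangle through $u$ and $v_3$ are present will be needed, since it changes $|\hat{v_3}|$ and $|\hat w|$.

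The main obstacle I expect is the bookkeeping on the four small-list edges around $u$: one must commit colors to the many edges of the triangulated fans at $v_1$ and at $v_3$ while ensuring those four edges keep enough choices, and handle the handful of sub-cases coming from the admissible diagonals. As in Lemma~\ref{lem:C10c}, it is quite possible that the naive monomial search fails, so only the hand case-analysis would be given; otherwise, for each sub-case one would exhibit a monomial close to $\prod_x X^{|\hat x|-1}$, corrected on a few variables, whose coefficient in $P_G$ is certified non-zero by the accompanying Maple code, and invoke Theorem~\ref{thm:nss}.
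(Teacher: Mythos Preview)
Your plan follows the heavier template of Lemmas~\ref{lem:C10b}--\ref{lem:C10c}, but the paper's proof of $C_{\ref{C10}d}$ is markedly simpler and needs neither the token-exchange trick between $\hat b$ and $\hat d$, nor a second exchange between $u$ and $uv_3$, nor any case split on diagonals. With the labelling of the paper (fourteen edges $a,\ldots,n$; $a$ is the edge from $u$ to the degree-$7$ vertex on the $v_1$ side, $h$ the triangle edge below it, $f=uv_3$, $n=v_3w$), one forgets $v_1,v_2$, then colors $a$ with a color not in $\hat h$. This single choice triggers a cascade: $h,b,i,d,j,k$ can all be \emph{forgotten} in that order, since each then has more available colors than remaining uncolored neighbors. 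After that one simply colors $g,m,\ell,e,c,u$ greedily and finishes with Lemma~\ref{lem:diam} on the four-element path $f\,v_3\,n\,w$ in $\mathcal{T}(G)$. No balancing of $\hat b$ against $\hat d$ is ever needed.

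Your speculation that the monomial search would likely fail here, as in $C_{\ref{C10}c}$, is also off: the paper exhibits the monomial $B^7 C^3 D^8 E^3 F^8 G^2 H^2 I^2 J K^2 L^2 M N^3 U^5 V_3^5 W$ with coefficient $-3$ in $P_G$. The anticipated case split on extra adjacencies among $v_3$, $w$ and the degree-$8$ vertex is unnecessary because $w$ carries only one labelled edge ($n$) and $|\hat w|=2$ regardless; nothing in the argument depends on those diagonals. The moral is that the degree-$5$ neighbor $w$ makes this the \emph{easiest} of the four sub-cases of $C_{\ref{C10}}$, not one requiring the full machinery of the harder ones: the edge $n$ has list size $5$, which is what lets Lemma~\ref{lem:diam} close out the short residual path directly.
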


\begin{proof}
  We use the notation depicted in Figure~\ref{fig:C10d}. By
  minimality, we color $G'=G\setminus\{a,\ldots,n\}$ and uncolor
  $u,v_1,v_2,v_3,w$.
  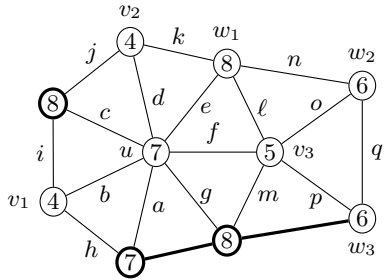
\begin{figure}[!h]
    \centering
    \begin{tikzpicture}[v/.style={draw=black,minimum size = 10pt,ellipse,inner sep=1pt}]
      \node[v,label=left:{$u$}] (u) at (0,0)  {$7$};
      \node[v,label=below right:{$v_3$}] (v1) at (0:1.5) {$5$};
      \node[v,label=above:{$v_2$}] (v2) at (103:1.5) {$4$};
      \node[v, very thick] (v3) at (257:1.5) {7};
      \node[v, very thick] (v4) at (154.5:1.5) {8};
      \node[v,label=above:{$w_1$}] (w1) at (51.5:1.5) {8};
      \node[v,label=left:{$v_1$}] (v6) at (206:1.5) {4};
      \node[v, very thick] (w2) at (308.5:1.5) {$8$};
      \node[v,label=above:{$w$}] (x1) at (0:3) {5};
      \draw (u) -- (v3) node[midway,right] {$a$};
      \draw (u) -- (v6) node[midway,below] {$b$};
      \draw (u) -- (v4) node[midway,above] {$c$};
      \draw (u) -- (v2) node[midway,right] {$d$};
      \draw (u) -- (w1) node[midway,right] {$e$};
      \draw (u) -- (v1) node[midway,above] {$f$};
      \draw (u) -- (w2) node[midway,right] {$g$};
      \draw (v6) -- (v3) node[midway,below] {$h$};      
      \draw (v4) -- (v6) node[midway,left] {$i$};
      \draw (v2) -- (v4) node[midway,above] {$j$};
      \draw (v2) -- (w1) node[midway,above] {$k$};
      \draw (v1) -- (w1) node[midway,right] {$\ell$};
      \draw (v1) -- (w2) node[midway,right] {$m$};
      \draw (v1) -- (x1) node[midway,above] {$n$};
      \draw [ very thick](v3) -- (w2) ;
    \end{tikzpicture}
\caption{Notation for Lemma~\ref{lem:C10d}}
    \label{fig:C10d}
  \end{figure}
  We have $|\hat{w}|=|\hat{m}|=2$,
  $|\hat{g}|=|\hat{h}|=|\hat{i}|=|\hat{j}|=|\hat{k}|=|\hat{\ell}|=3$,
  $|\hat{a}|=|\hat{c}|=|\hat{e}|=4$, $|\hat{n}|=5$,
  $|\hat{u}|=|\hat{v_1}|=|\hat{v_2}|=|\hat{v_3}|=6$ and
  $|\hat{b}|=|\hat{d}|=|\hat{f}|=9$.

  We forget $v_1,v_2$ and conclude using Theorem~\ref{thm:nss}.
\end{proof}

\subsection{Configuration $C_{\ref{C11}}$}
Note that, when $v_3$ is an $S_3$-neighbor of $u$, it cannot be a
$(6,6^+)$-neighbor of $u$ otherwise we obtain $C_{\ref{C1}}$ since the
$6$-vertex would be adjacent to $v_1$ or $v_2$. Thus, due to the
definitions of $C_{\ref{C11}}$ and $S_3$-neighbor, if $G$ contains
$C_{\ref{C11}}$, then we are in one of the following cases:
\begin{itemize}
\item[$\bullet$] $C_{\ref{C11}a}$: $v_1$ is a $(7,7)$-neighbor of $u$.
\item[$\bullet$] $C_{\ref{C11}b}$: $v_3$ is a $(7,7)$-neighbor of $u$.
\item[$\bullet$] $C_{\ref{C11}c}$: $v_1$ is a $(7,8)$-neighbor of $u$ and $v_3$
  has two neighbors $w_2,w_3$ of degree $6$ such that $uv_3w_2$,
  $uv_3w_3$ are not triangular faces.
\item[$\bullet$] $C_{\ref{C11}d}$: $v_1$ is a $(7,8)$-neighbor of $u$ and $v_3$
  has a neighbor $w$ of degree $5$ such that $uv_3w$ is not a
  triangular face.
\end{itemize}
We dedicate a lemma to each of these configurations.

\begin{lemma}
  \label{lem:C11a}
The graph $G$ does not contain $C_{\ref{C11}a}$.
\end{lemma}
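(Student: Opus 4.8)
The plan is to follow the template established for the earlier seven-vertex configurations. First I would fix notation along a figure: since $v_1$ is a $(7,7)$-neighbor of $u$, let $w_1,w_2$ be the two $7$-vertices with $uv_1w_1$ and $uv_1w_2$ triangular faces; since $v_2$ is a weak $4$-neighbor, let $y_1,y_2$ be the two vertices with $uv_2y_1$, $uv_2y_2$ triangular; and since $v_3$ is a weak (hence triangulated) $5$-neighbor, let $z_1,z_2$ be the two vertices with $uv_3z_1$, $uv_3z_2$ triangular. The hypothesis $\dist_u(v_1,v_2)>2$ forces $\{w_1,w_2\}\cap\{y_1,y_2\}=\varnothing$, so $N(u)=\{v_1,v_2,v_3,w_1,w_2,y_1,y_2\}$; the vertices $z_1,z_2$ necessarily lie in this set, which gives a small list of sub-cases to record (which neighbours of $u$ play the role of $z_1,z_2$, and which of the optional edges $v_1v_3$, $v_2v_3$, $w_1w_2$, $w_iy_j$ are present in $G$). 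Each $C_{\ref{C1}}$--$C_{\ref{C3a}}$-type obstruction already forbids some of these, shortening the list.

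Then I would let $G'$ be $G$ minus the edges $uv_1,uv_2,uv_3$ together with the triangle edges $v_1w_1,v_1w_2,v_2y_1,v_2y_2,v_3z_1,v_3z_2$ (and $uw_1,uw_2$, to give $u$ room), colour $G'$ by minimality, and uncolour $u,v_1,v_2,v_3$, keeping $w_1,w_2,y_1,y_2$ coloured since their degree is too large to uncolour. Using Remark~\ref{obs:lowerbound} I would record worst-case sizes $|\hat x|$: the edges $v_1w_i$ get only about $2$ colours because $d(w_i)=7$, $v_2y_j$ get more, $uv_3$ and $u$ get a medium number, and $v_3$ gets about $4$, with small perturbations for the optional edges. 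As $v_1$ and $v_2$ have degree $4$, I can forget them, leaving the stars $\{uv_1,v_1w_1,v_1w_2\}$ and $\{uv_2,v_2y_1,v_2y_2\}$ and the short configuration $\{u,uv_3,v_3,v_3z_1,v_3z_2\}$ (plus $uw_1,uw_2$) to colour in $\mathcal{T}(G)$.

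For the residual colouring I would first dispatch the star at $v_1$ exactly as in Lemma~\ref{lem:C7a}: colour $v_1w_1,v_1w_2$ using that their lists have size $\geqslant2$ and a short case split on whether the relevant lists are equal or nested, after which $uv_1$ may be forgotten; symmetrically for the $v_2$-star. What remains is a path or a ``frying-pan'' around $u$ and $v_3$, which is coloured by Lemma~\ref{lem:diam}, Lemma~\ref{lem:fryingpan} or Corollary~\ref{cor:evencycle}, as in the $C_{\ref{C7}}$, $C_{\ref{C9}}$ and $C_{\ref{C10}}$ lemmas. In parallel I would give the Combinatorial Nullstellensatz alternative: a base monomial $m_0$ in $P_{G'}$ with one correction factor per optional-edge sub-case, whose coefficients are checked by the accompanying Maple code, so that Theorem~\ref{thm:nss} applies.

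The step I expect to be the main obstacle is not any individual colouring move but the bookkeeping: because $v_3$ is only assumed to be a weak $5$-neighbour, its triangle-neighbours $z_1,z_2$ may coincide with $w_1,w_2,y_1$ or $y_2$, and each such coincidence together with each choice of present/absent optional edges shifts several list sizes; every combination must then be matched with a colouring order (or a monomial) for which the residual graph genuinely falls under one of Lemmas~\ref{lem:diam}, \ref{lem:fryingpan}, Corollary~\ref{cor:evencycle} or Theorem~\ref{thm:clique}. Making this case list exhaustive, and checking that $v_3$ can always be absorbed into the relevant path despite having only about four available colours, is the delicate part.
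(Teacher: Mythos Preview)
Your plan has the right shape --- forget $v_1,v_2$, dispose of $uv_1,uv_2$, finish with one of the path/cycle lemmas --- but the edge set you delete is too small, and this is a real obstruction rather than just bookkeeping. First, the sub-case worry is empty: since $\{w_1,v_1,w_2\}$ and $\{y_1,v_2,y_2\}$ occupy six of the seven slots around $u$ with $\dist_u(v_1,v_2)>2$, the vertex $v_3$ has a unique cyclic position, say between $w_2$ and $y_1$, so $(z_1,z_2)=(w_2,y_1)$ and there is only one case. Now in your $G'$ you keep $uy_1,uy_2$ coloured; taking $d(y_j)=8$ (the worst case), the edges $v_3y_1$ and $v_2y_2$ each already see nine coloured constraints, so $|\widehat{v_3y_1}|=|\widehat{v_2y_2}|=1$ before any dispatch. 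With a single forced colour on $v_2y_2$ your $v_2$-star step cannot both forget $uv_2$ and protect $\widehat{v_3y_1}$: if you colour $v_2y_2$ first there is no freedom to avoid $\widehat{v_2y_1}$, so neither $v_2y_1$ nor $uv_2$ becomes forgettable; if you colour $v_2y_1$ instead, it is incident to $v_3y_1$ and may erase its unique colour. The analogous pinch occurs at $uw_1$ after the $v_1$-star step.

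The paper avoids this by deleting \emph{all seven} edges at $u$ and additionally uncolouring the degree-$7$ vertex $w=w_2$ shared by $v_1$ and $v_3$; the critical lists then rise to $|\hat k|=2$ (your $v_3y_1$), $|\hat a|=4$ (your $uw_1$) and $|\hat u|=7$ instead of $4$. It also does not handle the two stars separately. Instead it removes a colour $\alpha\in\widehat{uv_1}\setminus\widehat{uv_2}$ from the lists of $v_1w_1,v_1w_2$, so that once everything else is coloured either $\alpha$ was consumed by some edge at $u$ (two constraints on $uv_1$ then coincide, giving $|\widehat{uv_1}|\geqslant 2$) or $\alpha$ survives in $\widehat{uv_1}$ but by construction not in $\widehat{uv_2}$; either way the pair $uv_1,uv_2$ is colourable last and can be forgotten together. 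The remainder is a short explicit sequence and a two-branch split on whether $\hat c=\hat i$, finishing with Lemma~\ref{lem:diam} on a four-term path.
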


\begin{proof}
  We use the notation depicted in Figure~\ref{fig:C11a}. By
  minimality, we color $G'=G\setminus\{a,\ldots,n\}$ and uncolor
  $u,v_1,v_2,v_3,w$.
  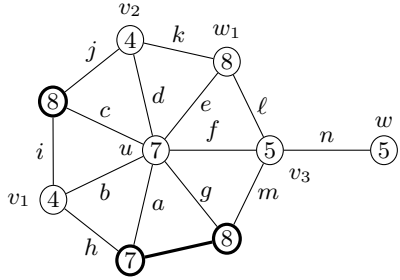
\begin{figure}[!h]
    \centering
    \begin{tikzpicture}[v/.style={draw=black,minimum size = 10pt,ellipse,inner sep=1pt}]
      \node[v,label=left:{$u$}] (u) at (0,0)  {$7$};
      \node[v,label=right:{$v_3$}] (v1) at (0:1.5) {5};
      \node[v,label=above:{$v_1$}] (v2) at (103:1.5) {4};
      \node[v,label=below:{$v_2$}] (v3) at (257:1.5) {4};
      \node[v, very thick] (v4) at (154.5:1.5) {7};
      \node[v,label=right:{$w$}] (w1) at (51.5:1.5) {7};
      \node[v, very thick] (v6) at (206:1.5) {8};
      \node[v, very thick] (w2) at (308.5:1.5) {8};
      \draw (u) -- (v4) node[midway,above] {$a$};
      \draw (u) -- (v2) node[midway,right] {$b$};
      \draw (u) -- (w1) node[midway,right] {$c$};
      \draw (u) -- (v1) node[midway,above] {$d$};
      \draw (u) -- (w2) node[midway,right] {$e$};
      \draw (u) -- (v3) node[midway,right] {$f$};
      \draw (u) -- (v6) node[midway,below] {$g$};
      \draw (v2) -- (v4) node[midway,above] {$h$};
      \draw (v2) -- (w1) node[midway,above] {$i$};
      \draw (v1) -- (w1) node[midway,right] {$j$};
      \draw (v1) -- (w2) node[midway,right] {$k$};
      \draw (v3) -- (w2)  node[midway,below] {$\ell$};
      \draw (v3) -- (v6)node[midway,below] {$m$};
    \end{tikzpicture}
\caption{Notation for Lemma~\ref{lem:C11a}}
    \label{fig:C11a}
  \end{figure}

  We have $|\hat{w}|=|\hat{k}|=|\hat{m}|=2$,
  $|\hat{g}|=|\hat{h}|=|\hat{\ell}|=3$,
  $|\hat{a}|=|\hat{e}|=|\hat{j}|=4$, $|\hat{v_3}|=|\hat{i}|=5$,
  $|\hat{v_2}|=|\hat{c}|=6$, $|\hat{u}|=|\hat{v_1}|=7$, $|\hat{d}|=8$
  and $|\hat{b}|=|\hat{f}|=9$.

  We forget $v_1,v_2$ and conclude using Theorem~\ref{thm:nss}.
\end{proof}

\begin{lemma}
  \label{lem:C11b}
The graph $G$ does not contain $C_{\ref{C11}b}$.
\end{lemma}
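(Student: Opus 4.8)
The plan is to argue exactly as in Lemma~\ref{lem:C11a}, with the roles of the two ``special'' vertices interchanged: here it is the degree-$5$ vertex $v_3$ that is the $(7,7)$-neighbor of $u$, so $v_3$ has two triangle partners $w_1,w_2$ of degree $7$ at $u$. First I would fix notation from a figure analogous to Figure~\ref{fig:C11a}: label the seven edges incident to $u$, the edges of the two triangles $uv_3w_1$ and $uv_3w_2$, and the edges of the two triangular faces at each of $v_1$ and $v_2$. Since $G$ contains none of $C_{\ref{C1}}$, $C_{\ref{C3a}}$, $C_{\ref{C3b}}$, the vertices $v_1,v_2,v_3$ are pairwise non-adjacent and their possible adjacencies to $w_1,w_2$ are constrained; together with planarity this leaves only a short list of extra-edge configurations, which become the cases of the argument. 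The hypothesis $\dist_u(v_1,v_2)>2$ is used precisely to guarantee that $v_1$ and $v_2$ have no common triangle partner at $u$, so that the relevant list sizes do not degrade.

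Then I would color $G'=G\setminus\{\text{labeled edges}\}$ by minimality and uncolor $u,v_1,v_2,v_3$ together with one of $w_1,w_2$ (keeping the other colored, as is done for the retained degree-$7$ vertex in Figure~\ref{fig:C11a}). Using Remark~\ref{obs:lowerbound} I would record $|\hat{x}|=10-c_x$ for every uncolored element $x$, the precise value depending on which extra edges are present. Because $v_1$ and $v_2$ have degree $4$, they have more available colors than uncolored $\mathcal{T}(G)$-neighbours, so they are forgotten first; the core of the proof is then the extension to $\{u,v_3\}$, the uncoloured partner, and the edges at $u$ and at $v_3$.

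To finish the extension I expect to reuse the two standard moves of Lemma~\ref{lem:C11a}: (i) deleting a color $\alpha\in\hat{x}\setminus\hat{y}$ from the lists of the two common neighbours of $u$ carrying a pair of edges $x,y$ incident to $u$, which lets those two edges be forgotten; and (ii) reducing the remaining uncoloured elements to a path on which Lemma~\ref{lem:diam} applies, or to a short cycle handled by Corollary~\ref{cor:evencycle}, Lemma~\ref{lem:fryingpan}, or Theorem~\ref{thm:clique}. I would also supply the Combinatorial Nullstellensatz alternative: for each extra-edge case a monomial in $P_G$ of full degree, with per-variable degree strictly below the corresponding list size and with (computer-checked) nonzero coefficient, so that Theorem~\ref{thm:nss} applies. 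The main obstacle is the bookkeeping forced by the shared triangle partners: in the natural embedding $w_1$ (resp.\ $w_2$) is also a triangle partner of $v_2$ (resp.\ $v_1$) at $u$, so uncolouring it lowers the available-colour count of several edges at once, and one must verify that across all surviving extra-edge configurations the endgame still reduces to one of the results above; as in Lemmas~\ref{lem:C10c} and~\ref{lem:C7b}, it may be cleanest to present only the case-analysis proof when the monomial search does not terminate for every case.
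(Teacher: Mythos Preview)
Your plan matches the paper's in spirit—forget $v_1,v_2$, use the $\alpha$-trick to postpone two of the edges at $u$, and finish with Lemma~\ref{lem:diam}—but one concrete choice differs, and the analogy you invoke for it is misleading. You propose to uncolour only one of $w_1,w_2$, citing Lemma~\ref{lem:C11a}. In that lemma the retained $7$-vertex (the thick one in Figure~\ref{fig:C11a}) is a triangle partner only of the $(7,7)$-neighbour $v_1$; the uncoloured $w$ is the one shared between the triangles at $v_1$ and at $v_3$. Here the geometry is different: $v_3$ sits between $v_1$ and $v_2$ in the cyclic order around $u$, so \emph{each} of $w_1,w_2$ is shared between $v_3$ and one of $v_1,v_2$. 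There is no analogue of the ``safe'' $7$-vertex to retain, and the paper accordingly uncolours both $w_1$ and $w_2$. You actually notice this sharing in your last paragraph, but draw the wrong conclusion from it: uncolouring the $w_i$ is what buys the extra room in $\hat e,\hat k,\hat\ell$, not what creates the obstacle.

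With that change the argument is shorter than you anticipate and needs no case split on extra edges in the hand proof. After forgetting $v_1,v_2$ one colours $m$, removes a colour $\alpha\in\hat b\setminus\hat f$ from $\hat h$ and $\hat i$ (so $b,f$ can be finished last, exactly your step~(i)), greedily colours $h,a,g,w_2,i,j,c$, colours $v_3$ avoiding $\hat k$, colours $w_1$, applies Lemma~\ref{lem:diam} to the path $u\,d\,e\,k$ in $\mathcal{T}(G)$, and finishes with $\ell$. The only branching—on whether $w_1w_2\in E(G)$—appears in the Nullstellensatz alternative, where two monomials are supplied.
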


\begin{proof}
  We use the notation depicted in Figure~\ref{fig:C11b}. By
  minimality, we color $G'=G\setminus\{a,\ldots,m\}$ and uncolor
  $u,v_1,v_2,v_3,w_1,w_2$.
  \begin{figure}[!h]
    \centering
    \begin{tikzpicture}[v/.style={draw=black,minimum size = 10pt,ellipse,inner sep=1pt}]
      \node[v,label=left:{$u$}] (u) at (0,0)  {$7$};
      \node[v,label=right:{$v_3$}] (v1) at (0:1.5) {5};
      \node[v,label=above:{$v_1$}] (v2) at (103:1.5) {4};
      \node[v,label=below:{$v_2$}] (v3) at (257:1.5) {4};
      \node[v, very thick] (v4) at (154.5:1.5) {8};
      \node[v,label=right:{$w_2$}] (w1) at (51.5:1.5) {7};
      \node[v, very thick] (v6) at (206:1.5) {8};
      \node[v,label=right:{$w_1$}] (w2) at (308.5:1.5) {7};
      \draw (u) -- (v4) node[midway,above] {$a$};
      \draw (u) -- (v2) node[midway,right] {$b$};
      \draw (u) -- (w1) node[midway,right] {$c$};
      \draw (u) -- (v1) node[midway,above] {$d$};
      \draw (u) -- (w2) node[midway,right] {$e$};
      \draw (u) -- (v3) node[midway,right] {$f$};
      \draw (u) -- (v6) node[midway,below] {$g$};
      \draw (v2) -- (v4) node[midway,above] {$h$};
      \draw (v2) -- (w1) node[midway,above] {$i$};
      \draw (v1) -- (w1) node[midway,right] {$j$};
      \draw (v1) -- (w2) node[midway,right] {$k$};
      \draw (v3) -- (w2)  node[midway,below] {$\ell$};
      \draw (v3) -- (v6)node[midway,below] {$m$};
    \end{tikzpicture}
\caption{Notation for Lemma~\ref{lem:C11b}}
    \label{fig:C11b}
  \end{figure}

We have $|\hat{h}|=|\hat{m}|=2$,
  $|\hat{a}|=|\hat{g}|=3$, $|\hat{j}|=|\hat{k}|=4$,
  $|\hat{i}|=|\hat{\ell}|=5$, $|\hat{v_3}|=|\hat{c}|=|\hat{e}|=6$,
  $|\hat{v_1}|=|\hat{v_2}|=7$, $|\hat{u}|=|\hat{d}|=8$ and
  $|\hat{b}|=|\hat{f}|=9$. Moreover, $|\hat{w_1}|=|\hat{w_2}|$ is $2$ or $3$ depending on whether there is an edge $w_1w_2$.

  We forget $v_1,v_2$ and conclude using Theorem~\ref{thm:nss}.
\end{proof}

\begin{lemma}
  \label{lem:C11c}
The graph $G$ does not contain $C_{\ref{C11}c}$.
\end{lemma}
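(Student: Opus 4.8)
The plan is to reduce $C_{\ref{C11}c}$ in the same way as the previous cases $C_{\ref{C11}a}$ and $C_{\ref{C11}b}$. First I would fix notation following a figure: $u$ is the $7$-vertex and $v_1,v_2$ its two weak $4$-neighbors. Since the $(7,7)$ case for $v_1$ is already excluded by Lemma~\ref{lem:C11a}, we may take $v_1$ to be a $(7,8)$-neighbor, with triangle-neighbors $w_1$ of degree $7$ and a vertex of degree $8$; we take the two triangle-neighbors of $v_2$ to have degree $8$ (the worst case). Finally $v_3$ is the weak $5$-neighbor, which is triangulated, a $(7^+,7^+)$-neighbor of $u$, and has two degree-$6$ neighbors $w_2,w_3$ sitting on faces not incident to $u$. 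Since $v_1,v_2$ have degree $4$ and $v_3$ degree $5$, configuration $C_{\ref{C1}}$ forbids every edge among $v_1,v_2,v_3$ and every edge from $w_2$ or $w_3$ to $v_1$ or $v_2$; thus the only edges among configuration elements whose presence matters are $w_2w_3$ and possibly a few edges joining $w_1$ to the $v_3$-side. I would then let $G'$ be $G$ with all edges of $\mathcal{T}(G)$ joining two elements of $\{u,v_1,v_2,v_3,w_1,w_2,w_3\}$ removed, color $G'$ by minimality, and uncolor exactly those seven elements. By Remark~\ref{obs:lowerbound} we may fix the worst-case list sizes: $v_1,v_2$ retain enough colors to be forgotten, $v_3$ retains at least $5$, $u$ at least $7$, $w_1$ at least $4$, and $w_2,w_3$ at least $3$ each (at least $2$ if $w_2w_3\in E(G)$), while each edge loses a further color for each additional configuration edge at its endpoints.

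Next I would forget $v_1$ and $v_2$, which is valid since each has strictly more available colors than uncolored neighbors in $\mathcal{T}(G)$, and then carry out the routine endgame. As in Lemma~\ref{lem:C11a}, I would perform the ``save a color'' step on two suitable edges at $u$: pick a color $\alpha$ in the list of one of them but not of the other, delete $\alpha$ from the two triangle-edge lists at the corresponding weak neighbor, color all remaining elements, and observe that at the end either $\alpha$ has been used elsewhere, leaving these two edges with lists of sizes $1$ and $2$, or $\alpha$ survives and the two lists are distinct --- in both cases they can be colored last. Then I would handle $w_2,w_3$ exactly as in Lemma~\ref{lem:C10c}: if $w_2w_3\notin E(G)$ we may assume $\hat{w_2}\cap\hat{w_3}=\varnothing$ (otherwise color them equal and finish by forgetting), and, after decoupling $u$ from the edge $uv_3$ by a color of $\hat u$ missing from that edge's list, the residual constraint graph is a short path or even cycle on $w_2,w_3$ and the edges $v_3w_2,v_3w_3,uv_3,\dots$, which is $L$-colorable by Lemma~\ref{lem:diam}, Corollary~\ref{cor:evencycle}, or Theorem~\ref{thm:clique} according to whether $w_2w_3\in E(G)$. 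Finally I would record the Combinatorial Nullstellensatz alternative: for each sub-case (whether $w_2w_3\in E(G)$, and which of the optional edges at $w_1$ are present) exhibit a monomial $m$ in $P_{G'}$ of maximal degree with $\deg_X m<|\hat x|$ for every uncolored element $x$ and a nonzero coefficient checked by the Maple code, so that Theorem~\ref{thm:nss} extends the coloring.

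The main obstacle is the bookkeeping, as usual in this family: establishing the correct worst-case list size for each of the many elements, and then verifying that the residual constraint graph is colorable in every sub-case determined by the optional edges ($w_2w_3$ and the hidden edges at $w_1$). As with $C_{\ref{C10}c}$, it may well happen that the naive monomial search does not terminate, in which case only the case-analysis argument is available.
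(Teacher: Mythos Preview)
Your sketch misses the key structural feature that drives the paper's argument, and as a result neither of your two proposed endgames would go through as stated.

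First, the geometry. In $C_{\ref{C11}c}$ the $7$-vertex $u$ has $\dist_u(v_1,v_2)=3$, and since $v_3$ is a weak $5$-neighbor of $u$ whose two triangle-neighbors along $u$ have degree $\geqslant 7$, the vertex $v_3$ necessarily sits \emph{between} $v_1$ and $v_2$, at triangle-distance $2$ from each. Consequently one of the two triangle-neighbors of $v_1$ along $u$ --- call them $w$ and $w_4$ as in the paper --- is \emph{also} a triangle-neighbor of $v_3$ and of $w_3$. You treat the degree-$7$ neighbor of $v_1$ (your ``$w_1$'') as living off to the side, with only ``possibly a few edges joining $w_1$ to the $v_3$-side'', but in fact the position of the $7$-vertex is forced to be either $w$ (away from $v_3$) or $w_4$ (shared with $v_3$ and $w_3$). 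These two placements give different list sizes for the edges $c,h,i,j,n$ (for instance $|\hat{j}|$ is $6$ in one case and $7$ in the other, and likewise $|\hat{n}|$ flips between $3$ and $4$), and the paper's proof splits into two genuinely different case-analysis arguments accordingly. Your single ``save a color as in $C_{\ref{C11}a}$, then handle $w_2,w_3$ as in $C_{\ref{C10}c}$'' outline does not distinguish these cases and would not produce the required slack in the $d(w_4)=7$ case, where the $v_1$-side and the $v_3$-side are tightly coupled through $w_4$.

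Second, the Nullstellensatz alternative you propose does not exist here: the paper explicitly states that no suitable monomial was found for $C_{\ref{C11}c}$, so only the case analysis is available. Your caveat at the end anticipates this possibility, but then the proof rests entirely on the case analysis, which as noted above is missing the essential $d(w)=7$ versus $d(w_4)=7$ split and the specific coloring orders the paper uses in each.
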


\begin{proof}
  We use the notation depicted in Figure~\ref{fig:C11c}. Recall that
  $v_1$ is a $(7,8)$-neighbor of $u$, hence $w$ or $w_4$ has degree
  7. By minimality, we color $G\setminus\{a,\ldots,r\}$ and uncolor
  $u,v_1,v_2,v_3,w_1,w_2,w_3,w_4$.
  \begin{figure}[!h]
    \centering
     \begin{tikzpicture}[v/.style={draw=black,minimum size = 10pt,ellipse,inner sep=1pt}]
      \node[v,label=left:{$u$}] (u) at (0,0)  {$7$};
      \node[v,label=right:{$v_3$}] (v1) at (0:1.5) {$5$};
      \node[v,label=above:{$v_1$}] (v2) at (103:1.5) {$4$};
      \node[v,label=below:{$v_2$}] (v3) at (257:1.5) {4};
      \node[v, very thick,label=above:{$w$}] (v4) at (154.5:1.5) {};
      \node[v,label=above:{$w_4$}] (w1) at (51.5:1.5) {};
      \node[v, very thick] (v6) at (206:1.5) {8};
      \node[v,label=below:{$w_1$}] (w2) at (308.5:1.5) {8};
      \node[v,xshift=1.5cm,label=above:{$w_3$}] (x1) at (36:1.5) {6};
      \node[v,xshift=1.5cm,label=below:{$w_2$}] (x2) at (-36:1.5) {6};
      \draw (u) -- (v4) node[midway,above] {$a$};
      \draw (u) -- (v2) node[midway,right] {$b$};
      \draw (u) -- (w1) node[midway,right] {$c$};
      \draw (u) -- (v1) node[midway,above] {$d$};
      \draw (u) -- (w2) node[midway,right] {$e$};
      \draw (u) -- (v3) node[midway,right] {$f$};
      \draw (u) -- (v6) node[midway,below] {$g$};
      \draw (v1) -- (w1) node[midway,right] {$j$};
      \draw (v1) -- (x1) node[midway,above] {$o$};
      \draw (v1) -- (x2) node[midway,below] {$p$};
      \draw (v1) -- (w2) node[midway,right] {$k$};
      \draw (v2) -- (v4) node[midway,above] {$h$};
      \draw (v2) -- (w1) node[midway,above] {$i$};
      \draw (w1) -- (x1) node[midway,above] {$n$};
      \draw (x2) -- (x1) node[midway,right] {$r$};
      \draw (x2) -- (w2) node[midway,below] {$q$};
      \draw (v3) -- (w2) node[midway,below] {$\ell$};
      \draw (v6) -- (v3) node[midway,below] {$m$};      
    \end{tikzpicture}
\caption{Notation for Lemma~\ref{lem:C11c}}
    \label{fig:C11c}
  \end{figure}
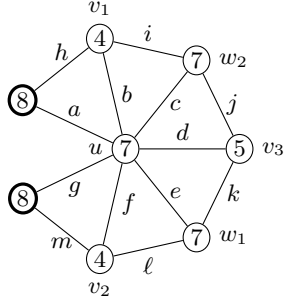
  We have $|\hat{m}|=2$, $|\hat{g}|=|\hat{q}|=3$, $|\hat{r}|=4$,
  $|\hat{\ell}|=5$, $|\hat{e}|=|\hat{k}|=6$,
  $|\hat{o}|=|\hat{p}|=|\hat{v_1}|=|\hat{v_2}|=7$, $|\hat{u}|=8$,
  $|\hat{b}|=|\hat{f}|=9$ and $|\hat{v_3}|=|\hat{d}|=10$. Moreover,
  depending on the presence of edges between the $w_i$'s, their lists
  size may vary, but we may assume that $|\hat{w_1}|\geqslant 2$ and
  $|\hat{w_2}|,|\hat{w_3}|$ are at least $4$. We forget $v_1,v_2$.

  We separate two cases depending on the degrees of $w_4$ and $w$:
  \begin{enumerate}
  \item We first assume that $d(w_4)=8$ and $d(w)=7$. Then we may also
    assume that $|\hat{h}|=|\hat{n}|=3$, $|\hat{a}|=4$, $|\hat{i}|=5$,
    $|\hat{c}|=|\hat{j}|=6$ and $|\hat{w_4}|\geqslant 2$.

    We remove from $\hat{w_3}$ and $\hat{r}$ a color
    $\alpha\in\hat{o}\setminus\hat{j}$, if it appears in these lists. We then color $w_2$
    with a color not in $\hat{r}$, then $w_1$ and $q$, and apply
    Lemma~\ref{lem:diam} on $\mathcal{T}(G)$ with the path $w_4nw_3r$.

    Due to the choice of $\alpha$, we may now color $j$ with a color
    not in $\hat{o}$, then color $i,c,h,a,g,e,m,\ell ,k$. We color $u$
    such that $\hat{v_3}\neq \hat{o}$, then $b,f,d,p$. Since
    $\hat{v_3}\neq\hat{o}$, we can finally color $v_3$ and $o$.
  \item Assume that $d(w_4)=7$ and $d(w)=8$. We may assume that
    $|\hat{h}|=2$, $|\hat{a}|=3$, $|\hat{n}|=4$, $|\hat{i}|=6$,
    $|\hat{c}|=|\hat{j}|=7$ and $|\hat{w_4}|\geqslant 4$.

    We color $g,\ell $ with a color not in $\hat{m}$. Then we forget
    $m,f,b,i,h$ and color $w_1$, $q$, $a$, $e$, $k$, $w_2$, $r$, $w_3$,
    $n$, $o$, $p$. We then color $v_3$ with a color not in
    $\hat{w_4}$. If $\hat{w_4}=\hat{j}$, we color $c$ with a color not
    in $\hat{j}$, then apply Corollary~\ref{cor:evencycle} on the
    cycle $uw_4jd$ in $\mathcal{T}(G)$. Otherwise, we color $w_4$ with
    a color not in $\hat{j}$, then apply Lemma~\ref{lem:diam} on
    $\mathcal{T}(G)$ with the path $jdcu$.
  \end{enumerate}
\end{proof}

\begin{lemma}
  \label{lem:C11d}
The graph $G$ does not contain $C_{\ref{C11}d}$.
\end{lemma}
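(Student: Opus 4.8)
The plan is to follow the template of Lemmas~\ref{lem:C11a}--\ref{lem:C11c} (and of Lemma~\ref{lem:C10d}, which treats a weak $5$-neighbour carrying a pendant degree-$5$ vertex). By the case analysis preceding this lemma, if $G$ contains $C_{\ref{C11}d}$ then $v_1$ is a $(7,8)$-neighbour of $u$, so its two triangular faces at $uv_1$ are $uv_1x_1$, $uv_1x_2$ with $d(x_1)=7$, $d(x_2)=8$; the vertex $v_3$ is a triangulated $5$-vertex whose $u$-faces are $uv_3y_1$, $uv_3y_2$ and which has a neighbour $w$ of degree~$5$ lying on no triangular face through~$u$; and $v_2$ is a weak $4$-neighbour with $\dist_u(v_1,v_2)>2$. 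First I would fix a figure naming these vertices and the incident edges; planarity together with $C_{\ref{C1}}$, $C_{\ref{C3a}}$ and $C_{\ref{C3b}}$ already pin down most of the local structure (in particular the degrees of the partners and the fact that $v_1,v_2,v_3$ are pairwise non-consecutive around $u$), leaving only a small number of non-drawn edges to enumerate. I would then color by minimality the graph $G'$ obtained from $G$ by deleting the seven edges at $u$, the diagonal edges of the triangular faces at $v_1$, $v_2$, $v_3$, and the edge $v_3w$, and uncolor $u$, $v_1$, $v_2$, $v_3$, $w$ (and, if it helps, a degree-$7$ partner of $v_1$, as in Lemma~\ref{lem:C11a}). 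Applying Remark~\ref{obs:lowerbound} and the usual constraint count, I would record the residual list sizes: $v_1$, $v_2$ and the edge $uv_3$ keep comfortable lists, $v_3$ a somewhat smaller one, the diagonal edges lists of size $3$ or $4$, while $w$ and the edge $v_3w$ keep only lists of size~$2$; since $\dist_u(v_1,v_2)>2$ the relevant neighbourhoods of $v_1$ and $v_2$ in the uncolored part of $\mathcal{T}(G)$ are disjoint, so these sizes are not depressed further.

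Next I would forget $v_1$ and $v_2$, which is always legitimate for degree-$4$ vertices, and deal with the two tight edges $b:=uv_1$ and $f:=uv_2$ (far apart in $\mathcal{T}(G)$, with the largest degree sums) by the ``pre-payment'' trick of Lemmas~\ref{lem:C11a}--\ref{lem:C11b}: pick a color $\alpha\in\hat{b}\setminus\hat{f}$, delete it from two well-chosen diagonal-edge lists at $v_1$, color everything else, and note that either $\alpha$ has been spent so $\hat{f}$ remains large enough, or $\alpha$ is still available on $b$ and still absent from $\hat{f}$; in either case $b$ and $f$ are colorable last and can be forgotten together with the elements they blocked.

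After discarding $b$, $f$ and those blocked elements, the uncolored part of $\mathcal{T}(G)$ collapses to a short path through $u$, the edge $uv_3$, $v_3$, the edge $v_3w$ and $w$ (plus at most two leftover diagonal edges), and I would conclude exactly as in Lemma~\ref{lem:C10d} by applying Lemma~\ref{lem:diam} to a path containing $uv_3$, $v_3$, an edge $v_3y_i$ and $w$, or Lemma~\ref{lem:fryingpan} to a cycle with a pendant element, after at most one or two easy splits on whether a diagonal-edge list is contained in an adjacent one. If the monomial search terminates I would additionally record the Combinatorial Nullstellensatz certificate (a monomial $m$ with $\deg m=\deg P_G$, $\deg_X m<|\hat{x}|$ for every uncolored element $x$, and nonzero coefficient in $P_G$) and invoke Theorem~\ref{thm:nss}; otherwise, as for Lemma~\ref{lem:C11c}, only the case analysis is presented. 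I expect the main obstacle to be the bookkeeping: enumerating the at most two extra, non-drawn edges among $\{v_1,v_2,v_3,w,x_1,x_2,y_1,y_2\}$ forced or forbidden by planarity and the earlier configurations, and checking in each resulting sub-case that the numeric hypotheses of Lemma~\ref{lem:diam} or Lemma~\ref{lem:fryingpan} are met.
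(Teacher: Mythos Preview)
Your plan is sound and would almost certainly go through, but the paper takes a noticeably simpler route that avoids the $\alpha$-trick entirely. After forgetting $v_1,v_2$, the paper observes that the edge $m$ (one of the diagonal edges at $v_2$) has $|\hat m|=2$; it colors $g$ and $\ell$ (the two other uncolored neighbours of $m$ in $\mathcal T(G)$ besides $f$) with colors \emph{not} in $\hat m$, so that $m$ now has a single uncolored neighbour $f$ and can be forgotten. This triggers a cascading forget chain $m,f,b,i,h$: once $m$ and $v_2$ are off the books, $f=uv_2$ has enough slack to be forgotten, then $b=uv_1$, then the two diagonals $i,h$ at $v_1$. No ``pre-payment'' of $\alpha\in\hat b\setminus\hat f$ is needed, and no case split on which of the partners of $v_1$ has degree~$7$ arises in the hand proof. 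What remains is then genuinely tiny: one colors $a,e,k,c,j,u$ greedily and finishes with Lemma~\ref{lem:diam} on the path $d\,v_3\,n\,w_1$, exactly the endgame you anticipated from Lemma~\ref{lem:C10d}.

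Your $\alpha$-trick route (imported from Lemmas~\ref{lem:C11a}--\ref{lem:C11b}) should work too, but it costs more: removing $\alpha$ from the diagonal lists at $v_1$ drops one of them to size~$1$ in the case $d(w_3)=8$, $d(w_2)=7$, so you would have to be careful about the order in which you color $h,i,a,c$, and you would likely end up splitting on the degrees of $w_2,w_3$. The paper's forget-chain sidesteps all of this. The Nullstellensatz certificate is also shorter here than in the earlier $C_{\ref{C11}}$ cases, with two monomials (one per orientation of the $(7,8)$-pair) rather than one per possible extra edge.
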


\begin{proof}
  We use the notation depicted in Figure~\ref{fig:C11d}. By
  minimality, we color $G'=G\setminus\{a,\ldots,n\}$ and uncolor
  $u,v_1,v_2,v_3,w_1$.
  \begin{figure}[!h]
    \centering
    \begin{tikzpicture}[v/.style={draw=black,minimum size = 10pt,ellipse,inner sep=1pt}]
      \node[v,label=left:{$u$}] (u) at (0,0)  {$7$};
      \node[v,label=below right:{$v_3$}] (v1) at (0:1.5) {$5$};
      \node[v,label=above:{$v_1$}] (v2) at (103:1.5) {$4$};
      \node[v,label=below:{$v_2$}] (v3) at (257:1.5) {4};
      \node[v,label=left:{$w_3$},very thick] (v4) at (154.5:1.5) {};
      \node[v,label=above:{$w_2$},very thick] (w1) at (51.5:1.5) {};
      \node[v, very thick] (v6) at (206:1.5) {8};
      \node[v, very thick] (w2) at (308.5:1.5) {8};
      \node[v,label=above:{$w_1$}] (x1) at (0:3) {5};
      \draw (u) -- (v4) node[midway,above] {$a$};
      \draw (u) -- (v2) node[midway,right] {$b$};
      \draw (u) -- (w1) node[midway,right] {$c$};
      \draw (u) -- (v1) node[midway,above] {$d$};
      \draw (u) -- (w2) node[midway,right] {$e$};
      \draw (u) -- (v3) node[midway,right] {$f$};
      \draw (u) -- (v6) node[midway,below] {$g$};
      \draw (v1) -- (w1) node[midway,right] {$j$};
      \draw (v1) -- (x1) node[midway,above] {$n$};
      \draw (v1) -- (w2) node[midway,right] {$k$};
      \draw (v2) -- (v4) node[midway,above] {$h$};
      \draw (v2) -- (w1) node[midway,above] {$i$};
      \draw (v3) -- (w2) node[midway,above] {$\ell$};
      \draw (v6) -- (v3) node[midway,below] {$m$};      
    \end{tikzpicture}
\caption{Notation for Lemma~\ref{lem:C11d}}
    \label{fig:C11d}
  \end{figure}
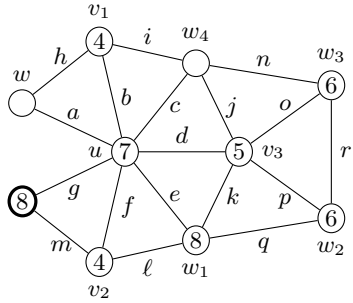

We have $|\hat{w_1}|=|\hat{m}|=2$,
  $|\hat{g}|=|\hat{k}|=|\hat{\ell}|=3$, $|\hat{e}|=4$, $|\hat{n}|=5$,
  $|\hat{u}|=|\hat{v_1}|=|\hat{v_2}|=|\hat{v_3}|=6$ and
  $|\hat{b}|=|\hat{d}|=|\hat{f}|=9$.

  Moreover, if $d(w_2)=8$ and $d(w_3)=7$, we have
  $|\hat{h}|=|\hat{i}|=|\hat{j}|=3$ and
  $|\hat{a}|=|\hat{c}|=4$. Otherwise, $d(w_2)=7$ and $d(w_3)=8$ so
  $|\hat{h}|=2$, $|\hat{a}|=3$, $|\hat{i}|=|\hat{j}|=4$ and
  $|\hat{c}|=5$.  We forget $v_1,v_2$ and conclude using
  Theorem~\ref{thm:nss}.
\end{proof}

\subsection{Configuration $C_{\ref{C12}}$}
By definition, if $G$ contains $C_{\ref{C12}}$, then we are in one of
the following cases ($v_1,\ldots,v_8$ denote the neighbors of $u$ in
cyclic ordering around $u$):
\begin{itemize}
\item[$\bullet$] $C_{\ref{C12}a}$: $u$ has four neighbors of degree
  $6$, and four $(6,6)$-neighbors of degree $5$. We may assume that
  $d(v_{2i})=5$ and $d(v_{2i-1})=6$ for $1\leqslant i\leqslant 4$ and
  that $v_1v_2,\ldots,v_7v_8,v_8v_1$ are in $E(G)$.
\item[$\bullet$] $C_{\ref{C12}b}$: $u$ has five weak neighbors of degree $5$ and
  three neighbors of degree $6$. Due to $C_{\ref{C3a}}$ and
  $C_{\ref{C3b}}$, we may assume that $v_1,v_2,v_4,v_6,v_7$ have
  degree $5$, $v_3,v_5,v_8$ have degree $6$ and that
  $v_1v_2,\ldots,v_7v_8,v_8v_1$ are in $E(G)$.
\item[$\bullet$] $C_{\ref{C12}c}$: $u$ has four neighbors of degree $6$, two
  $(6,6)$-neighbors of degree $5$ at triangle-distance 2, and two
  $(5,6)$-neighbors of degree $5$. We may assume that
  $v_2,v_4,v_6,v_7$ have degree $5$, $v_1,v_3,v_5,v_8$ have degree $6$
  and that $v_1v_2,\ldots,v_7v_8$ are in $E(G)$.
\item[$\bullet$] $C_{\ref{C12}d}$: $u$ has four neighbors of degree $6$, two
  $(6,6)$-neighbors of degree $5$ at triangle-distance at least $3$,
  and two $(5,6)$-neighbors of degree $5$. We may assume that
  $v_2,v_4,v_5,v_7$ have degree $5$, $v_1,v_3,v_6,v_8$ have degree $6$
  and that $v_1v_2,\ldots,v_7v_8$ are in $E(G)$.
\end{itemize}
We dedicate a lemma to each of these configurations. In each of them,
we did not succeed in finding a suitable monomial for the
Nullstellensatz approach, hence we only present case analysis proofs.

\begin{lemma}
  \label{lem:C12a}
The graph $G$ does not contain $C_{\ref{C12}a}$.
\end{lemma}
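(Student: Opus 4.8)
The plan is to reduce at the $8$-vertex $u$ itself. Write $v_1,\dots ,v_8$ for the neighbours of $u$ in cyclic order, with $d(v_{2i-1})=6$ and $d(v_{2i})=5$, and set $e_i=uv_i$, $c_i=v_iv_{i+1}$ (indices modulo $8$); by hypothesis the $16$ edges $e_1,\dots ,e_8,c_1,\dots ,c_8$ are exactly the edges of the wheel around $u$. I would take $G'=G\setminus\{e_1,\dots ,e_8,c_1,\dots ,c_8\}$, colour $G'$ by minimality, and uncolour $u$ together with all of $v_1,\dots ,v_8$. By Remark~\ref{obs:lowerbound} we may then assume $|\hat{u}|=10$, $|\hat{e_i}|=13-d(v_i)$ (so $7$ for odd $i$ and $8$ for even $i$), $|\hat{v_i}|=16-2d(v_i)$ (so $4$ for odd $i$ and $6$ for even $i$), and $|\hat{c_i}|=10-(d(v_i)+d(v_{i+1})-6)=5$. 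A chord $v_iv_j$ of the wheel that lies in $G$ only raises some of these sizes by one (and creates one extra uncoloured neighbour); planarity together with $C_{\ref{C1}}$, $C_{\ref{C3a}}$ and $C_{\ref{C3b}}$ allows only a handful of such chords (for instance $v_{2i}v_{2i+2}\notin E(G)$ by $C_{\ref{C3b}}$, and a $5$-vertex $v_{2i}$ cannot carry two ``short'' chords by $C_{\ref{C3a}}$), so I would settle the chordless configuration first and dispatch the remaining finitely many chorded variants at the end in the same manner.

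In $\mathcal{T}(G\setminus G')$ the set $K=\{u,e_1,\dots ,e_8\}$ is a clique of size $9$ (the spokes at $u$ are pairwise incident), the vertices $v_1,\dots ,v_8$ form an $8$-cycle, the edges $c_1,\dots ,c_8$ form an $8$-cycle, $e_i$ is joined to $v_i$ and to $c_{i-1},c_i$, and $v_i$ is joined to $c_{i-1},c_i$. I would colour $K$ first: by Theorem~\ref{thm:clique} it suffices to verify Hall's condition, which holds because $|\hat{u}|=10\geqslant 9$ covers every subset containing $u$, while for a set $S$ of spokes the size-$8$ lists $\hat{e_{2i}}$ force $|\bigcup_{x\in S}\hat{x}|\geqslant 8\geqslant |S|$ as soon as $S$ meets $\{e_2,e_4,e_6,e_8\}$, and $|S|\leqslant 4$ otherwise. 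Once $K$ is coloured, every $v_i$ has lost the two (distinct) colours of $u$ and $e_i$, so $|\hat{v_i}|\geqslant 2$, and I would colour the $8$-cycle $v_1\cdots v_8$ using Corollary~\ref{cor:evencycle}. Finally, after the ring vertices are coloured, each $c_i$ has lost the colours of $e_i,e_{i+1},v_i,v_{i+1}$; since $c_i$ is adjacent to all four and $e_i\ne e_{i+1}$, $v_i\ne v_{i+1}$, $e_i\ne v_i$, $e_{i+1}\ne v_{i+1}$, this is at most four colours, leaving the even cycle $c_1\cdots c_8$ with lists of size at least $1$ to colour last.

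The main obstacle is that a list of size $1$ on a cycle is not enough, so I must actually guarantee $|\hat{c_i}|\geqslant 2$ throughout (and likewise that the $6$-vertices $v_{2i-1}$, whose lists start at only $4$, never run dry). The slack available is that for each $i$ the only pairs among $\{e_i,e_{i+1},v_i,v_{i+1}\}$ that may receive equal colours are $(e_i,v_{i+1})$ and $(e_{i+1},v_i)$, so I would not colour the three layers blindly but interleave them, using the freedom in Hall's theorem (and in Corollary~\ref{cor:evencycle}) to force, for each $i$, one of the coincidences ``colour of $e_i$ equals colour of $v_{i+1}$'' or ``colour of $e_{i+1}$ equals colour of $v_i$'', which keeps $|\hat{c_i}|\geqslant 2$. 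When a required coincidence is unavailable — because the relevant lists are almost disjoint — one is pushed into a short case analysis on the overlap pattern of $\hat{e_i},\hat{v_i},\hat{c_i}$, in which the still-uncoloured part is a short path with added diagonals or a frying-pan graph and can be finished with Lemma~\ref{lem:diam} or Lemma~\ref{lem:fryingpan}. The chorded variants are handled identically, the extra chord only providing more room. I expect the bulk of the work to be exactly this interleaving-plus-case-split bookkeeping rather than any single idea, which is presumably why no suitable Nullstellensatz monomial is found and only a hands-on argument is given.
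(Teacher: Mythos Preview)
Your three-layer plan (Hall on the spoke clique, then Corollary~\ref{cor:evencycle} on the vertex $8$-cycle, then Corollary~\ref{cor:evencycle} on the rim $8$-cycle) breaks exactly where you say it does: after the first two layers each rim edge $c_i$ has only one colour left, and the fix you propose --- forcing, for every $i$, a coincidence $\gamma(e_i)=\gamma(v_{i+1})$ or $\gamma(e_{i+1})=\gamma(v_i)$ --- is never actually carried out. This is not a minor bookkeeping step. You must achieve eight such coincidences simultaneously while the spokes are constrained to be pairwise distinct (and distinct from $u$) and the vertices are constrained along an $8$-cycle whose $6$-vertices have only two colours each after $K$ is fixed. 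There is no a priori reason these coupled constraints are satisfiable for arbitrary lists, and ``when unavailable, fall back on Lemma~\ref{lem:diam} or Lemma~\ref{lem:fryingpan}'' is a hope, not an argument: you have not identified which residual subgraphs arise, nor checked that their list sizes meet the hypotheses of those lemmas. As written, the proof stops at the hard part.

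The paper does not attempt a layered decomposition at all. It first uses planarity ($K_5$- and $K_{3,3}$-minor arguments) to normalise the chord pattern so that, up to symmetry, $v_2$ has no neighbour among $v_5,v_6,v_7$ and $v_4$ has at most one among $v_1,v_7,v_8$. It then branches directly on list containments such as $\hat{n}\subset\hat{v_6}$ (where $n$ is a specific rim edge), and in each branch gives an explicit greedy order: colour a few elements with colours avoiding prescribed lists, forget what can be forgotten, and finish the remainder with Lemma~\ref{lem:diam}, Corollary~\ref{cor:evencycle}, or Theorem~\ref{thm:=deg} applied to a small explicitly identified subgraph. The chords are not dispatched at the end as an afterthought; their presence or absence is used inside the case split (e.g.\ the case $v_1v_7\in E(G)$ is handled separately). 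If you want to salvage your approach, you would have to replace the vague interleaving paragraph by a concrete construction of the eight coincidences, and that construction will almost certainly force the same kind of case analysis on list containments that the paper performs directly.
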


\begin{proof}
  We use the notation depicted in Figure~\ref{fig:C12a}.  By
  minimality, we color $G\setminus\{a,\ldots,p\}$ and uncolor
  $u,v_1,\ldots,v_8$.
  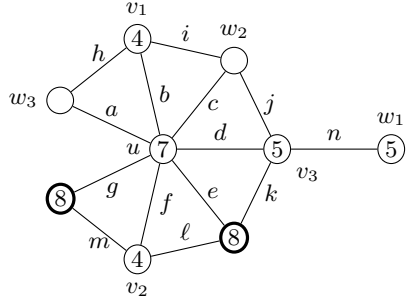
\begin{figure}[!h]
    \centering
    \begin{tikzpicture}[v/.style={draw=black,minimum size = 10pt,ellipse,inner sep=1pt}]
      \node[v,label=85:{$u$}] (u) at (0,0)  {8};
      \node[v,label=right:{$v_7$}] (v0) at (45:1.5) {6};
      \node[v,label=above:{$v_6$}] (v1) at (90:1.5) {5};
      \node[v,label=left:{$v_5$}] (v2) at (135:1.5) {6};
      \node[v,label=left:{$v_4$}] (v3) at (180:1.5) {5};
      \node[v,label=left:{$v_3$}] (v4) at (225:1.5) {6};
      \node[v,label=below:{$v_2$}] (v5) at (270:1.5) {5};
      \node[v,label=right:{$v_1$}] (v6) at (315:1.5) {6};
      \node[v,label=right:{$v_8$}] (v7) at (0:1.5) {5};
      \draw (v6) -- (u) node[midway,right] {$a$};
      \draw (v5) -- (u) node[midway,right] {$b$};
      \draw (v4) -- (u) node[midway,below] {$c$};
      \draw (v3) -- (u) node[midway,below] {$d$};
      \draw (v2) -- (u) node[midway,left] {$e$};
      \draw (v1) -- (u) node[midway,left] {$f$};
      \draw (v0) -- (u) node[midway,above] {$g$};
      \draw (v7) -- (u) node[midway,above] {$h$};
      \draw (v5) -- (v6) node[midway,below] {$i$};
      \draw (v5) -- (v4) node[midway,below] {$j$};
      \draw (v4) -- (v3) node[midway,left] {$k$};
      \draw (v3) -- (v2) node[midway,left] {$\ell$};
      \draw (v2) -- (v1) node[midway,above] {$m$};
      \draw (v1) -- (v0) node[midway,above] {$n$};
      \draw (v0) -- (v7) node[midway,right] {$o$};
      \draw (v6) -- (v7) node[midway,right] {$p$};
    \end{tikzpicture}
\caption{Notation for Lemma~\ref{lem:C12a}}
    \label{fig:C12a}
  \end{figure}
  First note that there is no edge between $5$-vertices excepted
  maybe $v_2v_6$ and $v_4v_8$ since otherwise, it would create
  $C_{\ref{C3b}}$.

  Using that $G$ is planar, we first show the following:
  \begin{enumerate}
  \item We may assume (up to symmetry) that there is no edge between
    $v_2$ and $v_5,v_6,v_7$.

    Assume that $v_6$ or $v_7$ is a neighbor of $v_2$. Then there is
    no edge between $v_8$ and $v_3,v_4,v_5$, otherwise,
    $\{\{u\},\{v_3,v_4,v_5\},\{v_8\},\{v_6,v_7\},\{v_1,v_2\}\}$ is a
    $K_5$-minor of $G$. By exchanging $v_2$ and $v_8$, we obtain that
    $v_2$ has no neighbor among $v_5,v_6,v_7$.
    
    If $v_2v_5$ is an edge, we obtain the same result by exchanging
    $v_2$ and $v_4$.
  \item With such a $v_2$, we may also assume that $v_4$ has at most
    one neighbor among $v_1,v_7,v_8$. First note that if
    $v_4v_8\in E(G)$, then $v_1,v_7$ are not neighbors of $v_4$ due to
    $C_{\ref{C3b}}$. In this case, $v_4$ has thus only one neighbor
    among $v_1,v_7,v_8$.

    Otherwise, both $v_1$ and $v_7$ are neighbors of $v_4$, so there
    is no edge between $vv_8$ with $v\in\{v_3,v_5\}$. Indeed,
    otherwise, $\{u,v,v_1,v_4,v_7,v_8\}$ would be a $K_{3,3}$ minor of
    $G$. Thus, by exchanging $v_4$ and $v_8$, we obtain that $v_4$ has
    at most one neighbor among $v_1,v_7,v_8$.
  \end{enumerate}
  
  We thus have
  $|\hat{i}|=|\hat{j}|=|\hat{k}|=|\hat{\ell}|=|\hat{m}|=|\hat{n}|=|\hat{o}|=|\hat{p}|=5$,
  $|\hat{v_2}|=6$, $|\hat{a}|=|\hat{c}|=|\hat{e}|=|\hat{g}|=7$,
  $|\hat{b}|=|\hat{d}|=|\hat{f}|=|\hat{h}|=8$ and $|\hat{u}|=10$.
  Moreover, $\hat{v_1},\hat{v_3},\hat{v_5}$ and $\hat{v_7}$ have size
  at least $4$, and $\hat{v_6},\hat{v_8}$ at least $6$.

  Due to the previous observations, we may also assume that
  $|\hat{v_4}|$ is $6$ or $7$. We separate three cases:
  
  \begin{enumerate}
  \item Assume that $\hat{n}\not\subset\hat{v_6}$. Then we color $n$
    with a color not in $\hat{v_6}$, $d$ with a color not in
    $\hat{v_4}$, $g$ with a color not in $\hat{o}$ and $h$ with a
    color not in $\hat{p}$. We then color
    $a,c,e,f,b,u,v_7,m,v_5,\ell$, and forget $v_4,v_6$. We color $v_8$
    with a color not in $\hat{o}$, then $v_1$ and forget $o,p$. We
    finally apply Lemma~\ref{lem:diam} on $\mathcal{T}(G)$ with the path $iv_2jv_3k$.
  \item If $\hat{n}\subset\hat{v_6}$ (and by symmetry
    $\hat{i}\subset\hat{v_2}$) and $v_6$ is not a neighbor from both
    $v_1,v_3$. Then $|\hat{v_6}|<8$ and we can color $f$ with a color
    not in $\hat{v_6}$ (hence not in $\hat{n}$), and $b$ with a color
    not in $\hat{v_2}$ (hence not in $\hat{i}$). Then we color
    $a,c,e,g,h,d,u$ and forget $v_2,i,j,v_6,m,n$, and use
    Theorem~\ref{thm:=deg} to color
    $\{v_1,v_3,v_4,v_5,v_7,v_8,k,\ell,o,p\}$.
  \item Otherwise, we color $b$ with a color not in $\hat{v_2}$, then
    color $a,c,e,g,d,f,h,u, v_3$ and forget $v_2,i,j$. We apply
    Lemma~\ref{lem:diam} on $\mathcal{T}(G)$ with the path $kv_4\ell v_5$, then
    color $m$.

    If $v_1v_7\not\in E(G)$, then $|\hat{v_7}|=2=|\hat{n}|$. If
    $\hat{n}=\hat{v_7}$, we color $v_6$ and $o$ with a color not in
    $\hat{n}$, then color $v_1,p,v_8,v_7,n$. Otherwise, we color $n$
    with a color not in $\hat{v_7}$, then color $v_1$ with a color not
    in $v_6$, forget $v_6$ and apply Lemma~\ref{lem:diam} on $\mathcal{T}(G)$
    with the path $pv_8ov_7$.

    Otherwise, $v_1v_7\in E(G)$. We then color $v_1$ with a color not
    in $\hat{v_6}$, forget $v_6$ and apply Lemma~\ref{lem:diam} on
    $\mathcal{T}(G)$ with the path $pv_8ov_7n$.
  \end{enumerate}
\end{proof}

\begin{lemma}
  \label{lem:C12bc}
The graph $G$ does not contain $C_{\ref{C12}b}$ nor $C_{\ref{C12}c}$.
\end{lemma}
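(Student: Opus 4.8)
The plan is to reduce $C_{\ref{C12}b}$ and $C_{\ref{C12}c}$ by the same template used for $C_{\ref{C12}a}$ in Lemma~\ref{lem:C12a}. In both cases, name the eight spoke edges $uv_1,\dots,uv_8$ and the seven or eight cycle edges $v_iv_{i+1}$ according to a figure, colour $G$ minus all of these edges by minimality, and uncolour $u$ and $v_1,\dots,v_8$. The first task is to control the chords of the cycle $v_1\cdots v_8$: since $\{u,v_1,\dots,v_8\}$ lies in a plane and $u$ is joined to every $v_i$, forbidding $K_5$- and $K_{3,3}$-minors rooted at $u$ (exactly the minor arguments used for $C_{\ref{C12}a}$) bounds the number of chords, and combining this with $C_{\ref{C3a}}$ and $C_{\ref{C3b}}$ (which between them forbid every chord joining two degree-$5$ vertices, as well as several chords incident to a degree-$5$ vertex) leaves only a handful of chord patterns; up to the rotational and reflective symmetry of the configuration these reduce to a small number of cases.

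Next I would read off the worst-case list sizes from Remark~\ref{obs:lowerbound}: $|\hat u|=10$, every degree-$5$ neighbour of $u$ has at least $6$ available colours, every degree-$6$ neighbour has at least $4$, every spoke has at least $7$, and every cycle edge has at least $5$ (one more at the ``gap'' $v_8v_1$ of $C_{\ref{C12}c}$, where that edge is absent). Thus $u$ and the degree-$5$ vertices are comfortable, while the degree-$6$ vertices and the cycle edges joining consecutive degree-$6$ vertices are the tight elements that govern the argument.

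The colouring itself then follows the familiar pattern. I would colour $u$ together with a few carefully chosen edges so that every degree-$5$ vertex ends up with strictly more available colours than uncoloured neighbours in $\mathcal{T}(G)$, and forget all of them; what remains is $u$ together with the spokes, the cycle edges, and the degree-$6$ vertices. The degree-$6$ vertices are kept uncoloured until the adjacent cycle edges place them on a short ``diamond'' path or an even cycle, which is finished off by Lemma~\ref{lem:diam} or Corollary~\ref{cor:evencycle}; along the way I would use the standard device of deleting a colour $\alpha\in\hat x\setminus\hat y$ from an intermediate element so that a critical pair $x,y$ (typically two spokes, or a spoke and $u$) stays colourable at the very end, together with Theorem~\ref{thm:=deg} on the leftover graph and possibly Theorem~\ref{thm:clique} whenever a residual clique of spokes has to be handled directly.

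The main obstacle I expect is precisely that the eight spokes together with $u$ induce a $K_9$ in $\mathcal{T}(G)$, so Theorem~\ref{thm:=deg} cannot be applied blindly to the residual graph: the colouring order must be chosen so that, once the forgettable degree-$5$ vertices are set aside, the still-uncoloured elements either form an even cycle or a diamond path as above, or split into blocks none of which is a complete graph or an odd cycle. Doing this while juggling all the chord cases — and adapting the argument to the slightly different list sizes forced by the absence of $v_8v_1$ in $C_{\ref{C12}c}$ — is where the real work lies; as the instances are too large to search for a Nullstellensatz monomial, every case is carried out by hand.
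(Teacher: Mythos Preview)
Your plan is in the right spirit but significantly overcomplicates what is actually needed, and it misses two simplifications that make the paper's proof short and case-free.

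First, the paper observes immediately that $C_{\ref{C12}b}$ is a \emph{sub-configuration} of $C_{\ref{C12}c}$ (the former is obtained from the latter by lowering the degree of one vertex from $6$ to $5$), so only $C_{\ref{C12}c}$ needs to be reduced. You propose treating both separately via the $C_{\ref{C12}a}$ template.

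Second, and more importantly, no chord analysis is needed at all. Unlike in Lemma~\ref{lem:C12a}, the paper does not invoke any $K_5$- or $K_{3,3}$-minor arguments, nor does it split into cases on which chords are present. It simply records that $|\hat{v_1}|,|\hat{v_8}|\geqslant 2$, $|\hat{v_3}|,|\hat{v_5}|\geqslant 4$, and $|\hat{v_2}|,|\hat{v_4}|,|\hat{v_6}|,|\hat{v_7}|\geqslant 6$ (these lower bounds hold regardless of chords), and then writes down a single linear colouring order: colour two spokes avoiding specific cycle-edge lists, then the two tight degree-$6$ vertices (choosing one so that $\hat{u}\neq\hat{f}$ later), then the remaining spokes, then $u$ and the last spoke, then sweep through the cycle edges and degree-$5$ vertices, at each step fixing a colour so that a forthcoming pair of lists stays distinct. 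No invocation of Lemma~\ref{lem:diam}, Corollary~\ref{cor:evencycle}, Theorem~\ref{thm:=deg}, or Theorem~\ref{thm:clique} is needed.

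A minor point: your claimed list sizes are off. The spokes to degree-$6$ vertices $v_1,v_8$ have only $6$ available colours (not $7$), and the cycle edges $v_1v_2$ and $v_7v_8$ have only $4$ (not $5$), because in $C_{\ref{C12}c}$ the edge $v_8v_1$ is \emph{not} removed. This does not break your outline, but it shows the asymmetry you would need to track carefully in a chord-by-chord argument --- an argument the paper avoids entirely.
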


\begin{proof}
  First note that $C_{\ref{C12}b}$ is a sub-configuration of
  $C_{\ref{C12}c}$. It is thus sufficient to prove that $G$ does not
  contain $C_{\ref{C12}c}$.  We use the notation depicted in
  Figure~\ref{fig:C12bc}. By minimality, we color
  $G\setminus\{a,\ldots,o\}$ and uncolor $u,v_1,\ldots,v_8$.

  \begin{figure}[!h]
    \centering
    \begin{tikzpicture}[v/.style={draw=black,minimum size = 10pt,ellipse,inner sep=1pt}]
      \node[v,label=85:{$u$}] (u) at (0,0)  {8};
      \node[v,label=right:{$v_7$}] (v0) at (45:1.5) {5};
      \node[v,label=above:{$v_6$}] (v1) at (90:1.5) {5};
      \node[v,label=left:{$v_5$}] (v2) at (135:1.5) {6};
      \node[v,label=left:{$v_4$}] (v3) at (180:1.5) {5};
      \node[v,label=left:{$v_3$}] (v4) at (225:1.5) {6};
      \node[v,label=below:{$v_2$}] (v5) at (270:1.5) {5};
      \node[v,label=right:{$v_1$}] (v6) at (315:1.5) {6};
      \node[v,label=right:{$v_8$}] (v7) at (0:1.5) {6};
      \draw (v6) -- (u) node[midway,right] {$a$};
      \draw (v5) -- (u) node[midway,right] {$b$};
      \draw (v4) -- (u) node[midway,below] {$c$};
      \draw (v3) -- (u) node[midway,below] {$d$};
      \draw (v2) -- (u) node[midway,left] {$e$};
      \draw (v1) -- (u) node[midway,left] {$f$};
      \draw (v0) -- (u) node[midway,above] {$g$};
      \draw (v7) -- (u) node[midway,above] {$h$};
      \draw (v5) -- (v6) node[midway,below] {$i$};
      \draw (v5) -- (v4) node[midway,below] {$j$};
      \draw (v4) -- (v3) node[midway,left] {$k$};
      \draw (v3) -- (v2) node[midway,left] {$\ell$};
      \draw (v2) -- (v1) node[midway,above] {$m$};
      \draw (v1) -- (v0) node[midway,above] {$n$};
      \draw (v0) -- (v7) node[midway,right] {$o$};
    \end{tikzpicture}
\caption{Notation for Lemma~\ref{lem:C12bc}}
    \label{fig:C12bc}
  \end{figure}
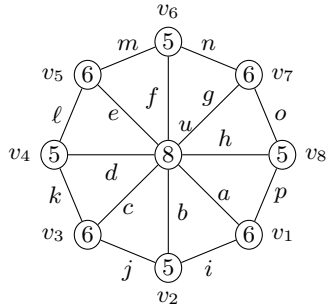

  We have $|\hat{i}|=|\hat{o}|=4$,
  $|\hat{j}|=|\hat{k}|=|\hat{\ell}|=|\hat{m}|=5$,
  $|\hat{a}|=|\hat{h}|=|\hat{n}|=6$, $|\hat{c}|=|\hat{e}|=7$,
  $|\hat{b}|=|\hat{d}|=|\hat{f}|=|\hat{g}|=8$ and
  $|\hat{u}|=10$. Moreover, $|\hat{v_1}|,|\hat{v_8}|$ are at least 2,
  $|\hat{v_3}|,|\hat{v_5}|$ are at least $4$ and
  $|\hat{v_2}|,|\hat{v_4}|,|\hat{v_6}|,|\hat{v_7}|$ are at least $6$.

  We color $c$ with a color not in $\hat{j}$ and $d$ with a color not
  in $\hat{k}$. Then, we color $v_1$, and $v_8$ such that
  $\hat{u}\neq\hat{f}$. We color $a,h,e,b,g$, then $u,f$ since
  $\hat{u}\neq\hat{f}$, and color $i,o$. We then color $v_3$ such that
  $\hat{v_4}\neq\hat{\ell}$. Then we color $v_2,j,k$ (if
  $v_2v_4\in E(G)$, when we color $v_2$, we ensure that we still have
  $\hat{v_4}\neq\hat{\ell}$). We then color $\hat{v_5}$ such that
  $\hat{v_6}\neq\hat{n}$, $v_4$ and $\ell$ (since $\hat{v_4}$ and
  $\hat{\ell}$ are different and of size at least one), then $m,v_7$
  arbitrarily, and finally $v_6$ and $n$ (since again $\hat{v_6}$ and
  $\hat{n}$ are different of size at least one).
\end{proof}

\begin{lemma}
  \label{lem:C12d}
The graph $G$ does not contain $C_{\ref{C12}d}$.
\end{lemma}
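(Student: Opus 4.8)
The plan is to follow closely the proof of Lemma~\ref{lem:C12bc}: as configurations, $C_{\ref{C12}d}$ and $C_{\ref{C12}c}$ differ only in where the four $6$-vertices sit among the neighbours of $u$, and the same colouring scheme should work. Write $v_1,\dots,v_8$ for the neighbours of $u$ in cyclic order, with $d(v_2)=d(v_4)=d(v_5)=d(v_7)=5$, $d(v_1)=d(v_3)=d(v_6)=d(v_8)=6$, the rim edges $v_1v_2,v_2v_3,\dots,v_7v_8$ all in $G$, and the face of $u$ between $v_8$ and $v_1$ not a triangle (this is exactly what makes the two $(6,6)$-neighbours $v_2$ and $v_7$ lie at triangle-distance at least $3$). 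Let $a,\dots,h$ denote the spokes $uv_1,\dots,uv_8$ and $i,\dots,o$ the seven rim edges; by minimality I colour $G'=G\setminus\{a,\dots,o\}$ and uncolour $u,v_1,\dots,v_8$. I would first note, using $C_{\ref{C3a}}$ and $C_{\ref{C3b}}$, that no edge joins two $5$-vertices except possibly $v_2v_7$ (indeed $v_4$ and $v_5$ are already $5$-neighbours of each other, and every remaining pair of $5$-vertices either contains $v_4$ or $v_5$ or has a common rim neighbour of degree $6$), so the only edges among uncoloured vertices that may fail to be drawn are $v_2v_7$, possibly $v_1v_8$, and edges joining two of $v_1,v_3,v_6,v_8$; by planarity, as in Lemma~\ref{lem:C12a}, only a small non-crossing family of these can occur simultaneously.

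By Remark~\ref{obs:lowerbound}, in the generic case one gets $|\hat u|=10$, $|\hat a|=|\hat h|=6$, $|\hat c|=|\hat f|=7$, $|\hat b|=|\hat d|=|\hat e|=|\hat g|=8$ for the spokes, $|\hat i|=|\hat o|=4$, $|\hat j|=|\hat k|=|\hat m|=|\hat n|=5$, $|\hat\ell|=6$ for the rim, $|\hat{v_1}|=|\hat{v_8}|=2$, $|\hat{v_3}|=|\hat{v_6}|=4$, and $|\hat{v_2}|=|\hat{v_4}|=|\hat{v_5}|=|\hat{v_7}|=6$. I would then colour essentially as in Lemma~\ref{lem:C12bc}: colour the two tight vertices $v_1$ and $v_8$ first (which is harmless as long as their lists still have size $2$, and only shrinks the much larger lists of their neighbours), picking the colour of $v_8$ so as to arrange $\hat u\neq\hat f$ later; then colour $c=uv_3$ avoiding $\hat k$ and $f=uv_6$ avoiding $\hat m$, then the remaining spokes and $u$ (using $\hat u\neq\hat f$), and finally handle the rim path $i\,v_2\,j\,v_3\,k\,v_4\,\ell\,v_5\,m\,v_6\,n\,v_7\,o$ by colouring its elements in an order where each has strictly more available colours than uncoloured neighbours, forgetting the $5$-vertices once their rim edges are done and invoking Lemma~\ref{lem:diam} (and Corollary~\ref{cor:evencycle} for any residual even cycle) on the surviving sub-chains. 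The adjacent pair $v_4,v_5$ is treated by colouring the spokes $d=uv_4$ and $e=uv_5$ last among that block and forcing $\hat{v_4}\neq\hat\ell$ (resp.\ $\hat{v_5}\neq\hat m$) beforehand, so that Lemma~\ref{lem:diam} applies to the sub-chain $k\,v_4\,\ell\,v_5\,m$. When an extra edge is present I would, exactly as in Lemma~\ref{lem:C12a}, either colour its two endpoints with the same colour to reduce to the generic case, or use that the edge enlarges some lists and shortens some cycle, then finish by applying Theorem~\ref{thm:=deg} (or Theorem~\ref{thm:clique} on a residual clique) to whatever remains of $\mathcal{T}(G\setminus G')$ after deleting the forgettable vertices.

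The main obstacle will not be a single colouring step but the case bookkeeping: producing one uniform order, or a clean split into a handful of cases according to which of $v_2v_7$, $v_1v_8$ and the edges among $v_1,v_3,v_6,v_8$ are present, in which every element keeps strictly more candidates than uncoloured neighbours at the moment it is coloured. The delicate point is the sub-path $v_4\,\ell\,v_5$ between the two $(5,6)$-neighbours: both $v_4$ and $v_5$ are $5$-vertices carrying two rim edges and a spoke, so their lists become small simultaneously once the spokes are coloured, and the order in which their spokes and the rim edges $k,\ell,m$ are coloured must be chosen so that, when the diamond-chain lemma is finally applied, the two ends of each residual sub-chain are precisely the elements whose lists were forced to differ. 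By analogy with Lemma~\ref{lem:C12bc} I expect no suitable Combinatorial Nullstellensatz monomial to be available, so only the case-analysis proof would be presented.
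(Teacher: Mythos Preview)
Your approach is the same as the paper's in spirit—mirror the argument of Lemma~\ref{lem:C12bc}—but you significantly overcomplicate what turns out to be a very short greedy proof. The paper's entire argument is a single linear ordering with two ``choose so that two lists differ'' tricks: colour $f$ avoiding $\hat{n}$ and $b$ avoiding $\hat{j}$; colour $v_8$, then $v_1$ so that $\hat{u}\neq\hat{d}$; colour $a,h,c,e,g$, then $u,d$; colour $i,o,v_6,v_7,n,m,v_5$; colour $v_3$ so that $\hat{v_4}\neq\hat{\ell}$; colour $v_2,j,k$; finally colour $v_4,\ell$. No case split on extra edges is performed (the paper simply records that $|\hat{v_i}|$ is \emph{at least} the generic value, which absorbs all such cases), and no appeal to Lemma~\ref{lem:diam}, Corollary~\ref{cor:evencycle}, Theorem~\ref{thm:=deg}, or Theorem~\ref{thm:clique} is needed.

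In particular, your ``delicate point'' about the sub-path $v_4\,\ell\,v_5$ dissolves: the paper does not treat $v_4$ and $v_5$ symmetrically or via a diamond-chain. It simply colours $v_5$ greedily in the middle of the rim sweep (after $m$ has been coloured), and postpones only the pair $v_4,\ell$ to the very end, where the single inequality $\hat{v_4}\neq\hat{\ell}$ (arranged when colouring $v_3$) suffices. Your plan to force both $\hat{v_4}\neq\hat{\ell}$ and $\hat{v_5}\neq\hat{m}$ and then invoke Lemma~\ref{lem:diam} on $k\,v_4\,\ell\,v_5\,m$ would also work, but it is more machinery than the situation requires. Likewise, your worry about enumerating which of $v_2v_7$, $v_1v_8$, and the edges among the $6$-vertices are present is unnecessary: the greedy order never needs the exact list sizes of the $v_i$, only the stated lower bounds.
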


\begin{proof}
  We use the notation depicted in Figure~\ref{fig:C12d}.  By
  minimality, we color $G\setminus\{a,\ldots,o\}$ and uncolor
  $u,v_1,\ldots,v_8$.
  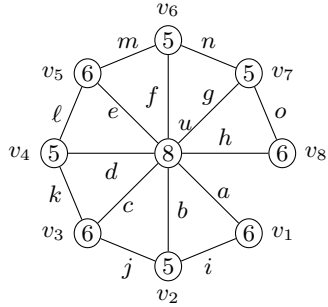
\begin{figure}[!h]
    \centering
    \begin{tikzpicture}[v/.style={draw=black,minimum size = 10pt,ellipse,inner sep=1pt}]
      \node[v,label=85:{$u$}] (u) at (0,0)  {8};
      \node[v,label=right:{$v_7$}] (v0) at (45:1.5) {5};
      \node[v,label=above:{$v_6$}] (v1) at (90:1.5) {6};
      \node[v,label=left:{$v_5$}] (v2) at (135:1.5) {5};
      \node[v,label=left:{$v_4$}] (v3) at (180:1.5) {5};
      \node[v,label=left:{$v_3$}] (v4) at (225:1.5) {6};
      \node[v,label=below:{$v_2$}] (v5) at (270:1.5) {5};
      \node[v,label=right:{$v_1$}] (v6) at (315:1.5) {6};
      \node[v,label=right:{$v_8$}] (v7) at (0:1.5) {6};
      \draw (v6) -- (u) node[midway,right] {$a$};
      \draw (v5) -- (u) node[midway,right] {$b$};
      \draw (v4) -- (u) node[midway,below] {$c$};
      \draw (v3) -- (u) node[midway,below] {$d$};
      \draw (v2) -- (u) node[midway,left] {$e$};
      \draw (v1) -- (u) node[midway,left] {$f$};
      \draw (v0) -- (u) node[midway,above] {$g$};
      \draw (v7) -- (u) node[midway,above] {$h$};
      \draw (v5) -- (v6) node[midway,below] {$i$};
      \draw (v5) -- (v4) node[midway,below] {$j$};
      \draw (v4) -- (v3) node[midway,left] {$k$};
      \draw (v3) -- (v2) node[midway,left] {$\ell$};
      \draw (v2) -- (v1) node[midway,above] {$m$};
      \draw (v1) -- (v0) node[midway,above] {$n$};
      \draw (v0) -- (v7) node[midway,right] {$o$};
    \end{tikzpicture}
\caption{Notation for Lemma~\ref{lem:C12d}}
    \label{fig:C12d}
  \end{figure}

  We have $|\hat{i}|=|\hat{o}|=4$,
  $|\hat{j}|=|\hat{k}|=|\hat{m}|=|\hat{n}|=5$,
  $|\hat{a}|=|\hat{h}|=|\hat{\ell}|=6$, $|\hat{c}|=|\hat{f}|=7$,
  $|\hat{b}|=|\hat{d}|=|\hat{e}|=|\hat{g}|=8$ and
  $|\hat{u}|=10$. Moreover, $|\hat{v_1}|,|\hat{v_8}|$ are at least 2,
  $|\hat{v_3}|,|\hat{v_6}|$ are at least $4$ and
  $|\hat{v_2}|,|\hat{v_4}|,|\hat{v_5}|,|\hat{v_7}|$ are at most $6$.

  We color $f$ with a color not in $\hat{n}$ and $b$ with a color not
  in $\hat{j}$, then we color $v_8$, and $v_1$ such that
  $\hat{u}\neq\hat{d}$. We color $a,h,c,e,g$, then $u$ and $d$ since
  $\hat{u}\neq\hat{d}$, then $i,o,v_6,v_7,n,m,v_5$. We color $v_3$
  such that $\hat{v_4}\neq\hat{\ell}$, then $v_2,j,k$ and finally
  $v_4,\ell$ since $\hat{v_4}\neq\hat{\ell}$.
\end{proof}

\subsection{Configuration $C_{\ref{C13}}$}
By definition, if $G$ contains $C_{\ref{C13}}$, then we are in one of
the following cases:
\begin{itemize}
\item[$\bullet$] $C_{\ref{C13}a}$: $v_1,\ldots,v_4$ are weak neighbors of $u$ of
  degree $4$ and $u$ has a neighbor $w$ of degree $7$.
\item[$\bullet$] $C_{\ref{C13}b}$: $v_1,\ldots,v_4$ are $(7,8)$-neighbors of $u$
  such that $v_1,v_2,v_3$ have degree $4$ and $v_4$ has degree at most
  $5$.
\item[$\bullet$] $C_{\ref{C13}c}$: $u$ has a $(7,7)$-neighbor $v_1$ of degree
  $4$, a weak neighbor $v_2$ of degree $4$ and two non-adjacent
  neighbors $v_3,v_4$ of degree $5$ such that
  $\dist_u(v_1,v_2)=\dist_u(v_1,v_3)=2$.
\end{itemize}

We dedicate a lemma to each of these configurations.

\begin{lemma}
  \label{lem:C13a}
  The graph $G$ does not contain $C_{\ref{C13}a}$.
\end{lemma}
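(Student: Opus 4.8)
The plan is the usual reduction: delete a well‑chosen set of edges around $u$, recolour the rest by minimality, and re‑extend. The first thing to observe is that, since $v_1,v_2,v_3,v_4$ are weak neighbours of $u$ that are pairwise non‑adjacent, no two of them are consecutive in the cyclic order of $N(u)$; as $u$ has degree $8$ this forces them to alternate with the other four neighbours, so that the cyclic neighbourhood of $u$ may be written $w,v_1,x_1,v_4,x_2,v_3,x_3,v_2$ (matching Figure~\ref{fig:C13init}), with $w$ the prescribed degree‑$7$ neighbour and $x_1,x_2,x_3$ arbitrary (of degree $\geqslant 3$ by $C_{\ref{C1}}$, and $8$ in the worst case). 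Because each $v_i$ is weak, $u$ is triangulated, and $wv_1,v_1x_1,x_1v_4,v_4x_2,x_2v_3,v_3x_3,x_3v_2,v_2w$ are all edges of $G$. I would then colour $G'=G\setminus E'$ by minimality, where $E'$ is the eight edges at $u$ together with these eight ``link'' edges, and uncolour $u,v_1,v_2,v_3,v_4$. By Remark~\ref{obs:lowerbound} the worst‑case sizes are $|\hat u|=|\hat{v_i}|=6$, $|\hat{uv_i}|=9$, $|\hat{uw}|=5$, $|\hat{ux_j}|=4$, while each link edge incident with $w$ has size $\geqslant 4$ and the remaining six link edges have size $\geqslant 3$.

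Next I would forget $v_1,v_2,v_3,v_4$, which is legitimate as they have degree $4$ in $G$. The still‑uncoloured part of $\mathcal{T}(G\setminus G')$, namely $\{u\}\cup E'$, then has a very clean shape: a clique $K$ of size $9$ (the vertex $u$ together with the eight edges at $u$), and an $8$‑cycle $C\colon c_1c_2\cdots c_8c_1$ on the eight link edges (with $c_1=wv_1$, $c_2=v_1x_1$, $c_3=v_4x_1$, $c_4=v_4x_2$, $c_5=v_3x_2$, $c_6=v_3x_3$, $c_7=v_2x_3$, $c_8=v_2w$), where each $c_k$ is joined to exactly two vertices of $K$ — the edge $uv_i$ and the edge $ux_j$ (or $uw$) of the triangle of $u$ containing $c_k$. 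In particular $uw$ is adjacent to the two consecutive cycle‑vertices $c_1,c_8$; $ux_j$ is adjacent to two consecutive $c_k$'s; and each $uv_i$ is adjacent to the two $c_k$'s sitting in its two triangles.

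The colouring then proceeds in three blocks. First colour the clique $K$ (it is colourable by Hall, Theorem~\ref{thm:clique}, since its sorted list sizes $4,4,4,5,6,9,9,9,9$ satisfy $t_k\geqslant k$) — but this has to be done with two extra goals: (a) leaving at least one colour available at $u$, and (b) not letting any of the six ``thin'' cycle edges $c_2,\dots,c_7$ drop below two available colours. The way to do this is to colour the edges $ux_1,ux_2,ux_3$, then the four edges $uv_1,\dots,uv_4$ in an order that colours the two ``doubly‑guarding'' edges $uv_3,uv_4$ first, each time spending the large list $|\hat{uv_i}|=9$ to dodge the (small) lists of the cycle edges it guards, and finally colouring $u$. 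Second — and this is the step I expect to be the main obstacle — one must check that these choices can be made simultaneously: the crude greedy count leaves roughly a one‑colour deficit in the worst case, so a short case analysis is needed (it suffices to treat $d(x_1)=d(x_2)=d(x_3)=8$, and within it to split on which of the lists $\hat{c_k}$ actually contain the colour of the incident $ux_j$, since these containments cannot all occur at once), or else one appeals to the auxiliary Lemma~\ref{lem:aux13} — introduced precisely to edge‑colour the graph arising here — or, failing that, exhibits a suitable monomial for the Combinatorial Nullstellensatz (Theorem~\ref{thm:nss}). Third and last, once $K$ is coloured with all $|\hat{c_k}|\geqslant 2$ and $|\hat{uw}|\geqslant 2$, one finishes by applying Lemma~\ref{lem:fryingpan} to the $8$‑cycle $C$ together with the vertex $uw$ (adjacent to the consecutive cycle‑vertices $c_1,c_8$): this is valid because $c_1$ lost only the colour of $uv_1$, so $|\hat{c_1}|\geqslant 3$, and then the forgotten $v_1,\dots,v_4$ are coloured at the very end. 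Since $G$ is a minimal counterexample, this contradiction shows $G$ does not contain $C_{\ref{C13}a}$.
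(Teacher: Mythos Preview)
Your setup is exactly the paper's: you remove the eight edges at $u$ together with the eight ``link'' edges of the triangulation, uncolour $u,v_1,\dots,v_4$, forget the four $4$-vertices, and you compute the list sizes correctly ($9$ for the $uv_i$, $5$ for $uw$, $4$ for the $ux_j$, $6$ for $u$, $3$ or $4$ for the link edges). So far this matches the paper verbatim.

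Where the proposal breaks down is the execution, and you already flag it yourself. Your plan is: colour the clique $K\setminus\{uw\}$ (eight elements), then apply Lemma~\ref{lem:fryingpan} to the $8$-cycle $C$ together with $uw$. For this you need, after colouring eight pairwise adjacent elements, that $|\hat{uw}|\geqslant 2$ and $|\hat{c_k}|\geqslant 2$ for all six thin link edges. But $|\hat{uw}|=5$ initially, so at most three of the eight clique colours may lie in $\hat{uw}$. In the worst case $\hat{ux_1}=\hat{ux_2}=\hat{ux_3}\subset\hat{uw}$ (all four lists have size $4\leqslant 5$, so nothing forbids this), and then the three $ux_j$-colours already consume the entire budget; now $u$ and all four $uv_i$ must land outside $\hat{uw}$ \emph{and} each $uv_i$ must also dodge the two remaining $\hat{c_k}$'s adjacent to it. With $|\hat{uv_i}|\leqslant 9-4=5$ at that stage and up to $2+2\cdot 2=6$ colours to avoid, the greedy count fails by more than ``roughly one''. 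You acknowledge a deficit but do not resolve it; saying ``a short case analysis is needed'' is precisely where the work lies. Also, your appeal to Lemma~\ref{lem:aux13} is misplaced: that lemma treats a specific ten-edge auxiliary graph that only surfaces \emph{inside} the paper's case analysis after substantial reduction --- it does not apply to the pan $C\cup\{uw\}$ you want to colour.

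For comparison, the paper does \emph{not} try to leave $uw$ for the frying-pan step. After forgetting the $v_i$, it first colours $g=uv_4$ with a colour outside $\hat n\cup\hat o$, and then performs a three-branch case split on whether the lists $\hat d,\hat f,\hat h$ (the three $u$-to-$8$ edges) coincide; within the branches it reduces to even cycles, to Lemma~\ref{lem:fryingpan} on smaller pans, and to Lemma~\ref{lem:aux13}. This case analysis is about two pages long --- not short. As an alternative, the paper exhibits the monomial
\[
A^5 B^4 C^8 D^3 E^8 F^3 G^8 H^3 I^3 J^3 K^2 L^2 M^2 N^2 O^2 P^2,
\]
which has coefficient $16$ in $P_G$, so Theorem~\ref{thm:nss} applies. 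Either of these routes is available to you, but one of them has to be carried out; as written, the proposal stops just before the hard part.
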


\begin{proof}
  We use the notation depicted in Figure~\ref{fig:C13a}. By
  minimality, we color $G'=G\setminus\{a,\ldots,p\}$ and uncolor
  $u,v_1,\ldots,v_4$.
  \begin{figure}[!h]
    \centering
    \begin{tikzpicture}[v/.style={draw=black,minimum size = 10pt,ellipse,inner sep=1pt}]
      \node[v,label=85:{$u$}] (u) at (0,0)  {8};
      \node[v, very thick] (v0) at (45:1.5) {7};
      \node[v,label=above:{$v_1$}] (v1) at (90:1.5) {4};
      \node[v, very thick] (v2) at (135:1.5) {8};
      \node[v,label=left:{$v_4$}] (v3) at (180:1.5) {4};
      \node[v, very thick] (v4) at (225:1.5) {8};
      \node[v,label=below:{$v_3$}] (v5) at (270:1.5) {4};
      \node[v, very thick] (v6) at (315:1.5) {8};
      \node[v,label=right:{$v_2$}] (v7) at (0:1.5) {4};
      \draw (v6) -- (u) node[midway,right] {$d$};
      \draw (v5) -- (u) node[midway,right] {$e$};
      \draw (v4) -- (u) node[midway,below] {$f$};
      \draw (v3) -- (u) node[midway,below] {$g$};
      \draw (v2) -- (u) node[midway,left] {$h$};
      \draw (v1) -- (u) node[midway,left] {$a$};
      \draw (v0) -- (u) node[midway,above] {$b$};
      \draw (v7) -- (u) node[midway,above] {$c$};
      \draw (v5) -- (v6) node[midway,below] {$\ell$};
      \draw (v5) -- (v4) node[midway,below] {$m$};
      \draw (v4) -- (v3) node[midway,left] {$n$};
      \draw (v3) -- (v2) node[midway,left] {$o$};
      \draw (v2) -- (v1) node[midway,above] {$p$};
      \draw (v1) -- (v0) node[midway,above] {$i$};
      \draw (v0) -- (v7) node[midway,right] {$j$};
      \draw (v6) -- (v7) node[midway,right] {$k$};
    \end{tikzpicture}
\caption{Notation for Lemma~\ref{lem:C13a}}
    \label{fig:C13a}
  \end{figure}

  We have
  $|\hat{k}|=|\hat{\ell}|=|\hat{m}|=|\hat{n}|=|\hat{o}|=|\hat{p}|=3$,
  $|\hat{d}|=|\hat{f}|=|\hat{h}|=|\hat{i}|=|\hat{j}|=4$,
  $|\hat{b}|=5$,
  $|\hat{v_1}|=|\hat{v_2}|=|\hat{v_3}|=|\hat{v_4}|=|\hat{u}|=6$, and
  $|\hat{a}|=|\hat{c}|=|\hat{e}|=|\hat{g}|=9$.

  We forget $v_1,v_2,v_3,v_4$ and conclude using
  Theorem~\ref{thm:nss}.
\end{proof}

\begin{lemma}
  \label{lem:C13b}
  The graph $G$ does not contain $C_{\ref{C13}b}$.
\end{lemma}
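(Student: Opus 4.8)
The plan is to reduce $C_{\ref{C13}b}$ along the same lines as $C_{\ref{C13}a}$. First I would fix notation through a figure: $u$ is the central $8$-vertex and, reading its neighbours cyclically, they form an alternating $8$-cycle $v_1,x_1,v_2,x_2,v_3,x_3,v_4,x_4$, where $d(v_1)=d(v_2)=d(v_3)=4$, $d(v_4)\leqslant 5$, and the $(7,8)$-neighbour hypothesis applied to each $v_i$ forces the degrees of $x_1,\dots,x_4$ to alternate, so up to symmetry $d(x_1)=d(x_3)=7$ and $d(x_2)=d(x_4)=8$. The $16$ edges to delete are the eight spokes $uv_i$, $ux_i$ and the eight rim edges $v_ix_i$, $x_iv_{i+1}$ (indices mod $4$); I label them $a,\dots,p$. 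The four vertices $x_1,\dots,x_4$ keep the colour they receive from the minimality colouring and are drawn in boldface. I colour $G'=G\setminus\{a,\dots,p\}$ by minimality, uncolour $u,v_1,v_2,v_3,v_4$, and apply Remark~\ref{obs:lowerbound}. A constraint count then gives, in the worst case: about nine available colours on each spoke $uv_i$ with $d(v_i)=4$, about eight on $uv_4$ when $d(v_4)=5$; five on $ux_1,ux_3$ and four on $ux_2,ux_4$; four on each rim edge incident to $x_1$ or $x_3$ and as few as two on a rim edge between $v_4$ and $x_2$ or $x_4$; and $|\hat u|\geqslant 6$, $|\hat{v_i}|\geqslant 6$ for $i\leqslant 3$, $|\hat{v_4}|\geqslant 4$. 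Since $v_1,v_2,v_3$ have degree $4$ they can be forgotten at once.

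Next I would choose a colouring order that spends the plentiful colours on the spokes to the degree-$4$ vertices first (or forgets those spokes together with their endpoints), colours the rim edges around the tight corners near $x_2$ and $x_4$ while they still have room, and keeps $u$ and the two short-list spokes $ux_2,ux_4$ for the very end, finishing the residual wheel-like structure on $\mathcal{T}(G)$ with Corollary~\ref{cor:evencycle}, Lemma~\ref{lem:fryingpan}, or Lemma~\ref{lem:diam}, exactly as in the proof of Lemma~\ref{lem:C13a}. Some bookkeeping is unavoidable: when $d(v_4)=5$ the vertex $v_4$ cannot be forgotten for free, so it must be coloured at a controlled moment, typically after deleting from an incident spoke or rim edge a colour of $\hat{v_4}$ chosen to preserve a later list inequality; and a few edges among $x_1,\dots,x_4$, or the remaining edge at each $v_i$, may or may not be present, but these only shrink lists that are already generous and are absorbed into the worst case. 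In parallel I would run the exhaustive search for a monomial of $P_G$ of maximal degree whose every exponent is below the corresponding list size and whose coefficient (Maple-checked) is nonzero, giving the shorter Nullstellensatz proof; as for several neighbouring configurations this search may fail to terminate, in which case the case analysis above is the fallback.

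The main obstacle will be organising the colouring of the remaining $16$-edge structure so that one argument covers all sub-cases simultaneously — in particular the interaction between the tight lists on $ux_2,ux_4$, their incident rim edges, and the vertex $u$, whose list of size $6$ must survive its eight uncoloured spoke-neighbours until the last step. Inserting $v_4$ (when of degree $5$) into this schedule without breaking that balance, and arranging that the final leftover piece falls within the scope of Lemma~\ref{lem:diam} or Lemma~\ref{lem:fryingpan}, is where the real work lies; the rest is routine constraint counting.
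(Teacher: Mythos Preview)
Your setup is correct (the $(7,8)$ hypothesis does force the degrees of the $x_i$ to alternate, and your constraint counts are right), but you diverge from the paper at a crucial point: you keep all four $x_i$ coloured, whereas the paper \emph{uncolours the two $7$-vertices} $w_1,w_2$ in addition to $u,v_1,\dots,v_4$. This is not cosmetic. Uncolouring $w_1,w_2$ raises the lists on the spokes $b=uw_1$, $f=uw_2$ and on the rim edges incident to them (e.g.\ $|\hat n|$ goes from $3$ to $4$), and more importantly it makes $w_1$ an uncoloured vertex that can participate in a joint colouring step. The paper then colours the five elements $\{u,v_4,w_1,b,g\}$ simultaneously by invoking Theorem~\ref{thm:=deg} on the subgraph of $\mathcal T(G)$ they induce --- this is the clean way to absorb the degree-$5$ vertex $v_4$, the central vertex $u$, and the tight spoke $b$ in one shot, and it is the step your outline is missing. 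After that the endgame is a single application of Lemma~\ref{lem:fryingpan} to the cycle $aijc$ with element $e$, preceded by the usual ``remove a colour $\alpha\in\hat m\setminus\hat\ell$ from a few lists'' trick to guarantee $\hat m\neq\hat\ell$ at the right moment.

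Two further points. First, the paper reports explicitly that the Nullstellensatz search \emph{fails} for $C_{\ref{C13}b}$, so your parallel search will not rescue you; the case analysis is mandatory here. Second, the difficulty you identify --- scheduling $u$, $v_4$, and the tight spokes $ux_2,ux_4$ so that the leftover fits Lemma~\ref{lem:diam} or Lemma~\ref{lem:fryingpan} --- is exactly what the paper sidesteps by uncolouring $w_1,w_2$ and calling Theorem~\ref{thm:=deg}. Without that move you are left trying to thread $v_4$ (list size~$4$ after forgetting is not available, since $d(v_4)=5$) through a structure with several size-$2$ lists, and it is not clear any ordering plus the three auxiliary lemmas suffices. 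Your plan is not wrong in spirit, but it lacks the one concrete idea that makes the argument close.
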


\begin{proof}
  We use the notation depicted in Figure~\ref{fig:C13b}. By minimality,
  we color $G\setminus\{a,\ldots,p\}$ and uncolor
  $u,v_1,\ldots,v_4,w_1,w_2$.
  \begin{figure}[!h]
    \centering
    \begin{tikzpicture}[v/.style={draw=black,minimum size = 10pt,ellipse,inner sep=1pt}]
      \node[v,label=85:{$u$}] (u) at (0,0)  {8};
      \node[v,label=right:{$w_1$}] (v0) at (45:1.5) {7};
      \node[v,label=above:{$v_1$}] (v1) at (90:1.5) {4};
      \node[v, very thick] (v2) at (135:1.5) {8};
      \node[v,label=left:{$v_4$}] (v3) at (180:1.5) {5};
      \node[v,label=left:{$w_2$}] (v4) at (225:1.5) {7};
      \node[v,label=below:{$v_3$}] (v5) at (270:1.5) {4};
      \node[v, very thick] (v6) at (315:1.5) {8};
      \node[v,label=right:{$v_2$}] (v7) at (0:1.5) {4};
      \draw (v6) -- (u) node[midway,right] {$d$};
      \draw (v5) -- (u) node[midway,right] {$e$};
      \draw (v4) -- (u) node[midway,below] {$f$};
      \draw (v3) -- (u) node[midway,below] {$g$};
      \draw (v2) -- (u) node[midway,left] {$h$};
      \draw (v1) -- (u) node[midway,left] {$a$};
      \draw (v0) -- (u) node[midway,above] {$b$};
      \draw (v7) -- (u) node[midway,above] {$c$};
      \draw (v5) -- (v6) node[midway,below] {$\ell$};
      \draw (v5) -- (v4) node[midway,below] {$m$};
      \draw (v4) -- (v3) node[midway,left] {$n$};
      \draw (v3) -- (v2) node[midway,left] {$o$};
      \draw (v2) -- (v1) node[midway,above] {$p$};
      \draw (v1) -- (v0) node[midway,above] {$i$};
      \draw (v0) -- (v7) node[midway,right] {$j$};
      \draw (v6) -- (v7) node[midway,right] {$k$};
    \end{tikzpicture}
\caption{Notation for Lemma~\ref{lem:C13b}}
    \label{fig:C13b}
  \end{figure}

  We have $|\hat{o}|=2$, $|\hat{k}|=|\hat{\ell}|=|\hat{p}|=3$,
  $|\hat{d}|=|\hat{h}|=|\hat{n}|=4$,
  $|\hat{i}|=|\hat{j}|=|\hat{m}|=5$,
  $|\hat{b}|=|\hat{f}|=6$,
  $|\hat{u}|=|\hat{g}|=8$ and
  $|\hat{a}|=|\hat{c}|=|\hat{e}|=9$. Moreover, we may also assume that
  $|\hat{w_1}|,|\hat{w_2}|$ are at least $2$, $|\hat{v_4}|$ is at least $5$ and $|\hat{v_1}|=|\hat{v_2}|=|\hat{v_3}|$ are at least $7$ (depending on whether $w_1w_2$ is an edge).

  We forget $v_1,v_2,v_3$, color $h$ with a color not in $\hat{p}$,
  then color $o$. We remove from $\hat{w_2},\hat{f}$ and $\hat{n}$ a
  color $\alpha\in\hat{m}\setminus\hat{\ell}$. Then, we color
  $w_2,n,f,d$. We color $u,v_4,w_1,b,g$ applying
  Theorem~\ref{thm:=deg} on the subgraph of $\mathcal{T}(G)$ they
  induce. Due to the choice of $\alpha$, we have
  $\hat{m}\neq\hat{\ell}$ if $|\hat{m}|=|\hat{\ell}|=2$ thus we can
  color $\ell$ with a color not in $\hat{m}$ and forget $m$. We then
  color $k$, then $p$ such that $\hat{a}\neq\hat{e}$ and apply
  Lemma~\ref{lem:fryingpan} on $\mathcal{T}(G)$ with the cycle $aijc$
  and the element $e$.
\end{proof}

\begin{lemma}
  \label{lem:C13c}
  The graph $G$ does not contain $C_{\ref{C13}c}$.
\end{lemma}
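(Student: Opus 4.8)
The plan is to reduce $C_{\ref{C13}c}$ by the same scheme as Lemmas~\ref{lem:C13a} and~\ref{lem:C13b}. First I would set up the notation. Since $v_1$ is a $(7,7)$-neighbor of $u$, the two faces at $uv_1$ are triangles $uv_1w_1$ and $uv_1w_2$ with $d(w_1)=d(w_2)=7$; the conditions $\dist_u(v_1,v_2)=\dist_u(v_1,v_3)=2$ then force, up to renaming, that $uw_1v_2$ and $uw_2v_3$ are triangular faces, so $w_1$ is adjacent to $v_2$ and $w_2$ to $v_3$. I would then use $C_{\ref{C1}}$, $C_{\ref{C3a}}$ and $C_{\ref{C3b}}$ to forbid all but a handful of the edges among $\{v_1,v_2,v_3,v_4,w_1,w_2\}$, and planarity --- through $K_5$- and $K_{3,3}$-minor arguments around $u$ as in Lemma~\ref{lem:C12a} --- to bound how many of the remaining candidate edges can coexist. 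With this done I would draw the configuration in a figure analogous to Figure~\ref{fig:C13b}, name the eight spokes from $u$ and the rim edges bounding triangular faces at $u$, colour $G$ minus these edges by minimality, and uncolour $u,v_1,v_2,v_3,v_4,w_1,w_2$. Remark~\ref{obs:lowerbound} then fixes the list sizes: the spokes $uv_1$ and $uv_2$ to the two degree-$4$ weak neighbours are the largest (size at least $9$), the spokes to the two $8$-vertices and to $w_1,w_2$ are smaller, the rim edges between two low-degree vertices have size $3$, and a few sizes vary with the non-drawn edges.

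Next I would peel off the low-degree vertices. As $v_1,v_2$ have degree $4$ they have at most eight $\mathcal{T}(G)$-neighbours and are forgotten immediately; the degree-$5$ vertices $v_3,v_4$ are forgotten after colouring two or three of their incident edges first, so that their number of uncoloured $\mathcal{T}(G)$-neighbours drops below their list size. After that the uncoloured part is a wheel-like graph on $u,w_1,w_2$ together with the spokes and rim edges, which I would colour with the usual toolbox: a \emph{kernel} step that removes a well-chosen colour $\alpha$ from the lists along a path joining the two $9$-sized spokes so that these two edges can always be coloured last; then colouring $u$ and the remaining spokes; and finally closing the rim and the triangle edges at $w_1,w_2$ by Lemma~\ref{lem:diam}, Lemma~\ref{lem:fryingpan}, Corollary~\ref{cor:evencycle}, Theorem~\ref{thm:=deg}, Lemma~\ref{lem:aux13} or Theorem~\ref{thm:clique}, branching on which extra edges are present --- exactly the pattern of Lemma~\ref{lem:C13b}. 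Should an exhaustive search turn up a suitable monomial, a one-line Combinatorial Nullstellensatz argument (checked by the Maple code) would serve as an alternative.

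The step I expect to be the main obstacle is the bookkeeping forced by the edges that may or may not be present among $v_2,v_3,v_4,w_1,w_2$: each such edge shifts several list sizes at once, so one must simultaneously invoke planarity to rule out incompatible combinations and organise the colouring extension so that a single procedure --- or a clean split into two or three sub-cases, each dispatched by one application of Lemma~\ref{lem:diam} or Lemma~\ref{lem:fryingpan} --- handles all of them. Everything else (the list-size arithmetic, forgetting $v_1,\dots,v_4$, and the final even-cycle and frying-pan colourings) is routine and parallels Lemmas~\ref{lem:C13a} and~\ref{lem:C13b}.
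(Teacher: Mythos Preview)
Your backup plan --- the Combinatorial Nullstellensatz --- is in fact the \emph{only} proof the paper gives here. The authors state explicitly that they do not have a case-analysis argument for $C_{\ref{C13}c}$; after forgetting $v_1,v_2$ they produce two monomials (one when $w_1w_2\notin E(G)$, another when $w_1w_2\in E(G)$) and invoke Theorem~\ref{thm:nss}. So the programme you sketch as primary --- peel off the small vertices, run a kernel step, finish with Lemma~\ref{lem:diam}/\ref{lem:fryingpan} as in Lemmas~\ref{lem:C13a} and~\ref{lem:C13b} --- is precisely what the authors could not make work; it is not a reliable plan, and you should lead with the Nullstellensatz instead.

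Your setup is also heavier than necessary. In the paper's figure only \emph{one} of the two degree-$5$ vertices (their $v_4$) sits at triangle-distance $2$ from $v_1$; the other ($v_3$) is merely a neighbour of $u$ with no face constraint, and the paper neither uncolours it nor removes all eight spokes --- one spoke of $u$ stays coloured. Consequently the only non-drawn edge that needs a case split is $w_1w_2$; the $K_5$/$K_{3,3}$ planarity bookkeeping you anticipate does not arise. Note also that in your larger setup the ``loose'' degree-$5$ vertex has exactly two uncoloured $\mathcal{T}(G)$-neighbours ($u$ and its spoke) against a list of size $2$, so it cannot be forgotten before $u$ is coloured --- the paper sidesteps this by simply leaving that vertex coloured.
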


\begin{proof}
  We use the notation depicted in Figure~\ref{fig:C13c}. By
  minimality, we color $G'=G\setminus\{a,\ldots,\ell\}$ and uncolor
  $u,v_1,v_2,v_4,w_1,w_2$.
  \begin{figure}[!h]
    \centering
    \begin{tikzpicture}[v/.style={draw=black,minimum size = 10pt,ellipse,inner sep=1pt}]
      \node[v,label=85:{$u$}] (u) at (0,0)  {8};
      \node[v,label=right:{$w_2$}] (v0) at (45:1.5) {7};
      \node[v,label=above:{$v_1$}] (v1) at (90:1.5) {4};
      \node[v,label=left:{$w_1$}] (v2) at (135:1.5) {7};
      \node[v,label=left:{$v_4$}] (v3) at (180:1.5) {5};
      \node[v, very thick, label=right:{$v_3$}] (v5) at (270:1.5) {5};
      \node[v, very thick] (v6) at (315:1.5) {8};
      \node[v,label=right:{$v_2$}] (v7) at (0:1.5) {4};
      \draw (v6) -- (u) node[midway,right] {$f$};
      \draw (v5) -- (u) node[midway,right] {$g$};
      \draw (v3) -- (u) node[midway,below] {$a$};
      \draw (v2) -- (u) node[midway,left] {$b$};
      \draw (v1) -- (u) node[midway,left] {$c$};
      \draw (v0) -- (u) node[midway,above] {$d$};
      \draw (v7) -- (u) node[midway,above] {$e$};
      \draw (v3) -- (v2) node[midway,left] {$h$};
      \draw (v2) -- (v1) node[midway,above] {$i$};
      \draw (v1) -- (v0) node[midway,above] {$j$};
      \draw (v0) -- (v7) node[midway,right] {$k$};
      \draw (v6) -- (v7) node[midway,right] {$\ell$};
    \end{tikzpicture}
\caption{Notation for Lemma~\ref{lem:C13c}}
    \label{fig:C13c}
  \end{figure}

  We have $|\hat{f}|=|\hat{\ell}|=2$, $|\hat{h}|=3$,
  $|\hat{v_4}|=|\hat{g}|=4$,
  $|\hat{b}|=|\hat{d}|=|\hat{i}|=|\hat{j}|=|\hat{k}|=5$,
  $|\hat{a}|=|\hat{u}|=6$, $|\hat{v_2}|=7$, and
  $|\hat{v_1}|=|\hat{c}|=|\hat{e}|=8$. Moreover,
  $|\hat{w_1}|,|\hat{w_2}|$ depend on whether
  $w_1w_2,w_2v_4\in E(G)$. In each case, we forget $v_1, v_2$ and we conclude using Theorem~\ref{thm:nss}.
\end{proof}

\subsection{Configuration $C_{\ref{C15}}$}
Due to $C_{\ref{C3a}}$ and to the definition of $C_{\ref{C15}}$, if
$G$ contains $C_{\ref{C15}}$ then $G$ contains a subconfiguration of
one of the three following cases, as shown in Figure~\ref{fig:C15init}:
\begin{itemize}
\item[$\bullet$] $C_{\ref{C15}a}$: $u$ has a weak neighbor of degree 3 and two $(6,6)$-neighbors of degree $5$.
\item[$\bullet$] $C_{\ref{C15}b}$: $u$ has a weak neighbor of degree 3 and three weak neighbors of degree $5$
  and two neighbors of degree $6$, such that there is a triangular
  face containing $u$ and two vertices of degree $5$.
\item[$\bullet$] $C_{\ref{C15}c}$: $u$ has a weak neighbor of degree 3 and three weak neighbors of degree $5$
  and two neighbors of degree $6$, such that there is no triangular
  face containing $u$ and two vertices of degree $5$.
\end{itemize}
We dedicate a lemma to each of these configurations. In each of them,
we did not succeed in finding a suitable monomial for the
Nullstellensatz approach, hence we only present case analysis proofs.

\begin{lemma}
  \label{lem:C15a}
The graph $G$ does not contain $C_{\ref{C15}a}$.
\end{lemma}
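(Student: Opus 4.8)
The plan is to follow the template of the other $8$-vertex reductions (compare Lemmas~\ref{lem:C12a} and~\ref{lem:C13a}): delete the edges incident to the low-degree core, colour the rest by minimality, uncolour the core, and then extend the colouring, in the worst case by a sequential argument that ends with Corollary~\ref{cor:evencycle}, Lemma~\ref{lem:fryingpan} or Lemma~\ref{lem:diam}. Since the authors note that no suitable monomial is known for the $C_{\ref{C15}}$ family, I expect this to be a pure case analysis.

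First I would fix notation via a figure. Let $u$ be the $8$-vertex, let $w$ be its weak $3$-neighbour (present because $C_{\ref{C15}a}$ is the first case of $C_{\ref{C15}}$), with $uwa$ and $uwb$ the two triangular faces at $uw$, and let $x,y$ be the two $(6,6)$-neighbours of degree $5$; recall that a weak neighbour of degree $5$ is triangulated, so $x$ has exactly five neighbours $p_x,u,q_x,s_x,r_x$ in cyclic order, with $d(p_x)=d(q_x)=6$, and likewise for $y$. Before extending, I would record which adjacencies are excluded by configurations already shown reducible: $w$ is adjacent to neither $x$ nor $y$, and $a,b$ differ from $x$ and $y$, by $C_{\ref{C1}}$ (since $d(w)\le 4$ and $d(w)+5\le 10$); $p_x,q_x,p_y,q_y$ differ from $w,x,y$ by a degree count; and $C_{\ref{C3a}}$, $C_{\ref{C3b}}$ restrict how $x$ and $y$ may be joined to one another or to a common degree-$5$ or degree-$6$ vertex. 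What remains optional — whether the two petals share a triangle-vertex, whether $xy\in E(G)$, and whether a triangle-vertex of one petal is adjacent to the other petal — is exactly what produces the case distinction.

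For the reduction itself, I would colour $G'=G$ minus all edges incident to $\{u,w,x,y\}$ by minimality (this is a proper subgraph since $uw$ is deleted), then uncolour $u,w,x,y$. Using Remark~\ref{obs:lowerbound}, the worst-case list sizes are: $|\hat w|=8$ (so, $w$ being a $3$-vertex, I forget it at once); $|\hat u|=5$, since only the five ordinary neighbours of $u$ stay coloured; $|\hat x|$ and $|\hat y|$ equal to $6$, or to $7$ when the corresponding petal meets the other at a ``far'' vertex; and the deleted spoke- and cross-edges retain comparatively long lists. Then, case by case on the optional adjacencies above, I would colour the cross-edges of the two petals and the spokes at $u$ in a well-chosen order, taking care that $\hat u,\hat x,\hat y$ never all shrink to the point of being unusable, until the elements left uncoloured are $x,y$ together with their incident edges, forming an even cycle, a frying-pan or a ``diamond path'', which is finished off by Corollary~\ref{cor:evencycle}, Lemma~\ref{lem:fryingpan} or Lemma~\ref{lem:diam} (with Theorem~\ref{thm:=deg} on the corresponding constraint subgraph as a fallback).

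The hard part will be the bookkeeping. The vertex $u$ has degree $8$, so even after uncolouring it sees eleven uncoloured elements of $\mathcal{T}(G)$ while carrying only $5$ colours, and the two degree-$5$ petals are themselves nearly critical; consequently the order of colouring must be arranged so that $u$, $x$ and $y$ are never simultaneously stuck, and this has to be verified in each of the (few, but delicate) ways the two petals can overlap. Apart from that, the argument is routine and mirrors the proofs of Configurations $C_{\ref{C12}}$ and $C_{\ref{C13}}$.
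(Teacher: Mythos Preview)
Your proposal is a plan rather than a proof: the case analysis you call ``the hard part'' is never carried out, and two structural points are off.

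First, a counting argument you omit shows that the two $(6,6)$-neighbours $x,y$ must sit at triangle-distance exactly $2$ around $u$ and therefore share a common $6$-neighbour of $u$. The weak $3$-vertex $w$ and its two degree-$8$ guards occupy three consecutive cyclic positions; each of $x,y$ must be flanked on both sides by $6$-vertices, so among the five remaining positions $n_1,\dots,n_5$ the only possibility is $\{x,y\}=\{n_2,n_4\}$ with $n_3$ a shared $6$-vertex. Your case split on ``whether the two petals share a triangle-vertex'' is thus vacuous, and the paper's proof hinges precisely on this shared vertex: it uncolours that $6$-vertex ($w_2$ in Figure~\ref{fig:C15a}) together with $u,v_1,v_2,v_3$, and both branches of its case analysis finish by applying Lemma~\ref{lem:diam} to a path running through $w_2$ (for instance $kv_2\ell w_2$). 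Your setup, which uncolours only $\{u,w,x,y\}$, has no analogue of this move.

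Second, by deleting \emph{all} edges incident to $x$ and $y$ you pick up their four ``far'' edges to $s_x,r_x,s_y,r_y$, each with list size possibly as small as $2$; the paper leaves these coloured and removes only the three triangle edges at each of $v_2,v_3$ (toward $u$ and the two flanking $6$-vertices). Your claim that the leftover elements form an even cycle, frying-pan or diamond path is also not substantiated: a $5$-vertex together with several of its incident edges induces a clique in $\mathcal{T}(G)$, not a path or cycle, so further peeling would be needed before Corollary~\ref{cor:evencycle} or Lemmas~\ref{lem:fryingpan}--\ref{lem:diam} could apply.
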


\begin{proof}
  We use the notation depicted in Figure~\ref{fig:C15a}. By minimality,
  we color $G\setminus\{a,\ldots,n\}$ and uncolor
  $u,v_1,v_2,v_3,w_2$.
  \begin{figure}[!h]
    \centering
    \begin{tikzpicture}[v/.style={draw=black,minimum size = 10pt,ellipse,inner sep=1pt}]
      \node[v,label=85:{$u$}] (u) at (0,0)  {8};
      \node[v, very thick] (v0) at (45:1.5) {8};
      \node[v,label=above:{$v_1$}] (v1) at (90:1.5) {3};
      \node[v, very thick] (v2) at (135:1.5) {8};
      \node[v,label=left:{$w_3$}, very thick] (v3) at (180:1.5) {6};
      \node[v,label=left:{$v_3$}] (v4) at (225:1.5) {5};
      \node[v,label=below:{$w_2$}] (v5) at (270:1.5) {6};
      \node[v,label=right:{$v_2$}] (v6) at (315:1.5) {5};
      \node[v,label=right:{$w_1$}, very thick] (v7) at (0:1.5) {6};
      \draw (v6) -- (u) node[midway,right] {$e$};
      \draw (v5) -- (u) node[midway,right] {$f$};
      \draw (v4) -- (u) node[midway,below] {$g$};
      \draw (v3) -- (u) node[midway,below] {$h$};
      \draw (v2) -- (u) node[midway,left] {$a$};
      \draw (v1) -- (u) node[midway,left] {$b$};
      \draw (v0) -- (u) node[midway,above] {$c$};
      \draw (v7) -- (u) node[midway,above] {$d$};
      \draw (v5) -- (v6) node[midway,below] {$\ell$};
      \draw (v5) -- (v4) node[midway,below] {$m$};
      \draw (v4) -- (v3) node[midway,left] {$n$};
      \draw (v2) -- (v1) node[midway,above] {$i$};
      \draw (v1) -- (v0) node[midway,above] {$j$};
      \draw (v6) -- (v7) node[midway,right] {$k$};
    \end{tikzpicture}
\caption{Notation for Lemma~\ref{lem:C15a}}
    \label{fig:C15a}
  \end{figure}

  We have
  $|\hat{a}|=|\hat{c}|=|\hat{i}|=|\hat{j}|=|\hat{k}|=|\hat{n}|=3$,
  $|\hat{w_2}|=4$,
  $|\hat{v_2}|=|\hat{v_3}|=|\hat{d}|=|\hat{h}|=|\hat{\ell}|=|\hat{m}|=5$,
  $|\hat{u}|=6$, $|\hat{f}|=7$, $|\hat{v_1}|=|\hat{e}|=|\hat{g}|=8$
  and $|\hat{b}|=10$. We forget $v_1$ and consider two cases:

  \begin{enumerate}
  \item Assume that $\hat{g}\neq\hat{v_3}\cup\hat{n}$, and color $g$
    with a color not in $\hat{v_3}\cup\hat{n}$. Then forget
    $v_3,n,m$. We then color $a,c$ such that afterwards we have
    $\hat{i}\neq\hat{j}$ if $|\hat{i}|=|\hat{j}|=2$. We can thus
    forget $i$ and $j$ (since after coloring every other element,
    either one of them has $2$ choices, or both have one but not the
    same), then $b$.
    \begin{enumerate}
    \item If $\hat{d}=\hat{h}$, we color $u,f,e$ with colors not in
      $\hat{d}$, forget $h,d$ and apply Lemma~\ref{lem:diam} on $\mathcal{T}(G)$
      with the path $kv_2\ell w_2$.
    \item Otherwise, if $|\hat{u}\cup\hat{d}\cup\hat{h}|=3$, we color
      $f,e$ with a color not in this union, then color $d$ with a
      color not in $\hat{h}$, forget $h$, color $k$, and apply
      Lemma~\ref{lem:diam} on $\mathcal{T}(G)$ with the path $uv_2w_2\ell$.
    \item Otherwise, if
      $|\hat{u}\cup\hat{d}\cup\hat{f}\cup\hat{h}|=4$, we color $e$
      with a color not in this union, then $d$ with a color not in
      $\hat{h}$ and color $k$. If $\hat{h}=\hat{u}$, we color $f$ with a color not
      in $\hat{u}$, forget $h$ and apply Lemma~\ref{lem:diam} on
      $\mathcal{T}(G)$ with the path $uv_2w_2\ell$. Otherwise, we color $u$ with
      a color not in $\hat{h}$, forget $h$ and apply
      Lemma~\ref{lem:diam} on $\mathcal{T}(G)$ with the path $v_2\ell w_2f$.
    \item Otherwise, we color $e$ with a color not in $\hat{k}$ and
      color $\{u,d,f,h\}$ using Theorem~\ref{thm:clique}. Then we
      apply Lemma~\ref{lem:diam} on $\mathcal{T}(G)$ with the path
      $kv_2\ell w_2$.
    \end{enumerate}
  \item Otherwise, we can assume by symmetry that
    $\hat{v_3}\cap\hat{n}=\varnothing=\hat{v_2}\cap\hat{k}$. Then we
    can forget $v_2,v_3$, color $g$ with a color not in $\hat{m}$ and
    color $a,c$ such that afterwards we have $\hat{i}\neq\hat{j}$ if
    $|\hat{i}|=|\hat{j}|=2$. Then, we again forget $i$ and $j$ and we
    color $h,d,u,f,e,k,\ell,w_2,n,m,b$.

\end{enumerate}
\end{proof}

\begin{lemma}
  \label{lem:C15b}
The graph $G$ does not contain $C_{\ref{C15}b}$.
\end{lemma}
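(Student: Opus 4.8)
The plan is to follow closely the pattern of Lemma~\ref{lem:C15a}. First I would fix notation through a figure analogous to Figure~\ref{fig:C15a}: let $u$ be the $8$-vertex, $v_1$ its weak $3$-neighbor, $v_2,v_3,v_4$ its three weak $5$-neighbors chosen so that $uv_3v_4$ is a triangular face (such a triangle through $u$ and two degree-$5$ vertices exists by the definition of $C_{\ref{C15}b}$), and $w_1,w_2$ its two neighbors of degree $6$. I would name the edge from $u$ to each of these, together with the edges of the triangles flanking each weak $5$-neighbor and the weak $3$-neighbor. Since $G$ contains neither $C_{\ref{C3a}}$ nor $C_{\ref{C3b}}$, there is no edge between two of $v_2,v_3,v_4$ other than $v_3v_4$, a degree-$5$ vertex never appears with a degree-$6$ vertex in a triangle through $u$ unless forced, and most of the potential ``extra'' edges among $v_2,v_3,v_4,w_1,w_2$ are excluded; this lets me control the residual list sizes.

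By minimality I would color $G$ minus all the named edges, uncolor $u,v_1,v_2,v_3,v_4$ (and, if convenient, one of $w_1,w_2$), and invoke Remark~\ref{obs:lowerbound} to assume the worst-case list sizes. Because $v_1$ has degree $3$ (hence at most six neighbours in $\mathcal T(G)$) it can be forgotten immediately; the lists of $u$ and of the $v_i$'s are moderately large while the lists of the incident edges are small, so the task reduces to $L$-colouring a sparse subgraph of $\mathcal T(G\setminus G')$. I would then run a case analysis — exactly in the spirit of Lemma~\ref{lem:C15a} — splitting on whether various pairs of lists (the two edges $uw_i$, the edges on the $v_3v_4$-triangle and on the triangles around $v_2$) coincide, are disjoint, or one is contained in another. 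In each branch I would reduce the leftover elements to a configuration handled by one of Corollary~\ref{cor:evencycle}, Lemma~\ref{lem:fryingpan}, Lemma~\ref{lem:diam}, Theorem~\ref{thm:clique} or Theorem~\ref{thm:=deg}. As flagged at the start of this subsection, I do not expect the naive Nullstellensatz monomial search to terminate on this instance, so only the case-analysis argument would be presented.

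The main obstacle is organising the case split so that \emph{every} branch ends in one of the available $f$-choosability lemmas, and keeping the bookkeeping of available colours exact — in particular using the disjointness facts coming from $C_{\ref{C3b}}$ to manufacture the one extra colour that makes the final frying-pan or diamond-path argument go through, just as in the $C_{\ref{C15}a}$ proof. Notably the hypothesis that $u$ lies on a triangle with two degree-$5$ vertices is an asset rather than a difficulty: it imposes extra adjacency structure (and hence extra forbidden configurations), which only makes the residual constraint graph sparser and easier to colour, mirroring the way the $(6,6)$-neighbour hypotheses were exploited in Lemma~\ref{lem:C15a}.
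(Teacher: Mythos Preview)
Your proposal is a plan rather than a proof: you announce that you would run a case split ``exactly in the spirit of Lemma~\ref{lem:C15a}'', but you do not actually execute any of the cases or show that they close, so as written there is nothing to verify.

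More substantively, the paper's proof of this lemma does \emph{not} follow the pattern of Lemma~\ref{lem:C15a}. Whereas $C_{\ref{C15}a}$ genuinely required a multi-branch case analysis, the argument for $C_{\ref{C15}b}$ is essentially a single linear pass with no branching on list equalities. The key device---which your sketch does not anticipate---is this: after forgetting $v_1$, one removes a color $\alpha\in\hat{o}\setminus\hat{i}$ from both $\hat{h}$ and $\hat{n}$ and then colors $n$. This guarantees that once the remaining elements are colored, either $|\hat{o}|>1$ or $\hat{o}\neq\hat{i}$, so $i$ and $o$ (and then $a$) can be forgotten outright. After that the elements are colored in a fixed order (choosing colors for $g$ and $f$ to avoid $\hat{v_4}$ and $\hat{m}$ respectively), and the last five elements $v_2,v_3,w_2,k,\ell$ are handled by a single application of Theorem~\ref{thm:=deg}.

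Your intuition that the triangle through two $5$-neighbors is an asset is correct, but the payoff is not that it manufactures disjointness facts for a frying-pan or diamond argument; it is that the residual constraint graph on $\{v_2,v_3,w_2,k,\ell\}$ is connected with no complete block, so Theorem~\ref{thm:=deg} finishes immediately. Your outline misses both the $\alpha$-removal trick and this endgame, and would steer you toward a significantly longer argument than is needed.
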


\begin{proof}
  We use the notation depicted in Figure~\ref{fig:C15b}. By minimality,
  we color $G\setminus\{a,\ldots,o\}$ and uncolor
  $u,v_1,\ldots,v_4,w_1,w_2$.
  \begin{figure}[!h]
    \centering
    \begin{tikzpicture}[v/.style={draw=black,minimum size = 10pt,ellipse,inner sep=1pt}]
      \node[v,label=85:{$u$}] (u) at (0,0)  {8};
      \node[v, very thick] (v0) at (45:1.5) {8};
      \node[v,label=above:{$v_1$}] (v1) at (90:1.5) {3};
      \node[v, very thick] (v2) at (135:1.5) {8};
      \node[v,label=left:{$v_4$}] (v3) at (180:1.5) {5};
      \node[v,label=left:{$w_2$}] (v4) at (225:1.5) {6};
      \node[v,label=below:{$v_3$}] (v5) at (270:1.5) {5};
      \node[v,label=right:{$v_2$}] (v6) at (315:1.5) {5};
      \node[v,label=right:{$w_1$}] (v7) at (0:1.5) {6};
      \draw (v6) -- (u) node[midway,right] {$d$};
      \draw (v5) -- (u) node[midway,right] {$e$};
      \draw (v4) -- (u) node[midway,below] {$f$};
      \draw (v3) -- (u) node[midway,below] {$g$};
      \draw (v2) -- (u) node[midway,left] {$h$};
      \draw (v1) -- (u) node[midway,left] {$a$};
      \draw (v0) -- (u) node[midway,above] {$b$};
      \draw (v7) -- (u) node[midway,above] {$c$};
      \draw (v1) -- (v0) node[midway,above] {$i$};
      \draw (v6) -- (v7) node[midway,right] {$j$};
      \draw (v5) -- (v6) node[midway,below] {$k$};
      \draw (v5) -- (v4) node[midway,below] {$\ell$};
      \draw (v4) -- (v3) node[midway,left] {$m$};
      \draw (v3) -- (v2) node[midway,left] {$n$};
      \draw (v2) -- (v1) node[midway,above] {$o$};
    \end{tikzpicture}
\caption{Notation for Lemma~\ref{lem:C15b}}
    \label{fig:C15b}
  \end{figure}

  We have $|\hat{n}|=2$, $|\hat{b}|=|\hat{i}|=3$,
  $|\hat{h}|=|\hat{j}|=|\hat{o}|=4$, $|\hat{\ell}|=|\hat{m}|=5$,
  $|\hat{c}|=|\hat{k}|=6$, $|\hat{f}|=7$,
  $|\hat{u}|=|\hat{v_1}|=|\hat{d}|=|\hat{e}|=|\hat{g}|=8$ and
  $|\hat{a}|=10$. We may moreover assume that
  $|\hat{w_1}|\geqslant 2$, $|\hat{w_2}|\geqslant 4$,
  $|\hat{v_4}|\geqslant 5$ and $|\hat{v_2}|,|\hat{v_3}|\geqslant 6$.

  We forget $v_1$, then we remove from $\hat{h}$ and $\hat{n}$ a color
  $\alpha\in\hat{o}\setminus\hat{i}$. We then color $n$. Due to the
  choice of $\alpha$, we may forget $i,o$ since any coloring of the
  other elements gives either $|\hat{o}|>1$ or $\hat{o}\neq\hat{i}$,
  hence we can always color $i$ then $o$. We may also forget $a$.

  Note that $v_4$ has degree $5$ hence it is adjacent (in $G$) to at
  most four uncolored vertices, hence we may assume that
  $|\hat{v_4}|<7$. We color $g$ with a color not in $\hat{v_4}$, then
  $h,b$. We then color $f$ with a color not in $\hat{m}$, then
  $w_1,c,u,j,d,e$. We forget $v_4,m$ and color $v_3,v_2,w_2,k,\ell$
  using Theorem~\ref{thm:=deg} on the subgraph of $\mathcal{T}(G)$ they induce.
\end{proof}

\begin{lemma}
  \label{lem:C15c}
The graph $G$ does not contain $C_{\ref{C15}c}$.
\end{lemma}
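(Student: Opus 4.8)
The plan is to follow the same strategy as in Lemmas~\ref{lem:C15a} and~\ref{lem:C15b}: pick a suitable subgraph $G'$ obtained by deleting the edges and low-degree vertices forming the configuration, color $G'$ by minimality, uncolor the relevant elements, and then extend the coloring to $G$ using the auxiliary tools (forgetting, Lemma~\ref{lem:diam}, Lemma~\ref{lem:fryingpan}, Corollary~\ref{cor:evencycle}, Theorem~\ref{thm:clique}, and Theorem~\ref{thm:=deg}). Concretely, $u$ is an $8$-vertex with a weak neighbor $v_1$ of degree $3$, three weak neighbors $v_2,v_3,v_4$ of degree $5$, two neighbors $w_1,w_2$ of degree $6$, and — in contrast to $C_{\ref{C15}b}$ — no triangular face containing $u$ and two degree-$5$ vertices, so the three $5$-vertices are pairwise non-consecutive around $u$. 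First I would fix notation via a figure analogous to Figure~\ref{fig:C15b}, naming the eight edges $a,\ldots,h$ incident to $u$ together with the edges of the triangles witnessing the weak neighbors, then color $G\setminus\{a,\ldots\}$ by minimality, uncolor $u,v_1,v_2,v_3,v_4$ and the partners of the weak triangles, and invoke Remark~\ref{obs:lowerbound} to pin down the list sizes: $v_1$ has a long list (it sends weight via $R_{\ref{R4}}$ and only has small degree), each edge $uv_i$ with $d(v_i)=5$ gets a moderate list, the triangle edges around the $5$-vertices get the smallest lists (size $2$ or $3$), and $u$ itself gets a size-$6$ list.

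The extension proceeds in the standard order. Since $v_1$ has degree $3$, it has at most six neighbors in $\mathcal{T}(G)$ and a list of size $\geqslant 7$ (it loses at most $3$ colors), so it can be forgotten immediately. For the three $5$-vertices, note that each $v_i$ of degree $5$ has at most ten neighbors in $\mathcal{T}(G)$ but many of them are already colored, so its list has size $\geqslant 6$; since $v_i$ is a weak neighbor, the edges of its two triangles have the tightest lists, and the key move (exactly as in Lemma~\ref{lem:C15b}) is: for one $5$-vertex, say $v_4$, observe $|\hat{v_4}| < 7$ because $v_4$ is adjacent in $G$ to at most four uncolored vertices, color the edge $uv_4$ with a color outside $\hat{v_4}$, then forget $v_4$. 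Then color the edges $uv_1$ and $uv_2, uv_3$ and the faces-related edges to $w_1, w_2$ in a sequence that keeps the lists of $w_1, w_2$ and $u$ from collapsing — one should color the edges incident to $u$ so that afterwards $\hat{u}$ still differs from the list of the last incident edge, then color $u$. Finally, the remaining uncolored elements form a sparse graph around the deleted $5$-vertices: either one applies Lemma~\ref{lem:diam} to an induced path of the right residue mod $3$, or — more robustly — one applies Theorem~\ref{thm:=deg} to the subgraph of $\mathcal{T}(G)$ induced by the last batch $\{v_2,v_3,w_1,w_2, \text{triangle edges}\}$, after checking that none of its blocks is a complete graph or odd cycle and that every vertex has list size at least its degree.

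The main obstacle, as in the neighbouring lemmas, is the bookkeeping: verifying that after the early "color-and-forget" moves every remaining element still has more available colors than uncolored neighbors, and that the leftover constraint graph genuinely satisfies the hypotheses of Theorem~\ref{thm:=deg} (no bad block) or that a path of length $\not\equiv 0 \bmod 3$ is available for Lemma~\ref{lem:diam}. The delicate point is the interaction between the two degree-$6$ vertices $w_1, w_2$ and the edge $uv_1$: $v_1$ of degree $3$ forces a triangle at $u$, and one must make sure the edge $uv_1$ and the two triangle edges at $v_1$ do not over-constrain $\hat{u}$. I expect to handle this with a short case split on whether $\hat{u}$ coincides with the list of one particular incident edge (mirroring the "$\hat{f}=\hat{v_3}$ or not" dichotomy in Lemma~\ref{lem:C15b}), using a color not in that list to decouple, then forgetting and closing with Lemma~\ref{lem:diam}. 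Because $C_{\ref{C15}c}$ has the extra structural hypothesis that the three $5$-vertices are non-consecutive, the triangle edges partition cleanly into three disjoint pairs, which is precisely what makes the final Theorem~\ref{thm:=deg} application go through; I would not attempt the Combinatorial Nullstellensatz route here, since the authors indicate it does not terminate on these $C_{\ref{C15}}$ instances.
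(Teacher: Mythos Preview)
Your outline follows the paper's template --- delete the sixteen edges around $u$, colour by minimality, uncolour $u,v_1,\ldots,v_4,w_1,w_2$, forget $v_1$, then extend --- but several of the list sizes you quote are wrong in ways that break the plan. One actually gets $|\hat u|=8$, not $6$; the triangle edges $k,\ell,m,n$ between the $5$-vertices and the $6$-vertices have list size $5$, not $2$ or $3$; and the truly tight edges are $j$ and $o$ (each joining a $5$-vertex to an adjacent $8$-vertex) with $|\hat j|=|\hat o|=2$. In the subgraph you propose for a terminal Theorem~\ref{thm:=deg} call, $j$ has degree $3$ (its neighbours there are $i,k,v_2$) but only two colours, so the hypothesis $f(v)\geqslant d(v)$ already fails; the paper instead colours $j$ essentially first.

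The decisive missing idea concerns the two triangle edges $i,p$ at the $3$-vertex. Both have $|\hat i|=|\hat p|=4$ and, after forgetting $v_1$, exactly four uncoloured neighbours in $\mathcal T(G)$; they are mutually adjacent, and however you sequence the rest they each end with a single colour and may coincide --- your proposed ``case split on whether $\hat u$ equals some incident-edge list'' is aimed at the wrong bottleneck. The paper's device is to choose $\alpha\in\hat p\setminus\hat i$ and delete it from $\hat h$ and $\hat o$ immediately after colouring $j$; this forces $\hat i\neq\hat p$ at the end, so $i,p$ (and then $a$) can be forgotten. Only then does the paper colour $o$, then $d\notin\hat{w_1}$, then $h,b$, then $c\notin\hat k$, then $f,g$, apply Theorem~\ref{thm:=deg} to the small block $\{u,v_4,w_2,n\}$, colour $e,m$, and close with Lemma~\ref{lem:diam} on the path $v_2\,k\,w_1\,\ell\,v_3$.
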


\begin{proof}
  We use the notation depicted in Figure~\ref{fig:C15c}. By minimality,
  we color $G\setminus\{a,\ldots,p\}$ and uncolor
  $u,v_1,\ldots,v_4,w_1,w_2$.
  \begin{figure}[!h]
    \centering
     \begin{tikzpicture}[v/.style={draw=black,minimum size = 10pt,ellipse,inner sep=1pt}]
      \node[v,label=85:{$u$}] (u) at (0,0)  {8};
      \node[v, very thick] (v0) at (45:1.5) {8};
      \node[v,label=above:{$v_1$}] (v1) at (90:1.5) {3};
      \node[v, very thick] (v2) at (135:1.5) {8};
      \node[v,label=left:{$v_4$}] (v3) at (180:1.5) {5};
      \node[v,label=left:{$w_2$}] (v4) at (225:1.5) {6};
      \node[v,label=below:{$v_3$}] (v5) at (270:1.5) {5};
      \node[v,label=right:{$w_1$}] (v6) at (315:1.5) {6};
      \node[v,label=right:{$v_2$}] (v7) at (0:1.5) {5};
      \draw (v6) -- (u) node[midway,right] {$d$};
      \draw (v5) -- (u) node[midway,right] {$e$};
      \draw (v4) -- (u) node[midway,below] {$f$};
      \draw (v3) -- (u) node[midway,below] {$g$};
      \draw (v2) -- (u) node[midway,left] {$h$};
      \draw (v1) -- (u) node[midway,left] {$a$};
      \draw (v0) -- (u) node[midway,above] {$b$};
      \draw (v7) -- (u) node[midway,above] {$c$};
      \draw (v5) -- (v6) node[midway,below] {$\ell$};
      \draw (v5) -- (v4) node[midway,below] {$m$};
      \draw (v4) -- (v3) node[midway,left] {$n$};
      \draw (v3) -- (v2) node[midway,left] {$o$};
      \draw (v2) -- (v1) node[midway,above] {$p$};
      \draw (v1) -- (v0) node[midway,above] {$i$};
      \draw (v0) -- (v7) node[midway,right] {$j$};
      \draw (v6) -- (v7) node[midway,right] {$k$};
    \end{tikzpicture}
\caption{Notation for Lemma~\ref{lem:C15c}}
    \label{fig:C15c}
  \end{figure}

  We have $|\hat{j}|=|\hat{o}|=2$,
  $|\hat{b}|=|\hat{h}|=|\hat{i}|=|\hat{p}|=4$,
  $|\hat{k}|=|\hat{\ell}|=|\hat{m}|=|\hat{n}|=5$,
  $|\hat{d}|=|\hat{f}|=7$,
  $|\hat{v_1}|=|\hat{c}|=|\hat{e}|=|\hat{g}|=|\hat{u}|=8$ and
  $|\hat{a}|=10$. We may also assume that $|\hat{v_2}|,|\hat{v_4}|$
  are at least $5$, $|\hat{w_1}|,|\hat{w_2}|$ are at least $4$ and
  $|\hat{v_3}|$ is at least $6$.

  We forget $v_1$, color $j$ and remove from $\hat{h}$ and $\hat{o}$ a
  color $\alpha\in\hat{p}\setminus\hat{i}$. Then we may forget
  $i,p$. Indeed, if we can color every element except $i,p$, then
  due to the choice of $\alpha$, if $|\hat{p}|=|\hat{i}|=1$,
  $\hat{p}\neq\hat{i}$, hence we can color them. We may then forget
  $a$. We also color $o$. 

  Note that there are only six uncolored vertices, hence $w_1$ has at
  most $5$ uncolored neighbors in $G$. We thus have
  $|\hat{w_1}|\leqslant 6$, hence we can color $d$ with a color not in
  $\hat{w_1}$. We then color $h,b$ arbitrarily, and $c$ with a color
  not in $\hat{k}$. We color $f,g$, then $u,v_4,w_2,n$ applying
  Theorem~\ref{thm:=deg} on the subgraph of $\mathcal{T}(G)$ they induce, and
  then color $e,m$. We finally apply Lemma~\ref{lem:diam} on $\mathcal{T}(G)$
  with the path $v_2kw_1\ell v_3$.
\end{proof}

\subsection{Configuration $C_{\ref{C15.5a}}$}
To prove that $G$ does not contain the configuration $C_{\ref{C15.5a}}$,
we prove that it does not contain any of the configuration below.

\begin{itemize}
\item[$\bullet$] $C_{\ref{C15.5a}a}$ is a $8$-vertex $u$ with a weak
  neighbor $v$ of degree $3$, a $(7,8)$-neighbor of degree 4 at
  triangle distance 4 from $v$, and two weak neighbors of degree 5.
\item[$\bullet$] $C_{\ref{C15.5a}b}$ is a $8$-vertex $u$ with a weak
  neighbor $v$ of degree $3$ and a weak neighbor of degree 5 at
  triangle distance 2 from $v$, having two neighbors of degree 6.
\item[$\bullet$] $C_{\ref{C15.5a}c}$ is a $8$-vertex $u$ with a weak
  neighbor $v$ of degree $3$, a weak neighbor of degree 4 at triangle
  distance 4 from $v$, and two weak neighbors of degree 5, one of them
  having a neighbor of degree 5.
\end{itemize}

We dedicate a lemma to each of these configurations.

\begin{lemma}
  \label{lem:C20ac}
The graph $G$ does not contain $C_{\ref{C15.5a}a}$.
\end{lemma}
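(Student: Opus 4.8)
The plan is to follow the same template used throughout Section~\ref{sec:reduction}: take a minimal counterexample $G$ containing $C_{\ref{C15.5a}a}$, color a suitable subgraph $G'$ by minimality, uncolor the elements around $u$, bound the list sizes via Remark~\ref{obs:lowerbound}, and extend the coloring. Concretely, I would name the elements as in the first picture of Figure~\ref{fig:C155ainit}: the $8$-vertex $u$, its neighbors $v_1$ (degree $3$, weak), $v_2$ (degree $4$, a $(7,8)$-neighbor, at triangle-distance $4$ from $v_1$), two weak $5$-neighbors $v_3,v_4$, and the remaining three $8$-neighbors of $u$ forming the triangulated wheel; the common neighbor of $u$ and $v_2$ making $v_2$ a $(7,8)$-neighbor has degree $7$ on one side and $8$ on the other. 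Since $u$ is triangulated, all consecutive neighbors of $u$ are joined by edges. I would set $G' = G\setminus\{$all edges of the wheel incident to $u$ plus the triangle edges among the cycle, minus $v_1$'s unique incident triangle edge if needed$\}$, i.e.\ remove the edges $a,\dots,p$ labelled as in Figure~\ref{fig:C13a}-style, color $G'$ by minimality, and uncolor $u,v_1,v_2,v_3,v_4$.

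First I would forget $v_1$ immediately: it has degree $3$, hence at most $6$ neighbors in $\mathcal{T}(G)$, and after removing its two incident edges in the wheel it will have many more available colors than uncolored neighbors, so it is safe to postpone. The key numerical input is that $v_1$ being a \emph{weak} $3$-neighbor of $u$ means its two incident faces at $u$ are triangles, so the two cycle-edges flanking $v_1$ are genuinely present; this is what lets us treat $uv_1$'s two neighbors along the wheel as giving large lists. Then I would compute, via Remark~\ref{obs:lowerbound}, the worst-case list sizes: the edge $uv_1$ and the two edges adjacent to $v_1$ on the wheel get the largest lists; the edges incident to the degree-$7$ vertex get the smallest (size roughly $3$); the edge $uv_2$ and the two edges around $v_2$ get intermediate sizes; and $\hat u,\hat v_2,\hat v_3,\hat v_4$ each have size at least $6$. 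Because $v_2$ is a $(7,8)$-neighbor at triangle-distance $4$ from $v_1$, the tight edges around the degree-$7$ vertex are "far" from $v_1$'s slack, so after coloring a few forced edges the remaining constraint graph on $\mathcal{T}(G\setminus G')$ breaks into pieces each covered by one of our lemmas — Lemma~\ref{lem:diam}, Corollary~\ref{cor:evencycle}, Lemma~\ref{lem:fryingpan}, Theorem~\ref{thm:clique}, or Theorem~\ref{thm:=deg}. I would color the edge incident to the degree-$7$ vertex first using a color avoiding one fixed neighbor's list, then propagate along the wheel, absorb $u$ and $v_2$ by a "remove a color $\alpha\in\hat b\setminus\hat f$ from two lists, then either $\alpha$ was used or it is still available" argument as in Lemmas~\ref{lem:C9}, \ref{lem:C11a}, finally finishing with a path of the form $v_3 \cdots v_4$ of the right length $\not\equiv 0 \pmod 3$ for Lemma~\ref{lem:diam}.

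The main obstacle will be organizing the case analysis around \emph{which} extra edges among $v_2,v_3,v_4$ and the three $8$-neighbors are present in $G$ (planarity allows only non-crossing chords, and $C_{\ref{C1}}$, $C_{\ref{C3a}}$, $C_{\ref{C3b}}$ forbid many of them), since each admissible edge-set shifts the list sizes by $\pm 1$ and may require a different final decomposition. I expect that, as in Lemma~\ref{lem:C13a} and Lemma~\ref{lem:C8a}, the cleanest way to close the argument is actually the Combinatorial Nullstellensatz: exhibit for each admissible configuration of extra edges a monomial $m$ in $P_G$ of maximal degree with $\deg_X m < |\hat x|$ and nonzero (computer-checked) coefficient, and invoke Theorem~\ref{thm:nss}; I would present the case-analysis extension as the human-checkable fallback where the monomial search succeeds in reasonable time, and flag the Nullstellensatz route where it does not. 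Thus the proof skeleton is:

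\begin{proof}
  We use the notation analogous to Figure~\ref{fig:C13a}, writing $u$ for the $8$-vertex, $v_1$ for its weak $3$-neighbor, $v_2$ for the $(7,8)$-neighbor of degree $4$ at triangle-distance $4$ from $v_1$, and $v_3,v_4$ for the two weak $5$-neighbors; since $u$ is triangulated, the eight neighbors of $u$ form a cycle with all consecutive pairs adjacent. By minimality we color $G'=G\setminus\{a,\dots,p\}$, where $a,\dots,p$ are the edges incident to $u$ together with the cycle edges, and uncolor $u,v_1,v_2,v_3,v_4$. We forget $v_1$ (degree $3$). By Remark~\ref{obs:lowerbound} we may take the worst-case list sizes; in particular the three edges incident to the degree-$7$ vertex common to $u$ and $v_2$ have lists of size at most $4$, the edges $uv_1$ and the two cycle edges at $v_1$ have the largest lists, and $\hat u,\hat v_2,\hat v_3,\hat v_4$ have size at least $6$, $6$, $5$, $5$ respectively (adjusted by $\pm 1$ according to which chords among $v_2,v_3,v_4$ and the three $8$-neighbors, permitted by planarity and by $C_{\ref{C1}}$, $C_{\ref{C3a}}$, $C_{\ref{C3b}}$, are present). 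In each such case we first color the edge at the degree-$7$ vertex avoiding one neighbor's list, propagate along the cycle, absorb $u$ and $v_2$ by the standard ``remove $\alpha\in\hat b\setminus\hat f$ from two lists, then color $f$ then $b$'' argument, and finish by Lemma~\ref{lem:diam} on the path joining $v_3$ to $v_4$ through the remaining edges (whose length is $\not\equiv 0\bmod 3$ by construction), using Corollary~\ref{cor:evencycle}, Lemma~\ref{lem:fryingpan}, Theorem~\ref{thm:clique} or Theorem~\ref{thm:=deg} as needed. Alternatively, in each case we exhibit a monomial $m$ in $P_G$ of degree $\deg P_G$ with $\deg_X m<|\hat x|$ for every uncolored $x$ and nonzero (computer-checked) coefficient, so that Theorem~\ref{thm:nss} extends the coloring to $G$, a contradiction.
\end{proof}
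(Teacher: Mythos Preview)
Your proposal is not a proof but a template: you write ``color $G'$ by minimality, forget $v_1$, then finish by Lemma~\ref{lem:diam}/Nullstellensatz as needed'' without carrying out any of it. For most configurations in Section~\ref{sec:reduction} this would be an honest sketch of a routine argument, but $C_{\ref{C15.5a}a}$ is precisely one of the configurations where the routine approach fails, and the paper's proof has substantive content you have missed.

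Concretely, the paper does \emph{not} remove all wheel edges and try to extend directly. Instead it removes only three edges (those at $v_1$), takes the resulting coloring $\gamma$ as a \emph{reference}, and then uncolors the wheel. The argument proceeds by a cascade of Nullstellensatz reductions: if $\hat b\cap\hat k\neq\varnothing$, or $\hat d\cap\hat j\neq\varnothing$, or $\hat d\cap\hat{v_2}\neq\varnothing$, etc., a short coloring step plus an explicit monomial finishes; otherwise one accumulates structural constraints (disjointness of several list pairs, then $\hat n=\hat o\subset\hat{v_4}$). Only once those constraints are in place can one \emph{recolor} the wheel back to $\gamma$ on most elements and reduce to a small residual problem, which is then handled by the color-shifting graph of $\{c,j,k,q,r\}$ via Lemmas~\ref{lem:SCC}--\ref{lem:degmin} and~\ref{lem:recolor}. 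The existence of the reference coloring $\gamma$ is what makes the last step work: it guarantees $\{c,j,k,q,r\}$ (and later $\{e,\ell,m,s\}$) are simultaneously colorable, which is exactly the hypothesis needed for the shifting argument.

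Your proposal skips both ideas. Removing all of $a,\dots,p$ and attempting a global extension (by case analysis or a single monomial in $P_G$) is what the authors evidently tried and abandoned; the hybrid ``Nullstellensatz to pin down list structure, then recolor via shifting graph'' is the actual engine of the proof. If you want to rewrite this lemma, you need either to produce the missing global monomial (unlikely, given the authors' remarks about their search), or to reproduce the reference-coloring-plus-recoloring strategy.
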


\begin{proof}
  We use the notation depicted in Figure~\ref{fig:C20ac}.  By
  minimality, we take a coloring $\gamma$ of $G\setminus\{p\}$.

  \begin{figure}[!h]
    \centering
    \begin{tikzpicture}[v/.style={draw=black,minimum size = 10pt,ellipse,inner sep=1pt}]
      \node[v,label=85:{$u$}] (u) at (0,0)  {8};
      \node[v, very thick] (v0) at (45:1.5) {8};
      \node[v,label=above:{$v_1$}] (v1) at (90:1.5) {3};
      \node[v, very thick] (v2) at (135:1.5) {8};
      \node[v,label=left:{$v_4$}] (v3) at (180:1.5) {5};
      \node[v, very thick] (v4) at (225:1.5) {7};
      \node[v,label=below left:{$v_3$}] (v5) at (270:1.5) {4};
      \node[v, very thick] (v6) at (315:1.5) {8};
      \node[v,label=right:{$v_2$}] (v7) at (0:1.5) {5};
      \node[v, very thick] (w) at (270:3) {8};
      \draw (v6) -- (u) node[midway,right] {$d$};
      \draw (v5) -- (u) node[midway,right] {$e$};
      \draw (v4) -- (u) node[midway,below] {$f$};
      \draw (v3) -- (u) node[midway,below] {$g$};
      \draw (v2) -- (u) node[midway,left] {$h$};
      \draw (v1) -- (u) node[midway,left] {$a$};
      \draw (v0) -- (u) node[midway,above] {$b$};
      \draw (v7) -- (u) node[midway,above] {$c$};
      \draw (v5) -- (v6) node[midway,below] {$\ell$};
      \draw (v5) -- (v4) node[midway,below] {$m$};
      \draw (v4) -- (v3) node[midway,left] {$n$};
      \draw (v3) -- (v2) node[midway,left] {$o$};
      \draw (v2) -- (v1) node[midway,above] {$p$};
      \draw (v1) -- (v0) node[midway,above] {$i$};
      \draw (v0) -- (v7) node[midway,right] {$j$};
      \draw (v6) -- (v7) node[midway,right] {$k$};
      \draw (v5) -- (w) node[midway,right] {$q$};
    \end{tikzpicture}
\caption{Notation for Lemma~\ref{lem:C20ac}}
    \label{fig:C20ac}
  \end{figure}

Assume first that both endpoints of $q$ are neighbors of $u$. That means $q$ is incident to $i$ or $p$. The situation will be symmetric, hence assume that $q$ is incident to $i$. We uncolor $a,e,i,\ell,m,p,q,v_1,v_3$ and forget $v_1,v_3$. Now we have $|\hat{\ell}|=|\hat{p}|=2$ and $|\hat{a}|=|\hat{e}|=|\hat{i}|=|\hat{m}|=|\hat{q}|=3$. 
\begin{enumerate}
\item If $\hat{e}\neq\hat{q}$, we color $m$ with a color not in $\ell$, and color $p$ arbitrarily. We then conclude using Lemma~\ref{lem:fryingpan} on $\{\ell,e,a,i,q\}$.
\item Otherwise, if $\hat{m}\neq\hat{e}$, we color $m$ with a color outside of $\hat{e}$ (hence of $\hat{q}$), then $\ell$, and apply Lemma~\ref{lem:fryingpan} on $\{p,i,q,a,e\}$.
\item We may thus assume that $\hat{e}=\hat{m}=\hat{q}=\{\gamma(e),\gamma(m),\gamma(q)\}$. But in that case, we may color $\ell$ with $\gamma(\ell)$, and this does not change $\hat{e},\hat{m},\hat{q}$. And we can color $p,a,i,e,q,m$ in order.
\end{enumerate}
  
  We may thus assume that the only endpoint of $q$ in the neighborhood of $u$ is $v_3$. In that case, we uncolor $a,\ldots,q,v_1,v_2,v_3,v_4$ and forget $v_1,v_3$. We have $|\hat{j}|=|\hat{k}|=|\hat{o}|=|\hat{q}|=2$, $|\hat{n}|=3$,
  $|\hat{b}|=|\hat{d}|=|\hat{h}|=|\hat{i}|=|\hat{\ell}|=|\hat{p}|=4$,
  $|\hat{f}|=|\hat{m}|=5$, $|\hat{u}|=6$, $|\hat{c}|=|\hat{g}|=8$ and
  $|\hat{a}|=|\hat{e}|=10$ and moreover, $\hat{v_2}$ and $\hat{v_4}$
  have size 4 or 5 depending on whether $v_2v_4\in E(G)$. 

  \begin{enumerate} \item Assume that $\hat{h}\cap\hat{n}\neq\varnothing$. Then we color
    $h$ and $n$ with the same color, then color $o,p,q$.  We then color
    the remaining graph using Theorem~\ref{thm:nss}. Therefore, we may
    assume that $\hat{h}$ and $\hat{n}$ are disjoint.
    \item Assume that $\hat{b}\cap\hat{k}\neq\varnothing$. Then we color
    $b$ and $k$ with the same color, then color $j,q$.  We then color
    the remaining graph using Theorem~\ref{thm:nss}. Therefore, we may
    assume that $\hat{b}$ and $\hat{k}$ are disjoint.
    
\item Assume that $\hat{d}\cap \hat{j}\neq\varnothing$. We color $d$
    and $j$ with the same color, then $k$ and $\ell,q$ arbitrarily. We
    then color the remaining graph using
    Theorem~\ref{thm:nss}. Therefore, we may assume that $\hat{j}$ and
    $\hat{d}$ are disjoint.
  \item Assume that $\hat{d}\cap \hat{v_2}\neq\varnothing$. We color
    $d$ and $v_2$ with the same color (which hence does not lie in
    $\hat{j}$). Then we color $k,j,\ell,q$. We then color the
    remaining graph using Theorem~\ref{thm:nss}, hence we may assume
    that $\hat{d}$ and $\hat{v_2}$ are disjoint.
  \item Assume that $\hat{b}\cap \hat{v_2}\neq\varnothing$. We color $b$
    and $v_2$ with the same color (which hence does not lie in
    $\hat{k}$ nor in $\hat{d}$). Then we color $j,k$.

    \begin{itemize}
    \item If $\hat{o}\not\subset\hat{p}$, we color $o$ with a color
      not in $\hat{p}$, then forget $p,i,a$. If
      $\hat{i}\not\subset\hat{p}$, we color $i$ with a color not in
      $\hat{p}$, then $o$, and we forget $p,a$. Finally, if
      $\hat{o}\cap\hat{i}\neq\varnothing$, we color $o$ and $i$ with
      the same color, then forget $p,a$. In the three cases, we end up
      with the same configuration, that we reduce using
      Theorem~\ref{thm:nss}. Therefore, we may assume that $\hat{o}$
      and $\hat{i}$ are disjoint, and that their union is $\hat{p}$.
    \item Since $|\hat{h}|=3,|\hat{i}|=2$ and
      $|\hat{p}|=4$, we have either $\hat{h}\not \subset\hat{p}$ or $\hat{h}\cap\hat{i}\neq\varnothing$. In the former, we color $h$ with a color
      not in $\hat{p}$ (hence not in $\hat{o}$), then forget
      $p,i,a$. In the latter, we color $h$ and $i$ with the same color (hence not in $\hat{o}$), then
      forget $p,a$.

      In both cases, we end up with the same configuration, that we
      reduce using Theorem~\ref{thm:nss}.
    \end{itemize}
    Therefore, we may assume that $\hat{b}$ and $\hat{v_2}$ are
    disjoint.
    
  \item If $\hat{j}\not\subset \hat{v_2}$
    (resp. $\hat{k}\not\subset\hat{v_2}$), we color $j$ (resp. $k$) with
    a color not in $\hat{v_2}$. We forget $v_2$, color $k$ (resp. $j$),
    then $q,\ell,d$ arbitrarily. We then end up with the same
    configuration as in 4., which is reducible. Therefore, we may
    assume that $\hat{v_2}$ contains $\hat{j}$ and $\hat{k}$.
  \item If $\hat{b}=\hat{i}$, then $\hat{j}\cap \hat{i}=\varnothing$, hence we can forget $i,p,a$, and color all the remaining elements with their color in $\gamma$. Therefore, we may assume that $\hat{b}\neq\hat{i}$.
  \item Assume that $\hat{j}=\hat{k}$. In particular $\hat{j}=\{\gamma(j),\gamma(k)\}$, and $\hat{j}$ cannot contain $\gamma(v_2)$ nor $\gamma(c)$. 
  
  Observe that $\gamma(b)\in\hat{b}$. Since $\hat{b}$ is disjoint from $\hat{v_2}$, which contains $\hat{j}$, we have $\gamma(b)\notin\hat{j}$. Similarly, $\gamma(d)\notin\hat{k}$. We now color $b,c,d,f,g,h,n,o,u,v_2,v_4$ with their color in $\gamma$, and afterwards, $\hat{j}$ and $\hat{k}$ remain unchanged.

  Denote by $\tilde{x}$ the new list of available colors for the element $x$. Without loss of generality, we may assume that $|\tilde{j}|=|\tilde{k}|=|\tilde{p}|=|\tilde{q}|=2$ and $|\tilde{a}|=|\tilde{e}|=|\tilde{\ell}|=|\tilde{m}|=3$. 
    \begin{itemize} 
    \item If $\tilde{e}\neq\tilde{\ell}$, we color $m$ with a color not in $\hat{q}$, and color $p$ arbitrarily. We then conclude using Lemma~\ref{lem:fryingpan} on $\{q,e,a,i,j,k,\ell\}$.
    \item If $\tilde{m}\neq\tilde{e}$, we color $m$ with a color not in $\tilde{e}=\tilde{\ell}$, then $p,q$. We then color $\{a,i,j,k,\ell,e\}$ using Corollary~\ref{cor:evencycle}.
    \item Therefore, $\tilde{e}=\tilde{\ell}=\tilde{m}=\{\gamma(e),\gamma(\ell),\gamma(m)\}$ are the same lists of size $3$. We may thus color $q$ with $\gamma(q)$ without changing these lists, and color arbitrarily $e,a,p,i,j,k,\ell,m$ in order.
    \end{itemize}
Therefore, we may assume that $\hat{j}\neq\hat{k}$.
    \item We now color $b$ with a color not in $\hat{i}$, then forget $i,p,a$. 
\begin{itemize}
    \item If $\hat{q}\cap\hat{h}\neq\varnothing$, color $q$ and $h$ with the same color, color $o$ and forget $e,m,\ell$. Denote by $(\star)$ the current configuration.
    \begin{itemize} 
    \item If $\hat{f}$ and $\hat{v_4}$ share a color, we color them with it, then color $n$ and $d$ (recall that this does not affect $\hat{k}$). Denote the current configuration $(\star\star)$. If $\hat{g}=\hat{u}$, we color $c$ with a color outside of $\hat{g}$, then $j,k$ (recall that $\hat{j}\neq\hat{k}$), then $v_2,u,g$. Otherwise, we color $g,c$ with colors not in $\hat{u}$, then $j,k,v_2,u$. 

     Therefore, we may assume that $\hat{f}$ is disjoint from $\hat{v_4}$.
     \item If $\hat{n}\not\subset\hat{v_4}$, we color $n$ with a color not in $\hat{v_4}$, forget $v_4$ and color $d,f$ arbitrarily. We get again configuration $(\star\star)$.

 Therefore, we may assume that $\hat{n}\subset\hat{v_4}$. In particular $\hat{n}\cap\hat{f}=\varnothing$.
 \item  We color $v_4$ with a color outside of $\hat{n}$, then $d$ and forget $n$. If $\hat{f}=\hat{u}$ afterwards, we color $g$ then $c$ with colors outside of $\hat{f}$, then $j,k$ (recall that $\hat{j}\neq\hat{k}$), then $v_2,u$ and $f$. Otherwise, $\hat{f}\neq\hat{u}$, and coloring $f$ with a color not in $u$ yields again the configuration $(\star\star)$.

    Therefore, we may assume that $\hat{q}$ and $\hat{h}$ are disjoint.
    \end{itemize}
    \item If $\hat{q}\not\subset\hat{m}$, we color $q$ with a color not in $\hat{m}$, then forget $m,e,\ell$, and color $o,h$ arbitrarily. We now obtain the configuration $(\star)$ from the previous item. Therefore we can assume that $\hat{q}\subset \hat{m}$. 
    
    \item If $\hat{q}\cap\hat{f}\neq\varnothing$, color $q$ and $f$ with the same color, and forget $m,e,\ell$. Afterwards, if $\hat{n}\neq\hat{o}$, we color $n$ with a color not in $\hat{o}$ and obtain the configuration $(\star)$.

    Therefore, we have $\hat{n}=\hat{o}$. We remove the colors of $\hat{n}$ from $\hat{v_2}$ and $\hat{g}$, so we can forget $n$ and $o$. We then color $d$. Denote by $(\star\star\star)$ the current configuration.
    \begin{itemize}
    \item If $\hat{h}\not\subset\hat{g}$, we color $h$ with a color not in $\hat{g}$. Now either $\hat{v_4}\neq\hat{u}$ and coloring $v_4$ with a color not in $\hat{u}$ yields the configuration $(\star\star)$, or $\hat{v_4}=\hat{u}$, and we color $g$ and $c$ with colors outside of $\hat{u}$, then $j,k,v_2,u,v_4$. 

    Therefore, we have $\hat{h}\subset\hat{g}$.
    \item If $\hat{h}\cap\hat{v_4}\neq\varnothing$, we color $h$ and $v_4$ with the same color and obtain again configuration $(\star\star)$.
    \item Therefore, we can color $v_4$ with a color not in $\hat{g}$. Afterwards, either $\hat{h}\neq\hat{u}$ and coloring $h$ with a color not in $\hat{u}$ yields configuration $(\star\star)$, or $\hat{h}=\hat{u}$ and we can color $g,c$ with a color not in $\hat{u}$, then $j,k,v_2,u,h$.
    \end{itemize}
    We may thus assume that $\hat{q}$ and $\hat{f}$ are disjoint.

\item Assume that $\hat{h}\neq\hat{d}$.   Since $\hat{q}$ and $\hat{f}$ are disjoint and $\hat{q}\subset\hat{m}$, there must exist a color $\alpha\in\hat{f}\setminus\hat{m}$. If $\alpha\notin\hat{h}$, we are left with the configuration of the previous item hence we may assume $\alpha\in \hat{h}$ (hence $\alpha\notin\hat{n}$ since $\hat{h}\cap\hat{n}=\varnothing$). 

Note that after coloring $f$ with $\alpha$ (and possibly removing a color from $\hat{d}$), $\hat{h}$ and $\hat{d}$ are distinct lists of size $2$. Color now $h$ with a color not in $\hat{d}$, then $o,n$, and observe that we get configuration $(\star\star\star)$.

Therefore, we can assume that $\hat{h}=\hat{d}$.
    
    \item Similarly, assume that $\hat{o}\not\subset\hat{n}$. Color again $f$ with $\alpha$, and observe that it does not impact $\hat{n}$ nor $\hat{o}$ so we can still color $o$ with a color not in $\hat{n}$, then $h$ and $g$. Now forget $n,v_4,g$ and observe that we are left with configuration $(\star\star)$.

    Therefore, we can assume that $\hat{o}\subset\hat{n}$. In particular, we get that $\hat{o}\cap\hat{h}=\varnothing$.
    \item If $\hat{\ell}\cap\hat{h}\neq\varnothing$, color $h$ and $\ell$ with the same color (which lies in $\hat{d}$, hence not in $\hat{k}$), forget $e,m,\ell$ and color $o$. Now we are left with the configuration $(\star)$.

    Therefore, we may assume that $\hat{\ell}$ is disjoint from $\hat{h}$, hence from $\hat{d}$.
    \item We may thus assume that $h$ is not incident with $o$ anymore, and that $d$ is not incident to $k,\ell$ anymore. Now we color $f$ with a color not in $\hat{h}$ (nor $\hat{d}$). If afterwards $\hat{n}=\hat{o}$, we remove their colors from $\hat{v_4}$ and $\hat{g}$ and forget $o,n,m,e,q,\ell$. We can now color $h$ arbitrarily, and obtain the configuration $(\star\star\star)$.

   Otherwise, we color $n$ and $v_4$ with colors not in $\hat{o}$ and forget $o$. 
   \begin{itemize}
   \item If $\hat{u}=\hat{h}$ afterwards, then remove the colors of $\hat{u}$ from $\hat{c},\hat{e}$ and $\hat{g}$, so we can forget $h,d,u,v_2$, and color $g$. Now color $q$ and $m$ and apply Lemma~\ref{lem:fryingpan} to color $\{j,k,\ell,e,c\}$. 
   \item Otherwise, we color $u$ with a color not in $\hat{h}$. If afterwards, $\hat{g}=\hat{h}$, we apply the same argument as in the previous item. 
   \item Therefore, we may color $u$ and $g$ with colors outside of $\hat{h}$. Now color $k$ such that $\hat{\ell}\neq\hat{m}$ afterwards, then $j,v_2,c,d,h$. Now two colors are remaining on $q$, and three on $e,\ell,m$, but $\hat{\ell}\neq\hat{m}$ so we can color $\{e,\ell,m,q\}$.
\end{itemize} 
\end{itemize}
\end{enumerate}
\end{proof}

\begin{lemma}
  \label{lem:C20ab}
The graph $G$ does not contain $C_{\ref{C15.5a}b}$.
\end{lemma}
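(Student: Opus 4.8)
The plan is to mirror the proof of Lemma~\ref{lem:C20ac}, the companion sub-configuration $C_{\ref{C15.5a}a}$, i.e.\ to use the recoloring framework. Introduce notation via a figure: let $u$ be the $8$-vertex, $v_1$ its weak $3$-neighbor, $w_1$ the common neighbor of $u$, $v_1$ and $v_3$ witnessing $\dist_u(v_1,v_3)=2$ (so $d(w_1)=6$), $v_3$ the triangulated $5$-neighbor and $w_2$ its second triangular-face neighbor around $u$ (so $d(w_2)=6$). Because $d(v_1)=3$ and $v_1$ is weak, its three incident edges are $uv_1$, $v_1w_1$ and $v_1x$ for the third neighbor $x$ of $v_1$; because $v_3$ is a triangulated $5$-vertex, besides $u,w_1,w_2$ it has two further neighbors $y,z$, with incident edges $uv_3$, $v_3w_1$, $v_3w_2$, $v_3y$, $v_3z$. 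First I would take a coloring $\gamma$ of $G$ minus a suitable small set of these edges, color the rest by minimality, and then uncolor $u,v_1,v_3,w_1,w_2$ together with all the labeled edges. By Remark~\ref{obs:lowerbound} it suffices to treat the worst case $|\hat x|=10-c_x$; as in the surrounding lemmas a few list sizes (those of $\hat{w_1},\hat{w_2}$ and of $v_3y,v_3z$) depend on which undrawn edges among $\{w_1,w_2,y,z\}$ are present, and $C_{\ref{C1}}$, $C_{\ref{C3a}}$, $C_{\ref{C3b}}$ eliminate several possibilities, leaving only a handful of cases.

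Next I would immediately \emph{forget} $v_1$: a vertex of degree $3$ has at most $8$ neighbors in $\mathcal{T}(G)$ and a list of size $\geqslant 10-8>0$ throughout the process, so it can always be forgotten, and so can one of its incident edges as soon as both its endpoints are colored. What then remains is the tight cluster $\{u,v_3,w_1,w_2\}$ and the edges around it. The degree-$8$ vertex $u$ carries a large list, so I would apply the color-shift trick used repeatedly above: pick $\alpha\in\hat d\setminus\hat e$ for a well-chosen pair of edges $d,e$ at $u$, delete $\alpha$ from the lists of the two or three triangle-edges constraining both of them, color the cluster, and observe that at the end either the relevant bottleneck list is a singleton or $\alpha$ is still available for $d$; either way $d$, then $e$, then $u$ can be colored last. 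This removes $u$ and a couple of edges from the problem, possibly after also reusing $\gamma$ on several already-uncolored elements in the style of Lemma~\ref{lem:C20ac}.

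Once $u$ and $v_1$ are dealt with, the residual instance is $v_3$ with $w_1,w_2$ (each $|\hat{w_i}|\geqslant 4$, and $|\hat{v_3}|$ at least $5$ plus the number of undrawn edges at $v_3$) together with the few remaining uncolored edges. In each case I would finish by reducing this to an already-established gadget: Corollary~\ref{cor:evencycle} on an even cycle through $w_1$ and $w_2$; Lemma~\ref{lem:fryingpan} or Lemma~\ref{lem:diam} on a path of the shape $w_1\,\ast\,v_3\,\ast\,w_2$; Theorem~\ref{thm:=deg} on the constraint subgraph when none of its blocks is a clique or an odd cycle; or, when the case analysis gets unwieldy, exhibiting a suitable monomial for Theorem~\ref{thm:nss}. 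It may also happen, as in Lemma~\ref{lem:C7b}, that after forgetting $v_1$ and handling $u$ the residual configuration coincides with a fragment of $C_{\ref{C15}a}$, in which case Lemma~\ref{lem:C15a} applies directly.

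The step I expect to be the main obstacle is controlling the degree-$5$ vertex $v_3$ and its two degree-$6$ triangular-face neighbors $w_1,w_2$: unlike $v_1$ this cluster cannot simply be forgotten, its list sizes are only barely sufficient, and the order in which $u$'s incident edges and the edges $v_3w_1,v_3w_2$ are colored must be chosen so that $v_3$ never exhausts its list. Making the color-shift on $u$ compatible with every undrawn-edge case, while ensuring it never kills the last color of $\hat{v_3}$, is where the real work lies; the subsequent reductions to the diamond/frying-pan/even-cycle lemmas are routine.
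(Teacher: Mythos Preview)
You have misread the configuration. The common neighbor of $u$, $v_1$ and the $5$-vertex (your $w_1$) is adjacent to the degree-$3$ vertex $v_1$, so by $C_{\ref{C1}}$ it has degree $8$, not $6$. Likewise the second triangular-face partner of the $5$-vertex around $u$ need not have degree $6$. The hypothesis ``having two neighbors of degree $6$'' refers to two of the three neighbours of the $5$-vertex that are \emph{not} on triangular faces with $u$; this is exactly one of the $E_3$-neighbour clauses. With your degree assignments the computations of list sizes, and hence the whole plan, are off from the start.

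Even with the configuration corrected, your outline is far more elaborate than what is needed, and the paper's argument is conceptually different. The paper never uncolors $u$ or any of its other incident edges. It removes only the three edges $a,b,c$ at $v_1$ (and $v_1$ itself), takes a colouring $\gamma$ of the rest by minimality, and then uncolors just the star at the $5$-vertex $v_2$: the five edges $d,e,f,g,h$ and the vertex $v_2$. After forgetting $v_1$, the unique obstruction to extending is $\hat a=\hat b=\hat c$ as a common $2$-list. Since $b$ is incident to $d$ (they share the common $8$-vertex) and $c$ is incident to $h$ (they share $u$), recolouring either $d$ or $h$ breaks this equality. One then applies Lemmas~\ref{lem:SCC} and~\ref{lem:degmin} to the colour shifting graph of $\{d,e,f,g,h,v_2\}$: the list sizes $|\hat d|=|\hat h|=3$, $|\hat e|=|\hat f|=4$, $|\hat g|=2$, $|\hat{v_2}|=5$ force the absorbing strong component to contain $d$ or $h$, and Lemma~\ref{lem:recolor} finishes. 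There is no colour-shift on $u$, no Nullstellensatz, no reduction to Lemma~\ref{lem:C15a}, and no case split on undrawn edges; the whole proof is a few lines.
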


\begin{proof}
  We use the notation depicted in Figure~\ref{fig:C20ab}.  By
  minimality, we take a coloring $\gamma$ of
  $G\setminus\{a,b,c,v_1\}$, uncolor $d,e,f,g,h,v_2$ and forget $v_1$.

  \begin{figure}[!h]
    \centering
    \begin{tikzpicture}[v/.style={draw=black,minimum size = 10pt,ellipse,inner sep=1pt}]
      \node[v, very thick, label=right:{$u$}] (u) at (0,0)  {8};
      \node[v, very thick] (v0) at (45:1.5) {8};
      \node[v,label=above:{$v_1$}] (v1) at (90:1.5) {3};
      \node[v, very thick] (v2) at (135:1.5) {8};
      \node[v,label=left:{$v_2$}] (v3) at (180:1.5) {5};
      \node[xshift=-1.5cm, very thick,v] (w1) at (143:1) {6};
      \node[xshift=-1.5cm, very thick,v] (w2) at (215:1) {6};
      \node[xshift=-1.5cm, very thick,v] (w3) at (287:1) {8};
      
      \draw (v3) -- (w1) node[midway,above] {$e$};
      \draw (v3) -- (w2) node[midway,below] {$f$};
      \draw (v3) -- (w3) node[midway,right] {$g$};
      \draw (v3) -- (u) node[midway,above] {$h$};
      \draw[ very thick] (v2) -- (u);
      \draw[ very thick] (v0) -- (u);
      \draw (v1) -- (u) node[midway,left] {$c$};
      \draw (v3) -- (v2) node[midway,left] {$d$};
      \draw (v2) -- (v1) node[midway,above] {$b$};
      \draw (v1) -- (v0) node[midway,above] {$a$};
    \end{tikzpicture}
\caption{Notation for Lemma~\ref{lem:C20ab}}
    \label{fig:C20ab}
  \end{figure}

  We have $|\hat{a}|=|\hat{g}|=2$,
  $|\hat{b}|=|\hat{c}|=|\hat{d}|=|\hat{h}|=3$, $|\hat{e}|=|\hat{f}|=4$
  and $|\hat{v_2}|=5$.

  If we color $d,e,f,g,h,v_2$ with their colors in $\gamma$, then the
  only way for the coloring not to extend to $G$ is to have
  $\hat{a},\hat{b}$ and $\hat{c}$ to be the same list of size two.  To
  avoid this, our goal is to find another coloring of $d,e,f,g,h,v_2$
  which differs from $\gamma$ on either $d$ or $h$. We consider the
  color shifting graph $H$ of $\{d,e,f,g,h,v_2\}$. By
  Lemma~\ref{lem:SCC}, there exists a strongly connected component $C$
  of $H$ such that $|C|>\max_{x\in C} d^-(x)$. By
  Lemma~\ref{lem:degmin}, this inequality ensures that $|C|>1$. We show
  that $C$ contains either $d$ or $h$ by distinguishing some cases:
  \begin{enumerate}
  \item If $C$ contains a vertex $s_\alpha$, then
    $|C|>d^-(s_\alpha)=|V(H)|-1$. Then $C=V(H)$ and contains $d$ and
    $h$.
  \item Otherwise, if $C$ contains $v_2$, then $|C|\geqslant 5$ and
    $C$ contains either $d$ or $h$.
  \item Otherwise, if $C$ contains $e$ or $f$, then $|C|\geqslant 4$ and
    $C$ contains either $d$ or $h$.
  \end{enumerate}
 Otherwise, $C$ has size at least $2$ and is contained in
  $\{d,g,h\}$, hence it contains $d$ or $h$. Thus, we can apply
  Lemma~\ref{lem:recolor} to ensure that we can recolor $d$ or $h$,
  hence we can extend the coloring to $G$.
\end{proof}

\begin{lemma}
  \label{lem:C20b}
  The graph $G$ does not contain $C_{\ref{C15.5a}c}$.
\end{lemma}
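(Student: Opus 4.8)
The plan is to reduce the configuration by minimality, in the spirit of Lemma~\ref{lem:C20ac} (for $C_{\ref{C15.5a}a}$) and Lemma~\ref{lem:C7c}. First I would fix notation with a figure, writing $v_1$ for the weak $3$-neighbor of $u$, $v_2$ for the weak $4$-neighbor with $\dist_u(v_1,v_2)=4$, $v_3,v_4$ for the two weak $5$-neighbors of $u$, and $w$ for the degree-$5$ neighbor of (say) $v_3$ with $uv_3w$ not a triangular face; recall that here $v_3$ is an $E_3$-neighbor of the third type, hence a $(7^+,8)$-neighbor of $u$, so the two fan-neighbors of $v_3$ have degree at least $7$ and exactly $8$. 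Since $u$ is triangulated, its eight neighbors form a fan of triangles, so all the fan edges are present; using $C_{\ref{C3a}}$, $C_{\ref{C3b}}$ and planarity I would record exactly which of the optional chords among $v_2,v_3,v_4,w$ can occur, and whether $w=v_4$. I would then color, by minimality, the graph obtained from $G$ by deleting the edges incident to $u$, the fan edges, the edge $v_3w$, and the vertices $v_1,v_2$ (forgettable as $4^-$-vertices), uncolor $u,v_1,v_2,v_3,v_4,w$, keep the resulting coloring $\gamma$ as a reference coloring, and forget $v_1$ and $v_2$.

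The next step is to invoke Remark~\ref{obs:lowerbound} to work in the worst case, noting that the sizes of $\hat u,\hat{v_3},\hat{v_4},\hat w$ and of the relevant edge-lists shift according to which optional chords are present, with $|\hat u|\geqslant 6$ and $|\hat{v_3}|,|\hat{v_4}|,|\hat w|\geqslant 3$ always. The heart of the argument is then a fixed sequence of coloring moves: color the edge from $u$ to the degree-$8$ fan vertex opposite to $v_3$ together with a couple of the interior fan edges so as to free colors, peel off the long fan path via Lemma~\ref{lem:diam} or Corollary~\ref{cor:evencycle}, and treat the local gadget on $v_3,v_4,w$ exactly as in Lemma~\ref{lem:C7c} --- deleting a well-chosen color $\alpha$ (of the form $\alpha\in\hat d\setminus\hat f$ for suitable edges $d,f$) from two ``foot'' lists, so that either $\alpha$ survives and two target lists become distinct, or $\alpha$ is used and two target lists shrink; in both cases a pair of elements can then be forgotten. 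Each residual graph $H$ is finished either by one of the black-box tools (Lemma~\ref{lem:diam}, Lemma~\ref{lem:fryingpan}, Theorem~\ref{thm:clique}, Theorem~\ref{thm:=deg}) or, for the stubborn cases, by exhibiting an explicit monomial of maximal degree with nonzero coefficient in $P_H$ and invoking Theorem~\ref{thm:nss}. Where the reference coloring $\gamma$ is needed, I would use it as in Lemma~\ref{lem:C20ac}: color several elements with their $\gamma$-colors and argue that the remaining clique-like piece is still $L$-choosable, possibly after a color shift via Lemma~\ref{lem:recolor}.

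I expect the main obstacle to be the branching on the optional chords among $v_2,v_3,v_4,w$ together with the degenerate case $w=v_4$: each choice changes several list sizes simultaneously, and hence which finishing lemma or which monomial applies, so the proof fans out into many sub-cases in the manner of Lemma~\ref{lem:C7c}, Lemma~\ref{lem:C8d} and Lemma~\ref{lem:C13a}. Planarity is what keeps the list of cases finite --- as in Lemma~\ref{lem:C12a}, the neighborhood of $u$ together with $w$ cannot contain a $K_5$- or $K_{3,3}$-minor, which excludes most chord combinations --- but carrying out the bookkeeping, and in particular certifying the nonzero-coefficient claims for the residual graphs (a step I would delegate to the Maple check referenced in the paper rather than do by hand), is where the real effort lies.
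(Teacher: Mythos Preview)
Your proposal has the right ingredients (minimality, Nullstellensatz on residual pieces, recoloring via Lemma~\ref{lem:recolor}), but the structure you outline diverges from the paper's proof in ways that matter, and one of your anticipated difficulties is illusory while the actual key step is missing.

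First, the chord branching and the $w=v_4$ case never arise. Since $u$ is triangulated and the $4$-vertex is at triangle-distance~$4$ from the $3$-vertex, $C_{\ref{C1}}$ forces the four fan positions adjacent to the $3$- and $4$-vertex to have degree~$8$ (and we are not in $C_{\ref{C15.5a}a}$, so the $4$-vertex is an $(8,8)$-neighbor). The two weak $5$-neighbors are thus pinned to the remaining two fan slots, and the degree-$5$ neighbor $w$ of one of them is already its fifth neighbor; all adjacencies are determined and $C_{\ref{C3a}}$ forbids $w$ from coinciding with the other weak $5$-vertex. So there is a single picture, not a fan of sub-cases.

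Second, and more importantly, the paper does \emph{not} delete all edges around $u$ before invoking minimality. It removes only the three edges $a,i,p$ incident to the $3$-vertex, obtains a reference coloring $\gamma$ of essentially the whole graph, and only then uncolors the local elements. The case analysis is then on \emph{list intersections} such as $\hat h\cap\hat n$, $\hat f\cap\hat{v_4}$, $\hat n\neq\hat o$, each non-trivial case being finished by an explicit monomial and Theorem~\ref{thm:nss}. When all these cases fail, one has accumulated disjointness constraints (e.g.\ $\hat h\cap\hat{v_4}=\varnothing$, $\hat n\subset\hat{v_4}$) which imply that $\gamma(h)\notin\hat o$ and $\gamma(f)\notin\hat n$; this is precisely what lets you put several elements back to their $\gamma$-colours without damaging the key lists, and then run the colour-shifting argument on $\{c,j,k,q,r\}$. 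If, as you propose, you delete all fan edges before taking $\gamma$, then $c,j,k,q,r$ carry no $\gamma$-colours and the recolouring step has nothing to shift. The analogy with Lemma~\ref{lem:C7c} is also off: that lemma has no recolouring component and a much smaller residual graph; the correct template here is Lemma~\ref{lem:C20ac}, whose list-intersection case analysis and final $\gamma$-based recolouring this proof mirrors closely.
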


\begin{proof}
  We use the notation depicted in Figure~\ref{fig:C20b}.  By
  minimality, we color $G\setminus\{a,p,i\}$, and uncolor
  $a,\ldots,s,v_1,\ldots,v_5$. We forget $v_1,v_3$.

  \begin{figure}[!h]
    \centering
    \begin{tikzpicture}[v/.style={draw=black,minimum size = 10pt,ellipse,inner sep=1pt}]
      \node[v,label=85:{$u$}] (u) at (0,0)  {8};
      \node[v, very thick] (v0) at (45:1.5) {8};
      \node[v,label=above:{$v_1$}] (v1) at (90:1.5) {3};
      \node[v, very thick] (v2) at (135:1.5) {8};
      \node[v,label=left:{$v_4$}] (v3) at (180:1.5) {5};
      \node[v, very thick] (v4) at (225:1.5) {8};
      \node[v,label=below left:{$v_3$}] (v5) at (270:1.5) {4};
      \node[v, very thick] (v6) at (315:1.5) {8};
      \node[v,label=right:{$v_2$}] (v7) at (0:1.5) {5};
      \node[v, very thick] (w) at (270:2.5) {8};
      \node[v,xshift=1.5cm,label=right:{$v_5$}] (w1) at (36:1) {5};
      \node[v, very thick,xshift=1.5cm] (w2) at (-36:1) {8};
      \draw (v6) -- (u) node[midway,right] {$d$};
      \draw (v5) -- (u) node[midway,right] {$e$};
      \draw (v4) -- (u) node[midway,below] {$f$};
      \draw (v3) -- (u) node[midway,below] {$g$};
      \draw (v2) -- (u) node[midway,left] {$h$};
      \draw (v1) -- (u) node[midway,left] {$a$};
      \draw (v0) -- (u) node[midway,above] {$b$};
      \draw (v7) -- (u) node[midway,above] {$c$};
      \draw (v5) -- (v6) node[midway,below] {$\ell$};
      \draw (v5) -- (v4) node[midway,below] {$m$};
      \draw (v4) -- (v3) node[midway,left] {$n$};
      \draw (v3) -- (v2) node[midway,left] {$o$};
      \draw (v2) -- (v1) node[midway,above] {$p$};
      \draw (v1) -- (v0) node[midway,above] {$i$};
      \draw (v0) -- (v7) node[midway,right] {$j$};
      \draw (v6) -- (v7) node[midway,right] {$k$};
      \draw (v5) -- (w) node[midway,right] {$s$};
      \draw (v7) -- (w1) node[midway,above] {$q$};
      \draw (v7) -- (w2) node[midway,below] {$r$};
    \end{tikzpicture}
\caption{Notation for Lemma~\ref{lem:C20b}}
    \label{fig:C20b}
  \end{figure}

    If $v_4=v_5$, the sizes of the color lists may differ depending on whether $r,s$ are incident with neighbors of $u$ distinct from $v_2,v_3$, but we have at least the following: $|\hat{r}|=|\hat{s}|=2$,
    $|\hat{n}|=|\hat{o}|=3$, $|\hat{b}|=|\hat{d}|=|\hat{f}|=|\hat{h}|=|\hat{i}|=|\hat{j}|=|\hat{k}|=|\hat{\ell}|=|\hat{m}|=|\hat{p}|=4$, 
  $|\hat{v_4}|=|\hat{u}|=6$, $|\hat{v_2}|=7$, $|\hat{g}|=8$, $|\hat{q}|=9$ and
  $|\hat{a}|=|\hat{c}|=|\hat{e}|=10$. We color $v_2$ with a color not in $\hat{v_4}$, then $r,s$ and finally forget $v_4$ and $q$. We color the remaining elements using Theorem~\ref{thm:nss}.

Assume now that $v_4\neq v_5$. Then by $C_{\ref{C3a}}$, $v_5$ cannot be a neighbor of $u$ and $r\neq s$. The sizes of the color lists may differ depending on whether $r,s$ are incident with neighbors of $u$ distinct from $v_2,v_3$, but we have at least the following: $|\hat{n}|=|\hat{o}|=|\hat{r}|=|\hat{s}|=|\hat{v_5}|=2$,
  $|\hat{v_4}|=|\hat{b}|=|\hat{d}|=|\hat{f}|=|\hat{h}|=|\hat{i}|=|\hat{j}|=|\hat{k}|=|\hat{\ell}|=|\hat{m}|=|\hat{p}|=4$,
  $|\hat{q}|=|\hat{u}|=6$, $|\hat{v_2}|=7$, $|\hat{g}|=8$ and
  $|\hat{a}|=|\hat{c}|=|\hat{e}|=10$.

  For all items except the last one, we remove from $\hat{v_2}$ the
  colors from $\hat{v_5}$, so that $\hat{v_5}$ becomes disjoint from
  $\hat{v_2}$, so we can forget $v_5$ then $q$.
  
  \begin{itemize}
  \item If $\hat{h}\cap \hat{n}\neq \varnothing$, we color $h$ and $n$
    with the same color, then $o,r,s$.  We color the remaining
    elements using Theorem~\ref{thm:nss}. Therefore, we may assume
    that $\hat{h}$ and $\hat{n}$ are disjoint.
  \item If $\hat{h}\cap\hat{v_4}\neq\varnothing$, we color $h$ and
    $v_4$ with the same color (hence not in $\hat{n}$), then
    $o,n,r,s$. We color the remaining elements using
    Theorem~\ref{thm:nss}.  Therefore, we may assume that $\hat{h}$
    and $\hat{v_4}$ are disjoint.

  \item If $\hat{f}\cap \hat{o}\neq \varnothing$, we color $f$ and $o$
    with the same color, then $n,m,s,r$. We color the remaining
    elements using Theorem~\ref{thm:nss}. Therefore, we may assume
    that $\hat{f}$ and $\hat{o}$ are disjoint.
  \item If $\hat{f}\cap\hat{v_4}\neq\varnothing$, we color $f$ and
    $v_4$ with the same color (hence not in $\hat{o}$), then
    $n,o,m,s,r$. We color the remaining elements using
    Theorem~\ref{thm:nss}. Therefore, we may assume that $\hat{f}$ and
    $\hat{v_4}$ are disjoint.
  \item If $\hat{o}\cup\hat{n}\not\subset\hat{v_4}$, then we color $n$
    or $o$ with a color not in $\hat{v_4}$, then $o$ or $n$, then
    $r,s$, and we forget $v_4$. We color the remaining elements using
    Theorem~\ref{thm:nss}. Therefore, we may assume that $\hat{n}$ and
    $\hat{o}$ are contained in $v_4$.
  \item If $\hat{n}\neq\hat{o}$, we color $n$ with a color not in
    $\hat{o}$, then $f,m,s,r$. We remove $\hat{v_5}$ from $\hat{q}$,
    so that we can forget $v_5$ and $v_2$. We color the remaining
    elements using Theorem~\ref{thm:nss}. Therefore, we may assume
    that $\hat{n}=\hat{o}$.
  \end{itemize}
  Now we have $\gamma(h)\in \hat{h}$, hence not in $\hat{o}$ since
  $\hat{o}\subset \hat{v_4}$ which is disjoint from
  $\hat{h}$. Therefore, $\gamma(h)\notin \hat{o}$, and similarly,
  $\gamma(f)\notin\hat{n}$. We color $h,b,f$ and $d$ with their color in $\gamma$.  Since
  $\hat{n}=\hat{o}$ and $\{g,n,o,v_4\}$ is colorable, coloring $g$ and
  $v_4$ with their color in $\gamma$ does not affect $\hat{n}$ and
  $\hat{o}$. We also color $u$ with its color in $\gamma$. We remove $\hat{v_5}$ from $\hat{q}$, so that $\hat{v_5}$ becomes disjoint from $\hat{q}$, hence we can forget $v_5$ and $v_2$ and finally we color $r$ with $\gamma(r)$.

  Therefore, we obtain $|\hat{i}|=|\hat{j}|=|\hat{n}|=|\hat{o}|=|\hat{s}|=2$,
  $|\hat{c}|=|\hat{k}|=|\hat{\ell}|=|\hat{m}|=|\hat{p}|=|\hat{q}|=3$,
  $|\hat{a}|=|\hat{e}|=4$ (with maybe more available colors if $s$ is incident to two neighbors of $u$).
  

  Let $\alpha\in \hat{e}\setminus\hat{m}$. We distinguish two cases.
  \begin{itemize}
  \item Assume that $\{c,j,k,q\}$ stays colorable when we remove
    $\alpha$ from $\hat{c}$. Let $H$ be the color shifting graph of $\{c,j,k,q\}$. By
    Lemma~\ref{lem:SCC}, there exists a strongly connected component
    $C$ of $H$ such that $C$ contains all the in-neighbors of its vertices. By Lemma~\ref{lem:degmin}, this
    ensures that $|C|>d^-(i)=1$. 
    \begin{itemize}
    \item If $C$ contains $k$, then we can recolor $k$ by
      Lemma~\ref{lem:recolor}. In either the old or the new coloring, we have $\hat{e}\neq\hat{m}$ (since $c$ cannot be colored $\alpha$), and $\hat{\ell}\neq\hat{s}$ (since $k$ is recolored). Hence we can color
    $s$ with a color not in $\hat{\ell}$, then $i$ arbitrarily and finally apply Lemma~\ref{lem:fryingpan} to $\{\ell,m,n,o,p,a,e\}$. Thus we may assume that $C$ does not contain $k$. In particular, $C$ cannot contain some $s_\beta$.

    \item If $q\in C$, we have $|C|>2$, hence $C$ contains $j$. Otherwise, $C\subset\{c,j\}$ and if $c\in C$, then $|C|>1$ and $C$ contains $j$ again.
    \end{itemize}
    Therefore, $C$ contains $j$ but not $k$, so we can recolor $j$ without changing $k$. Denote by $\gamma'$ the new coloring of $\{c,j,k,q\}$, and observe that the lists of available colors of $i$ changed, and those of $\ell,m,n,o,p,s$ did not change.
    
    If $\alpha\notin\hat{\ell}$, we color $e$ with $\alpha$ and $\ell,m,n,o,s$ with their color in $\gamma$ arbitrarily. Now the list $\hat{p}$ did not change but $\hat{i}$ did, hence we can color $a,i,p$.

    Therefore, assume that $\alpha\in\hat{\ell}=\hat{s}$. We color $\ell$ and $s$ with their color in $\gamma$. In particular, one of them gets color $\alpha$ so it remains at least two colors in $\hat{m}$ afterwards. If $\hat{e}$ has size 2, then we color $i$ arbitrarily and color $\{a,e,m,n,o,p\}$ using Corollary~\ref{cor:evencycle}. This implies that $\hat{e}$ must have size 1 when $c$ is colored with $\gamma$ or $\gamma'$ and $\ell,s$ are colored. Therefore, we must have $\gamma'(c)=\gamma(e)$, and that after recoloring (and coloring $\ell,s$), $\hat{e}=\{\gamma(c)\}$. Now color $e,m,n,o$ and observe that the list $\hat{a}$ did not change (the recoloring only swapped the colors of $c$ and $e$), while $\hat{i}$ did, hence we can color $a,i,p$. 
    
  \item Assume that $\{c,j,k,q\}$ is not colorable when we remove
    $\alpha$ from $\hat{c}$. This means that $\gamma(c)=\alpha$. In
    particular, when coloring $\{c,j,k,q\}$ with their color in
    $\gamma$, we obtain that $\hat{e}$ and $\hat{m}$ are the same list
    of size 3 (otherwise there would have been two choices for $\alpha$, one of them satisfying the hypothesis of the previous item). Since $\{e,\ell,m,s\}$ is colorable, there must exist a
    color in $\hat{\ell}\cup\hat{s}$ not in $\hat{m}$. We color $\ell$
    or $s$ (say $\ell$, by symmetry) with this color, then $s$. We
    then apply Lemma~\ref{lem:fryingpan} to
    $\{i,p,o,n,m,e,a\}$.
  \end{itemize}
\end{proof}
 
\subsection{Configuration $C_{\ref{C15.5b}}$}
By definition of $E_3$-neighbor, and since $G$ does not contain
$C_{\ref{C15.5a}b}$, we know that if $G$ contains $C_{\ref{C15.5b}}$,
then it contains one of the following configurations.

\begin{itemize}
\item[$\bullet$] $C_{\ref{C15.5b}a}$ is a $8$-vertex $u$ with a weak
  neighbor $v$ of degree $3$, and a $(7,8)$-neighbor of degree 4 at
  triangle distance 2 from $v$.
\item[$\bullet$] $C_{\ref{C15.5b}b}$ is a $8$-vertex $u$ with two weak
  neighbors of degree $3$ and $4$ at triangle distance 2, and a
  $(6,6)$-neighbor of degree 5.\item[$\bullet$] $C_{\ref{C15.5b}c}$ is a $8$-vertex $u$ with a weak
  neighbor $v$ of degree $3$, a weak neighbor of degree 4 at triangle
  distance 2 from $v$, and two $(6,8)$-neighbors of degree 5.\item[$\bullet$] $C_{\ref{C15.5b}d}$ is a $8$-vertex $u$ with a weak
  neighbor $v$ of degree $3$, a weak neighbor of degree 4 at triangle
  distance 2 from $v$, and two $(7^+,8)$-neighbors of degree 5 such
  that one of them is a triangle-distance 2 from $v$ and has a
  neighbor of degree 5.
\item[$\bullet$] $C_{\ref{C15.5b}e}$ is a $8$-vertex $u$ with a weak
  neighbor $v$ of degree $3$, a weak neighbor of degree 4 at triangle
  distance 2 from $v$, and two $(7^+,8)$-neighbors of degree 5 such
  that one of them is a triangle-distance 4 from $v$ and has a
  neighbor of degree 5.
\item[$\bullet$] $C_{\ref{C15.5b}f}$ is a $8$-vertex $u$ with a weak
  neighbor $v$ of degree $3$, a weak neighbor of degree 4 at triangle
  distance 2 from $v$, and two $(7^+,8)$-neighbors of degree 5 such
  that one of them is a triangle-distance 4 from $v$ and has two
  neighbors of degree 6.
\end{itemize}

We dedicate a lemma to each of these configurations. 
\begin{lemma}
  \label{lem:C20aa}
The graph $G$ does not contain $C_{\ref{C15.5b}a}$.
\end{lemma}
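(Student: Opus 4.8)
The configuration $C_{\ref{C15.5b}a}$ is an $8$-vertex $u$ with a weak $3$-neighbor $v$ and a $(7,8)$-neighbor of degree $4$ at triangle-distance $2$ from $v$. The picture is essentially the same local structure used in all the $C_{15.5}$-type lemmas: $u$ is triangulated, the triangular faces around $u$ give a cycle $v_1\cdots v_8 v_1$ of its neighbors, and the relevant weak neighbors are pinned down. I would follow the recipe already used in Lemmas~\ref{lem:C13a}--\ref{lem:C13c} and \ref{lem:C15a}--\ref{lem:C15c}: name all edges and vertices around $u$ as in a figure (analogous to Figures~\ref{fig:C15a}--\ref{fig:C15c}), color $G'=G\setminus\{a,\ldots\}$ by minimality of $G$, uncolor $u$ together with $v_1,\ldots$ and the degree-$3$ and degree-$4$ weak neighbors, and invoke Remark~\ref{obs:lowerbound} to fix the worst-case list sizes $|\hat x|=10-c_x$ for every uncolored element $x$.

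The first concrete step is to dispose of the degree-$3$ vertex $v$: since $v$ is a weak neighbor, uncoloring it leaves it with many available colors relative to its few uncolored neighbors, so I would \emph{forget} $v$ immediately (as in the opening move of Lemmas~\ref{lem:C15a}--\ref{lem:C15c}). Next, because the degree-$4$ neighbor is a $(7,8)$-neighbor at triangle-distance $2$ from $v$, its two triangle-neighbors $w_1,w_2$ have degrees $7$ and $8$; I would use these degrees to bound the list sizes of the edges in the two triangles through that vertex, then color/forget those edges to pull a color out of the appropriate lists, exactly the "remove a color $\alpha\in\hat b\setminus\hat d$ from $\hat h$ and $\hat i$" trick recurring in the $C_{10}$, $C_{11}$ and $C_{13}$ lemmas. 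After forgetting $v$, the degree-$4$ vertex, $u$, and a couple of the long edges $b,d$ around $u$, what remains should be a small graph of small maximum degree — typically a frying-pan (Lemma~\ref{lem:fryingpan}), an even cycle (Corollary~\ref{cor:evencycle}), a "path-with-chords" graph (Lemma~\ref{lem:diam}), a clique handled by Hall (Theorem~\ref{thm:clique}), or a block decomposition with no complete-graph or odd-cycle block (Theorem~\ref{thm:=deg}) — on which the coloring extends. I would also keep the Combinatorial Nullstellensatz (Theorem~\ref{thm:nss}) as the fallback certificate, giving one explicit monomial with a nonzero coefficient in $P_G$ whose exponents are all strictly below the corresponding list sizes, as is done at the end of most lemmas.

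The main obstacle, as in Lemmas~\ref{lem:C12a}, \ref{lem:C15a}, and \ref{lem:C20b}, is the case split over which edges among the uncolored neighbors of $u$ are actually present: planarity (outerplanarity of the subgraph induced by $v_1,\ldots,v_8$, absence of $K_5$ and $K_{3,3}$ minors) limits these extra edges, and $C_{\ref{C3a}}$, $C_{\ref{C3b}}$, $C_{\ref{C1}}$ forbid others, but one still has to rule out the few remaining configurations one by one and check that in each the residual graph falls under one of the generic lemmas. The bookkeeping of which lists shrink when the extra edges are present is the tedious part; the structural content is light. If a uniform Nullstellensatz monomial cannot be found across all sub-cases (which happens for several $C_{12}$/$C_{15}$ lemmas), I would fall back to the hand case analysis; conversely, where a monomial works I would simply state it, case by case on the extra edges, as in Lemma~\ref{lem:C13c}.
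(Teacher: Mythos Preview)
Your plan follows the heavy template of Lemmas~\ref{lem:C15a}--\ref{lem:C15c} (delete a large star around $u$, uncolor all the $v_i$'s, track $16$ or more lists, split over extra chords). That template could in principle be pushed through here, but it is vastly more work than needed, and your outline never isolates the one fact that makes this configuration easy.

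The paper's argument is local and almost trivial by comparison. Write $v_1$ for the weak $3$-neighbor and $v_2$ for the $(7,8)$-neighbor of degree~$4$; let $a,b,c$ be the three edges at $v_1$ and $d,e,f,g$ the four edges at $v_2$. One removes only $\{a,b,c\}$, takes a coloring $\gamma$ of the rest by minimality, and then \emph{merely uncolors} $d,e,f,g,v_1,v_2$ (nothing else around $u$). After forgetting $v_1,v_2$ one has $|\hat b|=|\hat d|=2$ and $|\hat a|=|\hat c|=|\hat e|=|\hat f|=|\hat g|=3$. A three-line case analysis using Lemma~\ref{lem:fryingpan} forces $\hat d\subset\hat e=\hat f=\hat g$, so $|\hat d\cup\hat e\cup\hat f\cup\hat g|=3$; but $d,e,f,g$ form a clique in $\mathcal T(G)$ and carried four distinct colors in $\gamma$, a contradiction. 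The decisive idea you are missing is precisely this ``the four edges at $v_2$ were already properly colored'' constraint, which replaces all of the heavy case analysis you anticipate. Your large-neighborhood plan would eventually succeed, but only after an order of magnitude more bookkeeping, and it never needs the $(7,8)$ hypothesis as sharply as the paper does.
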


\begin{proof}
  We use the notation depicted in Figure~\ref{fig:C20aa}.  By
  minimality, we take a coloring $\gamma$ of $G\setminus\{a,b,c\}$,
  and uncolor $d,e,f,g,v_1,v_2$. We forget $v_1,v_2$.
  \begin{figure}[!h]
    \centering
    \begin{tikzpicture}[v/.style={draw=black,minimum size = 10pt,ellipse,inner sep=1pt}]
      \node[v,label=right:{$u$}] (u) at (0,0)  {8};
      \node[v, very thick] (v0) at (45:1.5) {8};
      \node[v,label=above:{$v_1$}] (v1) at (90:1.5) {3};
      \node[v, very thick] (v2) at (135:1.5) {8};
      \node[v,label=above left:{$v_2$}] (v3) at (180:1.5) {4};
      \node[v, very thick] (v4) at (225:1.5) {7};
      \node[v, very thick] (w) at (180:3) {8};
      \draw (v3) -- (w) node[midway,below] {$d$};
      \draw [ very thick] (v4) -- (u);
      \draw (v3) -- (u) node[midway,above] {$f$};
      \draw[ very thick] (v2) -- (u);
      \draw (v1) -- (u) node[midway,left] {$c$};
      \draw [ very thick] (v0) -- (u);
      \draw (v4) -- (v3) node[midway,left] {$g$};
      \draw (v3) -- (v2) node[midway,left] {$e$};
      \draw (v2) -- (v1) node[midway,above] {$a$};
      \draw (v1) -- (v0) node[midway,above] {$b$};
    \end{tikzpicture}
\caption{Notation for Lemma~\ref{lem:C20aa}}
    \label{fig:C20aa}
  \end{figure}

  We have $|\hat{b}|=|\hat{d}|=2$ and
  $|\hat{a}|=|\hat{c}|=|\hat{e}|=|\hat{f}|=|\hat{g}|=3$.
  \begin{itemize}
  \item If $\hat{d}\not\subset\hat{g}$, then we can color $d$ with a
    color not in $\hat{g}$, forget $g$, and apply
    Lemma~\ref{lem:fryingpan} to $\{b,a,e,f,c\}$. We may thus assume
    that $\hat{d}\subset\hat{g}$.
  \item If $\hat{d}\not\subset\hat{e}$, then we color $d$ with a color
    not in $\hat{e}$, color $b$, and apply Lemma~\ref{lem:fryingpan}
    to $\{g,e,a,c,f\}$. We may thus assume (by symmetry) that
    $\hat{d}\subset\hat{e}$ and $\hat{d}\subset\hat{f}$.
  \item If $\hat{f}\neq\hat{g}$, we color $g$ with a color not in
    $\hat{f}$ (hence not in $\hat{d}$). We color $b$ an apply
    Lemma~\ref{lem:fryingpan} to color $\{d,f,c,a,e\}$. Therefore, we
    may assume that $\hat{f}=\hat{g}$. By symmetry, we also have
    $\hat{e}=\hat{g}$.
  \end{itemize}

  Therefore, we have $|\hat{d}\cup\hat{e}\cup\hat{f}\cup\hat{g}|=3$,
  hence $\{d,e,f,g\}$ is not colorable. This is impossible since
  $\gamma$ is a proper coloring. This means that one of the previous
  cases should happen, hence that we can color $G$.
\end{proof}

\begin{lemma}
  \label{lem:C20a}
  The graph $G$ does not contain $C_{\ref{C15.5b}b}$.
\end{lemma}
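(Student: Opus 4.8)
The plan is to follow the same pattern as in the preceding lemmas for the $C_{\ref{C15.5a}}$ and $C_{\ref{C15.5b}}$ configurations: name all the elements around the triangulated $8$-vertex $u$ according to a figure, color $G$ minus a small set of edges by minimality, then extend. Since $C_{\ref{C15.5b}b}$ is a triangulated $8$-vertex $u$ with a weak $3$-neighbor $v_1$, a weak $4$-neighbor $v_2$ with $\dist_u(v_1,v_2)=2$, and a $(6,6)$-neighbor $v_3$ of degree $5$, I would take a coloring $\gamma$ of $G\setminus\{a,b,c\}$ where $a,b$ are the two edges of the triangular face through $v_1$ and $c=uv_1$, uncolor the uncolored vertices and the relevant edges around $v_2$ and $v_3$, and forget $v_1$ (and perhaps $v_2$, $v_3$ once their degrees let us). Using Remark~\ref{obs:lowerbound} I would record the worst-case list sizes: the edges near $v_1$ have very large lists (since $v_1$ has degree $3$), the edge $c=uv_1$ has list of size roughly $9$, the edges of the common triangle have size $\geqslant 2$, while the edges around the weak $4$-neighbor $v_2$ and the $(6,6)$-neighbor $v_3$ have the tight sizes forced by their incidences.

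The key steps, in order: first, forget $v_1$ (degree $3$, at most $6$ neighbors in $\mathcal{T}(G)$) and, if possible after recording list sizes, also forget $v_2$; second, use the slack coming from the weak $3$-neighbor — the edge $c$ and the two triangle edges $a,b$ at $v_1$ — together with a color-removal trick ($\alpha\in\hat{b}\setminus\hat{c}$ or similar) to be able to forget $b$ and $c$ as well, exactly as in Lemma~\ref{lem:C15b}; third, reduce the remaining uncolored elements — those around $v_2$ and $v_3$ — to a graph whose choosability follows from Theorem~\ref{thm:=deg} (none of its blocks a clique or odd cycle), from Corollary~\ref{cor:evencycle}, from Lemma~\ref{lem:fryingpan}, or from Lemma~\ref{lem:diam}. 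Because $v_3$ is a $(6,6)$-neighbor, the two common neighbors $w_1,w_2$ of $u$ and $v_3$ have degree $6$, which is what makes the edge lists around $v_3$ large enough to push through; I would split into cases according to whether $\hat{g}$ (the edge $uv_3$) is contained in $\hat{v_3}$ union the list of the $v_3w_i$ edges, mirroring the dichotomy of Lemma~\ref{lem:C15a}.

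I would also give the Combinatorial Nullstellensatz alternative: exhibit a monomial $m$ in $P_G$ (after forgetting $v_1$ and possibly $v_2$) with $\deg m=\deg P_G$, $\deg_X m<|\hat x|$ for every uncolored element $x$, and a nonzero coefficient, distinguishing the sub-cases created by the possible extra edges among the uncolored vertices (here essentially $v_2w_i$, $v_3w_i$, or an edge joining $v_2$ and $v_3$ if planarity permits), exactly as is done for $C_{\ref{C15.5a}a}$.

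The main obstacle I expect is the case split on extra edges: planarity restricts which of $v_2v_3$, $v_2w_i$, $v_3w_i$ can simultaneously appear, and each configuration changes the tight list sizes, so the reduction graph in step three changes shape. Verifying in each branch that the residual graph is genuinely $f$-choosable (via Theorem~\ref{thm:=deg}, i.e.\ that it has no block which is a clique or odd cycle when $f$ meets the degree exactly) or that Lemma~\ref{lem:fryingpan}/Lemma~\ref{lem:diam} applies will be the bulk of the work; the color-removal bookkeeping to forget $b$ and $c$ is routine by analogy with Lemmas~\ref{lem:C15a}--\ref{lem:C15c}. If the case analysis becomes unwieldy, I would fall back entirely on the Nullstellensatz monomials, as the authors do for the $C_{\ref{C15}}$ and $C_{\ref{C15.5a}}$ families.
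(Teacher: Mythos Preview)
Your outline is too generic and misses the specific structural insight that makes this configuration easy. The paper's proof is short precisely because it does \emph{not} try to colour the neighbourhood of the $(6,6)$-neighbour directly via Theorem~\ref{thm:=deg}, Lemma~\ref{lem:fryingpan}, or a fresh case split in the style of Lemma~\ref{lem:C15a}. Instead, after removing all edges $a,\ldots,n$ around $u$ and forgetting $v_1,v_2$, it colours $j$ and $d$ with colours not in $\hat{k}$, forgets $k$, and then splits only on whether $\hat{b}=\hat{h}$. In both branches the remaining core $\{u,v_3,v_4,v_5,e,f,g,\ell,m\}$ is \emph{exactly} the configuration of Lemma~\ref{lem:C5a} (a $5$-vertex with three consecutive ``$6$-like'' neighbours), with matching list sizes; one simply invokes that lemma as a black box and mops up $\{a,c,h,b,i,n\}$ afterwards (the latter four form an even cycle). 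No Nullstellensatz, no recolouring, no analysis of extra edges among uncolored vertices is needed.

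Concretely, two things in your plan would not lead anywhere useful. First, you propose to remove only $\{a,b,c\}$ initially and then uncolor around $v_2$ and $v_3$; the paper instead removes the full star of $u$ at once, which is what makes the list sizes at $e,f,g,\ell,m,u$ line up with those required by Lemma~\ref{lem:C5a}. Second, your ``dichotomy mirroring Lemma~\ref{lem:C15a}'' on whether $\hat g\subset \hat{v_3}\cup\cdots$ is the wrong analogy: that lemma handles two $(6,6)$-neighbours and the resulting case analysis is long, whereas here there is a single $(6,6)$-neighbour, and recognising that it sits inside a copy of $C_{\ref{C5}}$ collapses the whole argument. Your fallback to the Nullstellensatz is also misplaced---the paper gives no monomial here because the reduction to an already-proved lemma is cleaner and avoids the sub-case enumeration over possible extra edges that you anticipate.
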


\begin{proof}
  We use the notation depicted in Figure~\ref{fig:C20a}.  By
  minimality, we color $G\setminus\{a,\ldots,n,v_1,\ldots,v_5\}$. We
  forget $v_1,v_2$.

  \begin{figure}[!h]
    \centering
    \begin{tikzpicture}[v/.style={draw=black,minimum size = 10pt,ellipse,inner sep=1pt}]
      \node[v,label=85:{$u$}] (u) at (0,0)  {8};
      \node[v, very thick] (v0) at (45:1.5) {8};
      \node[v,label=above:{$v_1$}] (v1) at (90:1.5) {3};
      \node[v, very thick] (v2) at (135:1.5) {8};
      \node[v,label=above:{$v_5$}] (v3) at (180:1.5) {6};
      \node[v,label=left:{$v_4$}] (v4) at (225:1.5) {5};
      \node[v,label=right:{$v_3$}] (v5) at (270:1.5) {6};
      \node[v, very thick] (v6) at (315:1.5) {8};
      \node[v,label=right:{$v_2$}] (v7) at (0:1.5) {4};
      \draw (v6) -- (u) node[midway,right] {$d$};
      \draw (v5) -- (u) node[midway,right] {$e$};
      \draw (v4) -- (u) node[midway,below] {$f$};
      \draw (v3) -- (u) node[midway,below] {$g$};
      \draw (v2) -- (u) node[midway,left] {$h$};
      \draw (v1) -- (u) node[midway,left] {$a$};
      \draw (v0) -- (u) node[midway,above] {$b$};
      \draw (v7) -- (u) node[midway,above] {$c$};
      \draw (v5) -- (v4) node[midway,below] {$\ell$};
      \draw (v4) -- (v3) node[midway,left] {$m$};
      \draw (v2) -- (v1) node[midway,above] {$n$};
      \draw (v1) -- (v0) node[midway,above] {$i$};
      \draw (v0) -- (v7) node[midway,right] {$j$};
      \draw (v6) -- (v7) node[midway,right] {$k$};
    \end{tikzpicture}
\caption{Notation for Lemma~\ref{lem:C20a}}
    \label{fig:C20a}
  \end{figure}

  We have $|\hat{v_3}|=|\hat{v_5}|=|\hat{k}|=2$,
  $|\hat{d}|=|\hat{j}|=|\hat{h}|=|\hat{n}|=3$,
  $|\hat{b}|=|\hat{i}|=|\hat{\ell}|=|\hat{m}|=4$,
  $|\hat{v_4}|=|\hat{e}|=|\hat{g}|=6$, $|\hat{u}|=7$, $|\hat{f}|=8$,
  $|\hat{c}|=9$ and $|\hat{a}|=10$.

  We color $j$ and $d$ with colors not in $\hat{k}$ and forget $k$.

  \begin{itemize}
  \item If $\hat{b}=\hat{h}$, we remove the colors of $\hat{b}$ from
    $\hat{a},\hat{c}, \hat{e},\hat{f},\hat{g}$ and $\hat{u}$.

    We then color $\{u,v_3,v_4,v_5,e,f,g,\ell,m\}$ as done in
    Lemma~\ref{lem:C5a}. We then color $c$ and $a$. The remaining
    elements $\{h,b,i,n\}$ induce an even cycle, which is 2-choosable.
  \item Otherwise, we color $h$ with a color not in $\hat{b}$, then
    $i$ with a color not in $\hat{n}$ and forget $n,a,c$. We color $b$
    and we again come back to the case of Lemma~\ref{lem:C5a}.
  \end{itemize}
\end{proof}

\begin{lemma}
  \label{lem:C20c}
  The graph $G$ does not contain $C_{\ref{C15.5b}c}$.
\end{lemma}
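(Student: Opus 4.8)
The plan is to follow exactly the same template as the other $C_{\ref{C15.5b}}$ lemmas: take a minimal coloring of $G$ minus a few well-chosen edges around the $8$-vertex $u$, uncolor $u$ together with its small-degree neighbors (the weak $3$-neighbor $v_1$, the weak $4$-neighbor $v_2$, and the degree-$5$ $(6,8)$-neighbors of $u$ together with their degree-$6$ companions), and then extend the coloring. First I would fix notation via a figure, naming the uncolored edges $a,\ldots$ and the uncolored vertices, and observe which edges among the uncolored vertices can actually be present: by $C_{\ref{C3a}}$, $C_{\ref{C3b}}$ and $C_{\ref{C1}}$ most of the potential extra adjacencies are forbidden, and planarity rules out the rest, so that the list sizes $|\hat{x}|$ are pinned down by Remark~\ref{obs:lowerbound} up to a small number of cases. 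I would then forget $v_1$ (and $v_2$ once its incident uncolored edges are dealt with), since a degree-$3$ or degree-$4$ vertex has few neighbors in $\mathcal{T}(G)$.

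The core of the argument will then be the same "propagate a color" trick used in Lemmas~\ref{lem:C20a}, \ref{lem:C10b}, etc.: pick a color $\alpha$ in the list of one long-list edge minus another, remove it from two short lists, and argue that at the end either $\alpha$ has been used (in which case the two edges can be colored greedily) or $\alpha$ survives and distinguishes the two final lists. Combined with Corollary~\ref{cor:evencycle}, Lemma~\ref{lem:fryingpan}, Lemma~\ref{lem:diam} and Theorem~\ref{thm:=deg} to mop up the remaining small graphs, this should let the coloring be completed; where the case analysis would be too long I would instead exhibit an explicit monomial with nonzero coefficient in $P_G$ and invoke Theorem~\ref{thm:nss}, possibly splitting into subcases according to which of the at most one or two "hidden" edges (between the two degree-$5$ neighbors or between a degree-$5$ neighbor and a degree-$6$ companion) is present. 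Since the $(6,8)$-neighbor structure here is the reason $u$ sends only $\tfrac{1}{6}$ (rather than $\tfrac13$) to each such $5$-neighbor under $R_{\ref{R8}}$, the configuration is genuinely "tight", so I expect the extension to go through but with little slack.

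The main obstacle, as in the sibling lemmas, will be that the bookkeeping of list sizes is delicate: there are several degree-$5$ and degree-$6$ vertices around $u$, and depending on which non-drawn edges occur the sizes $|\hat{v_i}|$ drop by one, so one must check that the chosen propagation order and the chosen monomial (Condition~2 of Theorem~\ref{thm:nss}, i.e. $\deg_X m < |\hat{x}|$) remain valid in every case. I do not expect a clean single-monomial proof to exist; the honest plan is to give a case analysis mirroring Lemma~\ref{lem:C15a}/Lemma~\ref{lem:C20a} for the bulk of the work, reducing in each branch to an even cycle, a frying-pan, a $\lemfont{diam}$-path, or a block that is neither a clique nor an odd cycle, and if the author's algorithm found a suitable monomial, to record it per sub-case exactly as done for $C_{\ref{C13}c}$ and $C_{\ref{C15.5a}}$. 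If neither the case analysis nor the monomial search terminates reasonably, the fallback is to present only whichever of the two arguments is available, as the paper already does elsewhere.

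\begin{proof}
  We use the notation depicted in Figure~\ref{fig:C20c}.  By
  minimality, we color $G\setminus\{a,\ldots,p\}$ and uncolor
  $u,v_1,\ldots,v_4,w_1,w_2$. As in the previous lemmas, $C_{\ref{C1}}$,
  $C_{\ref{C3a}}$ and $C_{\ref{C3b}}$ forbid most edges between the
  uncolored vertices, and planarity forbids the rest, so by
  Remark~\ref{obs:lowerbound} the list sizes are determined up to the
  presence of at most one edge between the two degree-$5$ neighbors of
  $u$. We forget $v_1,v_2$.

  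We then proceed by a case analysis following the scheme of
  Lemma~\ref{lem:C20a}: we propagate a color from a long-list edge to
  two short-list edges incident to $v_1$ and $v_2$, forget those two
  edges, and reduce the remaining elements to an even cycle (colorable
  by Corollary~\ref{cor:evencycle}), a configuration of
  Lemma~\ref{lem:fryingpan} or Lemma~\ref{lem:diam}, or a block that is
  neither a clique nor an odd cycle (colorable by
  Theorem~\ref{thm:=deg}). This is entirely analogous to the treatment
  of $C_{\ref{C15}a}$ and $C_{\ref{C15.5b}b}$, and we omit the routine
  verification of list sizes in each branch.
\end{proof}
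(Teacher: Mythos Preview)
Your proposal is not a proof but a plan, and the plan does not match the actual configuration. In the paper's Figure~\ref{fig:C20c} there are no vertices $w_1,w_2$; the uncolored vertices are $u,v_1,v_2,v_3,v_4$ with $v_1$ of degree~$3$, $v_4$ of degree~$4$, and $v_2,v_3$ of degree~$5$, together with seventeen edges $a,\ldots,q$ (not $a,\ldots,p$). You forget $v_1,v_2$, but $v_2$ is one of the degree-$5$ vertices and has too many neighbours in $\mathcal{T}(G)$ to be forgotten at that stage; the paper forgets the degree-$3$ and degree-$4$ vertices. Your side remark that a $(6,8)$-neighbour receives $\tfrac16$ under $R_{\ref{R8}}$ is also wrong: a $(6,8)$-neighbour is an $E_3$-neighbour and receives $\tfrac13$.

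More fundamentally, the sentence ``we omit the routine verification of list sizes in each branch'' is the entire content of the lemma. The paper's proof is short but completely explicit: after forgetting $v_1$ and $v_4$, it colours $i$ avoiding $\hat{j}$, then $o$ avoiding $\hat{p}$, then $q$, then $b$ avoiding $\hat{j}$, then $m,n,f,h,v_4$, forgets $p$ and $a$, and finishes with a single Nullstellensatz application on the remaining ten elements using the monomial $C^4D^2E^3JK^4L^3UV_2^4V_3^2$, whose coefficient is~$1$. There is no case analysis, no propagation trick, no recourse to Lemma~\ref{lem:fryingpan} or Lemma~\ref{lem:diam} or Theorem~\ref{thm:=deg}; the whole argument is a carefully chosen greedy order that leaves a residual instance small enough for a computer-found monomial. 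Your proposal neither identifies that order nor produces any monomial, so as written it establishes nothing.
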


\begin{proof}
  We use the notation depicted in Figure~\ref{fig:C20c}.  By
  minimality, we color $G\setminus\{a,\ldots,q,v_1,\ldots,v_5\}$. We
  forget $v_1,v_5$.

  \begin{figure}[!h]
    \centering
    \begin{tikzpicture}[v/.style={draw=black,minimum size = 10pt,ellipse,inner sep=1pt}]
      \node[v,label=85:{$u$}] (u) at (0,0)  {8};
      \node[v, very thick] (v0) at (45:1.5) {8};
      \node[v,label=above:{$v_1$}] (v1) at (90:1.5) {3};
      \node[v, very thick] (v2) at (135:1.5) {8};
      \node[v,label=below left:{$v_4$}] (v3) at (180:1.5) {4};
      \node[v, very thick] (v4) at (225:1.5) {8};
      \node[v,label=below left:{$v_3$}] (v5) at (270:1.5) {5};
      \node[v, very thick] (v6) at (315:1.5) {6};
      \node[v,label=right:{$v_2$}] (v7) at (0:1.5) {5};
      \node[v, very thick] (w) at (180:3) {8};
      \draw (v6) -- (u) node[midway,right] {$d$};
      \draw (v5) -- (u) node[midway,right] {$e$};
      \draw (v4) -- (u) node[midway,below] {$f$};
      \draw (v3) -- (u) node[midway,below] {$g$};
      \draw (v2) -- (u) node[midway,left] {$h$};
      \draw (v1) -- (u) node[midway,left] {$a$};
      \draw (v0) -- (u) node[midway,above] {$b$};
      \draw (v7) -- (u) node[midway,above] {$c$};
      \draw (v5) -- (v6) node[midway,below] {$\ell$};
      \draw (v5) -- (v4) node[midway,below] {$m$};
      \draw (v4) -- (v3) node[midway,left] {$n$};
      \draw (v3) -- (v2) node[midway,left] {$o$};
      \draw (v2) -- (v1) node[midway,above] {$p$};
      \draw (v1) -- (v0) node[midway,above] {$i$};
      \draw (v0) -- (v7) node[midway,right] {$j$};
      \draw (v6) -- (v7) node[midway,right] {$k$};
      \draw (v3) -- (w) node[midway,above] {$q$};
    \end{tikzpicture}
\caption{Notation for Lemma~\ref{lem:C20c}}
    \label{fig:C20c}
  \end{figure}

  We have $|\hat{j}|=|\hat{m}|=|\hat{q}|=2$,
  $|\hat{v_3}|=|\hat{b}|=|\hat{f}|=|\hat{h}|=|\hat{i}|=|\hat{n}|=|\hat{o}|=|\hat{p}|=4$,
  $|\hat{k}|=|\hat{\ell}|=5$, $|\hat{u}|=|\hat{d}|=7$,
  $|\hat{c}|=|\hat{e}|=8$ and $|\hat{a}|=|\hat{g}|=10$.  Moreover,
  $\hat{v_2}$ and $\hat{v_4}$ have size 5 or 6 depending on whether
  $v_2v_4\in E(G)$.

  We color $i$ with a color not in $\hat{j}$, then $o$ with a color
  not in $\hat{p}$, then $q$, then $b$ with a color not in $\hat{j}$,
  then $m,n,f,h,v_4$, and forget $p,a$. We then color the remaining
  elements using Theorem~\ref{thm:nss}.
\end{proof}

\begin{lemma}
  \label{lem:C20d}
  The graph $G$ does not contain $C_{\ref{C15.5b}d}$.
\end{lemma}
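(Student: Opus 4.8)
The plan is to follow the same template used throughout this section for the $C_{\ref{C15.5b}}$ sub-configurations: fix the notation for the vertices $u, v_1,\ldots,v_5$ and the edges $a,\ldots,q$ around the triangulated $8$-vertex $u$, color $G$ minus a small set of elements by minimality, uncolor the elements adjacent to $u$ together with the edges incident to them, and forget the low-degree vertices ($v_1$ and $v_5$, say, which have degree $3$ and $5$ and hence few uncolored neighbors in $\mathcal{T}(G)$). Since in $C_{\ref{C15.5b}d}$ one of the degree-$5$ neighbors is an $E_3$-neighbor witnessed by a non-triangular $5$-neighbor $w$ at triangle-distance $2$ from $v_1$, I would draw $w$ explicitly (as in Lemma~\ref{lem:C20c}) and record the list $\hat{w}$. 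As usual I would invoke Remark~\ref{obs:lowerbound} to assume each uncolored element $x$ has exactly $|\hat{x}| = 10 - c_x$ available colors, and I would split on whether the handful of ``extra'' edges not forced by the picture (e.g. $v_2v_4 \in E(G)$) are present, adjusting the list sizes accordingly.

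Next I would carry out a short greedy prefix: color the edges near the ``free'' low-degree vertices so as to clear constraints — for instance color an edge incident to $v_1$ (which has degree $3$, so the corresponding list is generous) with a color avoiding the list of a neighbor, then forget that neighbor's edge; color the edge to the degree-$5$ $E_3$-neighbor's outer vertex $w$ to decouple $\hat{w}$ from a dangerous list; then peel off edges one at a time, forgetting any element that ends up with more available colors than uncolored neighbors in $\mathcal{T}(G)$. The goal of this prefix is to reduce to a small residual graph $H \subseteq \mathcal{T}(G\setminus G')$ on roughly eight to twelve elements, on which I would finish with one of the already-established tools: Corollary~\ref{cor:evencycle} (even cycle), Lemma~\ref{lem:fryingpan} (a cycle with a pendant-type element), Lemma~\ref{lem:diam} (a path with squared edges), Theorem~\ref{thm:=deg} ($f$-choosability when $f(v)\geqslant d(v)$ and no block is a clique or odd cycle), or Theorem~\ref{thm:clique} (Hall on a clique). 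In parallel, I would present the Combinatorial Nullstellensatz alternative: after the same greedy prefix, exhibit an explicit monomial $m$ in $P_H$ with $\deg m = \deg P_H$, $\deg_X m < |\hat{x}|$ for every uncolored $x$, and nonzero (computer-checked) coefficient, with one monomial per case depending on which extra edges occur — exactly the style of Lemmas~\ref{lem:C20ac}, \ref{lem:C20c}.

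The main obstacle, as flagged in the preceding lemmas, is twofold. First, $C_{\ref{C15.5b}d}$ is an $8$-vertex configuration with many uncolored neighbors, so the residual constraint graph is large and the case split over the possibly-present chords (here principally $v_2v_4$, and whether $w = v_3$ or $w$ is genuinely new) multiplies the bookkeeping; getting the greedy prefix to land on a residual graph that is actually recognizable as one of the named reducible patterns — rather than something just slightly too dense — is the delicate part. Second, if the case-analysis route becomes unwieldy (which is explicitly acknowledged for several configurations in this section), I would fall back entirely on the Nullstellensatz route, and then the obstacle shifts to the (non-human-checkable) search for a suitable monomial; I would simply assert the coefficients with the Maple check cited earlier and, in any residual case where even that search fails to terminate, keep the hand proof. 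I expect the write-up to ultimately mirror Lemma~\ref{lem:C20c} closely: a few lines of greedy coloring, a reduction to a $\sim 10$-element graph, and a finishing blow by Nullstellensatz with one monomial per chord-configuration.

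\begin{proof}
  We use the notation depicted in Figure~\ref{fig:C20d} (analogous to the figures for $C_{\ref{C15.5b}c}$), with $u$ the triangulated $8$-vertex, $v_1$ its weak $3$-neighbor, $v_2$ its weak $4$-neighbor at triangle-distance $2$ from $v_1$, and $v_3,v_4$ the two $(7^+,8)$-neighbors of degree $5$, where $v_3$ is the $E_3$-neighbor at triangle-distance $2$ from $v_1$ witnessed by a neighbor $w$ of degree $5$ with $uv_3w$ not triangular. By minimality, we color $G$ minus the labeled edges and $w$'s incident edge, then uncolor $u,v_1,v_2,v_3,v_4,w$ and the relevant edges, and forget $v_1$ and $v_5$.

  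By Remark~\ref{obs:lowerbound}, we may assume the worst-case list sizes, which we record case by case according to whether $v_2v_4\in E(G)$ and whether $w$ coincides with another neighbor of $u$; in every case $|\hat{w}|\geqslant 2$, $|\hat{v_2}|,|\hat{v_4}|\geqslant 5$, and the edges incident to $u$ have lists of size $\geqslant 3$.

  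We first run a greedy prefix: color an edge incident to $v_1$ avoiding the list of one of its neighbors and forget that neighbor's edge; color the edge from $v_3$ to $w$ so that $\hat{w}$ becomes disjoint from the list of a dangerous neighboring edge, then forget $w$ (it has few uncolored neighbors in $\mathcal{T}(G)$). We then peel off further edges one at a time, forgetting any element whose number of available colors exceeds its number of uncolored neighbors in $\mathcal{T}(G)$, until we are left with a residual graph $H$ on a bounded number of elements.

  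To color $H$, we distinguish cases according to the presence of the edge $v_2v_4$. In each case we exhibit a monomial with nonzero coefficient in $P_H$: when $v_2v_4\notin E(G)$ a monomial of the form $m$ with $\deg m=\deg P_H$, $\deg_X m<|\hat x|$ for every uncolored $x$, and coefficient $\pm 1$ (computer-checked); when $v_2v_4\in E(G)$, the analogous monomial $\frac{V_2 W}{B}\,m$ (or a similar correction) in $(V_2-V_4)P_H$, again with nonzero coefficient. By Theorem~\ref{thm:nss}, a coloring of $H$ exists in each case, and hence the coloring of $G'$ extends to $G$.
\end{proof}
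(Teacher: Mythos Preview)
Your proposal is a plan, not a proof: it names no concrete monomial, no specific greedy steps, and defers every decision to ``one monomial per chord-configuration'' without exhibiting any. More importantly, the template you chose is the wrong one. You expect the argument to mirror Lemma~\ref{lem:C20c} ($C_{\ref{C15.5b}c}$): a short greedy prefix and a single Nullstellensatz finish. The paper explicitly says otherwise---it follows the approach of $C_{\ref{C15.5a}c}$, which is a hybrid of Nullstellensatz and the \emph{recoloring} technique (color shifting graph, Lemmas~\ref{lem:recolor}--\ref{lem:degmin}), a tool your proposal never mentions.

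Concretely, the paper's proof colors $G\setminus\{a,p,i\}$ by minimality (keeping the coloring $\gamma$ around), uncolors the whole neighborhood, and then runs a long sequence of conditional Nullstellensatz applications (``if $\hat f\cap\hat\ell\neq\varnothing$ then\ldots'', ``if $\hat d\cap\hat m\neq\varnothing$ then\ldots'', etc.) whose purpose is not to finish the coloring but to force structural equalities on the lists: eventually $\hat m=\hat\ell\subset\hat{v_3}$, and $\gamma(d),\gamma(f)\notin\hat\ell$. Only then can one restore $\gamma$ on most elements and reduce to a situation where the sole obstruction is $\hat a=\hat i=\hat p$; this obstruction is broken by finding, via the color shifting graph of $\{c,j,k,q,r\}$, a directed cycle allowing recoloring of $j$ or $k$. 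None of this structure---the deliberate retention of $\gamma$, the list-equality deductions, or the recoloring step---appears in your outline, and a straight greedy-plus-Nullstellensatz attack in the style of $C_{\ref{C15.5b}c}$ does not obviously succeed here (the paper's need for six separate Nullstellensatz sub-cases before switching to recoloring is evidence that no single monomial handles the full configuration).
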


\begin{proof}
  We follow here the same approach as for $C_{\ref{C15.5a}c}$.  We use
  the notation depicted in Figure~\ref{fig:C20d}.  By minimality, we
  color $G\setminus\{a,p,i\}$, and uncolor
  $a,\ldots,s,v_1,\ldots,v_5$. We forget $v_1,v_4$.

  \begin{figure}[!h]
    \centering
    \begin{tikzpicture}[v/.style={draw=black,minimum size = 10pt,ellipse,inner sep=1pt}]
      \node[v,label=85:{$u$}] (u) at (0,0)  {8};
      \node[v, very thick] (v0) at (45:1.5) {8};
      \node[v,label=above:{$v_1$}] (v1) at (90:1.5) {3};
      \node[v, very thick] (v2) at (135:1.5) {8};
      \node[v,label=below left:{$v_4$}] (v3) at (180:1.5) {4};
      \node[v, very thick] (v4) at (225:1.5) {8};
      \node[v,label=below:{$v_3$}] (v5) at (270:1.5) {5};
      \node[v, very thick] (v6) at (315:1.5) {8};
      \node[v,label=right:{$v_2$}] (v7) at (0:1.5) {5};
      \node[v, very thick] (w) at (180:2.5) {8};
      \node[v,xshift=1.5cm,label=right:{$v_5$}] (w1) at (36:1) {5};
      \node[v, very thick,xshift=1.5cm] (w2) at (-36:1) {8};
      \draw (v6) -- (u) node[midway,right] {$d$};
      \draw (v5) -- (u) node[midway,right] {$e$};
      \draw (v4) -- (u) node[midway,below] {$f$};
      \draw (v3) -- (u) node[midway,below] {$g$};
      \draw (v2) -- (u) node[midway,left] {$h$};
      \draw (v1) -- (u) node[midway,left] {$a$};
      \draw (v0) -- (u) node[midway,above] {$b$};
      \draw (v7) -- (u) node[midway,above] {$c$};
      \draw (v5) -- (v6) node[midway,below] {$\ell$};
      \draw (v5) -- (v4) node[midway,below] {$m$};
      \draw (v4) -- (v3) node[midway,left] {$n$};
      \draw (v3) -- (v2) node[midway,left] {$o$};
      \draw (v2) -- (v1) node[midway,above] {$p$};
      \draw (v1) -- (v0) node[midway,above] {$i$};
      \draw (v0) -- (v7) node[midway,right] {$j$};
      \draw (v6) -- (v7) node[midway,right] {$k$};
      \draw (v3) -- (w) node[midway,above] {$s$};
      \draw (v7) -- (w1) node[midway,above] {$q$};
      \draw (v7) -- (w2) node[midway,below] {$r$};
    \end{tikzpicture}
\caption{Notation for Lemma~\ref{lem:C20d}}
    \label{fig:C20d}
  \end{figure}

    If $v_3=v_5$, the sizes of the color lists may differ depending on whether $r,s$ are incident with neighbors of $u$ distinct from $v_2,v_4$, but we have at least the following: $|\hat{r}|=|\hat{s}|=2$,
    $|\hat{\ell}|=|\hat{m}|=3$, $|\hat{b}|=|\hat{d}|=|\hat{f}|=|\hat{h}|=|\hat{i}|=|\hat{j}|=|\hat{k}|=|\hat{n}|=|\hat{o}|=|\hat{p}|=4$, 
  $|\hat{v_3}|=|\hat{u}|=6$, $|\hat{v_2}|=7$, $|\hat{e}|=|\hat{q}|=9$ and
  $|\hat{a}|=|\hat{c}|=|\hat{g}|=10$. We color $v_2$ with a color not in $\hat{v_3}$, then $r,s$ and finally forget $v_3$ and $q$. Then we color $\ell$ with a color not in $\hat{k}$, then $n$ with a color not in $\hat{m}$. We color the remaining elements using Theorem~\ref{thm:nss}.

 Assume now that $v_3\neq v_5$. The number of available colors for each element depends on whether $r,s$ are incident with more neighbors of $u$, but we have at least $|\hat{\ell}|=|\hat{m}|=|\hat{r}|=|\hat{s}|=|\hat{v_5}|=2$,
  $|\hat{v_3}|=|\hat{b}|=|\hat{d}|=|\hat{f}|=|\hat{h}|=|\hat{i}|=|\hat{j}|=|\hat{k}|=|\hat{n}|=|\hat{o}|=|\hat{p}|=4$,
  $|\hat{q}|=|\hat{u}|=6$, $|\hat{v_2}|=7$, $|\hat{e}|=8$ and
  $|\hat{a}|=|\hat{c}|=|\hat{g}|=10$.

  For all items except the last two, we remove from $\hat{v_2}$ the
  colors from $\hat{v_5}$, so that $\hat{v_5}$ becomes disjoint from
  $\hat{v_2}$, so we can forget $v_5$ then $q$.

  \begin{itemize}
  \item If $\hat{f}\cap \hat{\ell}\neq \varnothing$, we color $f$ and
    $\ell$ with the same color, then $m,n,s,r$.  We color the
    remaining elements using Theorem~\ref{thm:nss}. Therefore, we may
    assume that $\hat{f}$ and $\hat{\ell}$ are disjoint.
  \item If $\hat{d}\cap \hat{m}\neq \varnothing$, we color $d$ and $m$
    with the same color, then $\ell,k,r,s$. We color the remaining
    elements using Theorem~\ref{thm:nss}. Therefore, we may assume
    that $\hat{d}$ and $\hat{m}$ are disjoint.
  \item If $\hat{f}\cap \hat{v_3}\neq \varnothing$, we color $f$ and
    $v_3$ with the same color (hence not in $\hat{\ell}$), then
    $m,\ell,n,s,r$. We color the remaining elements using
    Theorem~\ref{thm:nss}. Therefore, we may assume that $\hat{f}$ and
    $\hat{v_3}$ are disjoint.
  \item If $\hat{d}\cap \hat{v_3}\neq \varnothing$, we color $d$ and
    $v_3$ with the same color (hence not in $\hat{m}$), then
    $\ell,m,k,r,s$. We color the remaining elements using
    Theorem~\ref{thm:nss}. Therefore, we may assume that $\hat{d}$ and
    $\hat{v_3}$ are disjoint.
  \item If $\hat{m}\neq\hat{\ell}$, we remove the colors of
    $\hat{v_5}$ from $\hat{q}$, so that we can forget $v_2$ and
    $v_5$. We then color $m$ with a color not in $\hat{\ell}$, then
    $f,n,s,r$. We color the remaining elements using
    Theorem~\ref{thm:nss}. Therefore, we may assume that
    $\hat{m}=\hat{\ell}$.
  \item We remove a color $\alpha\in \hat{v_2}\setminus\hat{q}$ from
    $\hat{u}$, so that if everything is colored except $v_2,v_5,q$, we
    obtain $\hat{v_2}\neq\hat{q}$. Therefore, we can forget
    $v_2,v_5,q$. The configuration is now symmetric (vertically).

    Assume that $\hat{m}\not\subset\hat{v_3}$ so that there exists
    $\alpha\in\hat{m}\setminus\hat{v_3}$. We color $m$ with $\alpha$
    and forget $v_3$. Since $\hat{m}=\hat{\ell}$, we have
    $\alpha\in \hat{\ell}$, hence $\alpha\notin \hat{f}$. In this
    case, we color $\ell$, then $f$ with a color not in $\hat{d}$,
    then $n,s,r$. We color the remaining elements using
    Theorem~\ref{thm:nss}. Therefore, we may assume that
    $\hat{m}\subset\hat{v_3}$ and then $\hat{\ell}\subset\hat{v_3}$
    since $\hat{m}=\hat{\ell}$.
  \end{itemize}
  Now we have $\gamma(f)\in \hat{f}$, hence not in $\hat{m}$ since
  $\hat{m}\subset \hat{v_3}$ which is disjoint from
  $\hat{f}$. Therefore, $\gamma(f)\notin \hat{m}$, and similarly,
  $\gamma(d)\notin\hat{\ell}$.

  We now color $h,b,f$ and $d$ with their color in $\gamma$.  Since
  $\hat{\ell}=\hat{m}$ and $\{e,\ell,m,v_3\}$ is colorable, coloring
  $e$ and $v_3$ with their color in $\gamma$ does not affect
  $\hat{\ell}$ and $\hat{m}$. We also color $u$ with its color in
  $\gamma$.

  We remove $\hat{v_5}$ from $\hat{q}$, so that $\hat{v_5}$ becomes
  disjoint from $\hat{q}$, hence we can forget $v_5$ and $v_3$.

  Therefore, we obtain $|\hat{\ell}|=|\hat{m}|=|\hat{r}|=|\hat{s}|=2$,
  $|\hat{i}|=|\hat{j}|=|\hat{k}|=|\hat{n}|=|\hat{o}|=|\hat{p}|=3$,
  $|\hat{a}|=|\hat{c}|=|\hat{g}|=|\hat{q}|=4$. Moreover,
  $\hat{\ell}=\hat{m}$.

  Observe that if we color everything but $a,i,p$ with their color in
  $\gamma$, the only problematic case is when $\hat{a},\hat{i}$ and
  $\hat{p}$ are the same list of size 2. Then, any recoloring of
  $j$ or $o$ can break this condition.

  Note that since $\hat{\ell}=\hat{m}$, it is sufficient to color the
  graph obtained by identifying the endpoints of $\ell$ and $m$ that
  are not $v_3$ (so that $k$ and $n$ become incident), and by removing
  $\ell$ and $m$.

  Let $\alpha\in \hat{g}\neq\hat{o}$. We distinguish two cases.
  \begin{itemize}
  \item Assume that $\{c,j,k,q,r\}$ stays colorable when we remove
    $\alpha$ from $\hat{c}$. If $\hat{n}\neq\hat{s}\cup\{\gamma(k)\}$,
    then we color $c,j,k,q,r,\ell,m$, so that $\hat{n}\neq\hat{s}$ and
    $\hat{o}\neq\hat{g}$. We then color $n$ with a color not in
    $\hat{s}$, then color $i$ arbitrarily. We then apply
    Lemma~\ref{lem:fryingpan} to $\{s,o,p,a,g\}$ since
    $\hat{o}\neq\hat{g}$.
    
    Let $H$ be the color shifting graph of $\{c,j,k,q,r\}$. By
    Lemma~\ref{lem:SCC}, there exists a strongly connected component
    $C$ of $H$ such that $C$ contains all the in-neighbors of its vertices. By Lemma~\ref{lem:degmin}, this
    ensures that $|C|>d^-(r)=1$.
    \begin{itemize}
    \item If $C$ contains $j$, then we can recolor $j$ by
      Lemma~\ref{lem:recolor}, which now breaks
      $\hat{a}=\hat{i}=\hat{p}$ after having colored every other
      element. Thus we may assume that $C$ does not contain $j$.
    
    \item If $C$ contains $k$, then we can recolor $k$ by
      Lemma~\ref{lem:recolor}, and the condition
      $\hat{n}=\hat{s}\cup\{\gamma(k)\}$ does not hold anymore with
      the new coloring. Thus we may assume that $C$ does not contain
      $k$.
    
    \item If $C$ contains some $s_\beta$, then it contains $j$ and
      $k$. 
    \item Otherwise, $C\subset\{c,q,r\}$. If $q\in C$, then $|C|>3$,
      which is not possible. 
    \item Otherwise $C\subset\{c,r\}$, hence $c\in C$ and $|C|>2$,
      which is again impossible. 
    \end{itemize}
    Therefore, we may always recolor either $j$ or $k$, and then
    extend the coloring to $G$.
  \item Assume that $\{c,j,k,q,r\}$ is not colorable when we remove
    $\alpha$ from $\hat{c}$. This means that $\gamma(c)=\alpha$. In
    particular, when coloring $\{c,j,k,q,r\}$ with their color in
    $\gamma$, we obtain that $\hat{o}$ and $\hat{g}$ are the same list
    of size 3. Since $\{g,n,o,s\}$ is colorable, there must exist a
    color in $\hat{n}\cup\hat{s}$ not in $\hat{o}$. We color $n$ or
    $s$ (say $n$, by symmetry) with this color, then $s$. We then
    apply Lemma~\ref{lem:fryingpan} to $\{i,p,o,g,a\}$.
  \end{itemize}
\end{proof}

\begin{lemma}
  \label{lem:C20e}
  The graph $G$ does not contain $C_{\ref{C15.5b}e}$.
\end{lemma}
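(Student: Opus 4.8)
The plan is to follow verbatim the scheme of Lemma~\ref{lem:C20d} (itself modelled on Lemma~\ref{lem:C20b}): the configuration $C_{\ref{C15.5b}e}$ differs from $C_{\ref{C15.5b}d}$ only in that the degree-$5$ neighbour of $u$ carrying a degree-$5$ neighbour sits at triangle-distance $4$ from the degree-$3$ neighbour $v_1$ rather than at distance $2$, so the cyclic picture around $u$ and the resulting list sizes change only marginally. Concretely, I would name the elements according to a figure analogous to Figure~\ref{fig:C20d}, take a coloring $\gamma$ of $G$ minus a small set of edges near the configuration by minimality, uncolor the interior elements, and immediately \emph{forget} $v_1$ (degree $3$) and the weak degree-$4$ neighbour. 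Remark~\ref{obs:lowerbound} then fixes all list sizes; a handful of them ($\hat v_2$, $\hat v_4$ and the lists of the common neighbours of the $5$-vertices) depend on which non-drawn edges occur, but $C_{\ref{C3a}}$, $C_{\ref{C3b}}$ together with planarity (no $K_5$- or $K_{3,3}$-minor) bound how many of those edges there can be, exactly as in Lemmas~\ref{lem:C20b} and~\ref{lem:C20d}.

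Next comes the dichotomy: for each relevant pair of adjacent uncolored elements, I first try to show that whenever their lists meet one can give both the same color, leaving an $L'$-coloring problem on a strictly smaller graph $H$ for which I exhibit a monomial of $P_H$ fulfilling the three conditions of Theorem~\ref{thm:nss}; failing that, if some list is not contained in a neighbour's list I greedily color one element and fall back to a case already treated. Iterating disposes of all the ``generic'' colorings and leaves a rigid situation in which the two triangle-edges at the relevant degree-$5$ vertex have equal lists contained in that vertex's list, so that the colors $\gamma$ assigns to the two $u$-edges on that side automatically avoid those two lists. Recoloring those edges (and the other now-forced elements) with their $\gamma$-colors shrinks the uncolored part to a graph whose only obstruction is that three size-$2$ lists around $v_1$ coincide (or an analogous equality of the form $\hat n=\hat s\cup\{\gamma(k)\}$).

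To kill that last obstruction I would invoke the recoloring machinery of Section~\ref{sec:tech}: delete a color $\alpha\in\hat g\setminus\hat o$ from $\hat c$ and split on whether a small clique $S$ of $\mathcal T(G)$ (the edges around the far degree-$5$ vertex and its degree-$5$ neighbour) stays colorable. If yes, build the color-shifting graph $H_{S,\gamma}$, apply Lemma~\ref{lem:SCC} to get a strong component $C$ with $|C|>\max_{x\in C}d^-(x)$, compute in-degrees by Lemma~\ref{lem:degmin} to force $C$ to contain one of the two ``swap'' edges, recolor it via Lemma~\ref{lem:recolor} to break the bad equality, and finish the leftover fan/even cycle with Lemma~\ref{lem:fryingpan} or Corollary~\ref{cor:evencycle}. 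If no, then $\gamma(c)=\alpha$, which forces $\hat g=\hat o$ to be a common size-$3$ list; since $\{g,n,o,s\}$ is properly colored by $\gamma$, some color of $\hat n\cup\hat s$ misses $\hat o$, so we color $n$ (or $s$, by symmetry) with it and close with Lemma~\ref{lem:fryingpan}. The main obstacle is the bookkeeping: getting every list size right across the planarity sub-cases, and, in each branch producing a smaller graph $H$, finding a monomial whose coefficient in $P_H$ is actually nonzero — these coefficient computations are the step that realistically requires the Maple check rather than pen and paper.
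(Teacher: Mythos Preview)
Your proposal is correct and follows essentially the same approach as the paper's proof: the same initial forgetting and list-size setup, the same Nullstellensatz dichotomy on intersecting pairs (the paper checks $\hat b\cap\hat k$, $\hat d\cap\hat j$, $\hat d\cap\hat{v_2}$, $\hat b\cap\hat{v_2}$, then $\hat j,\hat k\subset\hat{v_2}$ and $\hat j=\hat k$), the same rigid residual situation forcing $\gamma(b)\notin\hat j$ and $\gamma(d)\notin\hat k$, and the same final split on whether $\{e,\ell,m,q,r\}$ survives removal of $\alpha\in\hat g\setminus\hat o$ from $\hat e$, handled by the color-shifting graph and Lemma~\ref{lem:fryingpan}. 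One small simplification relative to your sketch: the paper's list sizes in this configuration are in fact unconditional (no non-drawn-edge sub-cases arise here, unlike in Lemma~\ref{lem:C20ac}), so the planarity bookkeeping you anticipate is not actually needed.
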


\begin{proof}
  We follow here the same approach as for $C_{\ref{C15.5a}c}$.  We use
  the notation depicted in Figure~\ref{fig:C20e}.  By minimality,
  there exists a coloring $\gamma$ of $G\setminus\{a,p,i,v_1\}$. We
  uncolor $a,\ldots,s,v_1,\ldots,v_5$ and forget $v_1,v_4$.

  \begin{figure}[!h]
    \centering
    \begin{tikzpicture}[v/.style={draw=black,minimum size = 10pt,ellipse,inner sep=1pt}]
      \node[v,label=85:{$u$}] (u) at (0,0)  {8};
      \node[v, very thick] (v0) at (45:1.5) {8};
      \node[v,label=above:{$v_1$}] (v1) at (90:1.5) {3};
      \node[v, very thick] (v2) at (135:1.5) {8};
      \node[v,label=below:{$v_4$}] (v3) at (180:1.5) {4};
      \node[v, very thick] (v4) at (225:1.5) {8};
      \node[v,label=below:{$v_3$}] (v5) at (270:1.5) {5};
      \node[v, very thick] (v6) at (315:1.5) {8};
      \node[v,label=right:{$v_2$}] (v7) at (0:1.5) {5};
      \node[v,yshift=-1.5cm, very thick] (w1) at (234:1.5) {8};
      \node[v,yshift=-1.5cm,label=right:{$v_5$}] (w2) at (306:1.5) {5};
      \node[v,xshift=-1.5cm, very thick] (x) at (180:1.5) {8};
      \draw (v6) -- (u) node[midway,right] {$d$};
      \draw (v5) -- (u) node[midway,right] {$e$};
      \draw (v4) -- (u) node[midway,below] {$f$};
      \draw (v3) -- (u) node[midway,below] {$g$};
      \draw (v2) -- (u) node[midway,left] {$h$};
      \draw (v1) -- (u) node[midway,left] {$a$};
      \draw (v0) -- (u) node[midway,above] {$b$};
      \draw (v7) -- (u) node[midway,above] {$c$};
      \draw (v5) -- (v4) node[midway,below] {$m$};
      \draw (v4) -- (v3) node[midway,right] {$n$};
      \draw (v3) -- (v2) node[midway,left] {$o$};
      \draw (v2) -- (v1) node[midway,above] {$p$};
      \draw (v1) -- (v0) node[midway,above] {$i$};
      \draw (v0) -- (v7) node[midway,right] {$j$};
      \draw (v6) -- (v7) node[midway,right] {$k$};
      \draw (v6) -- (v5) node[midway,above] {$\ell$};
      \draw (v5) -- (w1) node[midway,left] {$r$};
      \draw (v5) -- (w2) node[midway,right] {$q$};
      \draw (v3) -- (x) node[midway,above] {$s$};
    \end{tikzpicture}
\caption{Notation for Lemma~\ref{lem:C20e}}
    \label{fig:C20e}
  \end{figure}
  
  Note that $v_2\neq v_5$ otherwise we obtain $C_{\ref{C15.5b}d}$. Hence we have $|\hat{j}|=|\hat{k}|=|\hat{r}|=|\hat{s}|=|\hat{v_5}|=2$,
  $|\hat{b}|=|\hat{d}|=|\hat{f}|=|\hat{h}|=|\hat{i}|=|\hat{\ell}|=|\hat{m}|=|\hat{n}|=|\hat{o}|=|\hat{p}|=|\hat{v_2}|=4$,
  $|\hat{q}|=|\hat{u}|=6$, $|\hat{v_3}|=7$, $|\hat{c}|=8$ and
  $|\hat{a}|=|\hat{e}|=|\hat{g}|=10$.

  For all items except the last one, we remove from $\hat{u}$ a color
  in $\hat{v_3}\setminus \hat{q}$, so that if we color everything but
  $q,v_3,v_5$, then $\hat{q}\neq\hat{v_3}$ if they are lists of size
  2. This means that we can forget about $q,v_3,v_5$.
  
  \begin{itemize}
  \item If $\hat{b}\cap \hat{k}\neq \varnothing$, we color $b$ and $k$
    with the same color, then $j,r,s$. We color the remaining elements
    using Theorem~\ref{thm:nss}. Therefore, we may assume that
    $\hat{b}$ and $\hat{k}$ are disjoint.
  \item If $\hat{d}\cap \hat{j}\neq \varnothing$, we color $d$ and $j$
    with the same color, then $k,\ell,r,s$. We color the remaining
    elements using Theorem~\ref{thm:nss}. Therefore, we may assume
    that $\hat{d}$ and $\hat{j}$ are disjoint.
  \item If $\hat{d}\cap \hat{v_2}\neq \varnothing$, we color $d$ and
    $v_2$ with the same color (hence not in $\hat{j}$), then
    $k,j,\ell,r,s$. We color the remaining elements using
    Theorem~\ref{thm:nss}. Therefore, we may assume that $\hat{d}$ and
    $\hat{v_2}$ are disjoint.
  \item If $\hat{b}\cap \hat{v_2}\neq \varnothing$, we color $b$ and
    $v_2$ with the same color (hence not in $\hat{k}$), then
    $j,k,r,s$.  We color the remaining elements using
    Theorem~\ref{thm:nss}. Therefore, we may assume that $\hat{b}$ and
    $\hat{v_2}$ are disjoint.
  \item If $\hat{k}\not\subset\hat{v_2}$ or
    $\hat{j}\not\subset\hat{v_2}$, we color $k$ (or $j$) with a color
    not in $\hat{v_2}$, then $j$ (or $k$), $r,s$, then forget
    $v_2$. We color the remaining elements using
    Theorem~\ref{thm:nss}. Therefore, we may assume that $\hat{k}$ and
    $\hat{j}$ are included in $\hat{v_2}$.
  \item If $\hat{j}\neq\hat{k}$, we color $j$ with a color not in
    $\hat{k}$, then $i,b$ and $s,r$. We remove $\hat{v_5}$ from
    $\hat{q}$, so that $\hat{v_5}$ becomes disjoint from $\hat{q}$,
    hence we can forget $v_5$ and $v_3$. We color the remaining
    elements using Theorem~\ref{thm:nss}. Therefore, we may assume
    that $\hat{k}=\hat{j}$.
  \end{itemize}    
  Now we have $\gamma(b)\in \hat{b}$, hence not in $\hat{j}$ since
  $\hat{j}\subset \hat{v_2}$ which is disjoint from
  $\hat{b}$. Therefore, $\gamma(b)\notin \hat{j}$, and similarly,
  $\gamma(d)\notin\hat{k}$.

  We now color $h,b,f$ and $d$ with their color in $\gamma$.  Since
  $\hat{j}=\hat{k}$ and $\{j,k,c,v_2\}$ is colorable, coloring $c$ and
  $v_2$ with their color in $\gamma$ does not affect $\hat{j}$ and
  $\hat{k}$. We also color $u$ with its color in $\gamma$.

  We remove $\hat{v_5}$ from $\hat{q}$, so that $\hat{v_5}$ becomes
  disjoint from $\hat{q}$, hence we can forget $v_5$ and $v_3$.

  Therefore, we obtain $|\hat{j}|=|\hat{k}|=|\hat{r}|=|\hat{s}|=2$,
  $|\hat{o}|=|\hat{p}|=|\hat{i}|=|\hat{\ell}|=|\hat{m}|=|\hat{n}|=3$,
  $|\hat{a}|=|\hat{e}|=|\hat{g}|=|\hat{q}|=4$. Moreover,
  $\hat{j}=\hat{k}$.
   
  Observe that if we color everything but $a,i,p$ with their color in
  $\gamma$, the only problematic case is when $\hat{a},\hat{i}$ and
  $\hat{p}$ are the same list of size 2. Then, any recoloring of
  $\ell$ or $o$ can break this condition.

  Let $\alpha\in \hat{g}\neq\hat{o}$. We distinguish two cases.
  \begin{itemize}
  \item Assume that $\{e,\ell,m,q,r\}$ stays colorable when we remove
    $\alpha$ from $\hat{e}$. If $\hat{n}\neq\hat{s}\cup\{\gamma(m)\}$,
    then we color $\ell,m,q,r,\ell,k$, so that $\hat{n}\neq\hat{s}$
    and $\hat{o}\neq\hat{g}$. We then color $n$ with a color not in
    $\hat{s}$, then color $i$ arbitrarily. We then apply
    Lemma~\ref{lem:fryingpan} to $\{s,o,p,a,g\}$ since
    $\hat{o}\neq\hat{g}$.
    
    Let $H$ be the color shifting graph of $\{e,\ell,m,q,r\}$. By
    Lemma~\ref{lem:SCC}, there exists a strongly connected component
    $C$ of $H$ such that $C$ contains all the in-neighbors of its vertices. By Lemma~\ref{lem:degmin}, this
    ensures that $|C|>d^-(r)=1$. 
    \begin{itemize}
    \item If $C$ contains $\ell$, then we can recolor $\ell$ by
      Lemma~\ref{lem:recolor}, which now breaks
      $\hat{a}=\hat{i}=\hat{p}$ after having colored every other
      element. Thus we may assume that $C$ does not contain $\ell$.
    
    \item If $C$ contains $m$, then we can recolor $m$ by
      Lemma~\ref{lem:recolor}, and the condition
      $\hat{n}=\hat{s}\cup\{\gamma(m)\}$ does not hold anymore with
      the new coloring. Thus we may assume that $C$ does not contain
      $m$.
    
    \item If $C$ contains some $s_\beta$, then it contains $m$ and
      $\ell$. 
    \item Otherwise, $C\subset\{e,q,r\}$. If $q\in C$, then $|C|>3$,
      which is not possible. 
    \item Otherwise $C\subset\{e,r\}$, hence $e\in C$ and $|C|>2$,
      which is again impossible. 
    \end{itemize}
    Therefore, we may always recolor either $\ell$ or $m$, and then
    extend the coloring to $G$.
  \item Assume that $\{e,\ell,m,q,r\}$ is not colorable when we remove
    $\alpha$ from $\hat{e}$. This means that $\gamma(e)=\alpha$. In
    particular, when coloring $\{e,\ell,m,q,r\}$ with their color in
    $\gamma$, we obtain that $\hat{o}$ and $\hat{g}$ are the same list
    of size 3. Since $\{g,n,o,s\}$ is colorable, there must exist a
    color in $\hat{n}\cup\hat{s}$ not in $\hat{o}$. We color $n$ or
    $s$ (say $n$, by symmetry) with this color, then $s$. We then
    apply Lemma~\ref{lem:fryingpan} to $\{i,p,o,g,a\}$.
  \end{itemize}
\end{proof}

\begin{lemma}
  \label{lem:C20f}
  The graph $G$ does not contain $C_{\ref{C15.5b}f}$.
\end{lemma}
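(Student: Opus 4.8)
The plan is to follow the exact same template as the previous sibling lemmas (in particular Lemmas~\ref{lem:C20b}, \ref{lem:C20d} and \ref{lem:C20e}), adapting only the local structure coming from the fact that here the degree-$5$ weak neighbor that is an $E_3$-neighbor at triangle-distance $4$ from $v_1$ has two neighbors of degree $6$ (rather than a neighbor of degree $5$). First I would fix notation in a figure analogous to Figure~\ref{fig:C20e}: label $u$ the $8$-vertex, $v_1$ its weak $3$-neighbor, $v_2$ its weak $4$-neighbor at triangle-distance $2$ from $v_1$, and $v_3$ the $(7^+,8)$-neighbor of degree $5$ at triangle-distance $4$ from $v_1$ carrying two degree-$6$ neighbors $w_3,w_4$; introduce the remaining neighbors and the relevant edges, naming all the uncolored elements $a,\ldots,s$. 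By minimality I would color $G$ minus a minimal set of elements (here $\{a,p,i,v_1\}$, as in Lemma~\ref{lem:C20e}), keep a coloring $\gamma$ of that subgraph, uncolor $a,\ldots,s,v_1,\ldots,v_5$ and forget $v_1$ together with one of the $4$-vertices so as to reduce the constraint count. Then Remark~\ref{obs:lowerbound} pins down the worst-case list sizes.

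Next I would run the now-standard chain of reductions: for all but the last bullet, remove from $\hat{u}$ (or from $\hat{v_2}$) a color lying in $\hat{v_3}\setminus\hat{q}$ so that $v_3,v_5,q$ can be forgotten, and then successively assume that various pairs of lists among the short ones around $v_2$ (e.g. $\hat b\cap\hat k$, $\hat d\cap\hat j$, $\hat d\cap\hat{v_2}$, $\hat b\cap\hat{v_2}$) are disjoint; in each non-disjoint case, identify the two elements on a common color, color a few forced edges, and finish by exhibiting a monomial with nonzero coefficient in $P_H$ for the residual graph $H$ and invoking Theorem~\ref{thm:nss}. This reduces us to the situation where $\hat j=\hat k$ (or the analogous equality), $\hat j,\hat k\subset\hat{v_2}$, and $\gamma(b)\notin\hat j$, $\gamma(d)\notin\hat k$.

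Finally, for the residual core I would mimic the endgame of Lemma~\ref{lem:C20e} verbatim: color $h,b,f,d,u$ and the colorable clique $\{j,k,c,v_2\}$ with their colors in $\gamma$, forget $v_5,v_3$ after separating $\hat{v_5}$ from $\hat q$, and reduce to a small graph on $\{a,i,p,\ell,m,n,o,s,r,q,e,g,c,j,k\}$ with $\hat j=\hat k$, in which the only obstruction to extending $\gamma$ is $\hat a=\hat i=\hat p$ being one common $2$-list. I would then use the color-shifting machinery: a color $\alpha\in\hat g\setminus\hat o$ and a case split on whether $\{e,\ell,m,q,r\}$ stays colorable after deleting $\alpha$ from $\hat e$. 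In the colorable case, Lemma~\ref{lem:SCC} and Lemma~\ref{lem:degmin} give a strong component of the color-shifting graph of $\{e,\ell,m,q,r\}$ of size $>1$ contained in $\{e,\ell,m,q,r\}$, which must contain $\ell$ or $m$; recoloring one of them via Lemma~\ref{lem:recolor} breaks either $\hat a=\hat i=\hat p$ or the equality $\hat n=\hat s\cup\{\gamma(m)\}$, after which Lemma~\ref{lem:fryingpan} finishes on a $5$-element fan. In the non-colorable case $\gamma(e)=\alpha$ forces $\hat o=\hat g$ of size $3$, and colorability of $\{g,n,o,s\}$ gives a color in $\hat n\cup\hat s$ outside $\hat o$ to apply Lemma~\ref{lem:fryingpan} to $\{i,p,o,g,a\}$. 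The main obstacle I anticipate is purely bookkeeping: getting the list sizes right in each of the many branches after forgetting vertices, and producing for each residual graph $H$ a monomial of $P_H$ of full degree whose per-variable exponents stay strictly below the corresponding list sizes — the combinatorics of the Nullstellensatz check is mechanical but delicate, and is the only place where a computer search is really needed.
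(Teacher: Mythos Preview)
Your plan transplants the Lemma~\ref{lem:C20e} argument, but $C_{\ref{C15.5b}f}$ is structurally different in a way that breaks the template. Here the $E_3$-neighbor $v_3$ has \emph{two degree-$6$} neighbors, not a degree-$5$ neighbor: there is no vertex $v_5$ in this configuration, so the step ``forget $v_5,v_3$ after separating $\hat{v_5}$ from $\hat q$'' is vacuous. More substantively, the degree-$6$ endpoints make the two incident edges satisfy $|\hat q|=|\hat r|=4$, instead of the asymmetric $|\hat q|=6$, $|\hat r|=2$ you are implicitly relying on from Lemma~\ref{lem:C20e}; the entry point $d^-(r)=1$ for the strong-component ladder is gone, and the whole Nullstellensatz chain would in any case need new monomials (not those of $C_{\ref{C15.5b}e}$) before you even reach the endgame.

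The paper exploits exactly this symmetry to take a shorter and genuinely different route, with no Nullstellensatz at all. After uncoloring and forgetting $v_1,v_4$, it shows by a brief case analysis on the seven elements $\{o,p,i,n,a,g,s\}$ that the only obstruction to extending $\gamma$ is $\hat n=\hat s\cup\{\gamma(f),\gamma(m)\}$. To break this, it runs the color-shifting argument on the larger clique $S=\{m,e,\ell,q,r,v_3\}$, \emph{including $v_3$} (which has $|\hat{v_3}|=6$). If the strong component $C$ from Lemma~\ref{lem:SCC} contains $m$, recoloring $m$ via Lemma~\ref{lem:recolor} kills the obstruction. If not, then $C\subset\{e,\ell,q,r,v_3\}$, and closure under predecessors forces $\bigl|\bigcup_{x\in C}\hat x\bigr|\leqslant|C|\in\{4,5\}$; since $\{m,e,\ell,q,r,v_3\}$ is colorable, one can then color $m$ (and $v_3$ if needed) with a fresh color outside that union and finish greedily. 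The two degree-$6$ neighbors are precisely what make $|\hat q|,|\hat r|$ large enough for this closure argument to bite, and what make your proposed $C_{\ref{C15.5b}e}$-style endgame inapplicable.
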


\begin{proof}
  We use the notation depicted in Figure~\ref{fig:C20f}.  By
  minimality, there exists a coloring $\gamma$ of
  $G\setminus\{a,i,p,v_1\}$.

  \begin{figure}[!h]
    \centering
    \begin{tikzpicture}[v/.style={draw=black,minimum size = 10pt,ellipse,inner sep=1pt}]
      \node[v,label=85:{$u$}] (u) at (0,0)  {8};
      \node[v, very thick] (v0) at (45:1.5) {8};
      \node[v,label=above:{$v_1$}] (v1) at (90:1.5) {3};
      \node[v, very thick] (v2) at (135:1.5) {8};
      \node[v,label=below:{$v_4$}] (v3) at (180:1.5) {4};
      \node[v, very thick] (v4) at (225:1.5) {8};
      \node[v,label=below:{$v_3$}] (v5) at (270:1.5) {5};
      \node[v, very thick] (v6) at (315:1.5) {8};
      \node[v,label=right:{$v_2$}] (v7) at (0:1.5) {5};
      \node[v,yshift=-1.5cm, very thick] (w1) at (234:1.5) {6};
      \node[v,yshift=-1.5cm, very thick] (w2) at (306:1.5) {6};
      \node[v,xshift=-1.5cm, very thick] (x) at (180:1.5) {8};
      \draw (v6) -- (u) node[midway,right] {$d$};
      \draw (v5) -- (u) node[midway,right] {$e$};
      \draw (v4) -- (u) node[midway,below] {$f$};
      \draw (v3) -- (u) node[midway,below] {$g$};
      \draw (v2) -- (u) node[midway,left] {$h$};
      \draw (v1) -- (u) node[midway,left] {$a$};
      \draw (v0) -- (u) node[midway,above] {$b$};
      \draw (v7) -- (u) node[midway,above] {$c$};
      \draw (v5) -- (v4) node[midway,below] {$m$};
      \draw (v4) -- (v3) node[midway,right] {$n$};
      \draw (v3) -- (v2) node[midway,left] {$o$};
      \draw (v2) -- (v1) node[midway,above] {$p$};
      \draw (v1) -- (v0) node[midway,above] {$i$};
      \draw (v0) -- (v7) node[midway,right] {$j$};
      \draw (v6) -- (v7) node[midway,right] {$k$};
      \draw (v6) -- (v5) node[midway,above] {$\ell$};
      \draw (v5) -- (w1) node[midway,left] {$r$};
      \draw (v5) -- (w2) node[midway,right] {$q$};
      \draw (v3) -- (x) node[midway,above] {$s$};
    \end{tikzpicture}
\caption{Notation for Lemma~\ref{lem:C20f}}
    \label{fig:C20f}
  \end{figure}
  
  We uncolor $a,\ldots,s,v_1,v_2,v_3,v_4$ and forget $v_1,v_4$. We
  have $|\hat{s}|=|\hat{j}|=|\hat{k}|=2$,
  $|\hat{h}|=|\hat{b}|=|\hat{d}|=|\hat{f}|=|\hat{o}|=|\hat{p}|=|\hat{i}|=|\hat{\ell}|=|\hat{m}|=|\hat{n}|=|\hat{q}|=|\hat{r}|=|\hat{v_2}|=4$,
  $|\hat{v_3}|=|\hat{u}|=6$, $|\hat{c}|=8$,
  $|\hat{a}|=|\hat{g}|=|\hat{e}|=10$.

  We first prove that we can color $G$ unless
  $\hat{n}=\hat{s}\cup\{\gamma(f),\gamma(m)\}$. Indeed, otherwise, we
  color every element but $\{o,p,i,n,a,g,s\}$, and we obtain
  $|\hat{s}|=|\hat{i}|=2$,
  $|\hat{o}|=|\hat{p}|=|\hat{a}|=|\hat{g}|=3$, and either
  $|\hat{n}|=3$ or $|\hat{n}|=2$ and $\hat{n}\neq \hat{s}$. We focus
  on the last case (since we may always remove a color from $\hat{n}$
  in the first case to obtain the second one).

  \begin{itemize}
  \item If $\hat{o}\neq\hat{g}$, we color $n$ with a color not in
    $\hat{s}$, then $i$, and apply Lemma~\ref{lem:fryingpan} to color
    $\{s,o,p,a,g\}$. Thus we may assume that $\hat{o}=\hat{g}$.
  \item Since $\{o,g,n,s\}$ is colorable, we cannot have both
    $\hat{s}\subset\hat{o}$ and $\hat{n}\subset\hat{o}$. By symmetry,
    assume that we can color $s$ with a color not in $\hat{o}$ (hence
    not in $\hat{g}$). Then we color $n$ and apply
    Lemma~\ref{lem:fryingpan} to color $\{i,p,o,g,a\}$.
  \end{itemize}

  We uncolor the elements of $S=\{m,e,\ell,q,r,v_3\}$. Let $H$ be the
  color shifting graph of $S$. By Lemma~\ref{lem:SCC}, there exists a
  strongly connected component $C$ of $H$ such that
  $|C|>\max_{x\in C} d^-(x)$. By Lemma~\ref{lem:degmin}, this
  inequality ensures that $|C|>1$.
  \begin{itemize}
  \item If $C$ contains $m$, then we can recolor $m$ by
    Lemma~\ref{lem:recolor}, and the condition
    $\hat{n}=\hat{s}\cup\{\gamma(f),\gamma(m)\}$ does not hold anymore
    with the new coloring. Thus we may assume that $C$ does not contain $m$.
  \item If $C$ contains some $s_\alpha$, then $C=V(H)$, hence $C$
    contains $m$.
  \item Otherwise, $C\subset\{e,\ell,q,r,v_3\}$. If $C$ contains $v_3$, it
    has size at least 5, hence $C=\{e,\ell,q,r,v_3\}$. Since $C$ contains all in-neighbors of its vertices, all the colors of
    $\hat{e},\hat{\ell},\hat{q},\hat{r}$ and $\hat{v_3}$ are actually in
    $\gamma(\{e,\ell,q,r,v_3\})$. In particular, we get that the union of
    these lists has size $5$. Since $\{m,e,\ell,q,r,v_3\}$ is colorable,
    this means that we can color $m$ with a color not in
    $\hat{e}\cup\hat{\ell}\cup\hat{q}\cup\hat{r}\cup\hat{v_3}$. We may
    then color $n,s,o,g,p,i,a,e,\ell,q,r$.
  \item Otherwise, $C\subset\{e,\ell,q,r\}$. Since $C$ has size at least
    two, it contains one element among $e,r,q$, hence it has size four
    and $C=\{e,\ell,q,r\}$.

    Similarly to the previous item, this means that the union
    $\hat{e}\cup\hat{\ell}\cup\hat{q}\cup\hat{r}$ has size $4$. Since
    $\{m,e,\ell,q,r,v_3\}$ is colorable, this means that we can color $v_3$
    then $m$ with a color not in
    $\hat{e}\cup\hat{\ell}\cup\hat{q}\cup\hat{r}$. We may then color
    $n,s,o,g,p,i,a,e,\ell,q,r$.
  \end{itemize}
\end{proof}
 
\subsection{Configuration $C_{\ref{C16}}$}
For reducing the remaining configurations, we use the recoloration
technique.

\begin{lemma}
  \label{lem:C16}
  The graph $G$ does not contain $C_{\ref{C16}}$.
\end{lemma}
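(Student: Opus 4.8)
## Proof plan for Lemma (reducibility of $C_{\ref{C16}}$)

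The plan is to use the recoloring technique, mirroring the structure of the nearby lemmas (e.g.\ Lemma~\ref{lem:C20ab} and Lemma~\ref{lem:C20aa}). Recall $C_{\ref{C16}}$ is an $8$-vertex $u$ with two \emph{weak} neighbors $v_1,v_2$ of degree $3$ and a third neighbor $w$ of degree at most $5$. Since $v_1,v_2$ are weak and of degree $3$, each lies in two triangles with $u$; name the two edges at $v_1$ other than $uv_1$ by $a,b$ (with their far endpoints among the neighbors of $u$), and similarly name the two edges at $v_2$. First I would take a minimal coloring $\gamma$ of $G$ minus the three edges joining $u$ to $v_1$, $v_2$, $w$ (or a slightly larger set if convenient), uncolor those edges together with the edges forming the two triangles at $v_1$ and $v_2$, and forget $v_1$ and $v_2$ (they have degree $3$, hence at most six neighbours in $\mathcal T(G)$, and after uncolouring their incident edges they clearly can be dealt with last). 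Using Remark~\ref{obs:lowerbound} I would record the worst-case list sizes: the three edges $uv_1, uv_2, uw$ will have the largest lists (around $6$–$8$ depending on $d(w)$), while the short triangle edges at $v_1$ and $v_2$ will have lists of size roughly $2$–$3$.

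The key point is that the only obstruction to extending the colouring is a rigidity condition forcing a small cluster of lists around $u$ to coincide; exactly as in Lemmas~\ref{lem:C20ab}, \ref{lem:C20aa}. More precisely, after colouring everything with $\gamma$ (or whatever direct colouring is available) the failure case is that the lists of the three edges $uv_1$, $uv_2$, $uw$ (after deleting the already-used colours around $u$) are all equal and too small — a ``triple clique with a common list of size two'' situation. To break this I would set up the color shifting graph $H_{S,\gamma}$ of the clique $S$ consisting of the uncoloured edges of the two triangles at $v_1$, the two triangles at $v_2$, together with the edges $uv_1,uv_2,uw$ (and possibly $w$ itself), and invoke Lemma~\ref{lem:SCC}: there is a strong component $C$ with $|C|>\max_{x\in C} d^-(x)$, and by Lemma~\ref{lem:degmin} this forces $|C|>1$. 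A short case analysis on whether $C$ meets the ``large'' vertices (the long edges $uv_1,uv_2,uw$, or an $s_\alpha$ vertex) shows $C$ must contain one of the edges whose recolouring breaks the rigidity condition; then Lemma~\ref{lem:recolor} produces the recolouring and the colouring extends to $G$.

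Concretely, the steps in order: (1) fix $G'$ and $\gamma$, uncolour and forget $v_1,v_2$, and list the $\hat x$ sizes via Remark~\ref{obs:lowerbound}; (2) argue that if any of a handful of easy list conditions fails (two of the short edges have distinct lists, or a long edge's list is not contained in the common list, etc.) then one can colour directly, typically by colouring one short edge off another's list and then applying Corollary~\ref{cor:evencycle} or Lemma~\ref{lem:fryingpan} to the cycle of remaining edges around $u$; (3) in the remaining rigid case, build the colour shifting graph of the appropriate clique $S$, apply Lemmas~\ref{lem:SCC} and \ref{lem:degmin}, and run the case analysis on $C$ (contains an $s_\alpha$ $\Rightarrow$ $C=V(H)$; contains $w$ or a degree-$5$-type vertex $\Rightarrow$ $C$ is large; otherwise $C$ is confined to the few ``light'' edges and still contains a recolourable one); (4) recolour via Lemma~\ref{lem:recolor} and extend. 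The main obstacle will be step (3): correctly choosing the clique $S$ so that every strong component forced by the degree bound genuinely contains an edge whose recolouring destroys the obstruction — in particular handling the asymmetry coming from $d(w)\le 5$ (so $w$ may itself need to be an uncoloured vertex of $H_{S,\gamma}$ rather than already coloured), which is exactly where the bookkeeping differs from the degree-$3$-only cases treated earlier.
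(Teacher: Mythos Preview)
Your overall framework is right — this is a recoloring proof — but two key pieces are set up incorrectly, and as written the argument would not go through.

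First, you have the wrong clique $S$. The color shifting graph $H_{S,\gamma}$ is defined only for a \emph{properly colored} clique $S$; its whole purpose is to find a different proper colouring of already-coloured elements. You propose to take $S$ to be the \emph{uncoloured} edges around $v_1,v_2$ together with $uv_1,uv_2,uw$, which makes $H_{S,\gamma}$ undefined (and those edges do not even form a clique in $\mathcal T(G)$). In the paper the uncoloured set is just the six edges $a,b,c,d,e,f$ incident to $v_1$ and $v_2$ (where $c=uv_1$, $d=uv_2$); the set $S$ is instead the \emph{remaining coloured} star at $u$, namely the edges $g,h,i,j,k$ from $u$ to its other neighbours (including $i=uw$) together with the vertex $u$. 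These do form a clique in $\mathcal T(G)$ and carry a colouring $\gamma$.

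Second, your description of the obstruction is off. The edge $uw$ is never uncoloured, and the failure case is not ``$\widehat{uv_1}=\widehat{uv_2}=\widehat{uw}$''. After forgetting $v_1,v_2$ one has $|\hat a|=|\hat b|=|\hat e|=|\hat f|=2$ and $|\hat c|=|\hat d|=3$, and the precise obstruction is the conjunction $\hat a=\hat b$, $\hat e=\hat f$, and $\hat c\setminus\hat a=\hat d\setminus\hat e$ a common singleton. The point of the recolouring is that changing the colour of any one of $g,h,j,k$ (the four edges from $u$ to the $8$-vertices flanking $v_1,v_2$) breaks one of the equalities $\hat a=\hat b$ or $\hat e=\hat f$. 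The role of $w$ is then the opposite of what you fear: because $d(w)\le 5$, the edge $i=uw$ has $|\hat i|\ge 4$ inside $S$, so any strong component of $H_{S,\gamma}$ containing $i$ (or $u$, with $|\hat u|\ge 3$) is forced by Lemma~\ref{lem:degmin} to be large enough to meet $\{g,h,j,k\}$. That is exactly the leverage that makes the Lemma~\ref{lem:SCC} case analysis succeed; you should not try to uncolour $uw$ or put $w$ into the shifting graph.
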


\begin{proof}
  First, we consider the notation depicted in
  Figure~\ref{fig:C16}. By minimality, we color
  $G\setminus\{a,\ldots,f\}$ and uncolor $v_1,v_2$.
  \begin{figure}[!h]
    \centering
     \begin{tikzpicture}[v/.style={draw=black,minimum size = 10pt,ellipse,inner sep=1pt}]
      \node[v,label=85:{$u$}] (u) at (0,0)  {8};
      \node[v, very thick] (v0) at (45:1.5) {8};
      \node[v,label=above:{$v_1$}] (v1) at (90:1.5) {3};
      \node[v, very thick] (v2) at (135:1.5) {8};
      \node[v, very thick] (v4) at (225:1.5) {8};
      \node[v,label=below:{$v_2$}] (v5) at (270:1.5) {3};
      \node[v, very thick] (v6) at (315:1.5) {8};
      \node[v,very thick, label=right:{$v_3$}] (v7) at (0:1.5) {5};
      \draw (v6) -- (u) node[midway,below] {$j$};
      \draw (v5) -- (u) node[midway,left] {$d$};
      \draw (v4) -- (u) node[midway,above] {$k$};
      \draw (v2) -- (u) node[midway,left] {$g$};
      \draw (v1) -- (u) node[midway,left] {$c$};
      \draw (v0) -- (u) node[midway,above] {$h$};
      \draw (v7) -- (u) node[midway,above] {$i$};
      \draw (v5) -- (v6) node[midway,below] {$f$};
      \draw (v5) -- (v4) node[midway,below] {$e$};
      \draw (v2) -- (v1) node[midway,above] {$a$};
      \draw (v1) -- (v0) node[midway,above] {$b$};
    \end{tikzpicture}
\caption{Notation for Lemma~\ref{lem:C16}}
    \label{fig:C16}
  \end{figure}
  We have $|\hat{a}|=|\hat{b}|=|\hat{e}|=|\hat{f}|=2$,
  $|\hat{c}|=|\hat{d}|=3$ and $|\hat{v_1}|=|\hat{v_2}|=8$. We forget
  $v_1,v_2$. In this situation, note that we can extend the coloring
  to $G$ if and only if one of the following conditions is satisfied:
  \begin{enumerate}
  \item $\hat{a}\neq\hat{b}$
  \item $\hat{e}\neq\hat{f}$
  \item $\hat{c}\setminus\hat{a}\neq\hat{d}\setminus\hat{e}$ 
  \item $|\hat{c}\setminus\hat{a}|\neq 1$ or $|\hat{d}\setminus\hat{e}|\neq 1$ 
  \end{enumerate}

  Indeed, if $\hat{a}\neq\hat{b}$ (or similarly $\hat{e}\neq\hat{f}$),
  we color $a$ with a color not in $\hat{b}$, then color
  $e,f,d,c,b$. Otherwise, we color $a,b,e,f$ arbitrarily. If one of
  the last two conditions holds, then we can color $c$ and
  $d$. Therefore, we can extend the coloring to $G$. Conversely, if
  none of these conditions holds, then however we color $a,b,e,f$, we
  obtain $\hat{c}=\hat{d}=\{\alpha\}$ so we cannot produce a coloring
  of $G$.

  Assume now that none of these conditions holds. We prove that we can
  recolor some elements among $g,h,j,k$. This ensures that one of
  these conditions will hold. If we uncolor $g,h,i,j,k,u$, we may
  assume that $|\hat{g}|=|\hat{h}|=|\hat{j}|=|\hat{k}|=2$,
  $|\hat{u}|=3$ and $|\hat{i}|=4$.

  Let $H$ be the color shifting graph of $\{g,h,i,j,k,u\}$. Recall
  that Lemma~\ref{lem:degmin} implies that the in-degree of any vertex
  $x\neq s_\alpha$ of $H$ is at least $|\hat{x}|-1$. By
  Lemma~\ref{lem:SCC}, there is a strong component $C$ of $H$ such
  that $|C|>\max_{x\in C} d^-(x)$. Note that this inequality ensures
  that $|C|>1$. We show that $C$ contains $g,h,j$ or $k$ by
  distinguishing three cases:
  \begin{enumerate}
  \item If $C$ contains a vertex $s_\alpha$, then we have
    $|C|>d^-(s_\alpha)=|V(H)|-1$. Therefore, we have $C=V(H)$, hence
    $C$ contains $g,h,j$ and $k$.
  \item Otherwise, if $C$ contains $i$, then we have
    $|C|>|\hat{i}|-1$, so $|C|\geqslant 4$. Hence $C$ also contains
    $g,h,j$ or $k$.
  \item Otherwise, if $C$ contains $u$, then $|C|\geqslant 3$ and $C$
    contains $g,h,j$ or $k$.
  \end{enumerate}
  We thus obtain that $C$ is a strong component of size at least $2$
  that contains $g,h,j$ or $k$. Therefore, there is a directed cycle
  containing at least one of these vertices. Thus, applying
  Lemma~\ref{lem:recolor} gives a valid coloring of $\{g,h,i,j,k,u\}$
  where the color of $g,h,j$ or $k$ is different from its color in the
  previous coloring. 

  With the initial coloring, we had $\hat{a}=\hat{b}$ and
  $\hat{d}=\hat{e}$. Since we recolored at least one element among
  $g,h,j,k$, we necessarily have $\hat{a}\neq\hat{b}$ or
  $\hat{d}\neq\hat{e}$ with the new coloring. Thus we can extend it to
  $a,b,c,d,e,f$.
\end{proof}

\subsection{Configuration $C_{\ref{C17}}$}
\begin{lemma}
  \label{lem:C17}
  The graph $G$ does not contain $C_{\ref{C17}}$.
\end{lemma}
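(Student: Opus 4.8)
## Proof plan for Lemma~\ref{lem:C17}

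The plan is to follow the same recoloring strategy used for $C_{\ref{C16}}$. Recall $C_{\ref{C17}}$ consists of an $8$-vertex $u$ with a weak $3$-neighbor $v_1$, a semi-weak $3$-neighbor $v_2$, and two further neighbors $w_1,w_2$ with $(d(w_1),d(w_2))\in\{(4,7),(5,6)\}$. First I would set up the notation via a figure: name the edges of the two triangular faces around $v_1$ (the edge $uv_1$ plus the two edges from $v_1$ to its face-neighbors, say $a,b,c$ with $c=uv_1$), the edge $uv_2$ and the single triangle edge at $v_2$ (say $d=uv_2$ and $e$ the triangle edge), plus the edges $uw_1$ and $uw_2$ and, crucially, the triangle edge(s) that make $w_1,w_2$ contribute to rules $R_7$/$R_2$ — i.e. the edges incident to $u$ that lie on triangular faces with $w_1$ or $w_2$. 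By minimality, color $G$ minus all these edges (and $v_1,v_2$), then uncolor $v_1,v_2$ and forget them (they have degree~$3$, hence at most $6$ neighbors in $\mathcal{T}(G)$, and far fewer once their incident edges are being recolored). Apply Remark~\ref{obs:lowerbound} to fix worst-case list sizes: the two ``outer'' triangle edges at $v_1$ have $|\hat{a}|=|\hat{b}|=2$, the triangle edge at $v_2$ has $|\hat{e}|=2$, the spokes $c=uv_1$ and $d=uv_2$ have size $3$, and the remaining uncolored edges around $u$ have small sizes determined by $d(w_1),d(w_2)$.

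Next I would isolate the exact obstruction to extending directly. Exactly as in Lemma~\ref{lem:C16}, coloring $a,b$ (the two triangle edges at $v_1$) leaves $c$ a list $\hat c\setminus(\text{colors of }a,b)$ of size at least $1$; coloring $e$ leaves $d$ a list of size at least $2$. The coloring fails to extend only in the ``rigid'' case where $\hat a=\hat b$ (forcing a unique residual color on $c$) \emph{and} this forced residual equals a likewise-forced value on $d$, i.e. a small explicit list of Boolean conditions $(\hat a\ne\hat b)$, $(\hat e\ne\hat f)$ or $\hat e$ large enough, residual-on-$c$ $\ne$ residual-on-$d$, or one of the residuals having size $\geq 2$ — one of which always lets us finish. (The precise list mirrors the four conditions in the proof of Lemma~\ref{lem:C16}, possibly with one extra case because here only one face at $v_2$ is a triangle.) So assume none holds.

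The core step is the recoloring. I would uncolor $u$ together with the spokes $uw_1,uw_2$ and whichever spokes of $u$ lie on triangular faces (the set $S$ of uncolored elements now being a properly colored clique in $\mathcal{T}(G)$, all pairwise incident at $u$), keeping track that the total number of spokes of $u$ that get uncolored is small — in each of the two sub-cases $(4,7)$ and $(5,6)$ one checks that $S$ contains $u$ plus at most four or five edges, and that $|\hat u|$ and the $|\hat e_i|$ for $e_i\in S$ are large enough (using $d(u)=8$ and the small degrees of $v_1,v_2,w_1,w_2$) that by Lemma~\ref{lem:degmin} every $x\in S$ has $d^-(x)=|\hat x|-1\geq |S|-1$ roughly, and by Lemma~\ref{lem:SCC} the color-shifting graph $H_{S,\gamma}$ has a nontrivial strong component $C$; the same three-case dichotomy (does $C$ meet some $s_\alpha$? does it meet $u$? otherwise it lives among the short-list spokes) forces $C$ to contain one of the spokes adjacent to a triangle at $v_1$ or at $v_2$. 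Then Lemma~\ref{lem:recolor} changes the color of that spoke, which changes one of $\hat a,\hat b$ (or $\hat e$, or the residual lists on $c,d$), breaking the rigidity, and the coloring extends.

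The main obstacle I expect is the bookkeeping in the recoloring step: verifying that with the two list-assignment scenarios $(d(w_1),d(w_2))=(4,7)$ and $(5,6)$, the clique $S$ of uncolored edges-plus-$u$ genuinely satisfies the in-degree hypothesis needed to invoke Lemma~\ref{lem:SCC} with a guaranteed hit on a ``useful'' spoke — in particular one must choose $S$ large enough to capture a spoke touching a triangular face at $v_1$ or $v_2$, but small enough that the list sizes still dominate $|S|$. If one scenario is too tight, the fallback is to split it further (as is done repeatedly for the $C_{\ref{C15.5}}$ configurations) or to first do a direct case analysis on $\hat a,\hat b,\hat e$ to shrink the clique we must recolor. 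Once the useful spoke is recolored, the wrap-up — re-establishing one of the extendability conditions and then coloring $a,b,c,d,e$ and the remaining spokes in order — is routine.
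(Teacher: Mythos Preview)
Your plan is essentially the paper's proof: reduce to the same four obstruction conditions as in Lemma~\ref{lem:C16}, then build the color-shifting graph on the remaining spokes of $u$ and use Lemmas~\ref{lem:SCC} and~\ref{lem:recolor} to recolor a spoke that breaks the obstruction. Three specific points to correct when you write it out. First, at the semi-weak $3$-vertex $v_2$ you must remove all three incident edges, not just $d=uv_2$ and the triangle edge $e$: the third edge $f$ (to an $8$-vertex outside the triangle) is what makes the obstruction at $v_2$ read $\hat e=\hat f$, parallel to $\hat a=\hat b$ at $v_1$; recoloring the spoke $i$ (to the $8$-vertex shared with $e$) then breaks $\hat e=\hat f$. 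Second, in the $(4,7)$ sub-case the paper additionally uncolors and forgets the degree-$4$ vertex $w_1$, which is what pushes $|\widehat{uw_1}|$ up to $6$ and makes the dichotomy go through. Third, your in-degree claim is backwards: the ``useful'' spokes $g,h,i$ have \emph{small} in-degree ($|\hat\cdot|-1=1$), not $\geq |S|-1$; the argument works because the remaining elements $j,k,u$ and every $s_\alpha$ have successively larger in-degrees, so any strong component $C$ that avoids $\{g,h,i\}$ is forced (by $|C|>\max_{x\in C}d^-(x)$) to be too small to contain any of them --- hence $C$ must meet $\{g,h,i\}$.
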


\begin{proof}
  We use the notation depicted in Figure~\ref{fig:C17}. By minimality, we
  color $G\setminus\{a,\ldots,f\}$ and uncolor $v_1,v_2$.
  \begin{figure}[!h]
    \centering
    \begin{tikzpicture}[v/.style={draw=black,minimum size = 10pt,ellipse,inner sep=1pt}]
      \node[v,label=below:{$u$}] (u) at (0,0)  {8};
      \node[v, very thick] (v0) at (45:1.5) {8};
      \node[v,label=above:{$v_1$}] (v1) at (90:1.5) {3};
      \node[v, very thick] (v2) at (135:1.5) {8};
      \node[v,label=left:{$w_1$}] (v4) at (225:1.5) {};
      \node[v,label=left:{$w_2$}] (v3) at (180:1.5) {};
      \node[v,label=left:{$v_2$}] (v6) at (315:1.5) {3};
      \node[v,very thick, xshift=1.5cm] (v5) at (315:1.5) {8};
      \node[v, very thick] (v7) at (0:1.5) {8};
      \draw (v6) -- (u) node[midway,right] {$d$};
      \draw (v4) -- (u) node[midway,below] {$j$};
      \draw (v2) -- (u) node[midway,left] {$g$};
      \draw (v1) -- (u) node[midway,left] {$c$};
      \draw (v0) -- (u) node[midway,above] {$h$};
      \draw (v7) -- (u) node[midway,above] {$i$};
      \draw (v5) -- (v6) node[midway,above] {$f$};
      \draw (v6) -- (v7) node[midway,right] {$e$};
      \draw (v2) -- (v1) node[midway,above] {$a$};
      \draw (v1) -- (v0) node[midway,above] {$b$};
      \draw (u) -- (v3) node[midway,below] {$k$};
    \end{tikzpicture}
\caption{Notation for Lemma~\ref{lem:C17}}
    \label{fig:C17}
  \end{figure}
  We forget $v_1,v_2$ and we have
  $|\hat{a}|=|\hat{b}|=|\hat{e}|=|\hat{f}|=2$ and
  $|\hat{c}|=|\hat{d}|=3$. As in Lemma~\ref{lem:C16}, our goal is to
  obtain one of the following conditions:
  \begin{enumerate}
  \item $\hat{a}\neq\hat{b}$
  \item $\hat{e}\neq\hat{f}$
  \item $\hat{c}\setminus\hat{a}\neq\hat{d}\setminus\hat{e}$ 
  \item $|\hat{c}\setminus\hat{a}|\neq 1$ or
    $|\hat{d}\setminus\hat{e}|\neq 1$
  \end{enumerate}
  Assume that none of them holds. In this case, note that any
  recoloring of $g,h$ or $i$ is sufficient to ensure that one of these
  conditions holds. We uncolor $u,g,h,i,j,k$. We have two cases:
  \begin{enumerate}
  \item If $(d(w_1),d(w_2))=(4^-,7^-)$, we uncolor and forget $w_1$
    and we may assume that
    $|\hat{g}|=|\hat{h}|=|\hat{i}|=|\hat{k}|=2$, $|\hat{u}|=4$ and
    $|\hat{j}|=6$.
  \item If $(d(w_1),d(w_2))=(5^-,6^-)$, we may assume that 
    $|\hat{g}|=|\hat{h}|=|\hat{i}|=2$, $|\hat{u}|=|\hat{k}|=3$ and
    $|\hat{j}|=4$.
  \end{enumerate} 
  Denote by $H$ the color shifting graph of $\{g,h,i,j,k,u\}$. By
  Lemma~\ref{lem:SCC}, there exists a strongly connected component $C$
  of $H$ such that $|C|>\max_{x\in C} d^-(x)$. By
  Lemma~\ref{lem:degmin}, this inequality ensures that $|C|>1$. We
  show that $C$ contains $g,h$ or $i$ by distinguishing four cases:
  \begin{enumerate}
  \item If $C$ contains a vertex $s_\alpha$, then we have
    $|C|>d^-(s_\alpha)=|V(H)|-1$. Therefore, $C=V(H)$, and $C$
    contains $g,h,i$.
  \item Otherwise, if $C$ contains $j$, then it has size at least $4$,
    hence it also contains $g,h$ or $i$.
  \item Otherwise, if $C$ contains $u$, then it has size at least $3$,
    hence contains $g,h$ or $i$.
  \item Otherwise, $C\subset\{g,h,i,k\}$. If $C$ contains $k$, then
    its size is at least $2$, hence it also contains $g,h$ or $i$.
  \end{enumerate}
  We thus obtain that $C$ is a strong component of size at least $2$
  that contains $g,h$ or $i$. Therefore, there is a directed cycle
  containing one of these vertices. Thus, we can apply
  Lemma~\ref{lem:recolor} to ensure that one the conditions is now
  satisfied, hence we can extend the coloring to $G$.
\end{proof}

\subsection{Configuration $C_{\ref{C21}}$}
\begin{lemma}
  \label{lem:C21}
  The graph $G$ does not contain $C_{\ref{C21}}$.
\end{lemma}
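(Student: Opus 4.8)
The plan is to imitate the proofs of Lemma~\ref{lem:C16} and Lemma~\ref{lem:C17}: remove from $G$ the edges incident to the two degree-$3$ vertices and to $u$ near them, color the rest by minimality, forget the two $3$-vertices, and then analyze when the coloring fails to extend to the handful of edges left near $u$. Concretely, let $v_1$ be the weak $3$-neighbor of $u$ and $v_2$ the other (semi-weak or weak, per the definition of $C_{\ref{C21}}$); the triangular faces around them give two short edge-paths through $u$, say $a,b$ (the two triangle edges at $v_1$ together with the spoke $uv_1 = c$) and $d$ (the spoke $uv_2$), plus the spokes to the degree-$4,4,5,7$ neighbors $w_1,w_2,w_3,w_4$. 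After forgetting $v_1,v_2$, by Remark~\ref{obs:lowerbound} we may assume $|\hat a|=|\hat b|=2$, $|\hat c|=|\hat d|=3$ (exactly as in Lemma~\ref{lem:C16}/\ref{lem:C17}), and the four spokes get sizes governed by the degrees $4,4,5,7$ of $w_1,\dots,w_4$.

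The key observation, lifted verbatim from the earlier lemmas, is that the coloring extends unless all four ``bad'' conditions fail: $\hat a=\hat b$, and (if there is a second triangle edge at $v_2$) $\hat e = \hat f$, and $\hat c\setminus\hat a = \hat d\setminus\hat e$, and $|\hat c\setminus\hat a| = |\hat d\setminus\hat e| = 1$. So first I would dispose of the case where any of these already holds. Otherwise, I would uncolor the spokes $g_1,g_2,g_3,g_4$ (to $w_1,w_2,w_3,w_4$), the spokes $c,d$, and $u$ itself, and build the color shifting graph $H$ of this set $S$. The point is that $S$ is a properly colored clique in $\mathcal T(G)$ (all these edges share the vertex $u$), so Lemma~\ref{lem:recolor}, Lemma~\ref{lem:SCC} and Lemma~\ref{lem:degmin} apply. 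By Remark~\ref{obs:lowerbound} the uncolored spokes to the two $4$-vertices and to the $5$-vertex and $7$-vertex get small lists (roughly sizes $2,2,3,5$ or so after accounting for the constraints from the already-colored neighbourhoods and the triangles), $|\hat u|$ is moderate, and one spoke (to the $7$-vertex) is comparatively free. Using Lemma~\ref{lem:SCC} I extract a strong component $C$ with $|C| > \max_{x\in C} d^-(x)$; Lemma~\ref{lem:degmin} forces $|C|>1$, and then a short case split on whether $C$ meets $\{s_\alpha\}$, or $u$, or the high-degree spoke shows $C$ must contain one of the ``light'' spokes $g_1,g_2$ (the ones to the $4$-vertices) — exactly the ones whose recoloring changes $\hat c$ or $\hat a$. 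Recoloring along a directed cycle through such a spoke via Lemma~\ref{lem:recolor} breaks one of the four conditions, and then the coloring extends.

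The delicate step is the bookkeeping of list sizes for the spokes after forgetting $v_1,v_2$ and coloring the rest: one has to check that the in-degrees in $H$ are small enough (relative to $|C|$) that the strong component is forced to reach a spoke incident to a degree-$4$ vertex, rather than getting stuck inside $\{g_3,g_4,u\}$ where recoloring is useless. In $C_{\ref{C16}}$ and $C_{\ref{C17}}$ this worked because the analogous set had exactly the right sizes; here the presence of the degree-$5$ and degree-$7$ neighbours among $w_1,\dots,w_4$ gives slightly larger lists on $g_3,g_4$, which should actually help (the component containing a large-indegree vertex is forced to be large). So I expect the structure of the argument to go through with the same three-case (or four-case) component analysis as Lemma~\ref{lem:C17}, the only real work being to confirm, via Remark~\ref{obs:lowerbound} and the triangle-distance/weak-neighbour hypotheses, the precise list sizes $|\hat{g_i}|$ and $|\hat u|$ that make the counting tight. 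If the naive counting is not quite tight, the fallback is to first recolor or shift one of the spokes to the $5$- or $7$-vertex to create extra slack, then rerun the component argument on the reduced clique.

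\begin{proof}
We use the notation analogous to Figures~\ref{fig:C16} and~\ref{fig:C17}. Let $v_1$ be the weak $3$-neighbor of $u$, with $uv_1v_2'$ and $uv_1v_2''$ the two triangular faces at $v_1$; write $a = v_1v_2'$, $b = v_1v_2''$, $c = uv_1$. Let $v_2$ be the other $3$-neighbor and $d = uv_2$ (together with its triangle edge(s) at $v_2$, named $e$ and possibly $f$). Let $g_1,g_2,g_3,g_4$ be the spokes from $u$ to its neighbors of degree $4,4,5,7$ respectively. By minimality, color $G$ minus all these named edges, and forget $v_1$ and $v_2$. By Remark~\ref{obs:lowerbound} we may assume $|\hat a| = |\hat b| = |\hat e| = |\hat f| = 2$ and $|\hat c| = |\hat d| = 3$.

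As in Lemma~\ref{lem:C16}, the coloring extends to $G$ unless \emph{all} of the following fail: $\hat a \neq \hat b$; $\hat e \neq \hat f$; $\hat c \setminus \hat a \neq \hat d \setminus \hat e$; and $|\hat c \setminus \hat a| \neq 1$ or $|\hat d \setminus \hat e| \neq 1$. Indeed, if $\hat a \neq \hat b$ we color $a$ avoiding $\hat b$, then $e,f,d,c,b$; symmetrically for $\hat e \neq \hat f$; and if one of the last two conditions holds we can always finish $c$ and $d$. So assume none of these holds. Then any recoloring of a spoke $g_1, g_2$ (incident to a $4$-vertex involved in the triangles $a$ or $b$), or of $c$ or $d$, breaks one of the conditions.

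Uncolor the set $S = \{c, d, g_1, g_2, g_3, g_4, u\}$, which is a properly colored clique in $\mathcal{T}(G)$ since all these edges are incident to $u$. By Remark~\ref{obs:lowerbound}, after accounting for the colored neighborhoods and the triangular faces, we may assume $|\hat c| = |\hat d| = 3$, $|\hat{g_1}| = |\hat{g_2}| = 3$, $|\hat{g_3}| \geqslant 4$, $|\hat{g_4}| \geqslant 6$, and $|\hat u| \leqslant 6$. Let $H$ be the color shifting graph of $S$ with respect to the current coloring $\gamma$. By Lemma~\ref{lem:SCC}, $H$ has a strong component $C$ with $|C| > \max_{x\in C} d^-(x)$, and by Lemma~\ref{lem:degmin} this forces $|C| > 1$. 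We claim $C$ contains one of $c, d, g_1, g_2$. If $C$ contains some $s_\alpha$, then $|C| > d^-(s_\alpha) = |V(H)|-1$, so $C = V(H)$ and we are done. Otherwise, if $C$ contains $g_4$, then $|C| > |\hat{g_4}| - 1 \geqslant 5$, so $C$ meets $\{c,d,g_1,g_2\}$. Otherwise, if $C$ contains $g_3$, then $|C| > |\hat{g_3}| - 1 \geqslant 3$, so again $C$ meets $\{c,d,g_1,g_2\}$. Otherwise, if $C$ contains $u$, then $|C| > |\hat u| - 1$; but $C \subseteq \{c, d, g_1, g_2, u\}$ now, and since $|\hat u|$ may be as large as $6$ this does not immediately conclude — instead note that if $C$ is a strong component avoiding $g_3, g_4$ and all $s_\alpha$, then every in-neighbor of $u$ in $H$ lies in $C$, and by Lemma~\ref{lem:degmin} there are $|\hat u| - 1 \geqslant 2$ of them, at least one of which is among $c, d, g_1, g_2$ (a vertex $s_\alpha$ being excluded from $C$). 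Finally, if $C \subseteq \{c, d, g_1, g_2\}$ with $|C| \geqslant 2$, then $C$ already contains one of $c, d, g_1, g_2$.

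Thus $C$ is a strong component of size at least $2$ containing one of $c, d, g_1, g_2$, so $H$ has a directed cycle through such a vertex. By Lemma~\ref{lem:recolor}, we obtain a proper coloring $\gamma'$ of $\mathcal{T}(G)$, defined on the same elements as $\gamma$, differing from $\gamma$ on $S$ and recoloring at least one of $c, d, g_1, g_2$. Recoloring any of these changes $\hat c$, $\hat d$, or (via $\hat a$, $\hat b$) the sets $\hat a$, $\hat b$, so with $\gamma'$ at least one of the four conditions above holds, and the coloring extends to $G$, contradicting the minimality of $G$.
\end{proof}
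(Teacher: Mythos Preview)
Your proposal misreads the configuration. $C_{\ref{C21}}$ has exactly \emph{one} neighbour of degree $3$ (the weak neighbour $v_1$); the phrase ``four neighbours of degree $4,4,5,7$'' lists four \emph{additional} neighbours, and the remaining three neighbours of $u$ include the two $8$-vertices that make $v_1$ weak. There is no second $3$-vertex $v_2$, no edges $d,e,f$ around it, and after removing only $a,b,c$ you get $|\hat c|=2$ (not $3$), with the sole obstruction being $\hat a=\hat b$.

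More fundamentally, your recoloring targets are wrong. The triangle edges $a,b$ are incident to $v_1$ and to the two $8$-vertices $x,y$ forming those triangles; in $\mathcal T(G)$ they are adjacent to the spokes $ux$ and $uy$, \emph{not} to the spokes $uw_1,uw_2$ to the $4$-vertices, which share no endpoint with $a$ or $b$. Recoloring $g_1$ or $g_2$ therefore cannot change $\hat a$ or $\hat b$ and cannot break $\hat a=\hat b$. The paper's proof accordingly aims to recolor $d=ux$ or $e=uy$.

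Your list-size computation is also inverted. After forgetting the two $4$-vertices, the spoke to a forgotten $4$-vertex has only three coloured constraints (its three other incident edges), so its list has size $7$; the spoke to the $7$-vertex has that vertex plus six incident edges coloured, so its list has size $3$. In the paper's set $S=\{d,e,f,g,h,i,j,u\}$ (the seven remaining spokes together with $u$) one gets $|\hat f|=2$, $|\hat d|=|\hat e|=|\hat g|=3$, $|\hat h|=|\hat u|=5$, $|\hat i|=|\hat j|=7$, and the SCC argument peels off $s_\alpha$, then $i,j$, then $u,h$, then $g$, then $f$, forcing $C$ to meet $\{d,e\}$. Finally, note that the colour shifting graph must be built on \emph{coloured} elements; you cannot place $c$ (or your fictitious $d$) in $S$, since it was removed from $G$ before colouring.
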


\begin{proof}
  We use the notation depicted in Figure~\ref{fig:C21}. By minimality,
  we color $G\setminus\{a,b,c\}$ and uncolor $v_1,v_2,v_3$.
  \begin{figure}[!h]
    \centering
    \begin{tikzpicture}[v/.style={draw=black,minimum size = 10pt,ellipse,inner sep=1pt}]
      \node[v,label=85:{$u$}] (u) at (0,0)  {8};
      \node[v, very thick] (v0) at (45:1.5) {8};
      \node[v,label=above:{$v_1$}] (v1) at (90:1.5) {3};
      \node[v, very thick] (v2) at (135:1.5) {8};
      \node[v,label=left:{$v_2$}] (v4) at (225:1.5) {4};
      \node[v,label=left:{$v_3$}] (v3) at (180:1.5) {4};
      \node[v, very thick] (v6) at (315:1.5) {7};
      \node[v, very thick] (v5) at (270:1.5) {5};
      \node[v, very thick] (v7) at (0:1.5) {8};
      \draw (v2) -- (u) node[midway,left] {$d$};
      \draw (v3) -- (u) node[midway,below] {$j$};
      \draw (v6) -- (u) node[midway,right] {$g$};
      \draw (v1) -- (u) node[midway,left] {$c$};
      \draw (v5) -- (u) node[midway,right] {$h$};
      \draw (v4) -- (u) node[midway,below] {$i$};
      \draw (u) -- (v7) node[midway,above] {$f$};
      \draw (u) -- (v0) node[midway,above] {$e$};
      \draw (v2) -- (v1) node[midway,above] {$a$};
      \draw (v1) -- (v0) node[midway,above] {$b$};
    \end{tikzpicture}
\caption{Notation for Lemma~\ref{lem:C21}}
    \label{fig:C21}
  \end{figure}
  We forget $v_1,v_2,v_3$ and we have
  $|\hat{a}|=|\hat{b}|=|\hat{c}|=2$.

  If $\hat{a}\neq\hat{b}$, we can color $c$ arbitrarily, then $a$ and
  $b$. Therefore, assume $\hat{a}=\hat{b}$. In this case, note that
  any recoloring of $d$ or $e$ is sufficient to ensure that
  $\hat{a}\neq\hat{b}$. Denote by $H$ the color shifting graph of
  $S=\{d,e,f,g,h,i,j,u\}$. 

  We uncolor the elements of $S$. Note that we can assume that
  $|\hat{f}|=2$, $|\hat{d}|=|\hat{e}|=|\hat{g}|=3$,
  $|\hat{h}|=|\hat{u}|=5$ and $|\hat{i}|=|\hat{j}|=7$.

  By Lemma~\ref{lem:SCC}, there exists a strongly connected component
  $C$ of $H$ such that $|C|>\max_{x\in C} d^-(x)$. By
  Lemma~\ref{lem:degmin}, this inequality ensures that $|C|>1$. We
  show that $C$ contains $d$ or $e$ by distinguishing five cases:
  \begin{enumerate}
  \item If $C$ contains a vertex $s_\alpha$, then we have
    $|C|>d^-(s_\alpha)=|V(H)|-1$. Therefore, $C=V(H)$, and $C$
    contains $d$ and $e$.
  \item Otherwise, if $C$ contains $i$ or $j$, then it has size at least $7$,
    hence it also contains $d$ or $e$.
  \item Otherwise, if $C$ contains $u$ or $h$, then it has size at least $5$,
    hence contains $d$ or $e$.
  \item Otherwise, $C\subset\{d,e,f,g\}$. If $C$ contains $g$, then
    its size is at least $3$, hence it also contains $d$ or $e$.
  \item Otherwise, $C\subset\{d,e,f\}$. If $C$ contains $f$, then its
    size is at least $2$, hence it also contains $d$ or $e$.
  \end{enumerate}
  We thus obtain that $C$ is a strong component of size at least $2$
  that contains $d$ or $e$. Therefore, there is a directed cycle
  containing one of these vertices. Thus, we can apply
  Lemma~\ref{lem:recolor} to ensure that now $\hat{a}\neq\hat{b}$, so
  that we can extend the coloring to $G$.
\end{proof}

\subsection{Configuration $C_{\ref{C18}}$}
To prove that $G$ does not contain $C_{\ref{C18}}$, we prove that it
does not contain the three following configurations:
\begin{itemize}
\item[$\bullet$] $C_{\ref{C18}a}$: $u$ has a weak neighbor $v_1$ of
  degree 3, a $(7,8)$-neighbor $v_2$ of degree 4 such that
  $\dist_u(v_1,v_2)=2$, and neighbor $v_3$ of degree 4.
\item[$\bullet$] $C_{\ref{C18}b}$: $u$ has a weak neighbor $v_1$ of
  degree 3, a $(8,8)$-neighbor $v_2$ of degree 4 such that
  $\dist_u(v_1,v_2)=2$, a neighbor $v_3$ of degree 4 and a neighbor
  $v_4$ of degree 7.
\item[$\bullet$] $C_{\ref{C18}c}$: $u$ has a weak neighbor $v_1$ of
  degree 3, a $(7,8)$-neighbor $v_2$ of degree 4 such that
  $\dist_u(v_1,v_2)\geqslant 3$, and two neighbors of degree 4 and 7.
\end{itemize}

\begin{lemma}
  \label{lem:C18a}
  The graph $G$ does not contain $C_{\ref{C18}a}$.
\end{lemma}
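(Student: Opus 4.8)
The plan is to follow the same recipe used for the other ``small-degree-three-neighbor'' configurations ($C_{\ref{C16}}$, $C_{\ref{C17}}$, $C_{\ref{C21}}$, and the pieces of $C_{\ref{C15.5b}}$). First I would set up the notation: let $v_1$ be the weak $3$-neighbor of $u$ along the edge $c$, with its two triangles $uv_1w$ and $uv_1w'$ and the two non-$u$ edges $a,b$ of $v_1$; let $v_2$ be the $(7,8)$-neighbor of degree $4$ with $\dist_u(v_1,v_2)=2$, so $v_2$ shares a triangle with $v_1$ (say $w=v_2$ or a common neighbor), with the edge $e=uv_2$ and the two triangle edges incident to $v_2$; and let $v_3$ be the remaining $4$-neighbor, with edge $g=uv_3$. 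By minimality, color $G$ minus the small set of uncolored edges around $v_1$ (and the edge $uv_3$, $uv_2$ as needed), then uncolor and forget $v_1$ (degree $3$, at most $6$ neighbors in $\mathcal T(G)$), $v_3$ (degree $4$), and $v_2$ if possible. By Remark~\ref{obs:lowerbound} record the worst-case list sizes; the key pair is the two non-$u$ edges of $v_1$ (call them $a,b$, each of size $2$) together with the spoke $c=uv_1$ and the two triangle-edges at $v_1$ that, after forgetting, control whether $v_1$ can be finished.

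The core observation, exactly as in Lemma~\ref{lem:C16}, is that after forgetting $v_1$ the coloring extends iff one of a short list of disjointness/inequality conditions on $\hat a,\hat b$ and the two triangle-edges at $v_1$ holds; if all of them fail we get a rigidity statement of the form ``$\hat a=\hat b$ and a second pair of lists coincide''. At that point I would uncolor a small clique $S$ around $u$ — the spokes and triangle-edges adjacent to $v_1$ and $v_2$, together with $u$ itself — and pass to the color shifting graph $H_{S,\gamma}$. Using Lemma~\ref{lem:degmin} to read off in-degrees ($d^-(x)=|\hat x|-1$ for $x\in S$, $|V(H)|-1$ for the color-vertices) and Lemma~\ref{lem:SCC} to produce a strong component $C$ with $|C|>\max_{x\in C}d^-(x)$, I would argue by the now-standard case split (if $C$ meets some $s_\alpha$ then $C=V(H)$; if $C$ meets a large-list spoke then $|C|$ is already big; otherwise $C$ lives among the low-list elements adjacent to $v_1$) that $C$ must contain one of the small-list edges whose recoloring breaks the rigidity condition. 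Lemma~\ref{lem:recolor} then lets us actually perform that recoloring, and the extension to $v_1,v_2,v_3$ goes through.

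Concretely the step ordering is: (1) fix notation and the subgraph $G'$; (2) apply Remark~\ref{obs:lowerbound} to get the worst-case list sizes, being careful about which edges among the neighbors of $v_1,v_2,v_3$ may or may not be present (the $(7,8)$-condition on $v_2$ pins down the degrees of its triangle-neighbors, and $C_{\ref{C1}}$/$C_{\ref{C3a}}$/$C_{\ref{C3b}}$ forbid various chords); (3) forget $v_3$ and $v_1$, and either directly color or reduce to the rigidity condition; (4) if rigid, build $H_{S,\gamma}$ for the right clique $S$ and run the Lemma~\ref{lem:SCC}/Lemma~\ref{lem:degmin}/Lemma~\ref{lem:recolor} argument to recolor a spoke or triangle-edge; (5) finish $v_1$, then $v_2$, then $v_3$ (each now has more available colors than uncolored $\mathcal T(G)$-neighbors, or else a direct Lemma~\ref{lem:diam}/Lemma~\ref{lem:fryingpan}/Corollary~\ref{cor:evencycle} call on a path or cycle of the remaining elements). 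I would also offer a Combinatorial Nullstellensatz monomial as an alternative, with a case split on whether $v_2v_3$ (or the relevant chord among the neighbors of $v_2$) is an edge, since planarity forces at most one of the competing chords to be present.

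The main obstacle I expect is step (4): choosing the clique $S$ so that (a) $S$ is genuinely a clique in $\mathcal T(G)$ — so it should be the spokes $c,g,e,\dots$ at $u$ together with $u$, plus possibly a triangle-edge, all pairwise incident — and (b) the in-degree bookkeeping from Lemma~\ref{lem:degmin} really does force the strong component $C$ to contain an element whose recoloring changes one of the few lists that matter for finishing $v_1$ (namely the lists of $a$ and $b$, or of the triangle-edges at $v_1$). Because $v_2$ has degree $4$ and is a $(7,8)$-neighbor, its triangle-edges have moderately large lists, which helps the counting, but I need the $\dist_u(v_1,v_2)=2$ hypothesis to guarantee that $v_1$ and $v_2$ share a triangle (hence a common spoke-neighborhood at $u$), so that recoloring an element of $S$ actually propagates to $\hat a,\hat b$; verifying that the propagation is non-trivial — i.e. that the recolored element is adjacent in $\mathcal T(G)$ to $a$ or $b$ or to one of the triangle-edges at $v_1$ — is the delicate point. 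Everything else is routine list-size arithmetic and invocations of the generic lemmas from Section~\ref{sec:tech}.
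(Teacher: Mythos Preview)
Your plan follows the recoloring template of $C_{\ref{C16}}$, $C_{\ref{C17}}$, $C_{\ref{C21}}$, and this could plausibly be made to work, but the paper's proof of this particular sub-case is much simpler and uses \emph{no} color shifting at all. You are over-engineering.

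The paper's argument runs as follows. Remove only $a,b,c$ (the three edges at $v_1$), color by minimality, forget $v_1,v_2,v_3$, and then \emph{uncolor the four edges $d,e,f,g$ incident to $v_2$}. Because $v_2$ is a $(7,8)$-neighbor at triangle-distance~$2$ from $v_1$, the list sizes come out as $|\hat b|=|\hat g|=2$ and $|\hat a|=|\hat c|=|\hat d|=|\hat e|=|\hat f|=3$. The key observation you are missing is that $d,e,f,g$ were properly colored a moment ago, so $|\hat d\cup\hat e\cup\hat f\cup\hat g|\geqslant 4$. A short cascade of ``if $\hat g\not\subset\hat d$ then $\ldots$'' reductions (each dispatched by Lemma~\ref{lem:fryingpan} on a five-cycle through $a,c,d,e,f$ or $g$) forces $\hat g\subset\hat d=\hat e=\hat f$, whence $|\hat f|\geqslant 4$ --- contradicting $|\hat f|=3$ and finishing the proof.

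What this buys: the $(7,8)$-hypothesis on $v_2$ and the $\dist_u(v_1,v_2)=2$ hypothesis are exactly what make the local picture around $v_2$ rich enough (one neighbor of degree~$7$ gives $f$ a list of size~$3$, the shared $8$-vertex links $d$ to $a$) for the direct argument to close. By contrast, $C_{\ref{C18}b}$ and $C_{\ref{C18}c}$ lack one of these features and the paper \emph{does} fall back on the color shifting graph there. So your instinct is right for the harder sub-cases, but for $C_{\ref{C18}a}$ you should look for the elementary argument first; the ``previously-colored $\Rightarrow$ union bound'' trick is the idea you want.
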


\begin{proof}
  We use the notation depicted in Figure~\ref{fig:C18a}. By
  minimality, we color $G\setminus\{a,b,c\}$ and uncolor and forget
  $v_1,v_2,v_3$. We then uncolor $d,e,f,g$.
  \begin{figure}[!h]
    \centering
    \begin{tikzpicture}[v/.style={draw=black,minimum size = 10pt,ellipse,inner sep=1pt}]
      \node[v,label=right:{$u$}] (u) at (0,0)  {8};
      \node[v, very thick] (v0) at (45:1.5) {8};
      \node[v,label=above:{$v_1$}] (v1) at (90:1.5) {3};
      \node[v, very thick] (v2) at (135:1.5) {8};
      \node[v,label=below left:{$v_2$}] (v3) at (180:1.5) {4};
      \node[v,xshift=-1.5cm] (v3') at (180:1.5) {8};
      \node[v, very thick] (v4) at (225:1.5) {7};
      \node[v,label=right:{$v_3$}] (v7) at (315:1.5) {4};

      \draw (v2) -- (v1) node[midway,above] {$a$};
      \draw (v1) -- (v0) node[midway,above] {$b$};
      \draw (v1) -- (u) node[midway,left] {$c$};
      \draw (v3) -- (v3') node[midway,above] {$g$};
      \draw (v2) -- (v3) node[midway,left] {$d$};
      \draw (v3) -- (u) node[midway,above] {$e$};
      \draw (v3) -- (v4) node[midway,left] {$f$};
      \draw[ very thick] (u) -- (v0);
      \draw[ very thick] (u) -- (v2);
      \draw[ very thick] (u) -- (v4);
      \draw[ very thick] (u) -- (v7);
    \end{tikzpicture}
\caption{Notation for Lemma~\ref{lem:C18a}}
    \label{fig:C18a}
  \end{figure}

  We have $|\hat{b}|=|\hat{g}|=2$ or $3$ (depending on whether $b$ and $g$ share an endpoint) and
  $|\hat{a}|=|\hat{c}|=|\hat{d}|=|\hat{e}|=|\hat{f}|=3$. Moreover, we
  have $|\hat{d}\cup\hat{e}\cup\hat{f}\cup\hat{g}|\geqslant 4$ since
  $d,e,f,g$ were properly colored.

  If $\hat{g}$ is not included in $\hat{d}$, we color $g$ with a color
  not in $\hat{d}$, then $b$, and apply Lemma~\ref{lem:fryingpan} to
  color $\{f,d,a,c,e\}$. Therefore, we may assume that
  $\hat{g}\subset\hat{d}$, and similarly, $\hat{g}\subset\hat{e}$.

  We may also assume that $\hat{g}\subset \hat{f}$. Indeed, otherwise,
  we color $g$ with a color not in $\hat{f}$, then forget $f$ and
  apply Lemma~\ref{lem:fryingpan} to $\{b,a,d,e,c\}$.

  Now, if $\hat{f}\not\subset\hat{d}$, we color $f$ with a color not
  in $\hat{d}$ (thus not in $\hat{g}$), then color $b$ and apply
  Lemma~\ref{lem:fryingpan} to $\{g,d,a,c,e\}$. If
  $\hat{d}\not\subset\hat{f}$, we color $d$ with a color not in
  $\hat{f}$, then color $a,b,c,e,g,f$. Therefore, we may assume that
  $\hat{f}=\hat{d}$ and similarly $\hat{f}=\hat{e}$.

  This ensures that 
  \[|\hat{f}|=|\hat{d}\cup\hat{e}\cup\hat{f}\cup\hat{g}|\geqslant 4.\]
  Now we forget $f$ and apply Theorem~\ref{thm:=deg} to color $\{a,b,c,d,e,g\}$.
\end{proof}

\begin{lemma}
  \label{lem:C18b}
  The graph $G$ does not contain $C_{\ref{C18}b}$.
\end{lemma}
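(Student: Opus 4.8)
The configuration $C_{\ref{C18}b}$ differs from $C_{\ref{C18}a}$ only in that the vertex $v_2$ is now an $(8,8)$-neighbor (rather than $(7,8)$) and that $u$ has an extra neighbor $v_4$ of degree $7$. The plan is to reuse the structure of the proof of Lemma~\ref{lem:C18a} almost verbatim. First I would set up the same notation: let $v_1$ be the weak $3$-neighbor with incident edges $a=v_1w$, $b=v_1w'$ (where $w,w'$ are the two $8$-vertices on the faces at $v_1$) and $c=uv_1$; let $v_2$ be the degree-$4$ vertex with the common neighbor $u$ and an $8$-vertex $w''$, with incident edges $d=v_2w''$ (or whichever edge plays the role of the "$d$" in Figure~\ref{fig:C18a}), $e=uv_2$, $f$ joining $v_2$ to the relevant $8$-vertex, and $g$ joining $v_2$ to the degree-$7$ or degree-$8$ vertex on the other triangle. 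By minimality color $G\setminus\{a,b,c\}$, then uncolor and forget $v_1,v_2,v_3$ (all of degree $\leqslant 4$), and uncolor $d,e,f,g$.

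\textbf{List sizes.} By Remark~\ref{obs:lowerbound} I would record that $|\hat b|=|\hat g|=2$ and $|\hat a|=|\hat c|=|\hat d|=|\hat e|=|\hat f|=3$ — note this is exactly the list-size pattern from Lemma~\ref{lem:C18a}; the only thing that changes is which high-degree vertices sit around $v_2$, but since those stay colored they do not affect any of the lists $\hat a,\dots,\hat g$, and $v_4$ plays no role once $v_1,v_2,v_3$ are forgotten. As before, because $d,e,f,g$ were properly colored in the minimality-coloring, we have $|\hat d\cup\hat e\cup\hat f\cup\hat g|\geqslant 4$. From here the argument is identical to Lemma~\ref{lem:C18a}: if $\hat g\not\subset\hat d$, color $g$ outside $\hat d$, then $b$, and finish by Lemma~\ref{lem:fryingpan} on $\{f,d,a,c,e\}$; so $\hat g\subset\hat d$, and symmetrically $\hat g\subset\hat e$. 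If $\hat g\not\subset\hat f$, color $g$ outside $\hat f$, forget $f$, and apply Lemma~\ref{lem:fryingpan} to $\{b,a,d,e,c\}$; so $\hat g\subset\hat f$. If $\hat f\not\subset\hat d$, color $f$ outside $\hat d$ (hence outside $\hat g$), color $b$, and apply Lemma~\ref{lem:fryingpan} to $\{g,d,a,c,e\}$; if $\hat d\not\subset\hat f$, color $d$ outside $\hat f$ and then color $a,b,c,e,g,f$ greedily. So $\hat f=\hat d=\hat e$, which forces $|\hat f|=|\hat d\cup\hat e\cup\hat f\cup\hat g|\geqslant 4$, and then we color $a,b,c,g,d,e,f$ arbitrarily.

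\textbf{Where the difficulty lies.} The only place where I need to be careful is making sure the cycle/path structures fed to Lemma~\ref{lem:fryingpan} really are present in $\mathcal T(G)$ with the claimed list sizes — in particular that the element designated "$g$" genuinely has only two available colors, which requires checking that in the $(8,8)$ case the triangle at $v_2$ opposite to $u$ is bounded by an $8$-vertex (so $g$ has $10-8=2$ colors after coloring), exactly as in $C_{\ref{C18}a}$. Since the defining condition of $C_{\ref{C18}b}$ stipulates $v_2$ is an $(8,8)$-neighbor, both triangles at $v_2$ are bounded by $8$-vertices, so this is fine; the presence of the degree-$7$ vertex $v_4$ and the degree-$4$ vertex $v_3$ only makes the discharging statement stronger and is irrelevant to the reducibility argument. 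I expect essentially no obstacle beyond this bookkeeping, and one could also give a one-line Combinatorial Nullstellensatz certificate: the coefficient of $A^2B C D^2 E^2 F G$ (or the analogous monomial produced by the search) in $P_G$ is nonzero, so Theorem~\ref{thm:nss} applies directly.

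\begin{proof}
  We use the notation depicted in Figure~\ref{fig:C18a} (the only
  change from $C_{\ref{C18}a}$ is that both triangular faces at $v_2$
  are now bounded by $8$-vertices, and $u$ has an additional neighbor
  $v_4$ of degree $7$ which plays no role below). By minimality, we
  color $G\setminus\{a,b,c\}$ and uncolor and forget $v_1,v_2,v_3$. We
  then uncolor $d,e,f,g$.

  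By Remark~\ref{obs:lowerbound}, we may thus assume that
  $|\hat{b}|=|\hat{g}|=2$ and
  $|\hat{a}|=|\hat{c}|=|\hat{d}|=|\hat{e}|=|\hat{f}|=3$. Moreover, we
  have $|\hat{d}\cup\hat{e}\cup\hat{f}\cup\hat{g}|\geqslant 4$ since
  $d,e,f,g$ were properly colored.

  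If $\hat{g}$ is not included in $\hat{d}$, we color $g$ with a color
  not in $\hat{d}$, then $b$, and apply Lemma~\ref{lem:fryingpan} to
  color $\{f,d,a,c,e\}$. Therefore, we may assume that
  $\hat{g}\subset\hat{d}$, and similarly, $\hat{g}\subset\hat{e}$.

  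We may also assume that $\hat{g}\subset \hat{f}$. Indeed, otherwise,
  we color $g$ with a color not in $\hat{f}$, then forget $f$ and
  apply Lemma~\ref{lem:fryingpan} to $\{b,a,d,e,c\}$.

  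Now, if $\hat{f}\not\subset\hat{d}$, we color $f$ with a color not
  in $\hat{d}$ (thus not in $\hat{g}$), then color $b$ and apply
  Lemma~\ref{lem:fryingpan} to $\{g,d,a,c,e\}$. If
  $\hat{d}\not\subset\hat{f}$, we color $d$ with a color not in
  $\hat{f}$, then color $a,b,c,e,g,f$. Therefore, we may assume that
  $\hat{f}=\hat{d}$ and similarly $\hat{f}=\hat{e}$.

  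This ensures that
  \[|\hat{f}|=|\hat{d}\cup\hat{e}\cup\hat{f}\cup\hat{g}|\geqslant 4.\]
  We may thus arbitrarily color $a,b,c,g,d,e,f$.
\end{proof}
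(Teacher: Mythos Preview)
Your proposal contains a genuine error in the list-size computation. You claim that the list-size pattern is ``exactly'' that of Lemma~\ref{lem:C18a}, but this is false for the edge $f$. In $C_{\ref{C18}a}$ the vertex $v_2$ is a $(7,8)$-neighbor of $u$: the triangle $uv_2w$ on the side away from $v_1$ has $d(w)=7$, so $f=v_2w$ has $10-(1+6)=3$ available colors. In $C_{\ref{C18}b}$ the vertex $v_2$ is an $(8,8)$-neighbor, so that same $w$ now has degree $8$, and $f$ has only $10-(1+7)=2$ available colors. Your assertion that ``those [high-degree vertices around $v_2$] stay colored so they do not affect any of the lists $\hat a,\dots,\hat g$'' is precisely where the argument breaks: the degree of the coloured endpoint of $f$ determines how many coloured edges are incident to $f$, hence $|\hat f|$.

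With $|\hat f|=2$ your case analysis collapses. Once you have established $\hat g\subset\hat f$ you get $\hat g=\hat f$, so the case $\hat f\not\subset\hat d$ never occurs, and since $|\hat d|=3>2=|\hat f|$ you are always in the case ``$\hat d\not\subset\hat f$''. There your greedy order $d,a,b,c,e,g,f$ can fail: colouring $d$ outside $\hat f=\hat g$ and then $a,b,c$ leaves $|\hat e|\geqslant 2$, but it may happen that the remaining colours of $\hat e$ all lie in $\hat f=\hat g$; after colouring $e$ and $g$ the list $\hat f$ can become empty. Consequently the final contradiction $|\hat f|\geqslant 4$ is unreachable. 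This is exactly why the paper's proof of Lemma~\ref{lem:C18b} is substantially longer and uses the colour-shifting graph on the edges $\{h,i,j,k,\ell,m,u\}$ around $u$; the degree-$7$ neighbour $v_4$ (which you dismissed as irrelevant) is essential there, since the edge $\ell=uv_4$ has $|\hat\ell|=3$ and drives the in-degree bounds needed for Lemma~\ref{lem:SCC}.
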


\begin{proof}
  We use the notation depicted in Figure~\ref{fig:C18b}. By
  minimality, we color $G\setminus\{a,b,c\}$. We uncolor and forget
  $v_1,v_2,v_3$. Denote by $\alpha$ the color of $d$ and $\beta$ the
  color of $e$. We then uncolor $d,e,f,g$.

  \begin{figure}[!h]
    \centering
    \begin{tikzpicture}[v/.style={draw=black,minimum size = 10pt,ellipse,inner sep=1pt}]
      \node[v,label=85:{$u$}] (u) at (0,0)  {8};
      \node[v, very thick] (v0) at (45:1.5) {8};
      \node[v,label=above:{$v_1$}] (v1) at (90:1.5) {3};
      \node[v, very thick] (v2) at (135:1.5) {8};
      \node[v,label=below left:{$v_2$}] (v3) at (180:1.5) {4};
      \node[v, very thick,xshift=-1.5cm] (v3') at (180:1.5) {8};
      \node[v, very thick] (v4) at (225:1.5) {8};
      \node[v,label=right:{$v_3$}] (v6) at (315:1.5) {4};
      \node[v, very thick] (v5) at (270:1.5) {7};
      \node[v, very thick] (v7) at (0:1.5) {8};
      \draw (v2) -- (v1) node[midway,above] {$a$};
      \draw (v1) -- (v0) node[midway,above] {$b$};
      \draw (v1) -- (u) node[midway,left] {$c$};
      \draw (v3) -- (v3') node[midway,above] {$g$};
      \draw (v2) -- (v3) node[midway,left] {$d$};
      \draw (v3) -- (u) node[midway,below] {$e$};
      \draw (v3) -- (v4) node[midway,left] {$f$};
      \draw (u) -- (v0) node[midway,above] {$i$};
      \draw (u) -- (v2) node[midway,left] {$h$};
      \draw (u) -- (v4) node[midway,below] {$m$};
      \draw (u) -- (v7) node[midway,above] {$j$};
      \draw (u) -- (v6) node[midway,right] {$k$};
      \draw (u) -- (v5) node[midway,right] {$\ell$};
    \end{tikzpicture}
\caption{Notation for Lemma~\ref{lem:C18b}}
    \label{fig:C18b}
  \end{figure}

  We have $|\hat{f}|=2$, $|\hat{a}|=|\hat{c}|=|\hat{d}|=|\hat{e}|=3$ and $|\hat{b}|=|\hat{g}|\in\{2,3\}$ depending on whether they share an endpoint. Moreover, since $d,e,f,g$ are
  pairwise incident, they were colored with at least 4 colors before
  being uncolored. Therefore, we have
  $|\hat{d}\cup\hat{e}\cup\hat{f}\cup\hat{g}|\geqslant 4$.
  \begin{itemize}
  \item If $\hat{d}=\hat{e}$, then this means there exists
    $\gamma\in (\hat{f}\cup \hat{g})\setminus(\hat{d}\cup
    \hat{e})$. We color $f$ or $g$ with $\gamma$, then the remaining edge among $\{f,g\}$, and $b$. The
    elements $\{a,c,d,e\}$ induce an even cycle, which is
    $2$-choosable. We may thus assume that $\hat{d}\neq\hat{e}$.
  \item If $\hat{f}\neq \hat{g}$ (which happens when $g$ and $b$ share an endpoint), we color $f$ with
    $\gamma\notin \hat{g}$, and we observe that $\hat{d}\neq \hat{e}$ if
    $|\hat{d}|=|\hat{e}|=2$. We may then color $b$ and apply
    Lemma~\ref{lem:fryingpan} to $\{g,d,a,c,e\}$.
  \item If $|\hat{e}\setminus \hat{f}|$ and $|\hat{d}\setminus \hat{f}|$ are both at least $2$,
    then we color $f$ and $g$ and apply Lemma~\ref{lem:fryingpan} to
    $\{b,a,d,e,c\}$. 
  \item Observe that since $\hat{f}=\hat{g}$ and $d,e,f,g$ were
    colored, we have $\alpha,\beta\notin \hat{f}$. Therefore, by the previous item, either $\hat{d}=\{\alpha\}\cup \hat{f}$ and $d$ is forced to be colored $\alpha$ or $\hat{e}=\{\beta\}\cup\hat{f}$ and $e$ is forced to be colored $\beta$. 
    Moreover, if afterwards the remaining element among $\{d,e\}$ has two available colors, then for at least one choice we can extend the coloring to $a,b,c$. Therefore, $G$ is not colorable only if $d$ is forced to be colored $\alpha$, $e$ is forced to be colored $\beta$ and $\hat{a}\setminus\{\alpha\}=\hat{c}\setminus\{\beta\}=\hat{b}$.
  \item Let $\gamma$ be the color of $h$. Observe that
    $\gamma\in L(a)$. Indeed, since $h$ is adjacent to every colored
    element incident to $a$, we could otherwise assume that
    $|\hat{a}|=4$, hence forget $a,b,c$ and color
    $d,e,f,g$. Similarly, we have $\gamma\in L(c)$.
  \end{itemize}
  We show that we can recolor $h,i$ or $m$ and then extend the
  coloring with the new available colors of $a,b,c,d,e,f,g$.

  Let $H$ be the color shifting graph of $S=\{h,i,j,k,\ell, m,u\}$. If
  $\alpha$ does not appear on $S$, we remove the arc $s_\alpha\to h$ from $H$,
  otherwise $\alpha$ appears on $x\in S$, and we remove the arc
  $x\to h$ in $H$ (in both cases, this assumes that the arc is contained in the color shifting graph, that is if $\alpha\in \hat{h}$). We uncolor the vertices of $S$, and we have
  $|\hat{j}|=2$, $|\hat{i}|=|\hat{\ell}|=|\hat{m}|=3$, $|\hat{h}|=4$,
  $|\hat{u}|=5$ and $|\hat{k}|=7$. Observe that $d^-(h)=2$.

  By Lemma~\ref{lem:SCC}, there is a strong component $C$ of $H$ such
  that $|C|>\max_{x\in C} d^-(x)$. We consider two cases, depending on
  whether $C$ contains a vertex $s_\delta$.

  \begin{enumerate}
  \item Assume first that $C$ does not contain any vertex $s_\delta$.
    \begin{itemize}
    \item If $C$ contains $k$, then $|C|\geqslant 7$, hence $C$ contains
      $h,i$ or $m$.
    \item Otherwise, if $C$ contains $u$, then $|C|\geqslant 7$ hence $C$
      contains $h,i$ or $m$.
    \item Otherwise, $C\subset\{h,i,j,\ell,m\}$. If $C$ contains $\ell$,
      then $|C|\geqslant 3$ hence $C$ contains $h,i$ or $m$.
    \item Otherwise, $C\subset\{h,i,j,m\}$. If $C$ contains $j$, then
      $|C|\geqslant 2$ hence $C$ contains $h,i$ or $m$.
    \end{itemize}

    We may thus find a directed cycle in $H$ containing $h,i$ or $m$
    and only vertices of $S$. Shifting the colors along this cycle as
    done in Lemma~\ref{lem:recolor} yields another coloring of $S$
    obtained by permuting the colors. Denote by $\tilde{x}$ the new
    list of available colors for the element $x$ after the recoloring
    process.

    Observe that since we removed an ingoing arc to $h$, the edge $h$
    cannot be colored with $\alpha$ in the new coloring. This implies
    that $\alpha\in\tilde{d}$. Moreover, $\hat{e}=\tilde{e}$, hence
    $\beta\in \tilde{e}$. Note also that the recoloring process may have changed the lists of $f$ and $g$ (if $g$ was incident to $i$ or $j$). But we still have $\beta\notin \tilde{f}\cup\tilde{g}$ since $\beta$ did not appear on the recolored elements, and $\alpha$ cannot appear on both $\tilde{f}$ and $\tilde{g}$ since it would mean that $m$ and the edge among $\{i,j,\ell\}$ incident with $g$ were both colored with $\alpha$. 
    
    We consider three cases, depending on which elements among $\{h,i,m\}$ were recolored. 

    \begin{itemize}
    \item Assume that $h$ was recolored. We color $d$ with $\alpha$
      and $e$ with $\beta$. Since $h$ was recolored, then its former
      color $\gamma$ does not appear anymore on a colored incident
      element of $a$. Since $\gamma\in L(a)$, we can color $a$ with
      $\gamma$. Observe that this does not remove a color from $\tilde{c}$ since $\gamma\notin \hat{c}$. Therefore, after this, we may assume that $|\tilde{b}|=1$ and $|\tilde{c}|=2$ so we can color $b$ and $c$. Finally, by the above remark, we know that $f$ and $g$ can be colored.
      
    \item If $h$ was not recolored, then assume that $i$ was. In this
      case, we may still color $d$ with $\alpha$ and $e$ with $\beta$,
      then color again $f,g$. Since we recolored $i$, we now have
      $\tilde{b}\neq\tilde{a}$, hence we can color $a,b,c$.
    \item Finally, if we only recolored $m$, then let $\varepsilon$ be the former color of $m$. 
      
      Note that $\varepsilon\in L(f)$, otherwise, we could have assumed that $|\hat{f}|=3$, and obtained the same situation as in Lemma~\ref{lem:C18a}. In particular, $\varepsilon\in\tilde{f}\setminus\hat{f}$, and $\varepsilon\notin \hat{g}=\hat{f}$. Moreover, if $\varepsilon\in \tilde{g}$, it means that $g$ was incident with an element colored with $\varepsilon$ before the recoloring process, which is not possible. Therefore, we can color $f$ with $\varepsilon$, then $b$, and afterwards observe that all remaining elements have two available colors, and moreover $\tilde{d}\neq\tilde{e}$ (since these elements were not affected by the recoloring). We may thus apply Lemma~\ref{lem:fryingpan} to $\{g,d,a,c,e\}$.
      

    
    \end{itemize}
  \item Assume now that $C$ contains a vertex $s_\delta$, ensuring
    that $|C|>|V(H)|-1$, i.e. $H$ is strongly connected.

    Assume that $\gamma\notin L(e)$. Note that if $\gamma\notin L(d)$, then $|\hat{e}|=|\hat{d}|=4$, hence we may just color $f,g,b$ and conclude since $\{a,c,e,d\}$ induce an even cycle and every element has two available colors. Hence we assume that $\gamma\in L(d)$. We consider a directed path $s_\delta\to \cdots\to h$ in $H$ where
    each internal vertex lies in $S$. Since $h$ is colored with
    $\gamma$, $s_\gamma$ has no successor in $H$, hence we have
    $\delta\neq \gamma$. We shift the colors of $S$ along this path,
    as done in Lemma~\ref{lem:recolor}. We may then color $c$ and $d$
    with $\gamma$ (this is possible since $\gamma\in L(d)$ and
    $\gamma\neq\delta$). Note that $\gamma\notin\hat{d}\cup\hat{e}$,
    hence $\gamma\notin\hat{g}$, and $\gamma\notin\tilde{g}$ (since that would mean that $g$ was incident with an edge from $S$ colored with $\gamma$). Then we may assume that
    $|\tilde{f}|=1$, $|\tilde{a}|=|\tilde{b}|=|\tilde{g}|=2$ and
    $|\tilde{e}|=3$. We may thus color $f,g,e,b,a$.

    If $\gamma\in L(e)$, we again shift the colors along the directed
    path $s_\delta\to \cdots \to h$, then color $a,e$ with $\gamma$
    and $d$ with $\alpha$ (since the new color of $h$ is not
    $\alpha$). Again, we have $\gamma\notin\tilde{g}$, hence we may
    assume that $|\tilde{b}|=|\tilde{f}|=1$ and
    $|\tilde{c}|=|\tilde{g}|=2$. We may thus color
    $f,g,b,c$.
  \end{enumerate}
\end{proof}

\begin{lemma}
  \label{lem:C18c}
  The graph $G$ does not contain $C_{\ref{C18}c}$.
\end{lemma}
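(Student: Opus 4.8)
\textbf{Proof plan for Lemma~\ref{lem:C18c}.}

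The plan is to proceed exactly as in the proofs of Lemmas~\ref{lem:C18a} and~\ref{lem:C18b}, adapting the bookkeeping to the case where the weak $3$-neighbor $v_1$ and the $(7,8)$-neighbor $v_2$ of degree $4$ are at triangle-distance at least $3$. First I would set up notation as in Figure~\ref{fig:C18a}/\ref{fig:C18b}, naming $u$, the three small neighbors $v_1$ (degree $3$), $v_2$ (degree $4$), $v_3$ (degree $4$), the degree-$7$ neighbor $v_4$, and the two neighbors $w_1,w_2$ of $v_2$ with $\{d(w_1),d(w_2)\}=\{7,8\}$ witnessing that $v_2$ is a $(7,8)$-neighbor. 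Since $\dist_u(v_1,v_2)\ge 3$, the edges from $v_1$ in the triangulated neighborhood of $u$ and the edges from $v_2$ involve disjoint sets of common neighbors, so the relevant triangular faces around $u$ do not interact; in particular the two edges $a,b$ of the triangles at $v_1$ (with $a=uv_1$'s neighbouring chord, $b$ its counterpart) and the edges $d,e,f$ around $v_2$ share no endpoint other than $u$. By minimality I would color $G\setminus\{a,b,c\}$ (with $c=uv_1$), uncolor and forget the small vertices $v_1,v_2,v_3$, and also uncolor the few edges incident to $v_2$'s triangles, so that by Remark~\ref{obs:lowerbound} I get $|\hat b|,|\hat f|,|\hat g|$ small (size $2$) and $|\hat a|,|\hat c|$ of size $3$, with $|\hat d|,|\hat e|$ of size $3$ or $4$ depending on the exact local structure, and crucially $|\hat d\cup\hat e\cup\hat f\cup\hat g|\ge 4$ because $d,e,f,g$ were pairwise incident in the original coloring.

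Next I would run the same list-disjointness analysis as in Lemma~\ref{lem:C18a}: if $\hat g\not\subset\hat d$ (resp. $\hat g\not\subset\hat e$, resp. $\hat g\not\subset\hat f$), or if $\hat f\not\subset\hat d$ or $\hat d\not\subset\hat f$, then I can color one of these edges with a color avoiding a neighbour's list, color $b$, and finish via Lemma~\ref{lem:fryingpan} applied to the appropriate five-element cycle-plus-pendant in $\mathcal T(G)$ (the cycle being on $\{d,e,f,g\}$-type edges together with $a,c$, and the pendant being $b$). The only surviving case is when $\hat d=\hat e=\hat f$ and $\hat g\subset\hat f$, which forces $|\hat f|=|\hat d\cup\hat e\cup\hat f\cup\hat g|\ge 4$, contradicting $|\hat f|=2$; so in fact this branch cannot occur and the coloring already extends. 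If instead $|\hat d|=4$ or $|\hat e|=4$ (the $(7,8)$ rather than $(8,8)$ situation makes one of the $w_i$ have degree $7$, lowering a constraint), the bound $|\hat f|\ge 4$ is reached even more easily after forgetting $a,b,c$, so the same dichotomy closes the case.

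The one extra ingredient compared to Lemma~\ref{lem:C18a} is the presence of the degree-$7$ neighbor $v_4$ and the extra degree-$4$ neighbor $v_3$, which is what triggers the recoloring step in the style of Lemma~\ref{lem:C18b}: if after the above we are still stuck with $\hat a=\hat b=\hat c$ being a single common list of size $2$, I would recolor one of the edges $h,i,\ell$ (the edges $uw$ for $w$ the large neighbors of $u$ around $v_1$, plus $uv_4$) using the color shifting graph machinery. Concretely I take $S$ to be the set of edges $\{c,\text{edges }uw_i\text{ around }v_1,uv_3,uv_4,\ldots,u\}$, compute the in-degrees via Lemma~\ref{lem:degmin}, invoke Lemma~\ref{lem:SCC} to get a strong component $C$ with $|C|>\max_{x\in C}d^-(x)$, and check by the usual three-to-five case split (does $C$ contain an $s_\alpha$? does it contain $u$ or a high-list edge like $uv_4$? otherwise it is contained in a small set forcing it to meet $\{h,i\}$) that $C$ must contain one of the edges whose recoloring breaks $\hat a=\hat b$; then Lemma~\ref{lem:recolor} produces the desired alternative coloring and we extend as before. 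I expect the main obstacle to be purely organizational: tracking precisely which edges around $v_2$ must be uncolored and verifying the in-degree counts for the color shifting graph when $\dist_u(v_1,v_2)\ge 3$, since the neighborhoods no longer overlap and the list sizes differ slightly from the $\dist=2$ case of Lemma~\ref{lem:C18b}; the combinatorial skeleton of the argument, however, is identical to the two preceding lemmas.
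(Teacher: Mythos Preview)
Your plan has a genuine gap that goes beyond bookkeeping. You claim that because $\dist_u(v_1,v_2)\geqslant 3$ ``the neighborhoods no longer overlap'', so the recoloring step on the edges around $u$ is independent of the edges $d,e,f,g$ around $v_2$. This is false. In the paper's setup (Figure~\ref{fig:C18c}), the set $S=\{h,i,j,k,\ell,m,u\}$ on which the color shifting graph is built necessarily contains the edges $\ell=u w$ and $m=u w'$ where $w,w'$ are the two common neighbors of $u$ and $v_2$ forming the triangular faces. These edges are incident in $\mathcal T(G)$ to $f$ and $d$ respectively, so recoloring $\ell$ or $m$ changes $\hat f$ or $\hat d$ and can destroy the colorability of $\{d,e,f,g\}$ that you are relying on.

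The paper handles this with a substantially more delicate argument than your sketch suggests: it first reduces to $\hat d\neq\hat g$ and $\hat a=\hat b$, then records the colors $\gamma,\delta,\varepsilon$ of $m,h,i$ and argues (else one of $|\hat d|,|\hat g|$ would be $4$) that $\gamma\in L(d)$ and $\delta,\varepsilon\in L(g)$. The strong component analysis then splits on whether $C$ contains some $s_\zeta$. If not, the recoloring is a genuine permutation of colors on $S$, so $\tilde g=\hat g$, and one checks that either $m$ is fixed (so $\tilde d=\hat d\neq\hat g$) or $m$ is recolored (so $\gamma\in\tilde d\setminus\tilde g$); either way $\tilde d\neq\tilde g$ survives and one can finish. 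If $C$ does contain some $s_\zeta$, one instead shifts along a directed path $s_\zeta\to\cdots\to h$ (resp.\ $\to i$), tracks exactly how $\hat g$ changes to $(\hat g\setminus\{\zeta\})\cup\{\delta\}$, and derives a contradiction from the two equations $(\hat g\setminus\{\zeta\})\cup\{\delta\}=\hat d=(\hat g\setminus\{\eta\})\cup\{\varepsilon\}$ when both fail. None of this is ``purely organizational''; it is the heart of the proof, and your proposal does not address it.

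A secondary issue: your list sizes are modeled on $C_{\ref{C18}a}$, but here the local picture around $v_2$ gives $|\hat e|=|\hat f|=2$ and $|\hat d|=|\hat g|=3$ (not the pattern you wrote), so your ``$\hat d=\hat e=\hat f$ forces $|\hat f|\geqslant 4$'' branch is aimed at the wrong lists.
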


\begin{proof}
  We use the notation depicted in Figure~\ref{fig:C18c}. By
  minimality, we color $G\setminus\{a,b,c\}$. We uncolor and forget
  $v_1,v_2,v_3$. Let $\alpha,\beta$ be the colors of $e$ and $f$. We
  then uncolor $d,e,f,g$.
  \begin{figure}[!h]
    \centering
    \begin{tikzpicture}[v/.style={draw=black,minimum size = 10pt,ellipse,inner sep=1pt}]
      \node[v,label=85:{$u$}] (u) at (0,0)  {8};
      \node[v, very thick] (v0) at (45:1.5) {8};
      \node[v,label=above:{$v_1$}] (v1) at (90:1.5) {3};
      \node[v, very thick] (v2) at (135:1.5) {8};
      \node[v, very thick] (v3) at (180:1.5) {7};
      \node[v,xshift=-1.5cm, very thick] (v4') at (225:1.5) {8};
      \node[v,label=below:{$v_2$}] (v4) at (225:1.5) {4};
      \node[v,label=right:{$v_3$}] (v6) at (315:1.5) {4};
      \node[v, very thick] (v5) at (270:1.5) {8};
      \node[v, very thick] (v7) at (0:1.5) {8};
      \draw (v2) -- (v1) node[midway,above] {$a$};
      \draw (v1) -- (v0) node[midway,above] {$b$};
      \draw (v1) -- (u) node[midway,left] {$c$};

      \draw (v4) -- (v4') node[midway,below] {$e$};
      \draw (v4) -- (v3) node[midway,left] {$d$};
      \draw (v4) -- (v5) node[midway,below] {$f$};
      \draw (u) -- (v4) node[midway,below] {$g$};
      \draw (u) -- (v2) node[midway,left] {$h$};
      \draw (u) -- (v0) node[midway,above] {$i$};
      \draw (u) -- (v7) node[midway,above] {$j$};
      \draw (u) -- (v6) node[midway,right] {$k$};
      \draw (u) -- (v5) node[midway,right] {$\ell$};

      \draw (u) -- (v3) node[midway,below] {$m$};
    \end{tikzpicture}
\caption{Notation for Lemma~\ref{lem:C18c}}
    \label{fig:C18c}
  \end{figure}
  We have $|\hat{a}|=|\hat{f}|=2$,
  $|\hat{c}|=|\hat{d}|=|\hat{g}|=3$ and $|\hat{b}|=|\hat{e}|\in\{2,3\}$ depending on whether $b$ and $e$ share an endpoint. Moreover, since $d,e,f,g$ are
  pairwise incident, they were colored with at least 4 colors before
  being uncolored. Therefore, we have
  $|\hat{d}\cup\hat{e}\cup\hat{f}\cup\hat{g}|\geqslant 4$.

  \begin{itemize}
  \item If $\hat{d}=\hat{g}$, then this means there exists
    $\gamma\in (\hat{e}\cup \hat{f})\setminus(\hat{d}\cup
    \hat{g})$. We color $e$ or $f$ with $\gamma$, then
    $f,a,b,c,g,d$. We may thus assume that $\hat{d}\neq\hat{g}$.

  \item We also assume that $\hat{a}=\hat{b}$. Indeed, otherwise, we
    color $a$ with a color not in $\hat{b}$, then forget $b$, $c$, and
    put back the initial colors on $d,e,f,g$.

  \item Let $\gamma$ be the color of $m$. Observe that
    $\gamma\in L(c)$. Indeed, since $m$ is incident to $c$, we could otherwise assume that
    $|\hat{c}|=4$, hence forget $c,b,a$ and color $d,e,f,g$.

  \item We now show that we can recolor $h$ or $i$. Let $H$ be the
    color shifting graph of $S=\{h,i,j,k,\ell, m,u\}$. We uncolor the
    vertices of $S$, and we have $|\hat{j}|\geqslant 2$,
    $|\hat{h}|,|\hat{i}|,|\hat{\ell}|\geqslant 3$, $|\hat{m}|= 4$, $|\hat{u}|=5$
    and $|\hat{k}|=7$ (the exact values depend on whether $e$ is incident with $h,i$ or $j$).  By Lemma~\ref{lem:SCC}, there is a strong
    component $C$ of $H$ such that $|C|>\max_{x\in C} d^-(x)$.
    \begin{itemize}
    \item If $C$ contains a vertex $s_\delta$, then $|C|=|V(H)|$, hence
      $C$ contains $h$ and $i$.
    \item If $C$ contains $k$, then $|C|\geqslant 7$, hence $C$
      contains $h$ or $i$.
    \item Otherwise, if $C$ contains $u$, then $|C|\geqslant 5$ hence
      $C$ contains $h$ or $i$.
    \item Otherwise, $C\subset\{h,i,j,\ell,m\}$. If $C$ contains $m$,
      then $|C|\geqslant 4$ hence $C$ contains $h$ or $i$.
    \item Otherwise, $C\subset\{h,i,j,\ell\}$. If $C$ contains $\ell$,
      then $|C|\geqslant 3$ hence $C$ contains $h$ or $i$.
    \item Otherwise, $C\subset\{h,i,j\}$. If $C$ contains $j$, then
      $|C|\geqslant 2$ hence $C$ contains $h$ or $i$.
    \end{itemize} In every case, we can recolor $h$ or $i$ by
    Lemma~\ref{lem:recolor}. This allows to obtain new lists
    $\tilde{x}$ of available colors for the element $x$, and we have
    $\tilde{a}\neq\tilde{b}$. However, this may break colorability of
    $d,e,f,g$. Proving that this colorability is actually preserved
    requires a more careful analysis we give in the rest of the proof.  Let $\delta,\varepsilon$ be the colors of $h$ and $i$ before recoloring.

  \item Observe that $\gamma\in L(d)$ and $L(g)$ contains either $\delta$ or $\varepsilon$.
  Indeed, otherwise, we have $|\hat{d}|=4$ or $|\hat{g}|=5$ 
  at the beginning, and we can color $a,b,c,g,e,f,d$.
  

    We may thus assume that $\gamma\in L(d)$ and that (by symmetry) $\delta\in L(g)$.  
  \item Consider the strong component $C$ of $H$ given by
    Lemma~\ref{lem:SCC}. We consider two cases, depending on whether
    $C$ contains a vertex $s_\zeta$.
    \begin{itemize}
    \item Assume first that $C$ does not contain any vertex
      $s_\zeta$. As we saw, $C$ contains $h$ or $i$, hence we may
      find a directed cycle in $H$ containing $h$ or $i$ and only
      vertices of $S$. 

      The coloring given by applying Lemma~\ref{lem:recolor} to this
      directed cycle uses the same set of colors (the colors are only
      permuted). Therefore, we have $\tilde{c}=\hat{c}$ and
      $\tilde{g}=\hat{g}$, together with $\tilde{a}\neq\tilde{b}$.
      
      If $m$ was not recolored, then we also have
      $\tilde{d}=\hat{d}$. Hence $\tilde{d}$ and $\tilde{g}$ are
      different lists of size at least 3.  Otherwise, if $m$ was
      recolored, then its former color $\gamma$ lies now in
      $\tilde{d}$, but not in $\tilde{g}$ (since color $\gamma$ is
      still present on $S$).

      Therefore, in both cases, we can hence we can color $e,f$
      arbitrarily (if $e$ is incident with $b$, we instead color $e$ so that $a$ and $b$ get different lists of colors afterwards), then $d,g$ since $\tilde{d}\neq\tilde{g}$, then
      $c$, and $a,b$ since $\tilde{a}\neq\tilde{b}$.

    \item Assume now that $C$ contains a vertex $s_\zeta$, ensuring
      that $|C|>|V(H)|-1$, i.e. $H$ is strongly connected. 

      We consider a directed path $s_\zeta\to \cdots\to h$ in $H$
      where each internal vertex lies in $S$. Since $m$ is colored
      with $\gamma$, $s_\gamma$ has no successor in $H$, hence we have
      $\zeta\neq \gamma$. We shift the colors of $S$ along this path,
      as done in Lemma~\ref{lem:recolor}, so that
      $\tilde{a}\neq\tilde{b}$.

      Assume that this path goes through $m$. Then since $m$ is not
      the last vertex of the path, the color $\gamma$ is still present
      on some element of $S$, hence $\gamma\notin\tilde{g}$. However,
      we have $\gamma\in \tilde{d}$ since $\gamma\in L(d)$ and $m$ is
      adjacent to every colored element incident to $d$. Therefore, we
      have $\tilde{d}\neq\tilde{g}$. We can then color
      $e,f,d,g,c,a,b$.

    \item Assume that the path does not go through $m$. Therefore, the
      color of $m$ is still $\gamma$ after the recoloring. Since the
      initial color $\delta$ of $h$ lies in $L(g)$, we now have
      $\tilde{g}=(\hat{g}\setminus\{\zeta\})\cup \{\delta\}$. If
      $\tilde{g}\neq\tilde{d}$, then we can color $a$ with a color not
      in $\tilde{b}$, then forget $b,c$ and color
      $e,f,d,g$. Otherwise, we have
      \[(\hat{g}\setminus\{\zeta\})\cup
        \{\delta\}=\tilde{g}=\tilde{d}=\hat{d}.\] 
    In particular $\delta\in\hat{d}$. 

      We then apply the same argument to a path
      $s_\eta\to\cdots \to i$. Such a path cannot contain $m$ by the
      previous item, and we can extend the new coloring to $G$ unless $\hat{d}=\hat{g}\setminus\{\eta\}$ or $\hat{d}=(\hat{g}\setminus\{\eta\})\cup \{\varepsilon\}$ (depending on whether $\varepsilon\in L(g)$). However, none of these can happen since $\delta\in \hat{d}$ but $\delta\neq\varepsilon$ and $\delta\notin \hat{g}$.  Therefore, we can extend the new coloring to $G$.
    \end{itemize}
  \end{itemize}
\end{proof}

\subsection{Configuration $C_{\ref{C19}}$}
By definition of $C_{\ref{C19}}$, if $G$ contains $C_{\ref{C19}}$,
then $G$ contains one of the following:
\begin{itemize}
\item[$\bullet$] $C_{\ref{C19}a}$: $u$ has three semi-weak neighbors
  $v_1,v_2,v_3$ of degree $3$ and a neighbor $v_4$ of degree 7.
\item[$\bullet$] $C_{\ref{C19}b}$: $u$ has two semi-weak neighbors
  $v_1,v_2$ of degree $3$, two neighbors $w_1,w_2$ of degree $4$ and a
  neighbor $w_3$ of degree 7.
\end{itemize}
We dedicate a lemma to each of these configurations.

\begin{lemma}
  \label{lem:C19a}
  The graph $G$ does not contain $C_{\ref{C19}a}$.
\end{lemma}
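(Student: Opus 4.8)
The plan is to follow the same recoloring template already used for $C_{\ref{C16}}$, $C_{\ref{C17}}$ and $C_{\ref{C21}}$. First I would set up notation for $C_{\ref{C19}a}$: label the edges around $u$ so that each semi-weak $3$-vertex $v_i$ contributes one triangular face at $u$ and hence one ``outer'' edge of $v_i$ that is not incident to $u$; denote the two edges at each $v_i$ by $a_i$ (the one in the triangle, incident to a common neighbor of $u$ and $v_i$) and $b_i$ (the other). By minimality, I would color $G$ minus the three edges $b_1,b_2,b_3$ (or whichever three edges around the $v_i$ are most convenient), then uncolor and \emph{forget} $v_1,v_2,v_3$ since they have degree $3\leqslant 4$. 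After applying Remark~\ref{obs:lowerbound} each uncolored edge $b_i$ should have a list of size $2$, and each $a_i$ (already colored, but whose constraints on $b_i$ matter) behaves like in the earlier lemmas. The key combinatorial observation, exactly parallel to the four conditions in Lemma~\ref{lem:C16}, is that the partial coloring extends to $G$ iff the three length-two lists $\hat{b_1},\hat{b_2},\hat{b_3}$ are not all forced into a single common color by their already-colored neighbors — i.e. unless some rigid equality among $\hat{a_i}$, $\hat{b_i}$ and the lists of the triangle edges holds.

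Next, assuming that bad configuration of lists holds, I would build the color shifting graph $H_{S,\gamma}$ of the set $S$ consisting of the eight edges incident to $u$ (the three edges $uv_i$, the edge $uv_4$ of the degree-$7$ neighbor, and the remaining four edges at $u$) together with $u$ itself — this is a clique in $\mathcal{T}(G)$. Using Remark~\ref{obs:lowerbound} I would record the list sizes of these nine elements: $uv_4$ has a small list (the $7$-vertex neighbor forces many constraints), $u$ has a medium list, and the other edges have larger lists as their endpoints have smaller degree. By Lemma~\ref{lem:degmin} the in-degree of each $x\in S$ in $H_{S,\gamma}$ is $|\hat{x}|-1$, and any $s_\alpha$ has in-degree $|V(H_{S,\gamma})|-1$. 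Then Lemma~\ref{lem:SCC} gives a strong component $C$ with $|C|>\max_{x\in C}d^-(x)$; running the same case split as in Lemma~\ref{lem:C16}/Lemma~\ref{lem:C21} (if $C$ meets some $s_\alpha$ then $C=V(H)$; otherwise the element of largest in-degree in $C$ forces $|C|$ to be large enough to capture one of the three edges $uv_1,uv_2,uv_3$ whose recoloring changes the available colors of some $b_i$), I would conclude that $C$ contains at least one of $uv_1,uv_2,uv_3$. Lemma~\ref{lem:recolor} then lets me recolor along a directed cycle through that edge, which shifts $\hat{b_i}$ for at least one $i$ and breaks the rigid equality; after the recoloring the partial coloring extends, contradicting minimality.

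The main obstacle I anticipate is getting the list-size bookkeeping exactly right so that the dichotomy from Lemma~\ref{lem:SCC} always lands on one of the three ``useful'' edges $uv_i$ rather than stalling on a component contained in the harmless edges; this is precisely where the hypothesis that $u$ has a neighbor $v_4$ of degree $7$ (which shrinks $|\hat{uv_4}|$ and thus limits how much ``in-degree mass'' can hide in a component avoiding the $uv_i$) should be doing the work, and I would need to check that no component entirely inside $\{uv_4,u,\text{four other edges}\}$ can be large enough to dodge all three $uv_i$. A secondary but routine point is verifying that forgetting $v_1,v_2,v_3$ is legitimate and that recoloring elements of $S$ cannot accidentally invalidate the already-colored part of $G$ outside $S$ — but since $S$ is a clique, Lemma~\ref{lem:recolor} guarantees the new coloring $\gamma'$ differs from $\gamma$ only on $S$ and is still proper, so this is automatic. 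I would handle $C_{\ref{C19}b}$ (two semi-weak $3$-neighbors, two $4$-neighbors, one $7$-neighbor) in a separate lemma with the same method, the only change being that the four ``conditions'' now involve only two edges $b_1,b_2$, so the target set for the strong component is $\{uv_1,uv_2\}$ and the two degree-$4$ neighbors help keep the relevant list sizes small.
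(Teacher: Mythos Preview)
Your high-level strategy (use the color-shifting graph to recolor an edge at $u$) is exactly what the paper does, but your concrete setup is wrong in a way that makes the argument collapse. Removing only three edges $b_1,b_2,b_3$ leaves each $b_i=v_iw_i$ with ten coloured neighbours in $\mathcal{T}(G)$ (seven other edges at the degree-$8$ endpoint $w_i$, the two remaining edges $a_i$ and $uv_i$ at $v_i$, and the vertex $w_i$), so $|\hat{b_i}|=0$ in the worst case; recoloring edges at $u$ only permutes which colours are forbidden and cannot raise this. The paper instead removes all \emph{nine} edges incident to $v_1,v_2,v_3$: in their notation $a,b,c$ at $v_1$, $d,e,f$ at $v_2$, $g,h,i$ at $v_3$, with $c=uv_1$, $f=uv_2$, $i=uv_3$. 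This yields $|\hat a|=\cdots=|\hat h|=2$ for the six non-$u$ edges and $|\hat c|=|\hat f|=|\hat i|=4$, and the obstruction is $\hat a=\hat b$, $\hat d=\hat e$, $\hat g=\hat h$ together with $\hat c\setminus\hat a=\hat f\setminus\hat d=\hat i\setminus\hat g$ of size $2$. Consequently the coloured clique $S$ is only the five \emph{remaining} edges $j,k,\ell,m,n$ at $u$ plus $u$ (six elements, not nine): the edges $uv_i$ are uncoloured and cannot sit in $S$. The edges whose recoloring helps are $j,k,\ell$, i.e.\ the edges from $u$ to the three triangle partners, because recoloring $j$ changes $\hat b$ (they share the triangle-partner endpoint) while leaving $\hat a$ untouched, breaking $\hat a=\hat b$.

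Your list-size intuition for the SCC step is also inverted. The edge $n=uv_4$ has $|\hat n|=3$, \emph{larger} (not smaller) than $|\hat m|=2$ for the edge $m$ to an $8$-vertex lying in no triangle; and $j,k,\ell$ get $|\hat\cdot|=3$ not because their endpoints have small degree but because each shares its degree-$8$ endpoint with one of the uncoloured edges $b,e,h$. The cascade is then: $u\in C\Rightarrow|C|\geqslant 5$; else $n\in C\Rightarrow|C|\geqslant 3$; else $m\in C\Rightarrow|C|\geqslant 2$; in every case $C$ meets $\{j,k,\ell\}$. The hypothesis $d(v_4)=7$ is used precisely here: had $v_4$ degree $8$, one would get $|\hat n|=2$ and $C=\{m,n\}$ would be a legitimate source component of size $2>\max d^-=1$ avoiding $j,k,\ell$, so the argument would fail.
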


\begin{proof}
  We use the notation depicted in Figure~\ref{fig:C19a}. By minimality,
  we color $G\setminus\{a,\ldots,i\}$ and uncolor $v_1,v_2,v_3$.
  \begin{figure}[!h]
    \centering
    \begin{tikzpicture}[v/.style={draw=black,minimum size = 10pt,ellipse,inner sep=1pt}]
      \node[v,label=10:{$u$}] (u) at (0,0)  {8};
      \node[v,label=above:{$v_1$}] (v1) at (45:1.5) {3};
      \node[v] (v0) at (90:1.5) {8};
      \node[v,xshift=1cm, very thick] (v2) at (45:1.5) {8};
      \node[v,label=left:{$v_4$}, very thick] (v4) at (135:1.5) {7};
      \node[v, very thick] (v3) at (180:1.5) {8};
      \node[v, very thick] (v6) at (0:1.5) {8};
      \node[v, very thick,xshift=1cm] (v5) at (315:1.5) {8};
      \node[v, very thick] (v8) at (270:1.5) {8};
      \node[v,label=below:{$v_3$}] (v9) at (225:1.5) {3};
      \node[v, very thick,xshift=-1cm] (v10) at (225:1.5) {8};
      \node[v,label=below:{$v_2$}] (v7) at (315:1.5) {3};
      \draw (v6) -- (u) node[midway,below] {$k$};
      \draw (v4) -- (u) node[midway,above] {$n$};
      \draw (v1) -- (u) node[midway,right] {$c$};
      \draw (v0) -- (u) node[midway,right] {$j$};
      \draw (v7) -- (u) node[midway,below] {$f$};
      \draw (v5) -- (v7) node[midway,below] {$d$};
      \draw (v6) -- (v7) node[midway,right] {$e$};
      \draw (v2) -- (v1) node[midway,above] {$a$};
      \draw (v1) -- (v0) node[midway,above] {$b$};
      \draw (u) -- (v3) node[midway,above] {$m$};
      \draw (u) -- (v9) node[midway,left] {$i$};
      \draw (u) -- (v8) node[midway,left] {$\ell$};
      \draw (v9) -- (v8) node[midway,below] {$h$};
      \draw (v9) -- (v10) node[midway,below] {$g$};
    \end{tikzpicture}
\caption{Notation for Lemma~\ref{lem:C19a}}
    \label{fig:C19a}
  \end{figure}
  Note that there may not be any adjacency between the uncolored elements other than the ones depicted in Figure~\ref{fig:C19a}, otherwise we could apply Theorem~\ref{thm:=deg} to color them. Hence we have $|\hat{a}|=|\hat{b}|=|\hat{d}|=|\hat{e}|=|\hat{g}|=|\hat{h}|=2$,
  $|\hat{c}|=|\hat{f}|=|\hat{i}|=4$ and
  $|\hat{v_1}|=|\hat{v_2}|=|\hat{v_3}|=7$. We forget $v_1,v_2,v_3$. We
  assume that $\hat{a}=\hat{b}$, $\hat{d}=\hat{e}$, $\hat{g}=\hat{h}$,
  and
  $\hat{c}\setminus\hat{a}=\hat{f}\setminus\hat{d}=\hat{i}\setminus\hat{g}$
  have size $2$ (otherwise, we can already extend the coloring to
  $G$). Note that any recoloring of $j,k$ or $\ell$ is sufficient to
  ensure that this hypothesis does no longer hold (even if $a,d,g$ are incident with $m$ or $n$). We uncolor
  $j,k,\ell,m,n,u$. We may assume that $|\hat{m}|=2$,
  $|\hat{n}|=|\hat{j}|=|\hat{k}|=|\hat{\ell}|=3$ and $|\hat{u}|=5$.

  Denote by $H$ the color shifting graph of $S=\{j,k,\ell,m,n,u\}$. By
  Lemma~\ref{lem:SCC}, there exists a strong component $C$ of $H$ such
  that $|C|>\max_{x\in C} d^-(x)$. Note that this inequality ensures
  that $|C|>1$. We show that $C$ contains $j,k$ or $\ell$ by
  distinguishing four cases:
  \begin{enumerate}
  \item If $C$ contains a vertex $s_\alpha$, then we have
    $|C|>d^-(s_\alpha)=|V(H)|-1$. Therefore, $C=V(H)$, and $C$
    contains $j,k$ or $\ell$.
  \item Otherwise, if $C$ contains $u$, then it has size at least $5$,
    hence it also contains $j,k$ or $\ell$.
  \item Otherwise, if $C$ contains $n$, then its size is at least $3$,
    hence it also contains $j,k$ or $\ell$.
  \item Otherwise, $C\subset\{j,k,\ell,m\}$. Then, if $C$ contains
    $m$, its size is at least $2$, hence it also contains $j,k$ or
    $\ell$.
  \end{enumerate}
  We thus obtain that $C$ is a strong component of size at least $2$
  that contains $j,k$ or $\ell$. Therefore, there is a directed cycle
  containing one of these vertices. Thus, we can apply
  Lemma~\ref{lem:recolor} to ensure that the starting hypothesis does
  not hold anymore, hence we can extend the coloring to $G$.
\end{proof}

\begin{lemma}
  \label{lem:C19b}
  The graph $G$ does not contain $C_{\ref{C19}b}$.
\end{lemma}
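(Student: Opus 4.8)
The plan is to follow the same recoloring strategy already used for $C_{\ref{C19}a}$ and for the other configurations $C_{\ref{C16}}$, $C_{\ref{C17}}$, $C_{\ref{C21}}$. First I would introduce a notation figure: $u$ is the central $8$-vertex, with semi-weak $3$-neighbors $v_1,v_2$, degree-$4$ neighbors $w_1,w_2$, and degree-$7$ neighbor $w_3$. I would color $G$ minus the edges of the triangles making $v_1,v_2$ semi-weak (and minus their incident edges along $u$) by minimality, then uncolor and forget $v_1$ and $v_2$. By Remark~\ref{obs:lowerbound}, the two edges of each semi-weak triangle incident to $v_i$ get lists of size $2$, the edge $uv_i$ gets size $3$ or so, and the "diagonal'' structure around $v_1,v_2$ is analogous to the one in Lemma~\ref{lem:C19a}. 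As in that proof, I would isolate a small set of Boolean-type conditions on the lists $\hat{a},\hat{b},\ldots$ of the uncolored edges around $v_1,v_2$ whose satisfaction lets the coloring extend directly; if none holds, every choice of colors on those short edges forces equal singleton residual lists and blocks the extension.

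Assuming none of those conditions holds, the key observation — mirroring Lemma~\ref{lem:C19a} — is that recoloring any one of a few already-colored edges incident to $u$ (the edges $uw_1$, $uw_2$, $uw_3$, or the edge opposite them, call them collectively the "pivot edges'') breaks the blocking condition. So I would uncolor the set $S$ consisting of all edges $u x$ for $x$ a neighbor of $u$ other than $v_1,v_2$, together with $u$ itself, and build the color shifting graph $H_{S,\gamma}$. Using Lemma~\ref{lem:degmin} to compute in-degrees (degree-$4$ neighbors give edge-lists of size $\sim 3$ after uncoloring, degree-$7$ gives $\sim 4$, degree-$8$ gives $\sim 2$, vertex $u$ gives $\sim 5$), I would apply Lemma~\ref{lem:SCC}: there is a strong component $C$ with $|C| > \max_{x\in C} d^-(x) \ge 1$. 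Then a short case analysis on which vertex of large in-degree $C$ contains (a source vertex $s_\alpha$ forces $C=V(H)$; then $u$; then $w_3$; then the degree-$4$ edges; finally one of the pivot edges) shows $C$ must contain one of the pivot edges, hence lies on a directed cycle. Lemma~\ref{lem:recolor} then recolors that edge, breaking the blocking condition, and I extend the coloring to $G$.

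The main obstacle I expect is bookkeeping: getting the list sizes exactly right after uncoloring $S$ (the degree-$4$ neighbors $w_1,w_2$ interact with the semi-weak triangles, so their incident edges may lose a colour from $v_1,v_2$ or not, and similarly the two diagonal edges of each semi-weak triangle may or may not be adjacent to each other in $\mathcal{T}(G)$), and then verifying that in every case of the strong-component analysis the component really does hit a pivot edge rather than getting "stuck'' on the degree-$8$ edges (which have small in-degree and could form tiny strong components). Planarity of $G$ should rule out enough of the extra edges among $v_1,v_2,w_1,w_2,w_3$ to keep the case split finite; I would state those planarity restrictions explicitly before computing list sizes, exactly as the earlier lemmas do.

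Concretely, the structure of the write-up would be: a short paragraph fixing notation and the minimal coloring and forgetting $v_1,v_2$; a displayed list of the (at most four) direct-extension conditions with a one-line justification that satisfying any of them suffices and that failing all of them forces the obstruction; a paragraph uncoloring $S\cup\{u\}$, recording the list sizes via Remark~\ref{obs:lowerbound} and Lemma~\ref{lem:degmin}; the enumerated case analysis deriving from Lemmas~\ref{lem:SCC} and~\ref{lem:degmin} that the strong component contains a pivot edge; and a closing paragraph invoking Lemma~\ref{lem:recolor} and checking that the new colouring satisfies one of the extension conditions, so the colouring extends to $G$, contradicting minimality.
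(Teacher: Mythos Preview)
Your overall strategy matches the paper's: set up the direct-extension conditions on the edges around the two degree-$3$ vertices, assume they all fail, and use the color-shifting graph of the remaining edges at $u$ to recolor a well-chosen edge and break the obstruction. However, you have the pivot edges backwards. Recoloring $uw_1$, $uw_2$, or $uw_3$ changes the constraints on $c=uv_1$ and $d=uv_2$ \emph{symmetrically} (both see exactly the same set of colored edges at $u$), so it does not reliably break $\hat{c}\setminus\hat{a}=\hat{d}\setminus\hat{e}$; and it does not touch $\hat{a},\hat{b},\hat{e},\hat{f}$ at all, so it cannot break $\hat a=\hat b$ or $\hat e=\hat f$. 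The edges that must be recolored are the ones from $u$ to the degree-$8$ vertices completing the semi-weak triangles --- call them $g$ and $h$ --- since $g$ is adjacent in $\mathcal{T}(G)$ to $b$ but not to $a$, so recoloring $g$ breaks $\hat{a}=\hat{b}$ (and symmetrically $h$ breaks $\hat e=\hat f$). Your own stated worry, that the analysis might get ``stuck on the degree-$8$ edges (which have small in-degree and could form tiny strong components)'', is thus exactly inverted: those small-in-degree edges are precisely the pivots, and the case analysis works by eliminating the high-in-degree vertices first until only $g,h$ (and possibly $k$) remain.

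Your list-size estimates are also off: the edge to the degree-$7$ neighbor has $|\hat{k}|=2$, not $\sim 4$, and the edges to the degree-$4$ neighbors have $|\hat{i}|,|\hat{j}|\in\{5,6\}$, not $\sim 3$. The paper additionally uncolors and \emph{forgets} the degree-$4$ vertices $w_1,w_2$ themselves, which pushes $|\hat{i}|,|\hat{j}|$ up to $6$ and $|\hat{u}|$ up to $5$ and streamlines the strong-component split. With the correct pivots and list sizes the argument is short: if $C$ contains some $s_\alpha$ then $C=V(H)$; otherwise if $C$ contains any of $u,i,j$ then $|C|\geqslant 5$, so $C$ meets $\{g,h\}$; otherwise $C\subset\{g,h,k\}$ and $|C|\geqslant 2$ already forces $C\cap\{g,h\}\neq\varnothing$.
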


\begin{proof}
  We use the notation depicted in Figure~\ref{fig:C19b}. By
  minimality, we color $G\setminus\{a,\ldots,f\}$ and uncolor $v_1$
  and $v_2$.
  \begin{figure}[!h]
    \centering
   \begin{tikzpicture}[v/.style={draw=black,minimum size = 10pt,ellipse,inner sep=1pt}]
      \node[v,label=above left:{$u$}] (u) at (0,0)  {8};
      \node[v,label=above:{$v_1$}] (v1) at (45:1.5) {3};
      \node[v, very thick] (v0) at (90:1.5) {8};
      \node[v,xshift=1cm, very thick] (v2) at (45:1.5) {8};
      \node[v,label=above:{$w_3$},very thick] (v3) at (180:1.5) {7};
      \node[v,label=above:{$v_2$}] (v6) at (0:1.5) {3};
      \node[v, very thick] (v5) at (0:2.5) {8};
      \node[v,label=left:{$w_1$}] (v8) at (270:1.5) {4};
      \node[v,label=left:{$w_2$}] (v9) at (225:1.5) {4};
      \node[v, very thick] (v7) at (315:1.5) {8};
      \draw (v7) -- (u) node[midway,below] {$h$};
      \draw (v1) -- (u) node[midway,right] {$c$};
      \draw (v0) -- (u) node[midway,right] {$g$};
      \draw (v6) -- (u) node[midway,below] {$d$};
      \draw (v5) -- (v6) node[midway,below] {$e$};
      \draw (v6) -- (v7) node[midway,right] {$f$};
      \draw (v2) -- (v1) node[midway,above] {$a$};
      \draw (v1) -- (v0) node[midway,above] {$b$};
      \draw (u) -- (v3) node[midway,above] {$k$};
      \draw (u) -- (v9) node[midway,left] {$j$};
      \draw (u) -- (v8) node[midway,left] {$i$};
    \end{tikzpicture}
\caption{Notation for Lemma~\ref{lem:C19b}}
    \label{fig:C19b}
  \end{figure}
 As in Lemma~\ref{lem:C16}, we have $|\hat{a}|=|\hat{b}|=|\hat{e}|=|\hat{f}|=2$,
  $|\hat{c}|=|\hat{d}|=3$ and $|\hat{v_1}|=|\hat{v_2}|=7$, we forget
  $v_1,v_2$ and we assume that $\hat{a}=\hat{b},\hat{e}=\hat{f}$ and
  $\hat{c}\setminus\hat{a}=\hat{d}\setminus\hat{e}=\{\alpha\}$
  (otherwise, we can already extend the coloring to $G$). Note that
  any recoloring of $g$ or $h$ is sufficient to ensure that this
  hypothesis does no longer hold (even if $a$ or $e$ is incident with $g,h,i,j$ or $k$). We uncolor $g,h,i,j,k,u,w_1,w_2$,
  then forget $w_1,w_2$.

  We may assume that $|\hat{g}|=|\hat{h}|=|\hat{k}|=2$, $|\hat{u}|=5$,
  and $|\hat{i}|=|\hat{j}|=6$. Denote by $H$ the color shifting graph
  of $S=\{g,h,i,j,k,u\}$. By Lemma~\ref{lem:SCC}, there exists a
  strong component $C$ of $H$ such that $|C|>\max_{x\in C}
  d^-(x)$. Note that this inequality ensures that $|C|>1$. We show
  that $C$ contains $g$ or $h$ by distinguishing three cases:
  \begin{enumerate}
  \item If $C$ contains a vertex $s_\alpha$, then we have
    $|C|>d^-(s_\alpha)=|V(H)|-1$. Therefore, $C=V(H)$, and $C$
    contains $g,h$.
  \item Otherwise, if $C$ contains $u,i$ or $j$, then it has size at least $5$,
    hence it also contains $g$ or $h$.
  \item Otherwise, $C\subset\{g,h,k\}$. If $C$ contains $k$, then its
    size is at least $2$, hence it also contains $g$ or $h$.
  \end{enumerate}
  We thus obtain that $C$ is a strong component of size at least $2$
  that contains $g$ or $h$. Therefore, there is a directed cycle
  containing one of these vertices. Thus, we can apply
  Lemma~\ref{lem:recolor} to ensure that we can extend the coloring to
  $G$.
\end{proof}

\section{Discharging argument}
\label{sec:positive}
In this section, we present the rules and check that every element of
$G$ has non-negative final weight after the discharging procedure.

\subsection{Discharging rules}
\label{sec:rules}
We start with the definition of the initial weighting $\omega$: we set
$\omega(v)=d(v)-6$ and $\omega(f)=2\ell(f)-6$ for each vertex $v$ and
face $f$. Using Euler's formula, the total weight is $-12$.

We then introduce several discharging rules, see
Figure~\ref{fig:rules}:
\begin{itemize}
\item[$\bullet$] For any $4^+$-face $f$,
  \begin{itemize}
  \item[($\regle{R1}$)] If $f$ is incident to a $5^-$-vertex $u$, then $f$
    gives 1 to $u$.
  \item[($\regle{R2}$)] If $f$ has a vertex $v$ such that $d(v)=8$ and
    the neighbors $u,w$ of $v$ along $f$ satisfy $d(u)=3$ and
    $d(w)\geqslant 6$, then $f$ gives $\frac{5}{12}$ to $v$.
  \item[($\regle{R3}$)] If $f$ has a vertex $v$ such that $d(v)=7$ and
    the neighbors $u,w$ of $v$ along $f$ both have degree at least
    $6$, then $f$ gives $\frac{1}{3}$ to $v$ if $d(u)=6$ or $d(v)=6$,
    and $\frac{1}{12}$ otherwise.
  \item[($\regle{R3+}$)] If $f$ has a vertex $v$ such that $d(v)=7$
    and the neighbors $u,w$ of $v$ along $f$ have degree 5 and 6 respectively,
    then $f$ gives $\frac{1}{6}$ to $v$, except if $\ell(f)=4$ and the
    last vertex of $f$ has degree 5.

  \end{itemize}
\item[$\bullet$] For any $8$-vertex $u$,
  \begin{itemize}
  \item[($\regle{R4}$)] If $u$ has a weak neighbor $v$ of degree $3$,
    then $u$ gives 1 to $v$.
  \item[($\regle{R5}$)] If $u$ has a semi-weak neighbor $v$ of degree
    $3$, then $u$ gives $\frac{1}{2}$ to $v$.
  \item[($\regle{R6}$)] If $u$ has a $(p,q)$-neighbor $v$ of degree $4$,
    then $u$ gives $\omega$ to $v$ where:
    \[\omega=\begin{cases} \frac{2}{3}&\text{ if }p=q=7,\\
        \frac{7}{12}&\text{ if } p=7 \text{ and }q=8,\\
        \frac{1}{2}&\text{ if } p=q=8,\\
        0&\text{ otherwise.}
      \end{cases}
    \]
  \item[($\regle{R7}$)] If $u$ has a semi-weak neighbor $v$ of degree
    $4$ and a neighbor $w$ of degree $7$ such that $uvw$ is a
    triangular face, then $u$ gives $\frac{1}{12}$ to $v$.
  \item[($\regle{R8}$)] If $u$ has a $(p,q)$-neighbor $v$ of degree $5$
    such that $p,q\geqslant 5$, then $u$ gives $\omega$ to $v$ where
    \[\omega=\begin{cases} 
        \frac{1}{2}&\text{ if } p=5 \text{ and }q=6,\\
        \frac{1}{6}&\text{ if } p=5 \text{ and }q>6,\\
        \frac{2}{3}&\text{ if } p=q=6,\\
        \frac{1}{3}&\text{ if } v \text{ is an } E_3\text{-neighbor,}\\
\frac{1}{4}&\text{otherwise.}
      \end{cases}
    \]
  \end{itemize}
\item[$\bullet$] For any $7$-vertex $u$,
  \begin{itemize}
  \item[($\regle{R9}$)] If $u$ has a $(p,q)$-neighbor $v$ of degree
    $4$, then $u$ gives $\omega$ to $v$ where
    \[\omega=\begin{cases} 
        \frac{1}{2}&\text{ if } p=q=7,\\
        \frac{5}{12}&\text{ if } p=7 \text{ and }q=8,\\ 
        \frac{1}{3}&\text{ if } p=q=8,\\
        0&\text{ otherwise.}
      \end{cases}
    \]
  \item[($\regle{R10}$)] If $u$ has a weak neighbor $v$ of
  degree $5$, then $u$ gives $\omega$ to $v$ where:
  \[\omega=\begin{cases} \frac12&\text{ if $v$ is a $(5,6)$-neighbor of $u$,}\\
  \frac13&\text{ if $v$ is an $S_3$-neighbor of $u$,}\\
  \frac15&\text{ if $v$ is an $S_5$-neighbor of $u$,}\\
  \frac16&\text{ otherwise.}
  \end{cases}\]
  \end{itemize}
\end{itemize}

\begin{figure}[!ht]
  \centering
  \begin{tikzpicture}[every node/.style={draw=black,minimum size = 10pt,ellipse,inner sep=1pt}]
    \node (v1) at (270:1) {};
    \node (v2) at (0:1) {$5^-$};
    \node (v3) at (90:1) {};
    \node (v4) at (180:1) {};
    \node[draw=none] (v5) at (0,0){$f$};
    \draw (v1) -- (v2) -- (v3) -- (v4);
    \draw[dotted] (v4) -- (v1);
    \path[draw,->] (v5) -- node [draw=none,above]{1} (v2);
    \node[draw=none] at (-1,-1) {$R_{\ref{R1}}$};
    \tikzset{xshift=3cm}
    \node (v1) at (270:1) {3};
    \node (v2) at (0:1) {8};
    \node (v3) at (90:1) {$6^+$};
    \node (v4) at (180:1) {};
    \node[draw=none] (v5) at (0,0){$f$};
    \draw (v1) -- (v2) -- (v3) -- (v4);
    \draw[dotted] (v4) -- (v1);
    \path[draw,->] (v5) -- node [draw=none,above]{\tiny $\frac{5}{12}$} (v2);
    \node[draw=none] at (-1,-1) {$R_{\ref{R2}}$};
    \tikzset{xshift=3cm}
    \node (v1) at (270:1) {$6^+$};
    \node (v2) at (0:1) {7};
    \node (v3) at (90:1) {$6^+$};
    \node (v4) at (180:1) {};
    \node[draw=none] (v5) at (0,0){$f$};
    \draw (v1) -- (v2) -- (v3) -- (v4);
    \draw[dotted] (v4) -- (v1);
    \path[draw,->] (v5) -- node [draw=none,above]{$\omega$} (v2);
    \node[draw=none] at (-1,-1) {$R_{\ref{R3}}$};
    \tikzset{xshift=3cm}
    \node (v1) at (270:1) {$5$};
    \node (v2) at (0:1) {7};
    \node (v3) at (90:1) {$6$};
    \node (v4) at (180:1) {};
    \node[draw=none] (v5) at (0,0){$f$};
    \draw (v1) -- (v2) -- (v3) -- (v4);
    \draw[dotted] (v4) -- (v1);
    \path[draw,->] (v5) -- node [draw=none,above]{\tiny $\frac{1}{6}$} (v2);
    \node[draw=none] at (-1,-1) {$R_{\ref{R3+}}$};
    \tikzset{xshift=-10.5cm,yshift=-3cm}
    \node (v1) at (0,0) {8};
    \node (v2) at (1,0) {3};
    \node (v3) at (0.5,0.866) {};
    \node (v4) at (0.5,-0.866) {};
    \draw(v1) -- (v3) -- (v2) -- (v1) -- (v4) -- (v2);
    \path[draw,->, bend left] (v1) to node [draw=none,above]{1} (v2);
    \node[draw=none] (v5) at (0,-1) {$R_{\ref{R4}}$};
    \tikzset{xshift=1.75cm}
    \node (v1) at (0,0) {8};
    \node (v2) at (1,0) {3};
    \node (v3) at (0,1) {};
    \node (v33) at (1,1) {};
    \node (v4) at (0.5,-0.866) {};
    \draw(v1) -- (v3);
    \draw (v33) -- (v2) -- (v1) -- (v4) -- (v2);
    \path[draw,->, bend left] (v1) to node [draw=none,above]{\small $\frac{1}{2}$} (v2);
    \node[draw=none] (v5) at (0,-1) {$R_{\ref{R5}}$};
    \tikzset{xshift=1.75cm}
    \node (v1) at (0,0) {8};
    \node (v2) at (1,0) {4};
    \node (v3) at (0.5,0.866) {$p$};
    \node (v4) at (0.5,-0.866) {$q$};
    \draw(v1) -- (v3) -- (v2) -- (v1) -- (v4) -- (v2);
    \path[draw,->, bend left] (v1) to node [draw=none,above]{$\omega$} (v2);
    \node[draw=none] (v5) at (0,-1) {$R_{\ref{R6}}$};
    \tikzset{xshift=1.75cm}
    \node (v1) at (0,0) {8};
    \node (v2) at (1,0) {4};
    \node (v3) at (0,1) {};
    \node (v33) at (1,1) {};
    \node (v4) at (0.5,-0.866) {7};
    \draw(v1) -- (v3);
    \draw (v33) -- (v2) -- (v1) -- (v4) -- (v2);
    \path[draw,->, bend left] (v1) to node [draw=none,above]{\small $\frac{1}{12}$} (v2);
    \node[draw=none] (v5) at (0,-1) {$R_{\ref{R7}}$};
    \tikzset{xshift=1.75cm}
    \node (v1) at (0,0) {8};
    \node[label=right:{\scriptsize weak}] (v2) at (1,0) {5};
    \node (v3) at (0.5,0.866) {$p$};
    \node (v4) at (0.5,-0.866) {$q$};
    \draw(v1) -- (v3) -- (v2) -- (v1) -- (v4) -- (v2);
    \path[draw,->, bend left] (v1) to node [draw=none,above]{$\omega$} (v2);
    \node[draw=none] (v5) at (0,-1) {$R_{\ref{R8}}$};
    \tikzset{xshift=2.25cm}
    \node (v1) at (0,0) {7};
    \node (v2) at (1,0) {4};
    \node (v3) at (0.5,0.866) {$p$};
    \node (v4) at (0.5,-0.866) {$q$};
    \draw(v1) -- (v3) -- (v2) -- (v1) -- (v4) -- (v2);
    \path[draw,->, bend left] (v1) to node [draw=none,above]{$\omega$} (v2);
    \node[draw=none] (v5) at (0,-.5) {$R_{\ref{R9}}$};
    \tikzset{xshift=1.75cm}
    \node (v1) at (0,0) {7};
    \node[label=right:{\scriptsize weak}] (v2) at (1,0) {5};
    \draw(v1) -- (v2);
    \path[draw,->, bend left] (v1) to node [draw=none,above]{$\omega$} (v2);
    \node[draw=none] (v5) at (0.5,-1) {$R_{\ref{R10}}$};
  \end{tikzpicture}
  \caption{The discharging rules}
  \label{fig:rules}
\end{figure}

We now show that every element ends up with non-negative weight. Since
the sum of initial weights is -12, this yields the expected
contradiction and proves Theorem~\ref{thm:main}. We first handle the
faces, and then distinguish several cases for the vertices, depending
on their degree. First note that due to $C_{\ref{C1}}$, the minimum
degree of $G$ is $3$. Moreover, only vertices of degree $7$ or $8$
lose weight. For these vertices, we present here an extensive case
analysis argument. Since the proof is quite long, we also provide two
programs at
\href{https://github.com/tpierron/Delta8}{\url{https://github.com/tpierron/Delta8}}. These
programs generate all possible neighborhoods of a $7^+$-vertex and
check that either there is a reducible configuration or the final
weight is non-negative.

\subsection{Faces}
Note that only faces of length at least $4$ lose weight. Consider a
$4^+$-face $f$. We distinguish some cases, depending on its length
$\ell$ and the minimal degree $\delta$ of its incident vertices.

\begin{enumerate}
\item $\ell\geqslant 6$: By rules
  $R_{\ref{R1}},R_{\ref{R2}},R_{\ref{R3}}$ and $R_{\ref{R3+}}$, the
  face $f$ loses at most 1 for each of its vertices, hence
  \[\omega'(f)\geqslant 2\ell-6-\ell=\ell-6\geqslant 0\]
\item $\delta=3$ and $\ell=4$: Let $f=uu_1vu_2$ where $d(u)=3$. By
  $C_{\ref{C1}}$, both $u_1$ and $u_2$ are $8$-vertices. Consider the
  other neighbor $v$ of these $8$-vertices. If $v$ is a $6$-vertex or
  a $8$-vertex, then $f$ loses $2\times \frac{5}{12}$ on $u_1$ and
  $u_2$ by $R_{\ref{R2}}$ and $f$ does not lose anything on $v$ since
  $R_{\ref{R2}},R_{\ref{R3}}$ and $R_{\ref{R3+}}$ do not apply.

  If $v$ is a $7$-vertex, then $f$ loses $2\*\frac{5}{12}$ on $u_1$
  and $u_2$ by $R_{\ref{R2}}$ and $\frac{1}{12}$ on $v$ by
  $R_{\ref{R3}}$.

  Otherwise, $v$ is a $5^-$-vertex and $f$ loses $1$ on $v$ by
  $R_{\ref{R1}}$ but nothing on $u_1,u_2$. Thus the final weight of
  $f$ is at least $2-1-1=0$.
\item $\delta=3$ and $\ell=5$: Let $f=uu_1v_1v_2u_2$ where
  $d(u)=3$. By $C_{\ref{C1}}$, we have $d(u_1)=d(u_2)=8$. By
  $R_{\ref{R2}}$, the vertices $u_1$ and $u_2$ receive at most
  $\frac{5}{12}$. The three remaining vertices receive at most $1$ by
  $R_{\ref{R1}},R_{\ref{R2}},R_{\ref{R3}}$ and
  $R_{\ref{R3+}}$. Therefore, the final weight of $f$ is at least
  $4-3\* 1-2\* \frac{5}{12}=\frac 1 6> 0$.
\item $\delta=4$: By $C_{\ref{C1}}$, any $4$-vertex is adjacent to
  $7^+$-vertices. These vertices do not receive any weight from
  $f$. Therefore, $f$ loses at most $(\ell-2)\times 1$ by
  $R_{\ref{R1}},R_{\ref{R2}},R_{\ref{R3}}$ and $R_{\ref{R3+}}$. Hence
  $\omega'(f)=2\ell-6-(\ell-2)=\ell-4\geqslant 0$.
\item $\delta=5$ and $\ell=4$: If there is only one $5$-vertex $u$,
  then $f$ gives 1 to $u$ and at most $3\times \frac{1}{3}$ to the
  other vertices by $R_{\ref{R1}}$, $R_{\ref{R3}}$ and
  $R_{\ref{R3+}}$.

  Otherwise, by $C_{\ref{C3a}}$, there are two $5$-vertices and the
  two other vertices have degree at least $6$. Thus only
  $R_{\ref{R1}}$ applies, and $f$ loses $2$, giving a final weight of
  $2-2=0$.
\item $\delta=5$ and $\ell=5$. By $C_{\ref{C3a}}$, there are at most
  three $5$-vertices. If there are three such vertices, then
  $R_{\ref{R3}}$ and $R_{\ref{R3+}}$ do not apply and the final weight
  of $f$ is $4-3\* 1=1>0$. If $f$ has two vertices of degree $5$, $f$
  gives $2\* 1$ to these vertices by $R_{\ref{R1}}$ and at most
  $3\times\frac{1}{3}$ to the others by $R_{\ref{R3}}$ and
  $R_{\ref{R3+}}$. Therefore, the final weight is at least
  $4-2\*1-3\*\frac{1}{3}=1>0$.
\item $\delta>5$: Only $R_{\ref{R3}}$ applies, so $f$ loses at most
  $\ell\times\frac{1}{3}$. The final weight is
  $2\ell-6-\frac{\ell}{3}>0$ since $\ell \geqslant 4$.
\end{enumerate}

\subsection{$3$-vertices}
Let $u$ be a $3$-vertex. Note that due to $C_{\ref{C1}}$, each
neighbor of $u$ is an $8$-vertex. We consider four cases depending on
the number $n_t$ of triangular faces containing $u$. In each case, we
show that $u$ receives a weight of $3$ during the discharging
procedure, so its final weight is $0$.
\begin{enumerate}
\item $n_t=0$: by $R_{\ref{R1}}$, the vertex $u$ receives $1$ from
  each incident face.
\item $n_t=1$: the vertex $u$ receives $2$ by
  $R_{\ref{R1}}$. Moreover, $u$ is a semi-weak neighbor of two
  8-vertices. By $R_{\ref{R5}}$, it receives $2\times\frac{1}{2}$.
\item $n_t=2$: the vertex $u$ receives $1$ by
  $R_{\ref{R1}}$. Moreover, $u$ is a semi-weak neighbor of two
  8-vertices, and a weak neighbor of another $8$-vertex. By
  $R_{\ref{R4}}$ and $R_{\ref{R5}}$, it receives
  $1+2\times\frac{1}{2}$.
\item $n_t=3$: the vertex $u$ is a weak neighbor of three
  $8$-vertices. By $R_{\ref{R4}}$, it receives $3\*1$.
\end{enumerate}

\subsection{$4$-vertices}
Similarly to the previous subsection, we take a $4$-vertex $u$ and
consider several cases considering the number $n_t$ of triangular
faces incident with $u$. In each case, we show that $u$ receives at
least a weight of $2$, so ends up with non-negative weight. Recall
that, due to $C_{\ref{C1}}$, every neighbor of $u$ has degree at least
$7$.
\begin{enumerate}
\item $n_t\leqslant 2$: By $R_{\ref{R1}}$, the vertex $u$ receives
  $(4-n_t)\* 1\geqslant 2$ from incident faces.
\item $n_t=3$: In this case, $u$ receives $1$ by
  $R_{\ref{R1}}$. Moreover, $u$ is a weak neighbor of two vertices
  $w_1$ and $w_2$ and a semi-weak neighbor of two other ones $s_1$ and
  $s_2$.
  \begin{enumerate}
  \item If $d(w_1)=d(w_2)=8$ then both $w_1$ and $w_2$ give at least
    $\frac{1}{2}$ to $u$ by $R_{\ref{R6}}$, hence $u$ receives $1$.
  \item If $d(w_1)=d(w_2)=7$, then for $1\leqslant i\leqslant 2$,
    either $d(s_i)=7$ and $w_i$ gives $\frac{1}{2}$ to $u$ by
    $R_{\ref{R9}}$, or $d(s_i)=8$ and $u$ receives
    $\frac{5}{12}+\frac{1}{12}$ from $s_i$ and $w_i$ by $R_{\ref{R9}}$
    and $R_{\ref{R7}}$. In both cases, $u$ receives
    $2\* \frac{1}{2}=1$.
  \item If $d(w_1)=7$ and $d(w_2)=8$ (the other case is similar), then
    $w_2$ gives at least $\frac{7}{12}$ to $u$ by
    $R_{\ref{R6}}$. Moreover, if $d(s_1)=7$, the vertex $u$ receives
    $\frac{5}{12}$ from $w_1$ by $R_{\ref{R9}}$. Otherwise,
    $d(s_1)=8$, and $u$ receives $\frac{1}{3}+\frac{1}{12}$ from $w_1$
    and $s_1$ by $R_{\ref{R9}}$ and $R_{\ref{R7}}$. In both cases, $u$
    receives $\frac{7}{12}+\frac{5}{12}=1$.
  \end{enumerate}
\item $n_t=4$: In this case, $u$ is a weak neighbor of four
  $7^+$-vertices, say $w_1,\ldots,w_4$, sorted by increasing
  degree. We show that applying $R_{\ref{R6}}$ and/or $R_{\ref{R9}}$
  gives a weight of 2 to $u$ in any case.
  \begin{enumerate}
  \item If $d(w_1)=8$, or $d(w_4)=7$, then each $w_i$ gives
    $\frac{1}{2}$, hence $u$ receives $4\*\frac{1}{2}=2$.
  \item If $d(w_1)=7$ and $d(w_2)=8$, then $w_1$ gives $\frac{1}{3}$,
    its two neighbors among $\{w_1,\ldots,w_4\}$ give
    $2\*\frac{7}{12}$ and the remaining vertex gives $\frac{1}{2}$.
  \item If $d(w_2)=7$ and $d(w_3)=8$, then if $w_3w_4\in E(G)$, $u$
    receives $2\*\frac{7}{12}$ from $w_3$ and $w_4$, and
    $2\*\frac{5}{12}$ from $w_1$ and $w_2$. Otherwise, $u$ receives
    $2\*\frac{2}{3}$ from $w_3$ and $w_4$ and $2\*\frac{1}{3}$ from
    $w_1$ and $w_2$.
  \item If $d(w_3)=7$ and $d(w_4)=8$, then $w_4$ gives $\frac{2}{3}$
    to $u$, its neighbors among the $w_i$'s gives $2\*\frac{5}{12}$
    and the last neighbor of $u$ gives $\frac{1}{2}$.
  \end{enumerate}
\end{enumerate}

\subsection{$5$-vertices}
Take a $5$-vertex $u$. If $u$ is incident to a non-triangular face,
then it receives $1$ by $R_{\ref{R1}}$. Thus, we only have to consider
the case where $u$ is triangulated. We denote by $v_1,\ldots,v_5$ the
consecutive neighbors of $u$ in the chosen embedding of $G$. 

Note also that due to $C_{\ref{C1}}$, the minimum degree $\delta$ of
the neighborhood of $u$ is at least $5$. We distinguish three cases
depending on $\delta$. In each case, we show that $u$ receives a
weight of at least $1$, hence ends up with non-negative weight.

\begin{enumerate}
\item If $\delta\geqslant 7$: denote by $n_8$ the number of 8-vertices
  adjacent to $u$.
  \begin{enumerate}
  \item If $n_8=0$, by $R_{\ref{R10}}$, the vertex $u$ receives
    $5\*\frac{1}{5}=1$.
  \item If $n_8=1$, we may assume that $d(v_1)=8$. Then $u$ is an $E_3$-neighbor of $v_1$ and an
    $S_6$-neighbor of $v_2,v_3,v_4$ and $v_5$. By $R_{\ref{R8}}$, it receives $\frac13$ from $v_1$ and by $R_{\ref{R10}}$, it
    receives $\frac{1}{6}$ from $v_2,v_3,v_4,v_5$. At the end,
    the received weight is thus $\frac{1}{3}+4\*\frac{1}{6}=1$.
  \item If $n_8\geqslant 2$, then each neighbor of degree $8$ gives at least
    $\frac14$, while the other neighbors give $\frac16$. Thus $u$
    receives at least
    $n_8\times\frac14+(5-n_8)\times\frac16\geqslant 1$.
  \end{enumerate}
\item If $\delta=6$, we consider different cases depending on the
  number $n_6$ of $6$-vertices in the neighborhood of $u$. Note that
  $n_6\leqslant 3$ because of $C_{\ref{C5}}$.
  \begin{enumerate}
  \item If $n_6=3$, then denote by $x$ and $y$ the two neighbors of
    $u$ of degree at least $7$. Due to $C_{\ref{C5}}$, the vertices
    $x$ and $y$ are not consecutive neighbors of $u$, and moreover, we
    cannot have $d(x)=d(y)=7$. We may thus assume that
    $d(x)=8$. Therefore, $u$ receives $\frac{2}{3}$ from $x$ by
    $R_{\ref{R8}}$ and at least $\frac{1}{3}$ from $y$ by
    $R_{\ref{R10}}$ or $R_{\ref{R8}}$.
  \item If $n_6=2$, then for any $i$ such that $d(v_i)>6$, $u$ is
    either an $S_3$ or an $E_3$-neighbor of $v_i$. Thus, by
    $R_{\ref{R10}}$ or $R_{\ref{R8}}$, the vertex $u$ receives
    $3\*\frac{1}{3}$ from them.
  \item If $n_6=1$, then we may assume that $d(v_1)=6$. Thus, for
    $i=2,5$, $u$ is an $S_3$-neighbor or an $E_3$-neighbor of
    $v_i$. Thus, by $R_{\ref{R8}}$ or $R_{\ref{R10}}$, $u$ receives
    $2\*\frac{1}{3}$ from $v_2$ and $v_5$. Moreover, $u$ receives at
    least $2\*\frac{1}{6}$ by $R_{\ref{R10}}$ and $R_{\ref{R8}}$ from
    $v_3$ and $v_4$. In any case, $u$ receives at least
    $2\*\frac{1}{3}+2\times\frac{1}{6}=1$.
  \end{enumerate}
\item If $\delta=5$, note that $u$ is adjacent to only one $5$-vertex
  (due to $C_{\ref{C3a}}$). We may thus assume that $d(v_1)=5$ and
  $d(v_i)>5$ for $2\leqslant i\leqslant 5$. Moreover, we have
  $d(v_2)>6$ and $d(v_5)>6$ due to $C_{\ref{C3b}}$. We show that $v_2$
  and $v_3$ give together at least $\frac{1}{2}$ to $u$. By symmetry,
  $u$ will receive at least $2\*\frac{1}{2}$ from $v_2,v_3,v_4$ and
  $v_5$.
  \begin{enumerate}
  \item If $d(v_3)=6$, then $u$ is an $(5,6)$-neighbor of $v_2$. Thus,
    $u$ receives $\frac{1}{2}$ by $R_{\ref{R8}}$ or $R_{\ref{R10}}$.
  \item Otherwise, $u$ is either an $S_3$-neighbor or an
    $E_3$-neighbor of $v_3$. Therefore, $u$ receives $\frac{1}{3}$
    from $v_3$ by $R_{\ref{R8}}$ or $R_{\ref{R10}}$, and at least
    $\frac{1}{6}$ from $v_2$ by $R_{\ref{R8}}$ or $R_{\ref{R10}}$.
  \end{enumerate}
\end{enumerate}

\subsection{$6$-vertices}

Note that $6$-vertices do not give nor receive any weight. Moreover,
their initial weight is 0. Thus their final weight is 0, hence
non-negative.

\subsection{$8$-vertices}
Let $u$ be an $8$-vertex and $v_1,\ldots,v_8$ its neighbors in
clockwise order. A full case analysis can be found in~\cite{arxiv}. We
rather provide a program in
\href{https://github.com/tpierron/Delta8}{\url{https://github.com/tpierron/Delta8}}
to check this case.

Note that $u$ gives weight only to the $v_i$'s, and that the amount
given to each $v_i$ depends only on its degree, the degree of
$v_{i-1}$ and $v_{i+1}$, and whether $v_i$ is special (i.e. weak or
$E_3$). We consider eight types of neighbors: degree 3, degree 4,
$E_3$, weak degree 5 but not $E_3$, non-weak degree 5, degree 6,
degree 7, degree 8. The program runs in 22s on a standard computer and
does the following:
\begin{itemize}
\item[$\bullet$] It generates all $8^8$ types for $v_1,\ldots,v_8$ and keeps only
  one representative up to rotations and symmetries.
\item[$\bullet$] For each of these neighborhood types, it computes an upper bound
  on how much $u$ will lose obtained by looking at the worst case in
  the rules. It then discards all configurations where this upper
  bound is at most 2.
\item[$\bullet$] For each of the remaining neighborhood types, it generates the
  (at most) $2^8$ cases for the edges $v_iv_{i+1\mod 8}$.
\item[$\bullet$] It computes exactly the loss for $u$ in each case, and discards
  the cases where the loss is at most 2.
\item[$\bullet$] It then checks that all the remaining cases contain a reducible
  configuration.
\end{itemize}

\subsection{$7$-vertices}
Let $u$ be a $7$-vertex, and denote by $v_1,\ldots,v_7$ its
consecutive neighbors in the chosen planar embedding of $G$. 

We follow the same approach as for $8$-vertices, with some slight
adjustments regarding the types. We consider the eight following types
for the $v_i$'s: degree 4, $S_3$-neighbor, $S_5$-neighbor, other weak
5-neighbor, non-weak 5-neighbor, degree 6, degree 7, degree 8.

The rest of the program is the roughly the same as in the previous
subsection. The only difference comes from the fact that we did not
implement rule $R_{\ref{R3+}}$ in order to keep the program simple. As
a consequence, the program actually outputs a configuration which is not
reducible and for which the computed total loss is larger than 1. This
configuration is depicted in Figure~\ref{fig:last}.

\begin{figure}[!h]
  \centering
  \begin{tikzpicture}[every node/.style={draw=black,minimum size = 10pt,ellipse,inner sep=1pt}]
    \node [label=left:{$u$}] (u) at (0,0)  {$7$};
    \node [label=above right :{$v_1$}, label=right:{\scriptsize weak}] (v1) at (0:1) {$5$};
    \node [label=above right :{$v_2$}, label=right:{\scriptsize weak}] (v2) at (51.5:1) {$5$};
    \node[label=left:{$v_3$}] (v3) at (103:1) {$6$};
    \node[label=left:{$v_4$}] (v4) at (154.5:1) {$5$};
    \node[label=right:{$v_7$}] (v7) at (308.5:1) {$7^+$};
    \node[label=left:{$v_6$}] (v6) at (257:1) {$6$};
    \node[label=left:{\scriptsize weak},label=above left:{$v_5$}] (v5) at (206.5:1) {$5$};
    \draw (v5) -- (u) -- (v4) -- (v5) -- (v6) -- (v7) -- (v1); 
    \draw (v3) -- (v2) -- (v1) -- (u) -- (v2);
    \draw (v5) -- (v6) -- (u) -- (v3);
    \draw (u) -- (v7) -- (v6);
  \end{tikzpicture}
  \caption{The output of the program. There is no edge between $v_3$ and $v_4$.}
  \label{fig:last}
\end{figure}

In that case, $u$ gives $2\times \frac12$ to $v_2$ and $v_5$, and
$\frac16$ to $v_1$. Assume that $R_{\ref{R3+}}$ does not apply. This
means that the face incident to $v_3,u,v_4$ is a 4-face and the last
vertex $x$ has degree 5. Due to $C_{\ref{C3a}}$, we obtain that
$x=v_5$, which is impossible since $v_4$ has degree $5$ and is contained only in the faces $uv_4v_5$ and $uv_3xv_4$. Therefore, $R_{\ref{R3+}}$ applies and the total
loss is $2\times\frac12+\frac16-\frac16=1$.

\section{Conclusion}

To prove our result, we use the discharging method. In this case, we
have a lot of configurations to reduce. The key ideas come from the
two approaches we use to reduce them. While the Combinatorial
Nullstellensatz approach and recoloring arguments have already been
used many times in discharging proofs, the framework we present here
(the so-called color shifting graph) seems to be quite new. To our
knowledge, it was first used in~\cite{bonamy}. Here we designed a more
generic framework to use this idea. This allowed us to reduce some
configurations we cannot tackle in the usual way. While total
$9$-choosability seems out of reach from a reasonable discharging
proof, it would also be interesting to see if the methods we
introduced here can help to prove that $\chi''_\ell =\Delta+1$ for
planar graphs when $\Delta$ can be less than $12$.

Like most of the discharging proofs, our result comes along with a
linear-time algorithm to find a proper coloring of a graph $G$ given a list
assignment. Indeed, we first apply the discharging rules to $G$ in
linear time. Using the elements with negative final weight, we can
identify the reducible configurations in $G$. Then, we color
recursively the graph obtained by removing one reducible
configuration. However, instead of moving again the weights, we keep
track of what happens when we remove the configuration.

To extend the obtained coloring, observe that both the case analysis
and the recoloring approaches lead to a constant time process. For the
Nullstellensatz approach, it is trickier since the proofs are not
constructive. However, we can compute a proper coloring for each
configuration and for each list assignment just by brute-force. Since
the sizes of the configurations are bounded by a constant (except for
$C_{\ref{C2}}$, but we only used it for removing some cycles of length
4), this also takes constant time. Each recursive step thus takes
constant time and we have at most one for each element of
$G$. Therefore, if we add the initial discharging phase (which also
takes linear time), we obtain a linear-time algorithm.

\section*{Acknowledgment}

We thank the referee for their time and dedication during their careful reading of this paper.

\bibliography{total_main}

\end{document}